\documentclass{article}
\usepackage{graphicx} 

\usepackage{changepage}

\usepackage{authblk}

\usepackage{float}

\usepackage[dvipsnames]{xcolor}
\usepackage{xcolor,todonotes}
\usepackage{enumitem}
\usepackage{multicol}

\usepackage{tikz}
\tikzstyle{edge}=[shorten <=2pt, shorten >=2pt, >=stealth, line width=1.5pt]
\tikzstyle{arc}=[->, shorten <=2pt, shorten >=2pt, >=stealth, line width=1pt]
\tikzstyle{vertex}=[circle, fill=white, draw, minimum size=6pt, inner sep=0pt,
                    outer sep=0pt]

\usepackage{amsmath}
\usepackage{amssymb,hyperref}
\usepackage{bm} 
\usepackage{comment}

\usepackage{amsthm,geometry,cite}
\usepackage[linesnumberedhidden]{algorithm2e}

\usepackage{pdfcomment}
\usepackage{todonotes}

\newtheorem{theorem}{Theorem}

\newtheorem{lemma}[theorem]{Lemma}
\newtheorem{corollary}[theorem]{Corollary}
\newtheorem{proposition}[theorem]{Proposition}
\newtheorem{observation}[theorem]{Observation}

\newtheorem{problem}[theorem]{Problem}

\theoremstyle{remark}
\newtheorem{remark}{Remark}

\theoremstyle{definition}
\newtheorem{example}{Example}

\newcommand{\blue}[1]{\textcolor{blue}{#1}}

\DeclareMathOperator{\Cyc}{Cyc}
\DeclareMathOperator{\Sig}{Sig}

\newcommand{\bA}{\mathbb A}
\newcommand{\bB}{\mathbb B}
\newcommand{\bC}{\mathbb C}
\newcommand{\bD}{\mathbb D}
\newcommand{\bH}{\mathbb H}
\newcommand{\bG}{\mathbb G}
\newcommand{\bK}{\mathbb K}

\newcommand{\bM}{\mathbb M}
\newcommand{\bP}{\mathbb P}
\DeclareMathOperator{\CSP}{CSP}
\DeclareMathOperator{\NP}{NP}
\DeclareMathOperator{\PO}{P}
\DeclareMathOperator{\Pol}{Pol}

\DeclareMathOperator{\cP}{P}

\DeclareMathOperator{\Alt}{Alt}

\title{On the complexity of Sandwich Problems for $M$-partitions}

\author[1]{Alexey Barsukov\thanks{alexey.barsukov@matfyz.cuni.cz}}
\author[2]{Santiago Guzm\'an-Pro\thanks{santiago.guzman\_pro@tu-dresden.de\\
This project has been funded by the European Research Council (Project POCOCOP, ERC Synergy Grant 101071674). Views and opinions expressed are however those of the authors only and do not necessarily reflect those of the European Union or the European Research Council Executive Agency. Neither the European Union nor the granting authority can be held responsible for them.}}

\affil[1]{Faculty of Mathematics and Physics, Charles University, Prague}
\affil[2]{Institut f\"ur Algebra, TU Dresden}
\date{\today}

\begin{document}

\maketitle
\begin{abstract}
We present a structural classification of \emph{constraint
   satisfaction problems} (CSP) described by reflexive complete $2$-edge-coloured  graphs.
   In particular, this classification extends the structural dichotomy for graph homomorphism
   problems known as the  Hell--Ne\v{s}et\v{r}il theorem (1990). Our classification is also efficient:
   we can check in polynomial time whether the CSP of a reflexive complete $2$-edge-coloured
   graph is in P or NP-complete, whereas for arbitrary $2$-edge-coloured graphs, this task is NP-complete.
    We then apply our main result in the context of \emph{matrix partition problems} and \emph{sandwich problems}.
    Firstly, we obtain one of the few algorithmic solutions to general classes of matrix partition
    problems.  And secondly, we present a P vs.\ NP-complete classification of sandwich problems for matrix partitions.
\end{abstract}

\section{Introduction}

A \emph{split} graph is a graph $\bG$ whose vertex set can be partitioned into an
independent set and a clique, and more generally,  $\bG$ admits a $(k,\ell)$-partition
if it can be partitioned into at most $k$ independent sets and  at most $\ell$
cliques. \emph{Matrix partitions} are a broad generalisation of $(k,\ell)$-partitions
--- unless stated otherwise, all matrices in this paper are $(n\times n)$ symmetric
matrices $M$ with entries $m_{ij}$ in $\{0,1,\ast\}$.
An \emph{$M$-partition} of a graph $\bG = (V,E)$ is a function
$f\colon V\to [n]$  such that for  every pair of different vertices $u,v\in V$, if $uv\in E$,
then $m_{f(u)f(v)} \in\{1,\ast\}$,  and if $uv\not\in E$, then $m_{f(u)f(v)}\in \{0,\ast\}$.
So graphs that admit a $(2,0)$-partition correspond to bipartite graphs, and graphs that admit a
$(1,1)$-partition to split graphs; we depict the corresponding $\{0,1,\ast\}$-matrices below.
\[ M_B:=\begin{pmatrix}
                     0 & \ast   \\
                    \ast & 0   \\
    \end{pmatrix}~~~~
     M_S :=\begin{pmatrix}
                     0 & \ast   \\
                    \ast & 1   \\
                \end{pmatrix} 
    \]

The \emph{$M$-partition problem} takes as  input a graph $\bG$, and the task is to determine 
whether $\bG$ admits an $M$-partition. Similarly, the \emph{list} $M$-partition problem takes
as input a graph $\bG$ together  with a list of indices $L(v)\subseteq[n]$ for each vertex $v\in V$, and
the task is to decide if there is an $M$-partition $f:V\to [n]$ of $\bG$ respecting the lists,
i.e., $f(v)\in L(v)$ for each $v\in V$. Clearly, the list version is at least as
hard as the version without lists.

Matrix partition problems are a natural generalisation of graph homomorphism problems, and the
latter enjoy a nice P vs.\ NP-complete structural classification:
if $\bH$ is a bipartite graph or it contains a loop, then the $\bH$-homomorphism problem is
polynomial-time
solvable; otherwise, the $\bH$-homomorphism problem is NP-complete~\cite{HellNesetril}.
(The precise definition of homomorphism problems is given below.) Matrix  partition problems have
succeeded on attracting vast attention (see, e.g.,~\cite{cameronSODA2004,cookDM310,cyganSODA2011,%
federDAM159,federSIDMA16,federDM306,federTCS349,figueiredoJA37,HNBook}, and for a nice survey
see~\cite{hellEJC35}), however a complexity classification remains elusive.
In fact, most of the algorithmic solutions to matrix partitions concert specific small
matrices~\cite{cameronSODA2004,cookDM310,cameronSIDMA21,cyganSODA2011,federDAM154,figueiredoJA37},
while efficient algorithms for general families of matrices are scarce~\cite{federSODA2005,federSIDMA16}

\paragraph{The natural reduction to CSPs.}
Every matrix $M$ as above can be represented as a $2$-edge-coloured graph $\mathbb M$
by coding an entry $m_{ij} = 0$ or $m_{ij} = 1$ as a red or a blue edge  connecting
$i$ and $j$, respectively, and an entry $m_{ij} = \ast$ as a blue and a red edge
between $i$ and $j$. The following illustrate the representations of $M_B$ 
and $M_S$ (introduced above) as $2$-edge-coloured graphs.
\begin{center}
    \begin{tikzpicture}
        \begin{scope}[scale=0.7]
           \node (L1) at (0,-1.2) {\small Representation $\mathbb M_B$ of $M_B$};
            \node (0) [vertex, label = below:{$1$}] at (-1,0){};
            \node (1) [vertex, label = below:{$2$}] at (1,0){};
            
            \draw [edge, red, dashed] (0) to [out=55,in=125, looseness=12] (0);
             \draw [edge, red, dashed] (1) to [out=55,in=125, looseness=12] (1);
           
            \draw [edge, blue] (0) to [bend right = 20] (1);
            \draw [edge, red, dashed] (0) to [bend left = 20] (1);

        \end{scope}
    
        \begin{scope}[scale=0.7, xshift=7cm]
           \node (L1) at (0,-1.2) {\small Representation $\mathbb M_S$ of $M_S$};
            \node (0) [vertex, label = below:{$1$}] at (-1,0){};
            \node (1) [vertex, label = below:{$2$}] at (1,0){};
            
            \draw [edge, red, dashed] (0) to [out=55,in=125, looseness=12] (0);
             \draw [edge, blue] (1) to [out=55,in=125, looseness=12] (1);
           
            \draw [edge, blue] (0) to [bend right = 20] (1);
            \draw [edge, red, dashed] (0) to [bend left = 20] (1);
        \end{scope}
\end{tikzpicture}
\end{center}
Similarly, we can represent a graph $\bG$ as a $2$-edge-coloured complete
graph $\nu(\bG)$ by coding edges in $\bG$ as blue edges, and pairs of non-edges $uv$ 
where $u\neq v$ as red edges. These representations yield a natural  and
well-known (see, e.g.,~\cite{federDAM154}) reduction 
from matrix partition problems to homomorphism problems of $2$-edge-coloured graphs:
a graph $\bG$ admits an $M$-partition if and only if $\nu(\bG)$  maps
homomorphically to $\mathbb M$ (see also Observation~\ref{obs:natural-reduction}).
We follow notation
from constraint satisfaction theory, and denote by $\CSP(\mathbb M)$
the homomorphism problem described by $\mathbb M$.

This  reduction is thus a natural tool to obtain efficient algorithmic
solutions to general families of matrix partition problems, whenever the corresponding
CSPs are tractable.
Unfortunately, in several cases there is a large complexity gap  between the
$M$-partition problem and $\CSP(\mathbb M)$.
For instance, if $M$ is a $\{0,1\}$-matrix, then the $M$-partition
problem is polynomial-time solvable (even in first-order logic~\cite{feder_hell_full_hom}),
but the homomorphism problem described by $\mathbb M$  might be $\NP$-complete, e.g., 
when $M$ is the adjacency matrix of $\mathbb K_3$ (the complete graph on $3$ vertices).
We believe that an instinctive step  to obtain general efficient solutions
to matrix partition problems, and thus, towards understanding their computational complexity, is
to first  understand when the natural reduction is algorithmically efficient.

\begin{problem}\label{prob:natural-reduction}
    Structurally classify the 
    matrices $M$ such that the
    homomorphism problem described by $\mathbb M$ is polynomial-time solvable
    (assuming $\PO \neq \NP$).
\end{problem}

\paragraph{Homomorphisms of edge-coloured graphs.}

Homomorphisms of edge-coloured graphs trace back at least to~\cite{brewster1993vertex}.
Since then, these have been studied from a combinatorial point of
view~\cite{alonJAC8,hellDM234,montejanoDAM158,ochemJGT85}, as well as from a computational
perspective~\cite{brewsterDAM49,brewsterENDM5,brewsterDM340,brewsterJGT110}.
Of course, the finite-domain CSP dichotomy~\cite{BulatovFVConjecture,ZhukFVConjecture}
presents a P vs.\ NP-complete classification of homomorphism problems of edge-coloured graphs.
Contrary to the Hell--Ne\v{s}et\v{r}il theorem, this characterisation
is of algebraic nature, and NP-complete to verify~\cite{MetaChenLarose}. In fact,
it is already NP-complete to verify if the homomorphism problem of a $2$-edge-coloured 
graph is tractable~\cite{brewsterDM340}. Hence, any efficient complexity classification
for these homomorphism problems must be restricted to certain subclasses of $2$-edge-coloured graphs.
For instance, in~\cite{brewsterENDM5} the authors pursue this task for edge-coloured cycles,
and in~\cite{bokMFCS2020} they consider caterpillar-like structures.
Towards settling Problem~\ref{prob:natural-reduction}, we consider the class of $2$-edge-coloured
graphs $\bM$ where for every pair of vertices there is at least one edge between them;
we call them \emph{reflexive complete} $2$-edge-coloured graphs. So we ask, 
\emph{is there an efficient classification of the complexity of homomorphism problems
    of reflexive complete $2$-edge-coloured graphs?}

Homomorphism problems of $2$-edge-coloured graphs have also been applied to 
model further problems in graph theory. The most studied direction concerns
homomorphisms of signed graphs~\cite{bokTCS1001,bokDM346,brewsterDAM49,brewsterDM340,kimKMJ63,naserasrJGT79}, and most
recently, homomorphism problems of $2$-edge-coloured graphs have resurfaced
in the context of Graph Sandwich Problems~\cite{bodirskySODA2026}.

\paragraph{Graph Sandwich Problems.}

Graph Sandwich Problems were introduced by Golumbic, Kaplan, and Shamir in~\cite{golumbicJA19}.
The \emph{Sandwich Problem} (SP) for a graph property $\Pi$ is the following computational problem. 
The input is a pair of graphs $(V,E_1)$ and $(V,E_2)$ where $E_1\subseteq E_2$, and the task
is to decide if there is an edge set $E$ where $E_1\subseteq E\subseteq E_2$, and the graph $(V,E)$
satisfies $\Pi$. In particular, the SP for $\Pi$ is at least as hard as the recognition problem
for $\Pi$: $(V,E)$ satisfies $\Pi$ if and only if the input $\bigl((V,E),(V,E)\bigr)$ is a yes-instance
of the Sandwich Problem for $\Pi$. 
Classifying the complexity of the Sandwich Problem for specific graph classes has been the
subject of a fair amount of literature~\cite{dantasAOR188,dantasDAM182,dantasDAM143,
dantasENTCS346,dantasICCGI07,figueiredoDAM121,golumbicJA19,alvaradoAOR280}.
In particular,  in~\cite{dantasDAM143,dantasDAM182,dantasICCGI07,figueiredoDAM121,golumbicJA19}
the authors studied the complexity of the SP for graph classes defined by certain
admissible partitions.
Here, we consider the \emph{Sandwich Problems for matrix partitions} together with its list variant. 
Even though these might be clear for context, we explicitly define them now.

\vspace{0.4cm}
\noindent\textbf{Sandwich Problem for $M$-partition}

 \textsc{Input:} A pair of graphs $(V,E_1)$ and $(V,E_2)$ where $E_1\subseteq E_2$.
 
 \textsc{Question:} Is there an edge set $E$ such that $E_1\subseteq E\subseteq E_2$, and $(V,E)$ admits an $M$-partition?

\vspace{0.4cm}
\noindent\textbf{List Sandwich Problem for $M$-partition}

\textsc{Input:} A pair of graphs $(V,E_1)$ and $(V,E_2)$ where $E_1\subseteq E_2$, and a list
$L(v)\subseteq [n]$ for each  $v\in V$.

\textsc{Question:} Is there an edge set $E$ such that $E_1\subseteq E\subseteq E_2$, and $(V,E)$ admits an $M$-partition
$f\colon V\to [n]$ \newline \indent such that $f(v)\in L(v)$ for each vertex $v\in V$?

\vspace{0.4cm}

In their seminal paper~\cite{golumbicJA19}, Golumbic, Kaplan, and Shamir showed that the SP
for split graphs is polynomial-time solvable, and in~\cite{dantasDAM143}, the authors
classify the complexity of the SP for $(k,\ell)$-partitions.
As mentioned before,  SP for split graphs, and more generally for $(k,\ell)$-partitions, correspond
to SP for certain $M$-partitions, and classifying the complexity of the latter was the subject
of~\cite{dantasDAM143}. Another famous matrix partition problem is the
\emph{stubborn partition}~\cite{cameronSODA2004,cyganSODA2011,dantasCATS2010},
i.e.,  partitioning the vertex set into at most four parts $A$, $B$, $C$, and $D$, such that
$A$ and $B$ are independent sets, $D$ is a clique, and no vertices from $A$ and $C$ are
adjacent. Equivalently,  a stubborn partition is an $M$-partition where $M$  is the matrix
below. 
The list stubborn sandwich problem is $\NP$-complete~\cite{dantasICCGI07}, whereas the original list version is in P~\cite{cyganSODA2011}.
\[  \begin{pmatrix}
                    0 & \ast & 0 & \ast  \\
                    \ast & 0 & \ast & \ast  \\
                    0 & \ast & \ast & \ast  \\
                    \ast & \ast & \ast & 1  \\
    \end{pmatrix} 
\]
Bodirsky and Guzm\'an-Pro~\cite{bodirskySODA2026} recently showed that for many natural SPs,
there is a polynomial-time equivalent homomorphism problem of $2$-edge-coloured graphs. In all their
cases, the target $2$-edge-coloured graph is necessarily infinite. We will see that in the case
of SP for matrix partitions, there is a polynomial-time equivalent  homomorphism
problem of a finite $2$-edge-coloured graph. Using this connection, we
will solve the following problem; a broad, but natural generalisation of the
classification of SPs for $(k,\ell)$-partitions~\cite{dantasDAM143}.

\begin{problem}\label{prob:SP-for-matrices}
    Classify the complexity of SPs for matrix partitions.
\end{problem}

\paragraph{Contributions.} In this paper, we present a \emph{structural}
classification of the complexity of homomorphism problems described by reflexive
complete $2$-edge-coloured  graphs (Theorem~\ref{thm:main})
that extends the classification of the complexity of graph homomorphism problems given
in~\cite{HellNesetril}. Our characterisation is also \emph{efficient} (Corollary~\ref{cor:dichotomy-P}): 
we can check in polynomial time whether the CSP of a reflexive complete $2$-edge-coloured
graph is in P or NP-complete, whereas for arbitrary $2$-edge-coloured graphs, this
task is NP-complete~\cite{brewsterDM340}. As an application of our main result, 
we solve Problems~\ref{prob:natural-reduction} and~\ref{prob:SP-for-matrices}, and
we present one of the few general classes of matrix partition problems
solved with a universal algorithm (Theorem~\ref{thm:bounded_width})--- ``most of the existing algorithms apply to concrete
small matrices''~\cite{federSODA2005}.

We now present a brief overview of the main ideas and components in the proof of our main result.
\begin{itemize}
    \item \emph{Homogeneous concatenations (Section~\ref{sec:homogeneous}).}
    These are specific kind of decompositions of a $2$-edge-coloured graph
    $\bH$, whose building blocks are \emph{homogeneous sets} in $2$-edge-coloured
    graphs (the name is inspired by the notion of homogeneous sets relevant
    in modular decompositions of uncoloured graphs~\cite{habibCSR4}). We see that
    if a $2$-edge-coloured reflexive graph $\bH$ can
    be constructed from a $2$-element structure by iteratively adding certain kind
    of homogeneous sets of size two (see Theorem~\ref{thm:bounded_width}),
    then the  $\CSP(\bH)$, and its list variant are solvable in polynomial time;
    and actually, in Datalog.

    \item \emph{Hereditarily pp-constructing $\bK_3$ (Section~\ref{sec:HH}).}
    Primitive positive constructions are certain specific ways of encoding one
    CSP into another one. In particular, if $\bH$ pp-constructs $\bK_3$, 
    then $\CSP(\bH)$ is $\NP$-complete. Motivated by the structural approach
    to graph homomorphism problems~\cite{HellNesetril}, and to CSPs of smooth
    digraphs~\cite{bangjensenDM138,BartoKozikNiven},
    we present a set  $\mathcal F$ of ``small'' \emph{hereditarily hard} $2$-edge-coloured
    graphs $\mathbb F$, that is, if $\bH$ contains $\mathbb F$ as a (not necessarily induced)
    substructure, then $\bH$  pp-constructs $\bK_3$. In turn, we achieve this by constructing the
    \emph{Siggers power} $\Sig(\bH)$,  and  establishing structural conditions
    which, if satisfied by $\Sig(\bH)$,
    imply that $\bH$ hereditarily pp-constructs $\bK_3$.
    \item \emph{Alternating components (Section~\ref{sec:alt-components}).}   
    We introduce a connectivity notion for $2$-edge-coloured graphs, where
    \emph{alternating components}
    correspond to connected components. We observe
    that alternating components yield canonical decompositions of
    $2$-edge-coloured graphs matching homogeneous concatenations
    (Observation~\ref{obs:topological-ordering}). Using the set
    $\mathcal F$ of ``small'' graphs that hereditarily pp-construct
    $\bK_3$, we see that if $\bH$ does not pp-construct $\bK_3$,
    then each alternating component has size at most four
    (Proposition~\ref{prop:intermediate-class}) ---
    almost matching the sufficient condition for tractability from
    Theorem~\ref{thm:bounded_width}.
    \item \emph{Algebra again (Section~\ref{sec:structura-clas}).} To close the
    gap between Proposition~\ref{prop:intermediate-class} and
    Theorem~\ref{thm:bounded_width}, we construct the \emph{$p$-cyclic power}
    $\Cyc_p(\bH)$ of a $2$-edge-coloured graph $\bH$. It follows from constraint
    satisfaction theory, that if $\bH$ does not pp-construct $\bK_3$, then
    there is a homomorphism $\Cyc_p(\bH)\to \bH$ for every prime $p>|H|$
    (Theorem~\ref{thm:CSP-dichotomy} and Lemma~\ref{lem:cyclic-power}). 
    Finally, we show that if a $2$-edge-coloured graph $\bH$
    admits a decomposition as in Proposition~\ref{prop:intermediate-class}
    but not as in Theorem~\ref{thm:bounded_width}, then there is
    no homomorphism $\Cyc_p(\bH)\to \bH$, and thus $\CSP(\bH)$
    is $\NP$-complete --- closing the gap, and settling our main
    result (Theorem~\ref{thm:main}).

\end{itemize}

It is worth pointing out that structural classifications of the complexity
of finite-domain CSPs are scarce.
Besides the Hell--Ne\v set\v ril theorem, we are only aware of structural
classifications for smooth digraphs~\cite{BartoKozikNiven}, semicomplete
digraphs~\cite{bangjensenSIDMA1}, signed graphs~\cite{BrewsterSiggersSigned},
and the list-homomorphism problems for graphs~\cite{ListStructuralAll}.
As mentioned above, one might not
expect such a beautiful and simple structural classification for finite-domain
CSPs because it is NP-complete to decide if the CSP of a finite structure $\bA$ is tractable~\cite{MetaChenLarose}
(assuming $\PO\neq \NP$). In fact, we do not expect such a classification 
even for digraph CSPs,  because this family already encodes all finite-domain
CSPs~\cite[Theorem 10]{FederVardi}. In turn,
CSPs of $2$-edge-coloured (undirected) graphs capture all digraph
CSPs~\cite{brewsterDM340}, leaving little hope for a
nice structural classification of this family of CSPs.

\paragraph{Outline of the paper.}
In Section~\ref{sec:prelim}, we provide the essential background on CSPs and revisit the connection between matrix partition problems 
with CSPs of reflexive complete $2$-edge-coloured  graphs.
The contents of Sections~\ref{sec:homogeneous}--\ref{sec:structura-clas} were previously introduced.
Finally, in Section~\ref{sec:CSP-SP} we apply our main result to 
SP of matrix problems, solving Problem~\ref{prob:SP-for-matrices}. 
We present conclusions and open problems in Section~\ref{sec:conclusion}.

\section{Preliminaries}
\label{sec:prelim}

We assume familiarity with elementary first-order logic
and first-order structures (for a reference see~\cite{Hodges}). In this
paper, we only consider finite relational signatures, and all structures
are finite (unless stated otherwise), and use the following conventions.
A \emph{relational signature} $\tau$ is a set of relation symbols,
and to each relation symbol $S\in \tau$ we associate a positive integer
$s$ called its \emph{arity}. We denote structures with symbols
$\bG, \bH, \dots$ and their vertex sets (domains) by $G,H,\dots$. 
For each relational symbol $S\in \tau$ of arity $s$, the
\emph{interpretation} of $S$ in a $\tau$-structure $\bH$ is a subset
of $H^s$; we denote this set by $S(\bH)$.

\subsection{Graphs and digraphs}

We consider digraphs as $\{E\}$-structures where $E$ is a binary
relation symbol, 
and (undirected) graphs as digraphs
$\bG$ where the interpretation $E(\bG)$ is symmetric.
Besides this convention, we follow standard notions from 
graph theory (see, e.g.,{\cite{bondy2008}}). In particular, 
we denote edges of an (undirected) graph by $uv$ instead of $(u,v)$. 
We write $\bK_n$ to denote the complete graph on $n$ vertices, 
$\bC_n$ denotes the (undirected) cycle on $n$ vertices, and
$\bP_n$ the (undirected) path on $n$ vertices. Similarly, 
$\vec{\bC}_n$ and $\vec{\bP}_n$, denote the directed
cycle and path on $n$ vertices. When depicting a graphs and digraphs,
we draw solid black line edge to represent undirected edges $uv$, and
an solid black arrow to depict directed edges $(u,v)$.

A graph $\bG$ is an \emph{induced subgraph} of $\bH$
if $G\subseteq H$, and $E(\bG) = E(\bH)\cap G^2$. Similarly, 
if the subgraph of $\bH$ \emph{induced} by a set of vertices
$U\subseteq H$ is the graph $(U, E(\bH)\cap U^2)$; we will sometimes
abuse notation and simplify notation by writing $(U, E(\bH))$.

The \emph{girth} of a graph $\bG$ is the length of the shortest cycle in $\bG$.
The girth of a digraph $\bD:=(D, E(\bD))$ is the girth of the graph whose
edge set is the symmetric closure of $E(\bD)$.

\subsection{Edge-coloured graphs}

In this paper, a \emph{$2$-edge-coloured graph} $\bH$ is a
$\{R,B\}$-structure where $R$ and $B$ are binary symbols, 
and the interpretations $R(\bH)$ and $B(\bH)$ are symmetric
binary relations. We follow standard notation in graph theory,
and  write $uv\in R(\bH)$ and $uv\in B(\bH)$, instead of
$(u,v)\in R(\bH)$ and $(u,v)\in B(\bH)$, respectively. 
We refer to elements $uv$ in $R(\bH)$ as
\emph{red edges}, to elements in $B(\bH)$ as \emph{blue edges}, 
and if $uv\in B(\bH)\cap R(\bH)$ we call it an \emph{$\ast$-edge}
(the name arises from the encoding of matrices $M$ as $2$-edge-coloured graphs).
When drawing $2$-edge-coloured graphs, we depict a red edge by a red
dashed edge, and a blue edge by a blue solid edge.

We say that a $2$-edge-coloured graph $\bH$ is \emph{complete}
if for every pair of non-equal vertices 
$u,v\in H$ are connected by some edge, in symbols, $uv\in R(\bH)\cup B(\bH)$.
Recall that in the introduction we defined the complete $2$-edge-coloured graph 
$\nu(\bG)$ constructed from a graph $\bG$ by colouring all edged with blue, 
and all non-edge $uv$ with $u\neq v$ with red. 

To every (uncoloured) graph $\bG$, we associate the reflexive complete
$2$-edge-coloured  graph $\bG^\ast$ where $B(\bG^\ast) = E(\bG)$, and
$R(\bG^\ast) = \{uv\colon u,v\in G, uv\not\in E(\bG)\}$.
Notice $\nu(\bG)$ corresponds to the structure obtained
from $\bG$ by removing all loops.

Given a reflexive complete $2$-edge-coloured  graph $\bH$, we denote by $\overline\bH$ the 
\emph{dual} of $\bH$, i.e., the reflexive complete $2$-edge-coloured  graph by 
changing blue edges for red edges and vice versa. See Figure~\ref{fig:K2}
for an illustration of these operations.

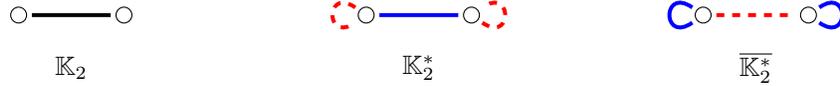
\begin{figure}[ht!]
\centering
    \begin{tikzpicture}
    
        \begin{scope}[scale=0.7]
            \node (L1) at (0,-1) {$\bK_2$};
            \node (1) [vertex] at (-1,0){};
            \node (2) [vertex] at (1,0){};
            
            \draw [edge] (1) to  (2);
            
        \end{scope}

        \begin{scope}[scale=0.7, xshift = 6.6cm]
           \node (L1) at (0,-1) {$\bK_2^\ast$};
             \node (1) [vertex] at (-1,0){};
            \node (2) [vertex] at (1,0){};
            
            \draw [edge, red, dashed] (1) to [out=55 + 90,in=125 + 90, looseness=12] (1);
            \draw [edge, red, dashed] (2) to [out=55 -90,in=125 + -90, looseness=12] (2);               \draw [edge, blue] (2) to (1);
        \end{scope}
        
        \begin{scope}[scale=0.7, xshift=13cm]
            \node (L1) at (0,-1) {$\overline{\bK_2^\ast}$};
             \node (1) [vertex] at (-1,0){};
            \node (2) [vertex] at (1,0){};
        
            \draw [edge, blue] (1) to [out=55 + 90,in=125 + 90, looseness=12] (1);
            \draw [edge, blue] (2) to [out=55 -90,in=125 + -90, looseness=12] (2);
           \draw [edge, red, dashed] (1) to (2);
                
        \end{scope}

\end{tikzpicture}
\caption{An illustration of the operations $(\cdot)^\ast$, and $\overline{(\cdot)}$
with the structures $\bK_2$, $\bK_2^\ast$, and $\overline{\bK_2^\ast}$.}
\label{fig:K2}
\end{figure}

We say that a vertex $v$ of a $2$-edge-coloured graph $\bH$ is a
\emph{blue vertex} if $vv\in B(\bH)$, and $v$ is a \emph{red vertex} if $vv\in R(\bH)$.
We denote by $H_B$ and by $H_R$ the subsets of blue and of red vertices of $\bH$, 
respectively. \emph{Red loops}, \emph{blue loops}, and \emph{$\ast$-loops}
are defined analogously to red edges, blue edges and $\ast$-edges.

\emph{Induced $2$-edge-coloured subgraphs} are defined analogously to 
induces subgraphs. As explained above, we will sometimes write $(U,R(\bH), B(\bH))$
to denote the $2$-edge-coloured subgraph of $\bH$ induces by $U\subseteq H$.

A \emph{red component} of a reflexive complete $2$-edge-coloured graph $\bH$ is a
connected component of $(H_B,R(\bH_B))$, i.e., a connected component of the graph with
red edges induced by the blue vertices.
A \emph{blue component} is defined similarly.

\subsection{Homomorphisms and CSPs}

Consider a pair of $\tau$-structures $\bG$ and $\bH$. 
A \emph{homomorphism} is a function $f\colon G\to H$ such that
for each relation symbol $S\in \tau$ of arity $s$, and each
$(u_1,\dots, u_s)\in S(\bG)$ the tuple $(f(u_1),\dots, f(u_s))$ belongs
to $S(\bH)$. In particular, a homomorphism between (edge-coloured) graphs
is a vertex mapping $f\colon G\to H$ that preserved adjacencies (and edge colours). 
When such a homomorphism exists we write $\bG\to \bH$, and otherwise,
we write $\bG\not\to \bH$. A pair of structures $\bG$ and $\bH$ are
\emph{homomorphically equivalent} if $\bG\to \bH$ and $\bH\to \bG$.

The \emph{constraint satisfaction problem} $\CSP(\bH)$ with template $\bH$, takes as
an input a structure $\bG$ (with the same signature as $\bH$), and the task is to decide
if $\bG\to \bH$. For instance, $\CSP(\bK_3)$ is the well-known 3-Colourability problem.
A well-known result from Erd\H{o}s~\cite{Erd:Gtp} about $k$-Colourability states that for
every pair of positive integers $\ell,k$ there is a graph $\bG$ with girth strictly greater
than $\ell$, and such that $\bG$ does not admit a proper $k$-colouring. This result generalises
to arbitrary relational structures, and it is known as the Sparse Incomparability
Lemma~\cite[Theorem 1.1]{Kun}. Below we state a well-known consequence of this result, and to 
stay in the context of this paper, we consider a version for $2$-edge-coloured graphs.
The girth of a $2$-edge-coloured graph $\bH$ is one if $\bH$
contains a loop, is $2$ if it contains no loop but it contains an $\ast$-edge, and
otherwise, the girth of $\bH$ is the girth of the (uncoloured) graph with edge set
$R(\bH)\cup B(\bH)$.

\begin{theorem}\label{thm:large-girth}
    For every finite $2$-edge-coloured graph $\bH$ and every positive integer $\ell$, 
    $\CSP(\bH)$ is polynomial-time equivalent to $\CSP(\bH)$ restricted to $2$-edge-coloured
    graphs with girth strictly larger than $\ell$.
\end{theorem}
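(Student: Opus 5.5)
One direction is trivial: the restriction of $\CSP(\bH)$ to inputs of girth greater than $\ell$ is a special case of $\CSP(\bH)$, so the identity map reduces it to $\CSP(\bH)$. The work is in reducing $\CSP(\bH)$ to its restriction to inputs of large girth. I plan to invoke the Sparse Incomparability Lemma~\cite[Theorem 1.1]{Kun} in its standard polynomial-time form. For every finite structure $\bA$ and all positive integers $\ell,k$, there is a structure $\bA'$ with the following properties:
\begin{itemize}
    \item $\bA'$ has girth greater than $\ell$;
    \item $\bA'\to\bA$;
    \item for every structure $\bH$ with $|H|\le k$, we have $\bA\to\bH$ if and only if $\bA'\to\bH$.
\end{itemize}
Moreover, for fixed $\ell$ and $k$, $\bA'$ can be constructed from $\bA$ in polynomial (randomised, and by later derandomisations deterministic) time. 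Here girth is the usual hypergraph girth of the incidence multigraph.

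First, I view a $2$-edge-coloured graph as an $\{R,B\}$-structure with symmetric relations. Applying the lemma directly to $\bA$ may destroy symmetry. To handle this, I pass to an equivalent signature with two binary relations in which each undirected coloured edge is encoded by a single tuple, so each red edge $uv$ is kept as one tuple $(u,v)$. The template correspondingly becomes $\bH$ itself, whose relations are symmetric, so orienting an edge arbitrarily does not change whether a map is a homomorphism. Given an input $\bG$, I orient it arbitrarily to obtain $\vec\bG$, apply the lemma with $k=|H|$ to get $\vec\bG'$, and then symmetrise back to a $2$-edge-coloured graph $\bG'$. Homomorphisms to the symmetric $\bH$ are insensitive to orientation, so $\bG\to\bH$ if and only if $\bG'\to\bH$.

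Next, I translate girth. Loops in $\bG'$ would give cycles of length $1$ in the lemma's sense, and a pair $u,v$ carrying both a red and a blue tuple (an $\ast$-edge) would give a cycle of length $2$. Hence, once $\ell\ge 2$, large girth in the lemma's sense rules out loops and $\ast$-edges. Any cycle in the underlying graph $(G', R\cup B)$ likewise corresponds to a cycle of the same length in the incidence structure. Therefore $\bG'$ has girth greater than $\ell$ in the sense defined in this paper. One caveat: antiparallel tuples $(u,v),(v,u)$ in the same relation would collapse under symmetrisation. I avoid this by the single-orientation encoding, since any such pair would already be a $2$-cycle that the lemma excludes.

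The main obstacle is the algorithmic part of the lemma: one needs $\bA'$ to be computable in polynomial time, not merely to exist. I would cite the effective version from~\cite{Kun}, which is deterministic polynomial time for fixed $\ell,k$, rather than reprove it. The remaining checks are bookkeeping and are routine.
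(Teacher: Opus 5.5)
Your proposal is correct and follows the paper's approach: the paper gives no proof and states the theorem as a well-known consequence of Kun's deterministic Sparse Incomparability Lemma, applied with $k=|H|$, and your symmetrisation and girth bookkeeping supply the details the paper leaves out. Your orientation step is harmless but unnecessary: you can apply the lemma to $\bG$ directly (with girth bound $\max(\ell,2)$) and take the symmetric closure of the output, since homomorphisms into the symmetric $\bH$ ignore orientation and the girth bound already excludes loops, $\ast$-pairs and antiparallel pairs.
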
 

\subsection{Primitive positive constructions}

A \emph{primitive positive} formula (pp-formula) is an existential
formula whose quantifier-free part is a conjunction of positive atoms.
For instance, $\delta(x,y):=\exists z.\; E(x,z)\land E(y,z)$ is a pp-formula
stating that $x$ and $y$ are connected by a walk of length two.
An $r$-ary relation $S\subseteq H^s$ is \emph{primitive positively
definable} (pp-definable) if there is a pp-formula $\phi(x_1,\dots, x_s)$
such that $(u_1,\dots, u_s)\in S$ if and only if $\bH\models \phi(u_1,\dots, u_s)$.
In particular, we say that a set of vertices $S\subseteq H$ is pp-definable
if $S$ is pp-definable as a unary relation.

A binary \emph{pp-power}\footnote{This definition naturally generalises to
non-binary power (see, e.g.,~\cite{wonderland}) but they are not needed for this work.} 
of a $2$-edge-coloured graph (structure) $\bH$, is a
structure $\mathbb P$ with binary signature $S_1,\dots, S_k$ such that
\begin{itemize}
    \item the domain $P$ of $\mathbb P$ is $H^d$ for some positive integer $d$, and
    \item for each $i\in[k]$ there is a pp-formula $\delta_i(x_1,\dots, x_d,y_1,\dots, y_d)$
    such that $((u_1,\dots, u_d),(v_1,\dots, v_d))\in S_i(\bP)$ if and only if
    $\bH\models \delta_i(u_1,\dots, u_d,v_1,\dots, v_d)$.
\end{itemize}
We say that a structure $\bH$ \emph{primitively positively constructs} 
(\emph{pp-constructs}) a structure $\bB$
if $\bB$ is homomorphically equivalent to a pp-power of $\bH$.
Primitive positive constructions encode certain specific kind of gadget
reduction, which yield log-space reductions --- we state this in the
following theorem, and subsequently present a simple example.

\begin{theorem}[\hspace{-4pt}{\cite[Corollary~3.5]{wonderland}}]\label{thm:pp-constructions}
    If a structure $\bH$ pp-constructs a structure $\bH'$, then 
    there is a log-space reduction from $\CSP(\bH')$ to $\CSP(\bH)$.
\end{theorem}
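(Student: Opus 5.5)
The plan is the standard gadget-reduction argument from the theory of pp-constructions, split into two parts: first handle pp-powers, then handle homomorphic equivalence.

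\emph{Part 1: $\bH'$ is itself a pp-power $\bP$ of $\bH$ with domain $H^d$.} Take an instance $\bG$ of $\CSP(\bP)$ and build an instance $\bG'$ of $\CSP(\bH)$ as follows.
\begin{itemize}
\item Replace every vertex $g\in G$ by $d$ fresh variables $g_1,\dots,g_d$.
\item For every constraint $(g,h)\in S_i(\bG)$, add a copy of the pp-formula $\delta_i(g_1,\dots,g_d,h_1,\dots,h_d)$. The quantified variables of $\delta_i$ become fresh vertices, and the atoms become edges or tuples of $\bG'$.
\end{itemize}
Correctness has two directions, both read off directly from the definition of $S_i(\bP)$ via $\delta_i$.
\begin{itemize}
\item Suppose $f\colon \bG\to\bP$. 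Set $g_j\mapsto f(g)_j$, and choose witnesses for the existential variables. This yields a homomorphism $\bG'\to\bH$.
\item Conversely, given a homomorphism $\bG'\to\bH$, restrict it to the tuples $(g_1,\dots,g_d)$. This gives $\bG\to\bP$.
\end{itemize}
Since the signature and the formulas $\delta_i$ are fixed, each constraint expands into a constant-size gadget. Writing out $\bG'$ therefore needs only counters over the input, so it is a log-space transducer.

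\emph{Part 2: homomorphic equivalence.} Suppose $\bH'$ is homomorphically equivalent to $\bP$. Then for every $\bG$ we have $\bG\to\bH'$ if and only if $\bG\to\bP$, by composing with the homomorphisms $\bH'\to\bP$ and $\bP\to\bH'$. Hence $\CSP(\bH')=\CSP(\bP)$ as problems, and the reduction from Part 1 applies unchanged.

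One technical point needs care: pp-formulas may contain equality atoms. The construction should identify the corresponding variables. Alternatively, one may assume without loss of generality that each $\delta_i$ is equality-free, by substituting variables. Identification via union–find is not obviously log-space, but equalities inside a fixed formula only merge variables within a constant-size gadget or between coordinates $g_j$. The subtle case is chains of equalities across many constraints. The first thing to try is to eliminate equalities syntactically within each $\delta_i$ beforehand. If an equality between two free variables remains, it can be simulated by a binary relation of $\bH$ that contains the diagonal. Failing that, one can invoke the fact that connectivity-based identification is computable in log-space via Reingold's undirected connectivity result.

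This equality-handling step is the main obstacle. The rest of the argument is routine bookkeeping.
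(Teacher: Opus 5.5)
Your argument is the standard gadget reduction behind the cited result; the paper does not prove the statement itself, it only cites Barto--Opr\v{s}al--Pinsker. Parts~1 and~2 are correct as written. Fresh existential variables per constraint let witnesses be chosen independently, and homomorphic equivalence gives $\CSP(\bH')=\CSP(\bP)$ literally.

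The only thing to repair is the equality discussion.

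\begin{itemize}
\item Equalities involving a quantified variable can be removed by substitution. An atom $x_j=y_{j'}$ between two free variables cannot, so ``without loss of generality equality-free'' is not available.
\item Simulating such an atom by a binary relation of $\bH$ that merely contains the diagonal is unsound, because the off-diagonal pairs relax the constraint. For example, take $d=1$ with $S_1$ defined by $x=y$ and $S_2$ by $R(x,y)$. The instance $S_1(u,v)\land S_2(u,v)$ then asks for a red loop. Replacing $x=y$ by $B(x,y)$, which contains the diagonal whenever all loops are blue, only asks for an $\ast$-edge. A graph with two blue loops joined by an $\ast$-edge has the latter but not the former.
\end{itemize}

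So drop that option and commit to the last one:
\begin{itemize}
\item Form the undirected graph on the variables $g_j$ whose edges are the free-variable equalities occurring in all gadgets.
\item Replace every variable by the least variable in its connected component. This is log-space computable by Reingold's theorem.
\item Output the remaining atoms with these representatives.
\end{itemize}

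Every use of this theorem in the paper only needs a polynomial-time reduction, and for that plain union--find already suffices.
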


\begin{example}\label{ex:pp-def}
    Consider the $2$-edge-coloured  graph $\bA$ depicted below, and the
    pp-formula $\delta(x,y):=\exists z.\; B(x,z)\land R(y,z)$.
    \begin{center}
    \begin{tikzpicture}
    \begin{scope}[scale=0.5]
        \node at (145:2){$\bA$};
            \node (0) [vertex, label = left:{\scriptsize $0$}] at (90:1.3){};
            \node (1) [vertex, label = below:{\scriptsize $1$}] at (210:1.3){};
            \node (2) [vertex, label = below:{\scriptsize $2$}] at (330:1.3){};
            
            \draw [edge, red, dashed] (0) to [out=55,in=125, looseness=12] (0);
             \draw [edge, blue] (1) to [out=55 + 90,in=125 + 90, looseness=12] (1);
            \draw [edge,  blue] (2) to [out=55 -90,in=125 + -90, looseness=12] (2);
               
            \foreach \from/\to in {0/1, 2/0} 
                \draw [edge, blue] (\from) to [bend right = 20] (\to);
                \draw [edge, red, dashed] (1) to [bend right = 20] (2);
        \end{scope}
\end{tikzpicture}
\end{center}
    Notice that the graph with vertex set $A$, and edges
    $(u,v)$ where $\delta(u,v)$ is true in $\bA$ is the complete graph 
    on three vertices. Hence, $\bA$ pp-constructs $\bK_3$ (in this simple
    case, the pp-power has vertex set $A^1$, and $\bK_3$ is isomorphic to a pp-power
    of $\bH$). Therefore, $\CSP(\bA)$ is $\NP$-complete by Theorem~\ref{thm:pp-constructions}
    (and because 3-Colourability is $\NP$-complete).

\end{example}

\subsection{Cores}

An \emph{endomorphism} of a structure $\bH$ is a homomorphism
$f\colon \bH\to \bH$. An \emph{automorphism} is a bijective endomorphism.
A finite 2-edge-coloured graph $\bH$ is a \emph{core} if every endomorphism of
$\bH$ is an automorphism. 
It is known that every 2-edge-coloured graph $\bH$ contains an induced subgraph
$\bG$ which is a core and is homomorphically equivalent to $\bH$.
Such $\bG$ is unique up to isomorphism and is called \emph{the core} of $\bH$.

Cores are useful in the study of the complexity of CSPs. Firstly, 
if $\bH$ is the core of $\bH'$, then $\CSP(\bH)$ and $\CSP(\bH')$ are the same problem.
Moreover, ``adding constants''  does not change the complexity of
the CSP of a core. That is, if $H = \{h_1,\dots, h_n\}$, then $\CSP(\bH)$ and
$\CSP(\bH,\{h_1\},\dots, \{h_n\})$ are log-space equivalent~\cite{JBK}. 
On a semantic level, $\CSP(\bH,\{h_1\},\dots, \{h_n\})$ encodes the pre-coloured
version of the homomorphism problem described by $\bH$ (i.e., some vertices $v$
of the input $\bG$ might be restricted to map to some fixed element $h\in H$.)
To simplify notation, we will write $(\bH,h_1,\dots, h_n)$ instead
of $(\bH,\{h_1\},\dots, \{h_n\})$, and when useful, we will assume that
$H = \{h_1,\dots, h_n\}$ without explicitly writing. We will often
refer to $(\bH,h_1,\dots, h_n)$ as $\bH$ \emph{with constants}. The following statement
presents an explanation of the log-space equivalence between $\CSP(\bH)$ and
$\CSP(\bH, h_1,\dots, h_n)$ in terms of pp-constructions.

\begin{theorem}[\hspace{-0.8pt}{\cite[Lemma~3.9]{wonderland}}]\label{thm:core_constants}
    If $\bH$ is a core, then $\bH$ pp-constructs $(\bH,h_1,\dots, h_n)$.
\end{theorem}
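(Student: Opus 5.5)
The plan is to exhibit a single pp-power of $\bH$ that is homomorphically equivalent to $(\bH,h_1,\dots,h_n)$. Take dimension $d=n$ and consider the canonical pp-formula of $\bH$ itself. Let $\phi_{\bH}(x_1,\dots,x_n)$ be the conjunction of all atoms $R(x_i,x_j)$ with $h_ih_j\in R(\bH)$ and all atoms $B(x_i,x_j)$ with $h_ih_j\in B(\bH)$. The satisfying tuples of $\phi_{\bH}$ are exactly the maps $\{h_1,\dots,h_n\}\to H$, $h_i\mapsto u_i$, that are endomorphisms of $\bH$. Because $\bH$ is a core, these are automorphisms. Denote by $O\subseteq H^n$ this set of tuples, which is the orbit of $\bar h=(h_1,\dots,h_n)$ under $\mathrm{Aut}(\bH)$.

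Next I define the pp-power $\bP$ on the domain $H^{n+1}$, reading a tuple as $(x,\bar y)$, and each relation is pp-definable in $\bH$. The binary relation $R$ on $\bP$ is given by $R(x,x')\wedge \bar y=\bar y'\wedge\phi_{\bH}(\bar y)$, where equality is pp-definable as usual (or by identifying variables), and $B$ is defined analogously. For each $i$, the unary constant relation $C_i$ is given by $\phi_{\bH}(\bar y)\wedge x=y_i$. This fits the pp-power format, up to encoding unary relations as binary ones with a dummy second coordinate, or by allowing unary symbols, which I would justify by the general notion from \cite{wonderland}.

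Then I would check homomorphic equivalence. For $(\bH,h_1,\dots,h_n)\to\bP$, map $u\mapsto(u,\bar h)$. Edges are preserved since $\bar h\in O$, and $h_i\mapsto(h_i,\bar h)\in C_i$. For $\bP\to(\bH,h_1,\dots,h_n)$, a tuple $(x,\bar y)$ with $\bar y\in O$ corresponds to a unique automorphism $\alpha$ with $\alpha(\bar h)=\bar y$, and I send it to $\alpha^{-1}(x)$. Tuples with $\bar y\notin O$ lie in no relation, so they can be sent anywhere. This map preserves $R$ and $B$ because both endpoints share the same $\bar y$, hence the same $\alpha$, and $\alpha^{-1}$ preserves colours. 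It preserves $C_i$ because $\alpha^{-1}(y_i)=h_i$.

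I expect the main obstacle to be this backward homomorphism. It is exactly where the core hypothesis is needed: without it, $\phi_{\bH}$ is also satisfied by non-injective endomorphisms, and then there is no inverse available to normalise $x$. The remaining work is only bookkeeping about formats, namely the arities and the unary relations in the definition of pp-power.
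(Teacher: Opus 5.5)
Your proof is correct. The paper gives no argument of its own for this statement; it imports it from Barto--Opr\v{s}al--Pinsker. What you wrote is a complete, self-contained proof of it:
\begin{itemize}
\item the canonical formula $\phi_{\bH}$ pp-defines the set of endomorphisms, and the core hypothesis makes this set equal to the automorphism orbit $O$ of $\bar h$;
\item the extra $n$ coordinates serve as a reference copy of $\bH$, which is the pp-power counterpart of the classical reduction that glues a copy of $\bH$ onto the instance and identifies pre-coloured vertices with their namesakes in the copy;
\item the map $(x,\bar y)\mapsto\alpha_{\bar y}^{-1}(x)$ undoes the twist.
\end{itemize}
The paper defines pp-constructibility as homomorphic equivalence to a pp-power, so your one-step construction matches its definition exactly.

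Two small bookkeeping points.

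First, of your two options for the constants, only the second literally proves the statement. The structure $(\bH,h_1,\dots,h_n)$ has unary symbols, so the pp-power must carry unary relations $C_i$; this is the general, non-binary notion alluded to in the paper's footnote. A binary re-encoding would change the signature, and homomorphic equivalence with $(\bH,h_1,\dots,h_n)$ would no longer make sense.

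Second, the step ``$\alpha^{-1}$ preserves colours'' uses finiteness. A bijective endomorphism of a finite structure maps each relation injectively, hence onto, itself. Equivalently, $\alpha^{-1}=\alpha^{k-1}$ where $k$ is the order of $\alpha$.

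There is also a purely algebraic alternative, if you are willing to use the characterisation of pp-constructibility by minion homomorphisms between polymorphism clones. Send each $f\in\Pol(\bH)$ to $u_f^{-1}\circ f$, where $u_f(x)=f(x,\dots,x)$ is an automorphism because $\bH$ is a core. The result is idempotent, so it preserves all singletons. It commutes with minors because $u_{f^\pi}=u_f$ for every minor $f^\pi$ of $f$. That route is shorter but leans on a deep theorem, whereas your construction is elementary and yields the explicit gadget reduction.
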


\subsection{Polymorphisms}

The product $\bG\times \bH$ of a pair of $2$-edge-coloured graphs
is the structure with vertex set $G\times H$ where $(g,h)(g',h')$ is a
red edge (resp.\ blue edge) if and only if $gg'\in R(\bG)$ and $hh'\in R(\bH)$
(resp.\ $gg'\in B(\bG)$ and $hh'\in B(\bH)$). For a positive integer $m$, the
\emph{$m$-power} $\bH^m$ of a $2$-edge-coloured graph $\bH$ is recursively 
defined as $\bH^m:=\bH^{m-1}\times \bH$ where $\bH^1 = \bH$.

For a positive integer $m\ge 2$, an \emph{$m$-ary polymorphism}
of a 2-edge-coloured graph $\bH$ is a homomorphism $f\colon \bH^m\to \bH$.
Equivalently, it is a function $f\colon H^m\to H$ such that, for every choice
of blue edges $a_1b_1,\ldots,a_mb_m\in B(\bH)$, and red edges
$c_1d_1,\ldots,c_md_m\in R$, we have that $f(a_1,\ldots,a_m)f(b_1,\ldots,b_m)\in B(\bH)$,
and $f(c_1,\ldots,c_m)f(d_1,\ldots,d_m)\in R(\bH)$. We denote
by $\Pol(\bH)$ the set of all polymorphisms of $\bH$.
An $m$-ary polymorphism $f\colon \bH^m\to \bH$ is called
\begin{itemize}
    \item \emph{cyclic} if it satisfies $f(x_1,\dots, x_m) = f(x_2,\dots, x_m,x_1)$ 
    for all $x_1,\dots, x_m\in H$;
    \item a \emph{weak near-unanimity} if it satisfies
    $f(x,\dots, x,y) = f(x,\dots, x,y,x) = \dots = f(y,x,\dots, x)$, for all $x,y\in H$;
    \item \emph{Siggers} if $m = 4$ and for all $a,r,e\in H$ the equality
    $f(a,r,e,a) = f(r,a,r,e)$ holds. 
\end{itemize}
The following theorem is known as the finite-domain CSP
dichotomy~\cite{BulatovFVConjecture,ZhukFVConjecture} (for a journal
version see~\cite{Zhuk20}).

\begin{theorem}\label{thm:CSP-dichotomy}
    For every finite structure $\bH$ one of the following holds: either
    $\bH$ pp-constructs $\mathbb{K}_3$, and in this case $\CSP(\bH)$ is $\NP$-complete;
    or otherwise, the following equivalent statements hold and in 
    any such case $\CSP(\bH)$ is polynomial-time solvable:
    \begin{itemize}
        \item for every prime $p > |H|$ there is a $p$-ary cyclic polymorphism $f\colon \bH^p\to \bH$,
         \item there is a weak near-unanimity polymorphism $f\colon \bH^n\to \bH$ for some positive integer $n\ge 2$, and
        \item there is a Siggers polymorphism $f\colon  \bH^4\to \bH$.
    \end{itemize}
\end{theorem}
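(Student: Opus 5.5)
The plan is not to prove this from scratch: it assembles several deep results from the algebraic theory of CSPs, and I would organise the argument around that. The statement has three parts, which I would treat separately: (i) if $\bH$ pp-constructs $\bK_3$ then $\CSP(\bH)$ is $\NP$-complete; (ii) if $\bH$ does not pp-construct $\bK_3$ then the three polymorphism conditions hold and are equivalent; (iii) under those conditions $\CSP(\bH)$ is in P.

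Part (i) is immediate from what the paper already states. By Theorem~\ref{thm:pp-constructions}, a pp-construction of $\bK_3$ gives a log-space reduction from $\CSP(\bK_3)$, i.e.\ 3-Colourability, to $\CSP(\bH)$. Since 3-Colourability is $\NP$-complete and $\CSP(\bH)$ is in NP, $\CSP(\bH)$ is $\NP$-complete.

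For (ii) I would first pass to the core $\bG$ of $\bH$. The two structures have the same CSP and pp-construct each other, and polymorphisms transfer between them by composing with the homomorphisms $\bH\to\bG\to\bH$. By Theorem~\ref{thm:core_constants}, $\bG$ pp-constructs $\bG$ with constants, so I may work with an idempotent polymorphism clone. I would then cite the following chain.
\begin{itemize}
\item By the Barto--Opr\v{s}al--Pinsker wonderland theorem, not pp-constructing $\bK_3$ is equivalent to the existence of a polymorphism satisfying a nontrivial height-one identity; for an idempotent core this yields a Taylor polymorphism.
\item By Barto--Kozik, a finite Taylor algebra has a cyclic term of every prime arity $p>|H|$. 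This gives the first bullet.
\item A $p$-ary cyclic polymorphism is in particular a weak near-unanimity polymorphism, since cyclic shifts move the single $y$ to every position. This gives the second bullet.
\item The existence of a WNU (Mar\'oti--McKenzie), or directly Taylor, implies a Siggers polymorphism by Siggers' theorem in the form $f(a,r,e,a)=f(r,a,r,e)$ due to Kearnes--Markovi\'c--McKenzie. This gives the third bullet.
\item Conversely, each of the three identities is nontrivial, in the sense that no projection satisfies it. By the wonderland theorem this excludes a pp-construction of $\bK_3$, which closes the cycle of equivalences.
\end{itemize}
A minor point to check is that identities found in the core lift back to $\bH$. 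They do, because all three identities are height one: composing the polymorphism of $\bG$ with the retraction $\bH\to\bG$ and the inclusion $\bG\to\bH$ preserves them.

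Part (iii) is the main obstacle and cannot be proved here. It is exactly the Feder--Vardi dichotomy conjecture, proved independently by Bulatov~\cite{BulatovFVConjecture} and Zhuk~\cite{ZhukFVConjecture,Zhuk20}. I would invoke their theorem: a finite structure with a Siggers (equivalently WNU) polymorphism has a polynomial-time solvable CSP. Combining (i), (ii) and (iii) gives the dichotomy as stated.
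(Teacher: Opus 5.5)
Your proposal is correct and matches how the paper treats this statement: the paper gives no proof of its own and simply cites the finite-domain dichotomy of Bulatov and Zhuk. Your write-up unpacks the standard ingredients behind that citation, namely hardness via Theorem~\ref{thm:pp-constructions}, the wonderland theorem, Barto--Kozik cyclic terms and the Kearnes--Markovi\'c--McKenzie form of Siggers' theorem for the idempotent core, lifting back to $\bH$ because the identities are height one, and Bulatov--Zhuk for tractability. The citation of Mar\'oti--McKenzie is unnecessary, since it concerns the Taylor-to-WNU direction, whereas you obtain the WNU from a cyclic polymorphism and the Siggers polymorphism from the Taylor polymorphism of the idempotent core.
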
 

For a set $A$ and $n\in\mathbb{N}$, a mapping $f\colon A^n\to A$ is called
\emph{idempotent} if, for all $a\in A$, we have $f(a,\ldots,a)=a$.
The following is an immediate consequence of Theorem~\ref{thm:core_constants}
and Theorem~\ref{thm:CSP-dichotomy}.

\begin{corollary}\label{cor:core_idempotent}
    If $\bH$ is a core, then either $\bH$ pp-constructs $\mathbb{K}_3$, and in these case $\CSP(\bH)$ is $\NP$-complete; or otherwise the following equivalent statements hold and in 
    any such case $\CSP(\bH)$ is polynomial-time solvable:
    \begin{itemize}
        \item for every prime $p > |H|$ there is a $p$-ary idempotent cyclic polymorphism $f\colon \bH^p\to \bH$, and
        \item there is an idempotent Siggers polymorphism $f\colon  \bH^4\to \bH$.
    \end{itemize}
\end{corollary}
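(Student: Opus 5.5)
The plan is to derive Corollary~\ref{cor:core_idempotent} by combining Theorem~\ref{thm:core_constants} with Theorem~\ref{thm:CSP-dichotomy}, applied to the expansion $\bH_c:=(\bH,h_1,\dots,h_n)$ by constants.

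\textbf{Step 1: the two alternatives are the same for $\bH$ and $\bH_c$.} By Theorem~\ref{thm:core_constants}, $\bH$ pp-constructs $\bH_c$. Conversely, $\bH_c$ trivially pp-constructs $\bH$: take the $1$-dimensional pp-power using only the symbols $R,B$, which is $\bH$ itself. Since pp-constructibility is transitive (a standard fact, see~\cite{wonderland}), $\bH$ pp-constructs $\bK_3$ if and only if $\bH_c$ does.

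\textbf{Step 2: the NP-complete case.} If $\bH$ pp-constructs $\bK_3$, then Theorem~\ref{thm:CSP-dichotomy} already gives that $\CSP(\bH)$ is $\NP$-complete.

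\textbf{Step 3: the tractable case, from $\bH_c$ to idempotent polymorphisms of $\bH$.} If $\bH$ does not pp-construct $\bK_3$, then by Step 1 neither does $\bH_c$. Applying Theorem~\ref{thm:CSP-dichotomy} to $\bH_c$ gives, for every prime $p>|H|$, a $p$-ary cyclic polymorphism $f$ of $\bH_c$, and also a Siggers polymorphism $s$ of $\bH_c$. A polymorphism must preserve each unary relation $\{h_i\}$, so $f(h_i,\dots,h_i)\in\{h_i\}$; hence $f$, and likewise $s$, is idempotent. Every polymorphism of $\bH_c$ is in particular a polymorphism of $\bH$, since it preserves $R$ and $B$. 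So both items hold for $\bH$.

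\textbf{Step 4: equivalence of the items and tractability.} Each item yields, after forgetting idempotency, the corresponding item of Theorem~\ref{thm:CSP-dichotomy} for $\bH$. Those items place us in the second alternative of Theorem~\ref{thm:CSP-dichotomy}, where $\bH$ does not pp-construct $\bK_3$ and $\CSP(\bH)$ is polynomial-time solvable. Conversely, Step 3 shows that not pp-constructing $\bK_3$ implies both idempotent items. Hence the two items are equivalent to each other, and each is equivalent to the second alternative.

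The only point needing care is Step 1, namely transitivity of pp-constructions. It is standard and can be cited rather than reproved; no step here is a real obstacle.
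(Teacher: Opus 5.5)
Your proposal is correct and is exactly the argument the paper has in mind when it calls this corollary an immediate consequence of Theorem~\ref{thm:core_constants} and Theorem~\ref{thm:CSP-dichotomy}. You apply the dichotomy to $(\bH,h_1,\dots,h_n)$, which has the same domain and is pp-constructed by $\bH$, and note that preserving the singleton relations forces idempotency. Your Step~4 implicitly reads the two alternatives of Theorem~\ref{thm:CSP-dichotomy} as mutually exclusive. That is the standard and true reading, since $\bK_3$ has no Siggers polymorphism and pp-constructions preserve such height-1 identities.
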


\subsection{The natural reduction to CSPs}
\label{sec:natural-reduction}

As mentioned in the introduction, for every $(n\times n)$ 
symmetric matrix $M$ we construct a reflexive complete $2$-edge-coloured  graph
$\mathbb M$ such that the $M$-partition naturally reduces to $\CSP(\mathbb M)$. 
The vertex set of $\mathbb M$ is $[n]$, the set of blue edges is
$\{ij\colon m_{ij}\in\{1,\ast\}\}$, and the set of red edges is
$\{ij\colon m_{ij}\in \{0,\ast\}\}$.

\begin{observation}\label{obs:natural-reduction}
    For every matrix $M$, the \emph{natural reduction}, defined by transforming $\bG$ into
    the irreflexive complete $2$-edge-coloured $\nu(\bG)$
    is a log-space reduction from 
    the $M$-partition problem to $\CSP(\mathbb M)$. 
\end{observation}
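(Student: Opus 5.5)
We must show two things: that $\bG$ admits an $M$-partition exactly when $\nu(\bG)\to\mathbb M$, and that $\bG\mapsto\nu(\bG)$ is computable in logarithmic space. The plan is to verify the equivalence directly from the definitions, taking the identity on vertex sets as the witnessing map in both directions, and then to check the complexity of the transformation.

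\textbf{Correctness.} Let $f\colon V\to[n]$ be an $M$-partition of $\bG=(V,E)$; we claim $f$ is also a homomorphism $\nu(\bG)\to\mathbb M$. The structure $\nu(\bG)$ has no loops, so we only need to check pairs $u\neq v$.
\begin{itemize}
\item If $uv\in B(\nu(\bG))$, then $uv\in E$. Hence $m_{f(u)f(v)}\in\{1,\ast\}$, which by the construction of $\mathbb M$ means $f(u)f(v)\in B(\mathbb M)$.
\item If $uv\in R(\nu(\bG))$, then $uv\notin E$ and $u\neq v$. Hence $m_{f(u)f(v)}\in\{0,\ast\}$, which means $f(u)f(v)\in R(\mathbb M)$.
\end{itemize}
The case $f(u)=f(v)$ needs no separate treatment. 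The loop at $f(u)$ in $\mathbb M$ encodes the diagonal entry $m_{f(u)f(u)}$, and this is the same entry the $M$-partition condition refers to.

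For the converse, let $h\colon\nu(\bG)\to\mathbb M$ be a homomorphism, and take distinct vertices $u,v$.
\begin{itemize}
\item If $uv\in E$, then $uv$ is a blue edge of $\nu(\bG)$, so $h(u)h(v)\in B(\mathbb M)$. By definition of $B(\mathbb M)$, this gives $m_{h(u)h(v)}\in\{1,\ast\}$.
\item If $uv\notin E$, then $uv$ is a red edge of $\nu(\bG)$, so $h(u)h(v)\in R(\mathbb M)$, giving $m_{h(u)h(v)}\in\{0,\ast\}$.
\end{itemize}
Thus $h$ is an $M$-partition. The two edge sets of $\mathbb M$ were defined precisely so that these equivalences hold entrywise, so this step is routine.

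\textbf{Log-space computability.} $M$ is fixed, so only the map $\bG\mapsto\nu(\bG)$ needs analysing. The transducer copies the vertex set of $\bG$. It then iterates over all ordered pairs $(u,v)$ with $u\neq v$, using two pointers of $O(\log|V|)$ bits. For each pair it checks membership of $uv$ in $E$ by scanning the input, and outputs $uv$ as blue if the pair is an edge and as red otherwise. This uses only logarithmic work space.

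The argument has no real obstacle. The only point needing care is the bookkeeping of loops and the requirement $u\neq v$: because $\nu(\bG)$ is irreflexive, no constraint ever forces a particular loop colour in $\mathbb M$, and this matches the fact that the $M$-partition definition only constrains pairs of different vertices.
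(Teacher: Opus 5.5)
Your proposal is correct and matches the paper. The paper gives no written proof and treats the equivalence ``$\bG$ admits an $M$-partition iff $\nu(\bG)\to\mathbb M$'' as immediate from the definitions of $\nu(\bG)$ and $\mathbb M$, which is exactly the entrywise check you carry out, with the same map witnessing both directions. One small wording slip: your closing claim that no constraint ever forces a loop colour in $\mathbb M$ is not literally true, since an edge $uv$ with $u\neq v$ and $f(u)=f(v)=i$ does constrain the loop at $i$; you already handled this case correctly in the body, so nothing breaks.
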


A concrete example of the complexity gap between 
the $M$-partition problem and $\CSP(\mathbb M)$
is the matrix $M$ associated to the structure $\bA$ from
Example~\ref{ex:pp-def} (i.e., $m_{ij} = 0$ if $ij\in R(\bA)$
and $m_{ij} = 1$ if $ij\in B(\bA)$). Then, $\mathbb M = \bA$,
and so  $\CSP(\mathbb M)$ is NP-complete by
Example~\ref{ex:pp-def}.  However, for every $\{0,1\}$-matrix $M'$,
the $M'$-partition problem is polynomial-time solvable and in first-order
logic~\cite{feder_hell_full_hom}.

Clearly, the mapping $M\mapsto \mathbb M$ is a one-to-one
correspondence between symmetric matrices over $\{0,1,\ast\}$
and reflexive complete $2$-edge-coloured  graphs. Hence, 
by Observation~\ref{obs:natural-reduction},
and by the finite-domain dichotomy (Theorem~\ref{thm:CSP-dichotomy}),
Problem~\ref{prob:natural-reduction} is subsumed by the following one.

\begin{problem}\label{prob:2-ec-rc}
    Present a structural characterisation of reflexive complete $2$-edge-coloured
    graphs $\bH$ that do not pp-construct $\bK_3$, and thus
    $\CSP(\bH)$ is polynomial-time solvable.
\end{problem}

The known algorithms used to solve CSPs of structures that do not pp-construct 
$\bK_3$~\cite{BulatovFVConjecture,ZhukFVConjecture}, are heavily involved, and for this reason, we also
set the task of presenting simpler algorithms that solve the CSPs
of reflexive complete $2$-edge-coloured  graphs that do not pp-construct
$\bK_3$. We begin by presenting such algorithms in Section~\ref{sec:homogeneous}.

\section{Tractability and homogeneous sets}
\label{sec:homogeneous}

In this section we present a structural property $\Pi$,
such that if a reflexive complete $2$-edge-coloured graph $\mathbb H$ satisfies
this property $\Pi$, then $\CSP(\bH)$ can be solved in polynomial
time (Theorem~\ref{thm:homogeneous-Ptime}).

Given a reflexive $2$-edge-coloured graph $\bH$, we say that
a set $S\subseteq H$ is \emph{homogeneous} if every vertex
$h\in H\setminus S$ is connected to $s\in S$ by a blue
edge if and only if $ss\in B$, and by a red edge if and only if
$ss\in R$. We say that a vertex $u\in H$ is \emph{homogeneous
vertex} if $\{u\}$ is a homogeneous set, and an $\ast$-edge $uv\in B\cap R$
is \emph{homogeneous} $\ast$-edge if $\{u,v\}$ is a homogeneous set.

We will show that if $\bH$  contains a homogeneous
vertex or a homogeneous $\ast$-edge, then the list version of
$\CSP(\bH)$ reduces in polynomial-time to the list version of
$\CSP(\bH')$ where $\bH'$ is obtained from $\bH$ by removing
the homogeneous vertex or homogeneous $\ast$-edge, respectively.

\begin{observation}\label{obs:universal-vertex}
Let $\bH$ be a reflexive complete $2$-edge-coloured  graph.
If $\bH$ contains a homogeneous vertex $h$, then 
the list version of $\CSP(\bH)$ reduces in polynomial-time
to the list version of $\CSP(\bH\setminus \{h\})$.
\end{observation}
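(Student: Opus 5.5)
The plan is to give a direct greedy reduction. First I will pin down what homogeneity of a single vertex $h$ means in a reflexive complete $2$-edge-coloured graph. By definition, for every $x\in H\setminus\{h\}$ we have $hx\in B(\bH)$ iff $hh\in B(\bH)$, and $hx\in R(\bH)$ iff $hh\in R(\bH)$. So there are three cases according to the loop at $h$:
\begin{itemize}
\item if $hh$ is an $\ast$-loop, then every edge $hx$ is an $\ast$-edge;
\item if $hh$ is only blue, then every edge $hx$ is blue and not red;
\item if $hh$ is only red, then every edge $hx$ is red and not blue.
\end{itemize}
The third case is dual to the second (swap the colours, i.e.\ pass to $\overline\bH$), so I only treat the first two.

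Let $(\bG,L)$ be an instance of the list version of $\CSP(\bH)$. Define the set $D\subseteq G$ of \emph{safe} vertices as follows:
\begin{itemize}
\item if $hh$ is an $\ast$-loop, $D=\{v\in G\colon h\in L(v)\}$;
\item if $hh$ is only blue, $D$ consists of the vertices $v$ with $h\in L(v)$ such that no red edge of $\bG$ is incident to $v$. A red loop at $v$ counts as an incident red edge.
\end{itemize}
The reduction outputs the instance $(\bG\setminus D, L')$, where $L'(v)=L(v)\setminus\{h\}$. This is an instance of the list version of $\CSP(\bH\setminus\{h\})$ and is clearly computable in polynomial (even log) space.

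For correctness, first suppose $g$ is a list homomorphism $\bG\setminus D\to\bH\setminus\{h\}$ respecting $L'$. Extend it by $f(v)=h$ for $v\in D$. I will check each edge type:
\begin{itemize}
\item an edge between two vertices of $G\setminus D$ is fine because $g$ is a homomorphism;
\item an edge $uv$ with $v\in D$, $u\notin D$ maps to $h\,g(u)$, which carries every colour the edge $uv$ can have in $\bG$: all colours in the $\ast$ case, and blue in the blue case, since $v$ has only blue incident edges;
\item an edge inside $D$, including a loop, maps to $hh$, which is fine for the same reason.
\end{itemize}

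Conversely, suppose $f$ is a list homomorphism $\bG\to\bH$. The key point is that no vertex outside $D$ can be sent to $h$. In the $\ast$ case this holds because $h\notin L(v)$ for $v\notin D$. In the blue case, a vertex $v\notin D$ with $h\in L(v)$ has an incident red edge $vu$, and $f(v)=h$ would force $h f(u)$ (or $hh$, if $vu$ is a loop) to be red, which is impossible. Hence $f$ restricted to $G\setminus D$ avoids $h$ and respects $L'$.

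The only delicate step is this converse direction, specifically verifying that vertices outside $D$ are never mapped to $h$. That is exactly where homogeneity is needed: it guarantees that the neighbourhood colours of $h$ are uniform, so being mapped to $h$ is a purely local condition on the incident edge colours of $v$. I do not expect any deeper obstacle; it is a routine case check, and the red-only case follows by the colour symmetry noted above.
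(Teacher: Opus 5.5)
Your proof is correct and is essentially the paper's Algorithm~\ref{alg:universal-vertex}. That algorithm deletes the vertices that can safely be sent to $h$ (those with $h$ in their list and no incident edge of a colour $h$ lacks), strips $h$ from all remaining lists, and uses homogeneity to extend any solution by mapping the deleted vertices to $h$. Your separate treatment of the $\ast$-loop case is a small improvement: read literally, the paper's algorithm runs both colour branches when $hh$ is an $\ast$-loop and wrongly removes $h$ from lists of vertices that could in fact be mapped to it.
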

\begin{proof}
    The claim is straightforward, we present a reduction
    in Algorithm~\ref{alg:universal-vertex}.
\end{proof}

\RestyleAlgo{ruled}
\SetAlgoVlined{}
\begin{algorithm}\label{alg:universal-vertex}
\DontPrintSemicolon{}
\SetKwInOut{Input}{input}
\Input{a $2$-edge-coloured graph $\bG$ and a list $L(v)\subseteq H$ for each $v\in G$}
\caption{A reduction from $\CSP(\bH)$ to the list version of $\CSP(\bH\setminus \{h\})$
whenever 
$h$ is a homogeneous vertex.}
    Let $S =  G$\\
    If $hh\in B(\bH)$:
    \ForEach{$u \in G$}{
        remove $h$ from $L(u)$ \textbf{if} $u$ is incident to some red edge in $\bG$\\
        remove $u$ from $S$ \textbf{if} $h\in L(u)$ and $u$ is not incident to some red edge in $\bG$\\
        }
    If $hh\in R(\bH)$:
    \ForEach{$u \in G$}{
        remove $h$ from $L(u)$ \textbf{if} $u$ is incident to some blue edge in $\bG$\\
        remove $u$ from $S$ \textbf{if} $h\in L(u)$ and $u$ is not incident to some blue edge in $\bG$\\
        }
    \textbf{Return:} $\bG[S]$ with the updated lists $L(u)$. 
\end{algorithm}

Recall that a \emph{red component} of a $2$-edge-coloured graph $\bH$
is a connected component of $(H_B,R)$, and a \emph{blue component}
is a connected component of $(H_R,B)$.

\begin{lemma}\label{lem:hom-mono-edge}
Let $\bH$ be a reflexive complete $2$-edge-coloured  graph.
If $\bH$ contains a homogeneous $\ast$-edge $h_1h_2$, where
$h_1h_1,h_2h_2\in B(\bH)$ or $h_1h_1,h_2h_2\in R(\bH)$,
then the list version of $\CSP(\bH)$ reduces in polynomial-time
to the list version of $\CSP(\bH\setminus \{h_1,h_2\})$.
\end{lemma}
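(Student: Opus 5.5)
Assume by duality (swap colours; duality preserves homogeneity and the hypothesis) that $h_1h_1,h_2h_2\in B(\bH)$. Then $S=\{h_1,h_2\}$ being homogeneous means every $h\in H\setminus S$ is joined to both $h_1$ and $h_2$ by a blue edge and by no red edge. Together with the blue loops and the $\ast$-edge $h_1h_2$, this determines all edges at $S$:
\begin{itemize}
\item every edge incident to $S$ is blue;
\item the only red edges touching $S$ are $h_1h_2$ and $h_2h_1$.
\end{itemize}
So, writing $\bH' := \bH\setminus\{h_1,h_2\}$, a map $f\colon \bG\to \bH$ respecting lists is a homomorphism if and only if three conditions hold:
\begin{itemize}
\item $f$ restricted to $U:=f^{-1}(H')$ is a homomorphism to $\bH'$;
\item every red edge of $\bG$ with an endpoint in $W:=f^{-1}(S)$ has its other endpoint also in $W$, and on $W$ the red edges are properly $2$-coloured by $h_1,h_2$;
\item blue edges impose no constraint between $W$ and $U$, nor inside $W$.
\end{itemize}
The plan is to mimic Algorithm~\ref{alg:universal-vertex}, working with red components of $\bG$ instead of single vertices.

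Let $\bG_R$ be the graph on $G$ with the red edges of $\bG$, including red loops. A vertex with a red loop can never go to $S$. By the description above, $W$ must be a union of connected components of $\bG_R$. Moreover, each component $C$ placed in $W$ must satisfy:
\begin{itemize}
\item $C$ is bipartite with no loops;
\item $C$ admits a $2$-colouring with $h_1,h_2$ compatible with the lists restricted to $S$, which is checkable in polynomial time by propagation along $C$.
\end{itemize}
Call such components \emph{eligible}. The key claim is monotonicity: if $f$ is a solution and $C$ is an eligible component, then redefining $f$ on $C$ by a valid $2$-colouring into $S$ is still a solution. Red edges leaving $C$ do not exist. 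Blue edges from $C$ to anywhere are fine, since all edges at $S$ are blue, including to vertices of $W$ and to other vertices of $C$, where the edges $h_1h_1$, $h_1h_2$, $h_2h_2$ are all blue. Removing vertices from $U$ also cannot break the homomorphism on $U$.

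The reduction is therefore as follows:
\begin{enumerate}
\item Compute the components of $\bG_R$.
\item Delete every eligible component from $\bG$.
\item For the remaining vertices, remove $h_1,h_2$ from their lists.
\item Output the induced substructure with the new lists, which is an instance of the list version of $\CSP(\bH')$.
\end{enumerate}
Correctness comes from the two directions of the monotonicity argument. For the forward direction, take any solution, move all eligible components into $S$, and observe that the non-eligible components already map into $H'$. For the backward direction, combine a solution of the reduced instance with the $2$-colourings of the deleted components. Each step runs in polynomial time.

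The main obstacle is the exchange step. One must verify carefully, using homogeneity and the colour of the loops, that every edge incident to $S$ is blue. This is exactly where the hypothesis that $h_1,h_2$ have loops of the same colour is needed. With mixed loop colours, one vertex of $S$ would be red-connected to the outside, and the component structure would fail.
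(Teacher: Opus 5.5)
Your proposal is correct and follows essentially the same route as the paper: delete every red component of $\bG$ that admits a list-respecting map into $\bH[h_1,h_2]$, and strip $h_1,h_2$ from the lists of all remaining vertices. The justification is also the same, namely that a red component meeting the preimage of $\{h_1,h_2\}$ lies entirely inside it, and that every edge at $\{h_1,h_2\}$ is blue. Like the paper, you tacitly assume the loops at $h_1,h_2$ are not $\ast$-loops (otherwise outside vertices would also be red-adjacent to them), which is the intended reading of the hypothesis.
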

\begin{proof}
    We present a reduction in Algorithm~\ref{alg:universal-mono-edge}
    when $h_1h_1,h_2h_2\in B(\bH)$ --- the case $h_1h_1,h_2h_2\in B(\bH)$
    is symmetric.
    The consistency of the reduction follows by the
    following simple observation. For a $2$-edge-coloured graph $\bG$
    there is a homomorphism 
    $f\colon \bG\to \bH$ and $f(g) = h_i$ for some $g\in G$ and $i\in[2]$
    if and only if $f$ maps the red component of $g$ to $\{h_1,h_2\}$
    (because all vertices in $H\setminus\{h_1,h_2\}$ are connected
    to $h_1$ and $h_2$ by a blue edge). For the correctness
    suppose  that there is a homomorphism $f\colon \bG\to \bH$
    respecting the lists of each $v\in G$. Notice that if there is a
    partial homomorphism $f'$ from the red component $\bC$ of some $u\in G$
    to $\bH[h_1,h_2]$, then the mapping $f''\colon G\to H$
    defined by $f'$ for each $c\in C$, and by $f$ elsewhere
    is indeed a homomorphism $f''\colon \bG\to \bH$ (because
    each $c\in C$ is connected only by blue vertices to vertices 
    in $H\setminus C$, and $h_1$ and $h_2$ are connected by blue
    edges to all vertices in $\bH$).
\end{proof}

\RestyleAlgo{ruled}
\SetAlgoVlined{}
\begin{algorithm}\label{alg:universal-mono-edge}
\DontPrintSemicolon{}
\SetKwInOut{Input}{input}
\Input{a $2$-edge-coloured graph $\bG$ and a list $L(v)\subseteq H$ for each $v\in G$}
\caption{A reduction from the list version of $\CSP(\bH)$ to the list version of $\CSP(\bH\setminus \{h_1,h_2\})$ whenever 
$h_1h_2$ is a homogeneous $\ast$-edge and $h_1h_1,h_2h_2\in B(\bH)$.} 
    Let $S =  G$\\
    \ForEach{red component $\bC$ of $\bG$}{
        remove $C$ from $S$ \textbf{if} there is a homomorphism $\bC\to \bH[h_1,h_2]$
        respecting the lists $L(c)\cap \{h_1,h_2\}$ for each $c\in C$.\\
        \textbf{otherwise}, remove $h_1$ and $h_2$ from $L(c)$ for each $b\in C$.\\
    }
    \textbf{Return:} $\bG[S]$ with the updated lists $L(u)$. 
\end{algorithm}

Finally, we present a reduction when $\bH$ contains a homogeneous
$\ast$-edge with $h_1h_1\in B(\bH)$ and $h_2h_2\in R(\bH)$. 
An \emph{alternating path} in a $2$-edge-coloured graph $\bG$
is a path that has no monochromatic pair of consecutive edges. For
a vertex $g\in G$ we denote by $A_R(g)$ the set of vertices
$u$ for which there is an alternating $gu$-path starting
with a red edge. 
The \emph{alternating distance} from $g$ to $u$ in
$A_R(g)$ is the length of the shortest alternating
$gu$-path starting with a red edge. We define $A_B(g)$ 
and the alternating distance from $g$ to $u$ in $A_B(g)$ analogously.

\begin{lemma}\label{lem:hom-het-edge}
Let $\bH$ be a reflexive complete $2$-edge-coloured  graph.
If $\bH$ contains a homogeneous $\ast$-edge $h_1h_2$ where
$h_1h_1\in B(\bH)$ and $h_2h_2\in R(\bH)$,
then the list version of $\CSP(\bH)$ reduces in polynomial-time
to the list version of $\CSP(\bH\setminus \{h_1,h_2\})$.
\end{lemma}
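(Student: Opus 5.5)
The plan is to follow the pattern of Lemma~\ref{lem:hom-mono-edge}, with the red components replaced by the alternating reachability sets $A_R(g)$ and $A_B(g)$. I expect $h_1h_1\in B(\bH)\setminus R(\bH)$ and $h_2h_2\in R(\bH)\setminus B(\bH)$; if either loop were an $\ast$-loop, we would be in the situation of Lemma~\ref{lem:hom-mono-edge}. Homogeneity of $\{h_1,h_2\}$ then says that every $h\in H\setminus\{h_1,h_2\}$ is joined to $h_1$ only by a blue edge and to $h_2$ only by a red edge. Consequently, in $\bH$ the only red neighbour of $h_1$ is $h_2$, and the only blue neighbour of $h_2$ is $h_1$.

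\emph{Step 1 (forcing).} Suppose $f\colon\bG\to\bH$ is a homomorphism with $f(g)=h_1$. Every red neighbour $u$ of $g$ must satisfy $f(u)=h_2$. Every blue neighbour of such a $u$ must then map to $h_1$, and so on along alternating paths. Thus for every $u\in A_R(g)$, the value $f(u)$ is forced: it is $h_2$ if $u$ is reached by an alternating path from $g$ ending with a red edge, and $h_1$ if the path ends with a blue edge. The symmetric statement holds for $f(g)=h_2$ with $A_B(g)$.

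I will compute this with a BFS over states (vertex, colour of last edge), which gives a closure $C_1(g)\ni g$ together with a forced map $\phi_g\colon C_1(g)\to\{h_1,h_2\}$. The assignment $f(g)=h_1$ is declared \emph{inconsistent} in any of the following cases:
\begin{itemize}
\item some vertex receives both values;
\item some forced value is not in that vertex's list;
\item some edge inside $C_1(g)$ is not preserved by $\phi_g$.
\end{itemize}
I will define $C_2(g)$ and $\psi_g$ analogously for $h_2$.

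\emph{Step 2 (pruning).} The reduction repeatedly removes $h_1$ from $L(g)$ when $f(g)=h_1$ is inconsistent, and $h_2$ from $L(g)$ when $f(g)=h_2$ is inconsistent, until the lists are stable. By Step 1, this never destroys a list homomorphism. After stabilisation, whenever some $g$ still has $h_i\in L(g)$, the reduction fixes the corresponding forced assignment on $C_i(g)$, deletes $C_i(g)$ from $\bG$, and repeats. Deleting vertices can only shrink alternating closures, so the remaining lists stay consistent and no re-pruning is needed. When the process stops, no list contains $h_1$ or $h_2$, so the remaining instance is an instance of the list version of $\CSP(\bH\setminus\{h_1,h_2\})$. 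The whole procedure runs in polynomial time.

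\emph{Step 3 (correctness by patching).} This is the key step. Let $C$ be a closure $C_i(g)$ with a consistent forced map $\phi$, and let $f$ be any list homomorphism of $\bG$. Define $f'$ to equal $\phi$ on $C$ and $f$ elsewhere. I need to check every edge $cw$ with $c\in C$ and $w\notin C$.
\begin{itemize}
\item If $\phi(c)=h_1$, the edge $cw$ cannot be red: a red edge would extend an alternating path that reaches $c$ by a blue edge (or $c=g$ with $i=1$), putting $w$ into $C$. So $cw$ is blue, and $h_1$ is blue-adjacent to every vertex of $\bH$.
\item If $\phi(c)=h_2$, the symmetric argument shows $cw$ is red, and $h_2$ is red-adjacent to every vertex.
\end{itemize}
Edges inside $C$ are preserved by consistency, so $f'$ is a list homomorphism. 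Hence the original instance has a solution if and only if the reduced instance does. The reverse direction is immediate: combine the fixed forced assignments on the deleted closures with a solution of the reduced instance, and apply the same boundary-edge check.

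The main obstacle I expect is the bookkeeping in Step 3 when $c=g$ is the starting vertex, since $g$ is not reached by any path. For $i=1$, the red edges at $g$ are exactly those that start alternating paths in $A_R(g)$, so they land in $C$. Blue edges at $g$ are harmless because $h_1$ is blue to everything. The analogous check for $h_2$ uses $A_B(g)$.
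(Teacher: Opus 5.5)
Your proposal is correct and follows essentially the same route as the paper. The paper's reduction also computes the alternating reachability sets $A_R(g)$ and $A_B(g)$, forces values on them using that $h_2$ is the unique red neighbour of $h_1$ and $h_1$ the unique blue neighbour of $h_2$, discards $h_i$ from $L(g)$ when the forced map fails, and justifies deleting a consistent set by the same patching argument ($h_1$ is blue-adjacent and $h_2$ red-adjacent to every vertex). Your differences are only refinements that do not change the argument: you track alternating walks by last-edge colour instead of shortest-distance parity, you put $g$ explicitly in its own closure, you delegate the $\ast$-loop case to Lemma~\ref{lem:hom-mono-edge}, and you prune to a fixpoint first.
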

\begin{proof}
    We present a reduction in Algorithm~\ref{alg:universal-split}.
    Similarly as before, the consistency of the reduction
    follows by a simple observation: for a $2$-edge-coloured graph $\bG$
    there is a homomorphism 
    $f\colon \bG\to \bH$ and $f(g) = h_1$ for some $g\in G$ 
    if and only if $f$ maps $A_R(g)$ to $\{h_1,h_2\}$. Since
    the unique red neighbour of $h_1$ in $\bH$ is $h_2$, and
    the unique blue neighbour of $h_2$ in $\bH$ is $h_1$, the previous
    observation for $u\in A_R(g)$ follows with  simple inductive argument
    over the length of the shortest alternating $gu$-path that starts with
    a red edge. The correctness follows again with similar arguments
    as in the proof of Lemma~\ref{lem:hom-mono-edge} using the
    definition of $A_R(g)$, and for every $h\in H$ there is blue
    edge $h_1h$ and a red edge $h_2h$ in $\bH$.
\end{proof}

\RestyleAlgo{ruled}
\SetAlgoVlined{}
\begin{algorithm}\label{alg:universal-split}
\DontPrintSemicolon{}
\SetKwInOut{Input}{input}
\Input{a $2$-edge-coloured graph $\bG$ and a list $L(v)\subseteq H$ for each $v\in G$}
\caption{A reduction from the list version of $\CSP(\bH)$ to the list version of $\CSP(\bH\setminus \{h_1,h_2\})$ whenever 
$h_1h_2$ is a homogeneous $\ast$-edge with $h_1h_1\in B(\bH)$ and $h_2h_2\in R(\bH)$.}
    Let $S =  G$\\
    \While{there is $g\in S$ with $h_1\in L(g)$}{
        remove $A_R(g)$ from $S$ \textbf{if} the mapping $u\mapsto h_1$ 
        when the alternating distance from $g$ to $u$
        is even, and $u\mapsto h_2$ otherwise, respects the lists $L(u)$
        for each $u\in A_R(g)$, and it defines a homomorphism
        $\bG[A_R(g)]\to H[h_1,h_2]$.\\
        \textbf{otherwise}, remove $h_1$ from $L(g)$.\\
    }
    \While{there is $g\in S$ with $h_2\in L(g)$}{
        remove $A_B(g)$ from $S$ \textbf{if} the mapping $u\mapsto h_2$ 
        when the alternating distance from $g$ to $u$
        is even, and $u\mapsto h_1$ otherwise, respects the lists $L(u)$
        for each $u\in A_B(g)$, and it defines a homomorphism
        $\bG[A_B(g)]\to \bH[h_1,h_2]$.\\
        \textbf{otherwise}, remove $h_2$ from $L(g)$.\\
    }
    \textbf{Return:} $\bG[S]$ with the updated lists $L(u)$. 
\end{algorithm}

\subsection{Homogeneous concatenation}
\label{sec:hom-con}

We introduce the following operation on $2$-edge-coloured graphs. 
Given a pair of $2$-edge-coloured graphs $\bH$ and $\mathbb A$
the \emph{homogeneous concatenation} $\bH \triangleleft_h \bA$
is the structure obtained from the disjoint union $\bH +\bA$
by adding edges between $H$ and $A$ so that $A$ a homogeneous set
in $\bH \triangleleft_h \bA$. That is, for $h\in H$ and $a\in A$
there is a blue edge $ha\in B(\bH \triangleleft_h \bA)$ if
$aa\in B(\bA)$, and a red edge if $aa\in R(\bA)$.

Since the list version of each $2$-edge-coloured
graph on at most $2$-vertices can be solved in polynomial time, 
the next claim follows by recursively calling 
Algorithms~\ref{alg:universal-vertex}, \ref{alg:universal-mono-edge},
and~\ref{alg:universal-split}. 

\begin{theorem}\label{thm:homogeneous-Ptime}
    The list version of $\CSP(\bH)$ can be solved in polynomial time,
    whenever $\bH$ is a reflexive complete $2$-edge-coloured  graph
    that admits a decomposition
    \[
        \bH:= \bA_1 \triangleleft_h \dots \triangleleft_h \bA_k,
    \]
    where $\bA_1$ is a $2$-element structure, and $\bA_i$
    is a single vertex (with a loop), or a reflexive $\ast$-edge
    for each $2\le i\le k$.
\end{theorem}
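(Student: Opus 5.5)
We argue by induction on $k$, peeling off the last block $\bA_k$ and invoking the reductions of Observation~\ref{obs:universal-vertex}, Lemma~\ref{lem:hom-mono-edge} and Lemma~\ref{lem:hom-het-edge}. The base case $k=1$ is the list version of $\CSP(\bA_1)$ with $|A_1|\le 2$. Here every vertex of the input has a list contained in a two-element set, so the list homomorphism problem is an instance of $2$-SAT and is solvable in polynomial time. For each pair $uv$ of adjacent input vertices and each colour of an edge between them, the constraint ``$(f(u),f(v))$ is an edge of that colour in $\bA_1$'' is a binary relation over a Boolean domain, hence expressible by $2$-clauses. Lists are unary constraints.

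For the inductive step, write $\bH=\bH'\triangleleft_h\bA_k$ with $\bH'=\bA_1\triangleleft_h\dots\triangleleft_h\bA_{k-1}$. By definition of homogeneous concatenation, $A_k$ is a homogeneous set in $\bH$. Moreover $\bH'$ is again reflexive and complete, since it is an induced substructure of $\bH$, and it has a decomposition of the required form with $k-1$ blocks.

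We then split into cases according to $\bA_k$.
\begin{itemize}
\item If $\bA_k$ is a single looped vertex $h$, then $h$ is a homogeneous vertex, and Observation~\ref{obs:universal-vertex} reduces the list version of $\CSP(\bH)$ to that of $\CSP(\bH')$.
\item If $\bA_k$ is a reflexive $\ast$-edge $h_1h_2$ whose loops have the same colour, Lemma~\ref{lem:hom-mono-edge} applies. If its loops have different colours, Lemma~\ref{lem:hom-het-edge} applies.
\end{itemize}
One point needs care. If a vertex of $\bA_k$ carries an $\ast$-loop, homogeneity forces $\ast$-edges to all of $H'$, and the relevant reduction still works or trivialises. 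For instance, a vertex with an $\ast$-loop adjacent via $\ast$-edges to everything can absorb every input vertex whose list contains it, which matches Algorithm~\ref{alg:universal-vertex} applied with either colour. I will check that the algorithms treat this degenerate case correctly, or handle it separately.

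Each reduction runs in polynomial time in $|G|$ and outputs an instance no larger than its input. The number of reduction steps is $k-1\le |H|$, a constant, so composing them with the base-case algorithm yields a polynomial-time algorithm.

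The main obstacle is the correctness of the individual reductions, that is, showing that removing a solved component or shrinking lists never destroys an existing solution. The earlier lemmas address this only briefly, and in particular the red-component argument in Lemma~\ref{lem:hom-mono-edge} needs every input vertex outside $\bC$ to be joined to $\bC$ by blue edges only. That holds because red components are maximal. Since those results are stated earlier, I would cite them and verify only that the hypotheses transfer at each step, in particular that $A_k$ remains homogeneous in $\bH$ itself.
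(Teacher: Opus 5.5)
Your proposal is correct and is essentially the paper's proof: the paper also proves the theorem by recursively stripping the last homogeneous block with Algorithms~\ref{alg:universal-vertex}--\ref{alg:universal-split} and solving the two-element base case directly, and your 2-SAT argument for that base case fills in a detail the paper leaves implicit. One small correction concerns an $\ast$-looped vertex $h$: Algorithm~\ref{alg:universal-vertex} run with either colour would wrongly delete $h$ from the lists of some vertices, so it does not match your absorption step, and the absorption step you describe is the correct separate treatment; likewise, an $\ast$-looped endpoint of an $\ast$-edge block should first be peeled off in that way, since the forcing arguments of Lemmas~\ref{lem:hom-mono-edge} and~\ref{lem:hom-het-edge} break for it.
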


For a reflexive complete 2-edge-coloured graph $\bH$, one can efficiently 
check if the core of $\bH$ admits a decomposition as in 
Theorem~\ref{thm:homogeneous-Ptime}.

\begin{proposition}\label{prop:recognizing_tractables_in_ptime}
    For every reflexive complete 2-edge-coloured graph $\bH$,
    one can check in time polynomial in $|H|$ that the core of $\bH$ 
    admits a decomposition of the form
    $\bA_1 \triangleleft_h \dots \triangleleft_h \bA_k$,
    where $\bA_1$ is a structure on at most two elements, and each
    $\bA_i$ is a single vertex (with a loop), or a reflexive $\ast$-edge
    for each $2\le i\le k$.
\end{proposition}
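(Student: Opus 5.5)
Two tasks have to be done in polynomial time: first compute the core of $\bH$, then test whether it has the required decomposition. I expect computing the core to be the main obstacle, because computing cores of arbitrary structures is hard in general. The plan is to use the completeness and reflexivity of $\bH$ to make endomorphisms easy to control.

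For the core, I would use the following fact. In a reflexive complete $2$-edge-coloured graph, an endomorphism $f$ that identifies two vertices $u\neq v$ must send the edge $uv$ to the loop at $f(u)=f(v)$. So the colours of $uv$ must be contained in the colours of that loop. I would first check whether the core can be found by repeatedly retracting single vertices, as follows. A vertex $u$ can be retracted onto $v$ when the map fixing everything except $u\mapsto v$ is a homomorphism. This happens exactly when the following hold:
\begin{itemize}
\item the colour set of the loop $uu$ is contained in that of $vv$;
\item the colour set of $uv$ is contained in that of $vv$;
\item for every $w\neq u,v$, the colour set of $uw$ is contained in that of $vw$.
\end{itemize}
Each of these conditions can be checked in $O(|H|)$ time. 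If the core were always reachable by such one-vertex retractions, iterating them would give the core in polynomial time. I am not certain that one-vertex retractions suffice. If they do not, the fallback is to avoid computing the core exactly. Instead I would decide whether there is an induced substructure $\bH'$ that admits the decomposition and is homomorphically equivalent to $\bH$. Tractability and cores are preserved under homomorphic equivalence, and the core of such an $\bH'$ again admits a decomposition of the same form. The reason is that removing vertices from a homogeneous concatenation keeps the remaining blocks homogeneous, as long as each removed vertex only retracts inside the structure.

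For the decomposition test on a given structure $\bG$, I would use a greedy peeling procedure. Repeatedly search for a homogeneous vertex or a homogeneous reflexive $\ast$-edge; such a block must be the last block $\bA_k$. Delete it and recurse, and stop once at most two vertices remain. Each search costs $O(|G|^3)$ time by checking the definition of a homogeneous set directly, so the whole procedure is polynomial.

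The key point of the decomposition test is that greedy choices are safe. Suppose $S$ is any homogeneous vertex or homogeneous $\ast$-edge of $\bG$, and suppose $\bG$ admits some decomposition. I claim $\bG\setminus S$ also admits a decomposition. The argument is that homogeneity of a set is inherited under deletion of other vertices. So from the given decomposition of $\bG$ one deletes the vertices of $S$ from the blocks containing them. If this splits an $\ast$-edge block, the remaining vertex is still homogeneous in the smaller structure. If it hits $\bA_1$, the first block, which may have at most two elements, absorbs the change. This makes the peeling order irrelevant, so the greedy procedure is correct.

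Combining the core computation with the peeling test gives a polynomial-time algorithm.
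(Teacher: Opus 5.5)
Your peeling test is correct and is essentially the paper's peeling step. The class of structures admitting such a decomposition is closed under induced substructures: delete the vertices from their blocks; a shrunken $\ast$-edge block becomes a single vertex, and whatever block ends up first has at most two elements. So on any given structure, greedily removing an arbitrary homogeneous vertex or homogeneous $\ast$-edge decides membership.

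\textbf{The gap is the core, and it is not a technicality.} One-vertex retractions do not reach the core. Take red loops on $x,y,p,q,p',q'$, let $xy$ be an $\ast$-edge, let $pq$ and $p'q'$ be blue edges, and let every other pair be a red edge.
\begin{itemize}
\item The map $p,p'\mapsto x$, $q,q'\mapsto y$ (identity on $x,y$) is an endomorphism. So the core is the two-element structure on $\{x,y\}$, and the answer should be ``yes''.
\item But nothing folds. Since all loops are red, $u\mapsto v$ needs $uv$ to be red-only. Folding $p$ onto such a $v$ then needs $vq$ to contain blue, and folding $x$ needs $vy$ to be an $\ast$-edge. Neither exists, and the same holds symmetrically for $q,p',q',y$.
\item Your algorithm therefore peels $\bH$ itself and removes the homogeneous $\ast$-edge $\{x,y\}$. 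It is left with $(2\bK_2)^\ast$, which has no homogeneous vertex or $\ast$-edge, and it wrongly rejects.
\end{itemize}
The fallback does not repair this. By the heredity remark, ``some induced $\bH'$ in the class with $\bH\to\bH'$'' is just a restatement of the proposition, and you give no polynomial way to find $\bH'$.

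\textbf{How the paper proceeds.} It tries to sidestep computing the core. It peels minimal homogeneous sets of $\bH$ directly, and only at the end asks whether the remaining structure has a core with at most two elements. That test is polynomial: try all $O(|H|^2)$ induced substructures $\bH'$ on at most two vertices and decide $\bH\to\bH'$, which is a CSP with an at-most-two-element template. This trick, using tractability of small templates to recognise a small core, is worth having and is absent from your plan.

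However, the example above also defeats ``peel first, take the core at the end''. Once $\{x,y\}$ is removed, the blocks $\{p,q\}$ and $\{p',q'\}$ are copies of $\bK_2^\ast$ that could only collapse onto the $\ast$-edge just deleted. So the paper's procedure as written also rejects this $\bH$. What is missing, in your proposal and as far as I can see also in the paper's sketch, is a polynomial-time way to handle non-core vertices that collapse across different homogeneous blocks (here, whole blocks retracting onto another block) before or during the peeling.
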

\begin{proof}
    If $\bH$ contains an $\ast$-loop, we simply accept because its core
    a structure with exactly one element, and clearly admits such a decomposition.
    We now suppose that $\bH$ has no $\ast$-loops.
    We will now use the following facts: (1) one can check in polynomial time
    if the core of $\bH$ has size at most  two (because the CSP of each reflexive
    complete $2$-edge-coloured graph on at most two vertices is solvable in polynomial time);
    and similarly, (2) one can check in polynomial time if the core of $\bH$ is either 
        an $\ast$-edge or a single vertex.

The algorithm is now quite simple. On input $\bH$ we compute a minimal (by inclusion) non-empty
homogeneous set $S\subseteq H$. To do so, for each $v\in H$ compute the minimal homogeneous
set $S_v$ containing $v$: at first, add $v$ to $S_v$; then,  while there are $u\in S_v$
and $w\in H\setminus S_v$ such that there is an edge connecting $uw$ of different colour to
the loop $uu$, add $w$ to $S_v$.
Among all homogeneous sets $S_v$ choose a minimal one, and denote it by $S$.
If $S\neq H$, check if
the substructure $\bH[S]$ induced by $S$ is either a single vertex or an $\ast$-edge. If not, reject;
otherwise, remove $S$ from $H$ and repeat the procedure until the smallest homogeneous set $S = H$.
In this case, check if the core of $\bH$ has size at most $2$. In the positive case we accept, and we
reject otherwise. This whole procedure can clearly be done in polynomial time.
\end{proof}

\subsection{Datalog}
\label{sect:datalog}

Many readers may have noticed that the spirit of
Algorithms~\ref{alg:universal-vertex}--\ref{alg:universal-split} 
is some sort of consistency checking. Moreover, the reader familiar
with \emph{Datalog} may have noticed that these algorithms can 
be encoded as Datalog programs. Datalog is a framework commonly
used in constraint satisfaction theory and database theory, which
encompasses a concrete notion of local consistency checking. In
particular, Datalog programs run in polynomial time, and contrary
to polynomial-time algorithms, it is known that Datalog programs
cannot solve all CSPs (e.g., neither $\CSP(\bK_3)$ nor linear
equations over finite fields can be solved by Datalog~\cite[Theorem 31]{FederVardi}).

We now reprove (and improve) 
Theorem~\ref{thm:homogeneous-Ptime}, by showing that the CSP of every
structure $\bH$ in the scope of this theorem can be solved by a Datalog program. 
However, instead of formally introducing Datalog, and rewriting
Algorithms~\ref{alg:universal-vertex}--~\ref{alg:universal-split}
as Datalog programs, we build on the algebraic approach to CSPs
and use the following theorem as the definition of finite-domain
CSPs solved by Datalog --- we refer the reader to~\cite{BoundedWidthJournal,Pol}
for a detailed exposition on CSPs solvable by Datalog.

\begin{theorem}[\hspace{-0.8pt}{\cite{BoundedWidthJournal, BartoWidth, Maltsev-Cond}, see also Theorem 47 in~\cite{Pol}}]\label{thm:wnu34}
    For any finite structure $\bH$, the following are equivalent:
    \begin{enumerate}
        \item  $\CSP(\bH)$  is \emph{solved by Datalog};
        \item $\bH$ has a ternary WNU polymorphism $f_3$ and a
         4-ary WNU polymorphism $f_4$ such that
        \begin{equation}\label{eq:wnu34}
        f_3(y,x,x)=f_4(y,x,x,x)
        \end{equation}
    \end{enumerate}
\end{theorem}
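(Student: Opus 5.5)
Since this statement is quoted from the literature, the plan is to assemble it from known results rather than argue from first principles. Both directions go through the intermediate algebraic condition that the polymorphism clone of (the core of) $\bH$ is \emph{congruence meet-semidistributive}. Equivalently, the variety generated by the idempotent reduct of the polymorphism algebra omits the tame congruence theory types $\mathbf 1$ and $\mathbf 2$.

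First I reduce to cores with constants. Replacing $\bH$ by its core does not change $\CSP(\bH)$, hence does not change Datalog solvability. Adding the singleton relations $\{h_1\},\dots,\{h_n\}$ is harmless in both directions: for the Datalog side, Theorem~\ref{thm:core_constants} gives a pp-construction, and pp-constructions preserve solvability by Datalog. For the polymorphism side, I use two facts, which I expect to check without difficulty:
\begin{itemize}
\item condition~(2) passes from $\bH$ to its core by composing with a retraction;
\item from non-idempotent WNUs one can obtain idempotent ones satisfying \eqref{eq:wnu34}, by the standard trick of composing with the inverse of the unary part $x\mapsto f(x,\dots,x)$, which is an automorphism of a core.
\end{itemize}
So I may assume the polymorphisms are idempotent.

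For (1)$\Rightarrow$(2), I would follow these steps.
\begin{enumerate}
\item Invoke Larose--Z\'adori: if $\CSP(\bH)$ is solved by Datalog, then the idempotent polymorphism algebra generates a variety omitting types $\mathbf 1$ and $\mathbf 2$. The idea is that otherwise $\bH$ pp-constructs linear equations over some $\mathbb Z_p$, or $\bK_3$, neither of which is in Datalog by Feder--Vardi.
\item Apply Maróti--McKenzie: omitting types $\mathbf 1$ and $\mathbf 2$ yields WNU terms of every arity $\ge 3$.
\item Use the refinement of Kozik, Krokhin, Valeriote and Willard. Via a Ramsey/compactness-type argument on the free algebra, they show that in a congruence meet-semidistributive locally finite idempotent variety one can choose a ternary WNU $f_3$ and a $4$-ary WNU $f_4$ whose binary parts coincide, i.e.\ $f_3(y,x,x)=f_4(y,x,x,x)$.
\end{enumerate}

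For (2)$\Rightarrow$(1), I would argue as follows.
\begin{enumerate}
\item Show that the identities in~(2) force congruence meet-semidistributivity. Suppose a type $\mathbf 1$ or $\mathbf 2$ quotient existed. Then, via a minimal set, there would be a module over some $\mathbb Z_p$ (or a $G$-set) on which $f_3$ and $f_4$ act as WNU operations. Over $\mathbb Z_p$ this makes $f_3(y,x,x)=(1-2a)x+ay$ with $3a=1$, and $f_4(y,x,x,x)=(1-3b)x+by$ with $4b=1$. The identity then forces $a=b$, hence $3a=4a$, so $a=0$, contradicting $3a=1$. The $G$-set case is excluded by the existence of WNUs alone.
\item Invoke the bounded width theorem of Barto and Kozik: a finite idempotent structure whose polymorphism algebra lies in a congruence meet-semidistributive variety has bounded relational width, so its CSP is solved by Datalog.
\end{enumerate}

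The main obstacle is the Barto--Kozik theorem in the second direction. Its proof requires the machinery of $(2,3)$-minimal instances, absorption, and the ``Prague strategy'' argument. I would not reprove it; I would cite it as in~\cite{BoundedWidthJournal,BartoWidth}. The remaining steps reduce to standard tame congruence theory facts and the explicit module computation above.
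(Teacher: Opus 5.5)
Your proposal is correct and follows essentially the paper's route: the paper gives no proof of its own and simply cites the same literature (Barto--Kozik's bounded width theorem, Barto's collapse of the bounded-width hierarchy, and the Kozik--Krokhin--Valeriote--Willard characterisation of congruence meet-semidistributivity by $3$- and $4$-ary WNUs with $f_3(y,x,x)=f_4(y,x,x,x)$), which is exactly the chain you assemble after reducing to idempotent cores. One harmless arithmetic slip: in a module, $f_3(y,x,x)=ay+2ax=(1-a)x+ay$ and $f_4(y,x,x,x)=(1-b)x+by$, not $(1-2a)x+ay$ and $(1-3b)x+by$; comparing the coefficients of $y$ still gives $a=b$, hence $1=3a=4a$, so $a=0$ and the module is trivial, as you intended.
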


\begin{remark}\label{rmk:list-wnu4}
    The list version of $\CSP(\bH)$ corresponds to the CSP
    of $(\bH, \{U_X\}_{\varnothing\neq X\subseteq H})$, i.e., $\bH$
    together with a unary predicate for each non-empty subset $X$ of $H$
    (see also~\cite{Conservative}). 
    Clearly, a polymorphism $f\colon \bH^m\to \bH$ is a polymorphism of 
    $(\bH, \{U_X\}_{\varnothing\neq X\subseteq H})$ if and only if 
    $f$ is \emph{conservative}, i.e., $f(x_1,\dots, x_m)\in\{x_1,\dots, x_m\}$
    for every $(x_1,\dots, x_m)\in H^m$. Hence, the list version of
    $\CSP(\bH)$ is solved by Datalog if and only if
    $\bH$ has a ternary WNU conservative polymorphism $f_3$ and a
    conservative 4-ary WNU polymorphism $f_4$ satisfying equation~(\ref{eq:wnu34})
\end{remark}

\begin{theorem}\label{thm:bounded_width}
    Let $\bH$ be a reflexive complete $2$-edge-coloured  graph. 
    If $\bH$ admits a decomposition
    \[
        \bH:= \bA_1 \triangleleft_h \dots \triangleleft_h \bA_k,
    \]
    where $\bA_1$ is a $2$-element structure, and each
    $\bA_i$ is a single vertex (with a loop), or a reflexive $\ast$-edge
    for each $2\le i\le k$,
    then the list version of $\CSP(\bH)$ is solved by Datalog.
\end{theorem}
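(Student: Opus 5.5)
By Remark~\ref{rmk:list-wnu4}, it suffices to build, by induction on $k$, a conservative ternary WNU polymorphism $f_3$ and a conservative $4$-ary WNU polymorphism $f_4$ of $\bH$ satisfying~(\ref{eq:wnu34}). I will build both operations by one rule: on a tuple whose entries all lie in one block, apply an operation of that block; otherwise, output an entry from the latest block present.

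\textbf{Base case.} For the base $\bA_1$ on two elements $\{a,b\}$, the conservative operations on a two-element set form the familiar list. Every two-element structure with its unary subset predicates has either a conservative majority/minority-type polymorphism or a semilattice polymorphism; the reflexive complete case is covered by checking the few possibilities. I want the variant with bounded width, so I would check, by going through the cases up to duality and isomorphism, that each reflexive complete two-element structure is preserved by majority or by a semilattice operation ($\min$ for a suitable order).
\begin{itemize}
\item If it is preserved by majority, take $f_3=$ majority and $f_4=$ the $4$-ary ``majority with ties broken towards the minority'' that equals $\min$ on $(y,x,x,x)$ patterns appropriately. Concretely, set $f_4(y,x,x,x)=x$, making both sides of~(\ref{eq:wnu34}) equal to $x$.
\item If it is preserved by a semilattice, take $f_3$ and $f_4$ to be the iterated semilattice operation; then both sides of~(\ref{eq:wnu34}) equal $x\wedge y$.
\end{itemize}
The minority case, i.e.\ affine $\mathbb Z_2$ (e.g.\ an $\ast$-edge between a blue and a red vertex, or two vertices with a red edge...), is where I must check carefully that a semilattice or majority also exists. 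This should hold because each loop colour class gives a unary pp-definable restriction, and binary relations on two elements that are both reflexive-complete-type are all preserved by majority.

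\textbf{Inductive step.} Suppose $\bH=\bH'\triangleleft_h\bA$ where $\bH'$ has $f_3',f_4'$. Let $A$ be either a single vertex $h$ or an $\ast$-edge $\{h_1,h_2\}$. On $A$, take the block operations to be the projection-like constant in the one-vertex case. In the $\ast$-edge case, $\bA$ is a two-element structure whose relations are $\ast$ between $h_1,h_2$; it is preserved by every conservative operation on $\{h_1,h_2\}$ unless the loops have the same colour, and even then majority works. Define $f_m(x_1,\dots,x_m)$ as follows:
\begin{itemize}
\item if all $x_i\in H'$, set it to $f_m'(x_1,\dots,x_m)$;
\item if all $x_i\in A$, set it to the block operation;
\item in mixed cases, let $g$ be the element of $A$ occurring there (if only one) or a symmetric choice from $A$, e.g.\ majority/semilattice restricted to the $A$-entries.
\end{itemize}
A symmetric rule keeps $f_m$ conservative and WNU. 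Identity~(\ref{eq:wnu34}) on mixed tuples $(y,x,x,x)$ yields the same element from $A$ for both arities.

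\textbf{Main obstacle.} The hard part is verifying that the mixed cases preserve edges. Take edges $u_iv_i$ of a common colour $c$ with some $u_i\in A$. I must show that the output pair has colour $c$. If some $u_j\in A$ has loop colour $\ne c$, then $v_j$ must also lie in $A$ (by homogeneity, every edge from $A$ to $H'$ has the loop colour of its endpoint in $A$). The delicate point is ensuring that the chosen outputs $a\in A$ for the tuple $u$ and $b$ for the tuple $v$ have $ab$ of colour $c$ when both lie in $A$, or that $a$'s loop colour is $c$ when $b\in H'$. I would handle this by choosing, among the $A$-entries, those whose loop colour agrees with the constraint colours, using the fact that the $\ast$-edge makes every pair in $A$ carry both colours. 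The only genuine constraint is then the loop colour when the partner lies outside $A$, which is controlled by the priority to $A$-entries in coordinates where the partner is also in $A$. I expect the case analysis for the heterogeneous $\ast$-edge (one blue loop, one red loop) to be the crux, mirroring the alternating-path argument of Lemma~\ref{lem:hom-het-edge}.
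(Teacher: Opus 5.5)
\textbf{There is a genuine gap, and it sits exactly at the step you call the crux.} Your architecture matches the paper's. Inside a single block you apply a block operation; otherwise you output an entry from the highest block present. Your treatment of single-vertex blocks is fine, and so is your use of homogeneity when one tuple avoids $A$. But the rule you propose for choosing among the $A$-entries cannot work, because it is symmetric.

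\textbf{Counterexample for the mixed tuples.} Take $\bH=\bK_2^\ast\triangleleft_h\bA$, where $A=\{h_1,h_2\}$ is a reflexive $\ast$-edge, monochromatic or bichromatic. Let $\bar u=(h_1,h_2,x)$ and $\bar v=(h_2,h_1,y)$ with $x,y\in\{0,1\}$ and $xy$ of colour $c$. Since $h_1h_2\in B(\bH)\cap R(\bH)$, $\bar u\bar v$ is a $c$-edge of $\bH^3$. Both tuples have the same multiset $\{h_1,h_2\}$ of $A$-entries. So majority, a semilattice, or any rule depending only on that multiset gives $f_3(\bar u)=f_3(\bar v)=a$. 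Now choose $c$ to be the colour missing from the loop at $a$; this is possible because $\bK_2^\ast$ has edges of both colours. Then $aa$ does not have colour $c$, so preservation fails.

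\textbf{The same problem in your block operations.} In $(\bK_2^\ast)^4$ the tuples $(0,0,1,1)$ and $(1,1,0,0)$ are blue-adjacent. Any $4$-ary operation that resolves $2$--$2$ ties symmetrically (for example by $\min$, or ``towards the minority'') sends both to the same vertex, i.e.\ to a red loop. So your $f_4$ is not a polymorphism of the base $\bK_2^\ast$.

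\textbf{A false claim.} You say a bichromatic $\ast$-edge is preserved by every conservative operation. It is not: neither semilattice on $\{h_1,h_2\}$ preserves both $B=\{h_1h_1,h_1h_2,h_2h_1\}$ and $R=\{h_1h_2,h_2h_1,h_2h_2\}$.

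\textbf{Your proposed repair is not an operation.} Choosing the $A$-entry ``whose loop colour agrees with the constraint colour'' does not define $f_m$. The value $f_m(\bar u)$ must be determined by $\bar u$ alone, and $\bar u$ has neighbours via both colours.

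\textbf{The missing idea: break ties by position.} This is allowed because the WNU identities and~(\ref{eq:wnu34}) only constrain tuples with a single deviating entry. The paper defines the operations as follows.
\begin{enumerate}
\item On tuples inside one block, $f_3$ is majority.
\item On such tuples, $f_4$ is near-unanimity on tuples with three equal entries and the first projection otherwise.
\item On every other tuple, both return the \emph{first} coordinate lying in the highest block present.
\end{enumerate}

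\textbf{Why this works for two scattered tuples with the same top block.} Either the selected coordinates coincide, and the output pair is the edge $u_jv_j$ itself. Or the earlier one, say coordinate $j$ of $\bar u$, has $v_j$ in a lower block. By homogeneity the edge colour is then the loop colour of $u_j$, and the output pair is either that loop or the $\ast$-edge.

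\textbf{Why it works for a scattered tuple against a tuple inside the top block.} Suppose the loop at the selected entry $u_j$ has the wrong colour. Then $\bar v$ has at least two entries equal to the other vertex of the $\ast$-edge, and one of them is $v_1$. Hence majority and the first-projection tie-break of $f_4$ both output that other vertex, and the output pair is the $\ast$-edge. This case uses the coordination between the two positional tie-breaks.

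With this rule no induction on $k$ is needed, since the definition is global.
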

\begin{proof}
    By Theorem~\ref{thm:wnu34}  and Remark~\ref{rmk:list-wnu4}, it suffices to find a
    pair of WNU conservative polymorphisms $f_3,f_4$ that satisfy~(\ref{eq:wnu34}). 
    Let $m\in\{3,4\}$ denote the arity of the polymorphisms.
    Call a tuple $\bar a\in H^m$ \emph{united}, if for some $i\in[k]$, we have that
    $\bar a\in A_i^m$.    Otherwise, a tuple is called \emph{scattered}.

    For every united tuple $\bar a\in  A_i^m$  for
    some $i\in[k]$, let $f_3(\bar a)$ act as 
    the unique majority  on $A_i$, and $f_4(\bar a)$ act as near unanimity 
    on tuples having at least three same entries
    (i.e., $f_4(y,x,x,x) = \dots = f(x,x,x,y) = f(x,x,x,x)$),
    and as the projection on the first coordinate otherwise.
    Clearly, this is already sufficient for $f_3,f_4$ to satisfy (\ref{eq:wnu34})
    restricted to united tuples.

    For every $\bar a\in H^m$, let $\max(\bar a)$ stand for the maximal possible
    $j\in[k]$ such that 
     $\bar a$ contains a coordinate from $A_j$, i.e., 
    $a_i\in A_j$ for some $i\in[m]$.
    For every scattered tuple $\bar a\in H^m$, let $f_m(\bar a)$ be equal to $a_i$,
    where $i\in[m]$ is  the first coordinate 
    such that $a_i\in A_{\max(\bar a)}$.  It follows from the definition
    of $f_3$ and $f_4$ on scattered tuples, that they satisfy~(\ref{eq:wnu34})
    on scattered tuples as well. Since $(y,x,x)$ is united (resp.\ scattered) if and only if
    $(y,x,x,x)$ is united (resp.\ scattered) , we conclude that $f_3$ and $f_4$
    satisfy~(\ref{eq:wnu34}).
    Notice that both $f_3,f_4$ are conservative.

    It remains to show that $f_3$ and $f_4$ are polymorphisms of $\bH$.
    Let $\bar a,\bar b\in H^m$ be two tuples,  we show that $f_m$ 
    preserves the edge types between $\bar a$ and $\bar b$ for $m\in\{3,4\}$, i.e., if
    $\bar a\bar b\in B(\bH^m)$ (resp.\ in $R(\bH^m)$), then
    $f_m(\bar a)f_m(\bar b)\in B(\bH)$ 
    (resp.\ in $R(\bH)$). Consider the following 
    case distinction that depends on whether $\bar a$ and $\bar b$ are
    scattered or united, and on the relation between $\max(\bar a)$ and
    $\max(\bar b)$.
    \begin{itemize}
        \item We begin with all cases where $i = \max(\bar a)  >  \max(\bar b)$.
        Let $j\in[m]$ be the first coordinate of $\bar a$ such that $a_j\in A_i$. 
        Since $A_i$ is homogeneous in $\bA_1\triangleleft_h\dots \triangleleft_h\bA_i$,
        and $b_j\in A_{i'}$ for some $i' < i$, it must be the case that $a_j$
        is connected to $b_j$ by at most one edge type in $\bH$. Moreover, 
        this edge type depends on whether $a_ja_j\in B(\bH)$ or $a_ja_j\in R(\bH)$.
        Without loss of generality assume that $a_ja_j\in B(\bH)$, and 
        so $\bar a\bar b\in B(\bH^m)$ and $\bar a\bar b\not\in R(\bH^m)$.
        Also notice that $a_j$ is connected by a blue edge to each
        entry $b_{j'}$ of $\bar b$ (by homogeneity of $\bA_i$).
        \begin{itemize}
            \item \textbf{Assume $\bar a$ is united.} In this case, 
            each coordinate $a_{j'}$ ($j'\in[m]$)  of $\bar a$  belongs to $A_i$, 
            and by similar arguments as before, $a_{j'}$ is connected
            to each entry $b_{j''}$  of $\bar b$ by a blue edge in $\bH$. 
            Since $f(\bar a)$ equals some entry of $\bar a$, and $f(\bar b)$
            is some entry of $\bar b$, we conclude that $f(\bar a)$ and $f(\bar b)$
            are connected by a blue edge in $\bH$.
            \item \textbf{Assume $\bar a$ is scattered.} Since $j\in[m]$ is the
            first coordinate such that $a_j\in A_i$, it follows that $f(\bar a) = a_j$. 
            We already argued that $a_j$ is connected to each $b_{j'}$ by a blue
            edge for $j'\in[m]$, and since $f(\bar b) = b_{j'}$ for some $j'\in [m]$, 
            it follows that $f(\bar a) f(\bar b)\in B(\bH)$. 
        \end{itemize}
        This shows that $f$ preserve the edge types connecting tuples
        $\bar a,\bar b\in \bH^m$ whenever $\max(\bar a)  >  \max(\bar b)$, 
        and the tuples where $\max(\bar a)  <  \max(\bar b)$ follow with
        symmetric arguments.

        \item We now consider the cases where $\max(\bar a) = \max(\bar b)$,
        and we distinguish whether both are united, both are scattered, 
        or one is united and the other is a scattered tuple.
        \begin{itemize}
            \item \textbf{Assume $\bar a$ and $\bar b$ are united.}
            For such tuples, $f_3$ is defined as the majority 
            polymorphism, and every component
            $\bA_i$ is preserved by this polymorphism, so $f_3$ preserves the
            edge types between such tuples. 
            For $f_4$, we show that there is always some coordinate
            $i\in[4]$ such that $f_4(\bar a) = a_i$ and $f_4(\bar b) = b_i$, and so, 
            it follows that $f_4$ also preserves the edge types connecting $\bar a$ and $\bar b$.
            This is clear when each $\bar a$ and $\bar b$ 
            have at least 3 equal 
            entries (e.g., $\bar a = (a,a,a,a')$ and $\bar b = (a',a',a,a')$).
            If $\bar a$ has at least 3 same entries and $\bar b$ has at most 2 same entries, then
            $f_4(\bar b)=b_1$. In this case, either $f_4(\bar a)=a_1$ and we are done, or $\bar a = (a',a,a,a)$.
            Since $\bar b$ is united and no three coordinates have the same value, it must
            be the case that the equality $b_i = b_1$ holds for exactly one $i\in\{2,3,4\}$.
            It thus follows that $(f_4(\bar a),f_4(\bar b)) = (a,b_1) = (a_i,b_i)$
            for some $i\in\{2,3,4\}$.
            If both $\bar a$ and $\bar b$ have at most 2 same entries, then $f_4$ works as the projection on
            the first coordinate so we are done.

        \item \textbf{Assume $\bar a$ and $\bar b$ are scattered.} 
            Let $i=\max(\bar a)=\max(\bar b)$. Notice that $i>1$ because $\bar a$ and $\bar b$
            are scattered.    If, for some $j\in[m]$, both $a_j$ and $b_j$ are the first elements
            to be in $\bA_i$, then we are done because $f_m(\bar a)=a_j$ and $f_m(\bar b)=b_j$.
            Otherwise, suppose that $a_j=f_m(\bar a)$ is the first element of $\bar a$ to be in
            $\bA_i$ while, for each $j'\leq j$, $b_{j'}$ is in $A_{i'}$, for  some $i'<i$.
            Assume without loss of generality that $a_ja_j\in B(\bH)$.
            Then, by homogeneity, $a_jb_j\in B(\bH)$ and $a_jb_j\not\in R(\bH)$,
            and so, $\bar a$ and $\bar b$ are connected by a blue edge and not by
            a red edge. Now, if $f_m(\bar b) = f_m(\bar a) = a_j$, then $f_m(\bar a) f_m(\bar b)\in B(\bH)$. 
            Alternatively, $f_m(\bar b)\in A_i\setminus \{a_j\}$ so $|A_i| = 2$, and  since $i > 1$,
            it must be the case the case that $\bA_i$ is a reflexive $\ast$-edge. In particular, 
            there is a red and a blue edge connecting the two different vertices in $A_i$, 
            so $f(\bar a) f(\bar b)\in B(\bH)$.
   
    \item \textbf{Assume $\bar a$ is scattered and $\bar b$ is united.} 
    Let $i=\max(\bar a)=\max(\bar b)$ and let $a_j=f_m(\bar a)$ be the first coordinate to be in $A_i$.
    As $\bar a$ is scattered, we have $i>1$, and so, $\bA_i$ consists of a single
    vertex, or a reflexive $\ast$-edge. Also,
    for some $j'\in[m]$, we have that $a_{j'}\in A_{i'}$ for $i'<i$, which implies that the edge $a_{j'}b_{j'}$ belongs either to $B(\bH)$ or to $R(\bH)$ but not to both at the same time.
    We assume without loss of generality that $a_{j'}b_{j'}\in B(\bH)$.
     Notice that if $a_ja_j\in B(\bH)$ or if $f(\bar a)\neq f(\bar b)$, then
     $f(\bar a)f(\bar b)\in B(\bH)$:  indeed, if $f(\bar b) = f(\bar a) = a_j$, 
     then $f(\bar a)f(\bar b)= a_ja_j\in B(\bH)$; if $f(\bar a)\neq f(\bar b)$, 
     then $\bA_i$ must contain two elements connected by an $\ast$-edge, and
     $f(\bar a)f(\bar b)\in B(\bH)$.
     Hence, it suffices to show that if $a_ja_j\in R(\bH)$, then $f(\bar a) \neq f(\bar b)$.
     Recall that $a_{j'}b_{j'} \in B(\bH)$, and since $a_{j'}\in A_{i'}$ for some
     $i' < i$, it follows by homogeneity of $A_i$ that $b_{j'}b_{j'}\in B(\bH)$,
     so $b_{j'}\neq a_j$.
     Also, since $a_jb_j\in B(\bH)$ and $a_ja_j\in R(\bH)$, it must be the case that
     $b_j\neq a_j$. From the definitions of $j$ and $j'$, we know that $j\neq j'$, which
     implies that $\bar b$ contains at least two entries different from $a_j$. Finally, 
     since each $a_\ell$ is connected to $b_\ell$ by a blue edge, and if $\ell < j$, then
     $a_\ell\in A_{i'}$ for some $ i'< i$, then $b_\ell b_\ell \in B(\bH)$. All these observations
     together show that $b_1\neq a_j$,  and that
     $\bar b$ contains at least two entries different from $a_j$. It now follows from
     the definitions of $f_3$ and $f_4$, that they either act as a near unanimity
     on $\bar b$, or a projection on the first coordinate, and in either case
     $f(\bar b) = b_1$, and $b_1\neq a_j$.
    \end{itemize}
    \end{itemize}
\end{proof}

\section{Full-homomorphisms}
\label{sec:full-hom}

In this brief section, we apply Theorem~\ref{thm:homogeneous-Ptime}
to obtain a P vs.\ NP-complete classification of the complexity
of sandwich problems for \emph{full-homomorphisms}.
A full-homomorphism from a graph $\bG$ to a graph $\bH$ is a vertex mapping
$f\colon G \to H$ such that  $uv\in E(\bG)$ if and only if $f(u)f(v)\in E(\bH)$
for every $u,v\in G$. Notice that if $M$ is the adjacency matrix of $\bH$, then $\bG$
admits an $M$-partition if and only if $\bG$ admits a full-homomorphism to $\bH$.

A graph $\bG$ is  \emph{point-determining} if for every pair of different vertices
$u$ and  $v$
there is vertex $w$ such that $uw\in E$ and $vw\not\in E$ or vice versa.
A pair of different vertices $i$ and $j$ of $\bG$ are \emph{twins} if for every $w\in V$
there is an edge $uw$ in $\bG$ if and only if there is an edge $vw$ in $\bG$;
so $\bG$ is point-determining if and only if it contains no twin vertices.
We say that $\bG'$ is a \emph{blow-up} of $\bG$ if $\bG'$ can be obtained from $\bG$ by
iteratively adding twins. Notice that $\bG$ admits a full-homomorphism to $\bH$ if and only
if $\bG$ is a blow-up of some induced subgraph $\bH'$ of $\bH$.

\begin{observation}\label{obs:K2+K1}
    A graph $\bG$ admits a full-homomorphism to $\bK_1 + \bK_2$ if and only
    if $\bG$ is a $\{\bK_3,2\bK_2,\bP_4\}$-free graph.
\end{observation}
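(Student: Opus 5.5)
We must show: $\bG$ admits a full-homomorphism to $\bK_1+\bK_2$ if and only if $\bG$ contains none of $\bK_3$, $2\bK_2$, $\bP_4$ as an induced subgraph. We will use the remark preceding the statement: $\bG\to\bK_1+\bK_2$ fully if and only if $\bG$ is a blow-up of an induced subgraph of $\bK_1+\bK_2$. Here the vertex of $\bK_1$ and the vertices of $\bK_2$ carry no loops, so a full-homomorphism sends non-adjacent vertices to the same vertex.

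\textbf{Necessity.} Full-homomorphisms restrict to full-homomorphisms on induced subgraphs, so it suffices to check that none of the three forbidden graphs maps fully to $\bK_1+\bK_2$. In a full-homomorphism two adjacent vertices have distinct images, since there are no loops.
\begin{itemize}
\item $\bK_3$ would need three pairwise distinct adjacent images. But the only adjacent pair in $\bK_1+\bK_2$ is the edge of $\bK_2$, so this is impossible.
\item In $2\bK_2$, each edge must map onto the unique edge of $\bK_2$. Two endpoints from different edges then share an image, which forces them to be non-adjacent. However, some endpoint of the second edge has the same image as the first endpoint of the first edge while being adjacent to the other endpoint of the first edge. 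So it has exactly the neighbourhood type of that first endpoint, i.e.\ it must be adjacent to it as well, a contradiction.
\item For $\bP_4=abcd$, every edge maps onto $\bK_2$. Since $a,c$ and $b,d$ are non-adjacent pairs whose images lie in $\bK_2$, we get $f(a)=f(c)$ and $f(b)=f(d)$. Then $f(a)f(d)$ is an edge of $\bK_2$ while $ad\notin E$, a contradiction.
\end{itemize}

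\textbf{Sufficiency.} Let $\bG$ be $\{\bK_3,2\bK_2,\bP_4\}$-free. We aim to show that $\bG$ is a disjoint union of a complete bipartite graph and isolated vertices. That graph maps fully by sending isolated vertices to $\bK_1$ and the two sides to the two ends of $\bK_2$.
\begin{itemize}
\item Since $\bG$ has no induced $2\bK_2$, at most one component contains an edge.
\item That component $C$ is $\bP_4$-free and connected, hence a cograph. A connected cograph with at least two vertices has a disconnected complement, so $C$ is a join $X\vee Y$ with $X,Y$ non-empty.
\item Since $\bG$ is triangle-free, $X$ and $Y$ are both independent: an edge inside $X$ together with any vertex of $Y$ would form a $\bK_3$, and symmetrically for $Y$.
\end{itemize}
Thus $C$ is complete bipartite.

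If one prefers to avoid the cograph fact, the same conclusion follows directly. A connected triangle-free graph is bipartite provided it has no odd cycle. An induced odd cycle of length at least $5$ contains an induced $\bP_4$, and length $3$ is excluded, so $C$ is bipartite. Now take any non-adjacent $x,y$ on opposite sides with a shortest path between them. That path has odd length at least $3$, so it contains an induced $\bP_4$, a contradiction. Hence $C$ is complete bipartite.

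The main obstacle is this last step, establishing complete bipartiteness from $\bP_4$-freeness. The shortest-path argument settles it.
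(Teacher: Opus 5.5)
Your proof is correct and takes essentially the paper's route: $\bP_4$-freeness gives a cograph join decomposition, triangle-freeness makes both sides of the join independent, and $2\bK_2$-freeness leaves at most one component with an edge. Beyond the paper, you check the necessity direction case by case where the paper only asserts it, and you offer an optional shortest-path argument that avoids the cograph fact.

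Two wording slips are harmless but worth fixing. First, a full-homomorphism to a loopless target does not send all non-adjacent vertices to the same vertex; what holds is that vertices with the same image are non-adjacent twins. Second, in the $2\bK_2$ case, write the edges as $ab$ and $cd$ with $f(c)=f(a)$. Then it is $a$, not $c$, that is adjacent to $b$, and fullness forces $cb\in E$, which is the contradiction you want.
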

\begin{proof}
    Clearly, if $\bG$ is an induced subgraph of $\bH$, and $\bH$ admits a 
    full-homomorphism to $\bK_1+\bK_2$, then $\bG$ admits a full-homomorphism
    to $\bK_1+\bK_2$ as well.
    Hence, the backward direction follows because  neither
    $\bK_3,2\bK_2,$ nor $\bP_4$ admits a full-homomorphism to $\bK_1+\bK_2$.
    Suppose now
    that $\bH$ is $\{\bK_3,2\bK_2,\bP_4\}$-free. In particular, $\bH$ is a cograph
    because it is $\bP_4$-free, and so, $\bH$ is either the join or the disjoint union
    of two smaller cographs. In the former case, since $\bH$ is triangle-free, 
    $\bH$ is the join of two independent sets, and thus it admits a full-homomorphism
    to $\bK_2$ (and thus to $\bK_1 +\bK_2$). In the latter case, when $\bH$ is the disjoint
    union of smaller cographs $\bH_1,\dots, \bH_k$, it follows from the $2\bK_2$-freeness
    of $\bH$ that at most one of these $\bH_i$ contains an edge. If no such $i$ exists, then
    $\bH$ is an empty graph, i.e., a blow-up of $\bK_1$; if such $i$ exists, the we
    proceed similarly as before to notice that $\bH_i$ is a blow-up of $\bK_2$, and
    thus $\bH$ is a blow-up of $\bK_1+\bK_2$. 
\end{proof}

Full-homomorphism of $2$-edge-coloured graphs are defined analogously: 
a \emph{full-homomorphism} $f\colon \bG\to \bH$ is a vertex mapping
$f\colon G\to H$ such that for every pair of vertices $u,v\in G$
it is the case that $uv\in B(\bG)$ if and only if $f(u)f(v)\in B(\bH)$, 
and $uv\in R(\bG)$ if and only if $f(u)f(v)\in R(\bH)$.

\begin{lemma}\label{lem:full-hom-LAC}
    Let $\bG$ and $\bH$  be  $2$-edge-coloured graphs. If 
    there is a full-homomorphism $f\colon \bG\to \bH$, then 
    the list version $\CSP(\bG)$ reduces in polynomial time to
    the list version of $\CSP(\bH)$.
\end{lemma}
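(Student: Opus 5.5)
The idea is to prove the statement by the trivial gadget reduction: compose the input with the full-homomorphism and push the lists forward. Let $f\colon \bG\to\bH$ be a fixed full-homomorphism; it does not depend on the input, so it can be hard-coded. Given an instance $(\bI, L)$ of the list version of $\CSP(\bG)$, with $L(v)\subseteq G$ for $v\in I$, we output the instance $(\bI, L')$ of the list version of $\CSP(\bH)$, where $L'(v):=f(L(v))=\{f(g)\colon g\in L(v)\}$. This transformation is clearly polynomial (even log-space), so it remains to check correctness.

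\emph{Completeness.} Suppose $h\colon \bI\to\bG$ respects the lists $L$. Then $f\circ h\colon \bI\to\bH$ is a homomorphism, since a full-homomorphism is in particular a homomorphism. It also respects the lists, because $f(h(v))\in f(L(v))=L'(v)$.

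\emph{Soundness.} This is the step where fullness is needed. Suppose $h'\colon \bI\to\bH$ respects $L'$. For each $v\in I$ we have $h'(v)\in f(L(v))$, so we may choose some $g_v\in L(v)$ with $f(g_v)=h'(v)$, and set $h(v):=g_v$. Then $h$ respects $L$ by construction. To see that $h$ is a homomorphism, take any $uv\in B(\bI)$. Then $h'(u)h'(v)\in B(\bH)$, that is, $f(g_u)f(g_v)\in B(\bH)$, and since $f$ is full this gives $g_ug_v\in B(\bG)$. The argument for red edges is identical.

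The only subtlety is that the preimage choice is arbitrary. Two vertices of $\bI$ may map to the same vertex of $\bH$ and yet receive different preimages in $\bG$. This causes no problem: fullness applies to every pair of vertices of $\bG$, including pairs with equal images and including $u=v$ (loops). So no consistency between the choices is required, and no step is a genuine obstacle.
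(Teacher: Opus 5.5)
Your proof is correct and uses the same reduction as the paper: keep the instance and replace each list $L(v)$ by its image $f(L(v))$ (the paper writes $\{f(g)\colon g\in G\}$, evidently a typo for $g\in L(v)$). The paper only states this transformation, so your completeness and soundness check supplies exactly the verification it leaves implicit, including your remark that fullness covers pairs with equal images and loops.
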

\begin{proof}
    Associate an input $\mathbb I$ with lists $L(i)\subseteq G$
    for each $i$, with the input $\mathbb I$ with lists 
    $L'(i) = \{f(g)\colon g\in G\}$.
    This results in a polynomial-time reduction from the list version
    of $\CSP(\bG)$ to the list version of $\CSP(\bH)$. 
\end{proof}

\begin{figure}[ht!]
\centering
    \begin{tikzpicture}
    
        \begin{scope}[scale=0.7]
            \node (L1) at (0,-1.6) {$\bK_1 + \bK_2$};
            \node (0) [vertex] at (90:1.3){};
            \node (1) [vertex] at (210:1.3){};
            \node (2) [vertex] at (330:1.3){};
            
            \draw [edge] (1) to  (2);
            
        \end{scope}

        \begin{scope}[scale=0.7, xshift = 6.6cm]
           \node (L1) at (0,-1.6) {$(\bK_1 + \bK_2)^\ast$};
            \node (0) [vertex] at (90:1.3){};
            \node (1) [vertex] at (210:1.3){};
            \node (2) [vertex] at (330:1.3){};
            
            \draw [edge, red, dashed] (0) to [out=55,in=125, looseness=12] (0);
            \draw [edge, red, dashed] (1) to [out=55 + 90,in=125 + 90, looseness=12] (1);
            \draw [edge, red, dashed] (2) to [out=55 -90,in=125 + -90, looseness=12] (2);
               
            \foreach \from/\to in {0/1, 2/0} 
                \draw [edge, red, dashed] (\from) to [bend right = 20] (\to);
            \foreach \from/\to in {2/1} 
                \draw [edge, blue] (\from) to [bend left = 20] (\to);
        \end{scope}
        
        \begin{scope}[scale=0.7, xshift=13cm]
            \node (L1) at (0,-1.6) {$\overline{(\bK_1 + \bK_2)^\ast}$};
            \node (0) [vertex] at (90:1.3){};
            \node (1) [vertex] at (210:1.3){};
            \node (2) [vertex] at (330:1.3){};
            
            \draw [edge, blue] (0) to [out=55,in=125, looseness=12] (0);
            \draw [edge, blue] (1) to [out=55 + 90,in=125 + 90, looseness=12] (1);
            \draw [edge, blue] (2) to [out=55 -90,in=125 + -90, looseness=12] (2);
               
            \foreach \from/\to in {0/1, 2/0} 
                \draw [edge, blue] (\from) to [bend right = 20] (\to);
            \foreach \from/\to in {2/1} 
                \draw [edge, red, dashed] (\from) to [bend left = 20] (\to);
                
        \end{scope}

\end{tikzpicture}
\caption{To the left, the (uncoloured) graph $\bK_1 + \bK_2$. In the middle and right, two 
$2$-edge-coloured graphs obtained from $\bK_1 + \bK_2$: the former by colouring edges with blue
and non-edges with red,  and the latter by colouring edges with red and non-edges with blue.}
\label{fig:full-hom}
\end{figure}
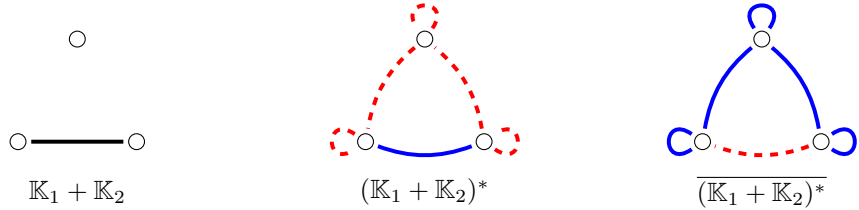

\begin{proposition}\label{prop:full-hom}
    Let $\bH$ be a reflexive complete $2$-edge-coloured graph with no $\ast$-loops and
    where all vertices are incident to a red loop (resp.\ to a  blue loop).
    If $\bH$ does not pp-construct $\bK_3$, then $\bH$ admits a full-homomorphism
    $(\bK_1 + \bK_2)^\ast$ (resp.\ to $\overline{(\bK_1+ \bK_2)^\ast}$); see
    Figure~\ref{fig:full-hom}. In this case, the list version of $\CSP(\bH)$ is
    polynomial-time solvable.
\end{proposition}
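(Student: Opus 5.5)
The plan is to translate the statement into the language of ordinary graphs, reduce it to Observation~\ref{obs:K2+K1}, and obtain the tractability part from results already proved. Two reductions come first. By duality ($\bH\mapsto\overline\bH$ swaps colours and preserves pp-constructibility of $\bK_3$), it suffices to treat the case where every loop is red. By Theorem~\ref{thm:core_constants}, we may pass to the core of $\bH$ and add constants.

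One caveat I expect to need to settle: $(\bK_1+\bK_2)^\ast$ has no $\ast$-edges. A full-homomorphism from $\bH$ can therefore only exist if $\bH$ has no $\ast$-edges, or if every $\ast$-edge is removed by first taking the core. So the first step is to show that, under the hypotheses, the relevant structure has no $\ast$-edges, or to handle them by collapsing to the core. If that fails, the statement has to be read for $\bH$ without $\ast$-edges.

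With no $\ast$-edges and all loops red, $\bH=\bG^\ast$ for the irreflexive graph $\bG=(H,B(\bH))$. By Observation~\ref{obs:K2+K1}, it is enough to show that if $\bG$ contains an induced $\bK_3$, $2\bK_2$ or $\bP_4$, then $\bH$ pp-constructs $\bK_3$. For each of these three configurations, say on vertices $a_1,\dots,a_t$, I would use the constants to pp-define a subset containing the configuration. Examples are sets of the form $\{x:\exists z\,(z=c\wedge B(x,z))\}$ or intersections of red/blue neighbourhoods of constants. I would then write a binary pp-formula, in the spirit of $\delta(x,y)=\exists z\,B(x,z)\wedge R(y,z)$ from Example~\ref{ex:pp-def}, whose restriction to that subset is (homomorphically equivalent to) $\bK_3$.

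The case of a triangle is easy. Restricting $B$ to a pp-definable set containing the triangle gives a loopless non-bipartite graph, and Hell--Ne\v{s}et\v{r}il~\cite{HellNesetril} together with Theorem~\ref{thm:CSP-dichotomy} yields hardness. The bipartite obstructions $2\bK_2$ and $\bP_4$ are where the red edges must be used. For $\bP_4=a\,b\,c\,d$, the red non-edges $ac$, $bd$, $ad$ combined with the blue path should give, via a formula like $\exists z\,(B(x,z)\wedge R(y,z))$ restricted to $\{a,b,c,d\}$, a loopless odd cycle. For $2\bK_2$, the analogous formula should produce the same kind of structure.

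I expect this step to be the main obstacle. The difficulty is that the pp-definition of the restricting domain must not pull in other vertices of $\bH$ that create loops in the constructed graph. If direct restriction is too weak, I would instead argue through a Siggers polymorphism, using Corollary~\ref{cor:core_idempotent}: I would show that any idempotent Siggers polymorphism restricted to the configuration leads to a contradiction with the colour constraints.

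Once $\bH$ full-homomorphs to $(\bK_1+\bK_2)^\ast$, tractability follows. Lemma~\ref{lem:full-hom-LAC} reduces the list version of $\CSP(\bH)$ to the list version of $\CSP((\bK_1+\bK_2)^\ast)$. That structure equals $\bK_2^\ast\triangleleft_h\bA$, where $\bA$ is a single red-looped vertex: the isolated vertex is joined to both ends of the blue edge by red edges, which is exactly homogeneity for a red vertex. Hence Theorem~\ref{thm:homogeneous-Ptime} (or Theorem~\ref{thm:bounded_width}) gives polynomial-time solvability of the list version.
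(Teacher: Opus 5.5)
Your reduction to Observation~\ref{obs:K2+K1}, the triangle case, the tractability part (Lemma~\ref{lem:full-hom-LAC} plus $(\bK_1+\bK_2)^\ast=\bK_2^\ast\triangleleft_h\bK_1^\ast$) and your caveat about $\ast$-edges all match the paper. The paper's proof does explicitly assume that $\bH$ has no $\ast$-edges, and that is the right reading. A monochromatic red $\ast$-edge has two elements and only binary relations, so it has a majority polymorphism and does not pp-construct $\bK_3$. It is a core, yet it has no full-homomorphism to $(\bK_1+\bK_2)^\ast$, so collapsing to the core does not rescue the weaker hypothesis.

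\textbf{The gap.} The genuine gap is in the two bipartite obstructions $(2\bK_2)^\ast$ and $\bP_4^\ast$. These are the heart of the argument, and you leave them at ``should give''. The claim you do make there is also wrong. Take $\bP_4=a\,b\,c\,d$ and $E(x,y)=\exists z\,(B(x,z)\wedge R(y,z))$ on $\{a,b,c,d\}$. This relation is not symmetric: its arcs are $a\to b,\ a\to d,\ b\to a,\ b\to c,\ b\to d,\ c\to a,\ c\to b,\ c\to d,\ d\to a,\ d\to c$. Its symmetric part is the even cycle $abcda$, so there is no loopless odd (undirected) cycle. What you actually get is a smooth digraph containing a directed $2$-cycle $a\leftrightarrow b$ and a directed $3$-cycle $a\to b\to d\to a$. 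A directed odd cycle on its own would be useless, since $\CSP(\vec\bC_3)$ is tractable.

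\textbf{The missing idea.} What is missing, and what also dissolves your ``main obstacle'', is hereditary hardness for smooth digraphs (Theorem~\ref{thm:HH-digraphs}, from Barto--Kozik--Niven). A smooth digraph that does not map to a directed cycle hereditarily pp-constructs $\bK_3$. So it is enough that the digraph defined by $E$ on \emph{all} of $H$ is loopless and \emph{contains} such a subdigraph.
\begin{itemize}
\item Looplessness is automatic: $E(x,x)$ requires a $z$ with $xz\in B\cap R$, i.e.\ an $\ast$-edge.
\item The remaining vertices of $H$ can only add arcs, never loops, so they are irrelevant.
\item Hence you need no pp-definable restricting domain, no constants and no passage to the core. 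The last would in any case require checking that the induced obstruction survives in the core.
\item On $(2\bK_2)^\ast$ the same formula yields a symmetric $\bK_4$, on $\bK_3^\ast$ a symmetric $\bK_3$, and on $\bP_4^\ast$ the digraph above.
\end{itemize}
In each case Theorem~\ref{thm:HH-digraphs} finishes the argument. This is exactly the paper's proof, which uses the converse formula $\exists z\,(R(x,z)\wedge B(y,z))$. Your fallback via an idempotent Siggers polymorphism is not worked out. Without this hereditary step it would face the same problem of controlling the rest of $\bH$.
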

\begin{proof}
    We proceed by contrapositive, and assume that $\bH$ does
    not admit a full-homomorphism to $(\bK_1 + \bK_2)^\ast$. Using
    Observation~\ref{obs:K2+K1} it follows that one of $\bK_3^\ast$, 
    $(2\bK_2)^\ast$ or $\bP_4^\ast$ are induced substructures.
    Consider now the following pp-definition $\delta_E(x,y):=\exists z.\; R(x,z)
    \land B(y,z)$. Since $\bH$ has no $\ast$-edges, then the digraph $(H, E)$
    where $(x,y)\in E$ if and only if $\bH\models \delta_E(x,y)$ is a loopless digraph
   In each case, $(H,E)$ contains a smooth digraph that does not map homomorphically
    to a directed cycle. Hence, by Theorem~\ref{thm:HH-digraphs},  we conclude
    that $(H,E)$ pp-constructs $\bK_3$.
    Since $\bH$ pp-defines $(H,E)$ we conclude that $\bH$ pp-constructs $\bK_3$ as well.
    By Lemma~\ref{lem:full-hom-LAC}, if $\bH$ admits a full-homomorphism
    to $(\bK_1+ \bK_2)^\ast$, then the list version of $\CSP(\bH)$ reduces in 
    polynomial time to the list
    version of $\CSP((\bK_1 + \bK_2)^\ast)$. The latter can be solved in polynomial
    time by Theorem~\ref{thm:homogeneous-Ptime}, because $(\bK_1 + \bK_2)^\ast$
    can be constructed from a reflexive complete $2$-edge-coloured graph on two vertices
    ($\bK_2^\ast$) by adding a homogeneous vertex.
\end{proof}

\section{Hereditary pp-constructions}
\label{sec:HH}

The aim of this section is obtaining small $2$-edge-coloured graphs
$\bH$ such that whenever a reflexive complete $2$-edge-coloured  graph
$\bH'$ contains $\bH$ as a (not necessarily induced) substructure,
then $\bH$ pp-constructs $\bK_3$.

\emph{Hereditarily hard digraphs} were considered by Bang-Jensen, Hell, and MacGillivray 
in~\cite{bangjensenDM138}. A digraph $\mathbb D$ is hereditarily hard if 
$\CSP(\mathbb D)$ is NP-complete, and $\CSP(\bH)$ is $\NP$-complete whenever $\bH$ is a
loopless digraph such that $\mathbb D\to \bH$. In particular, 
undirected cycles of odd length are hereditarily hard. 

In order for hereditarily hardness to be a meaningful definition, one
must assume that $\cP\neq \NP$. In this paper, we avoid complexity theoretic 
assumptions, and thus consider the following variant. A digraph
$\bD$ \emph{hereditarily pp-constructs} $\bK_3$ if $\bD$ pp-constructs
$\bK_3$, and whenever $\bH$ is a loopless digraph such that $\bD\to \bH$,
then $\bH$ pp-constructs $\bK_3$. The following theorem is an immediate
consequence of the dichotomy for smooth digraphs~\cite{BartoKozikNiven}.

\begin{theorem}\label{thm:HH-digraphs}
    For every smooth digraph $\mathbb D$ one of the following statements holds:
    \begin{itemize}
        \item Either $\bD\to \vec{\bC}_n$ (the directed cycle of length $n$) for some positive integer $n$, 
        and in this case $\bD$ does not hereditarily pp-construct $\bK_3$;
        \item otherwise, $\bD$ hereditarily pp-constructs $\bK_3$. 
    \end{itemize}
\end{theorem}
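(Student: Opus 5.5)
The plan is to reduce both parts to the Barto--Kozik--Niven dichotomy~\cite{BartoKozikNiven}, in the following form: if a smooth digraph $\bH'$ does not pp-construct $\bK_3$, then the core of $\bH'$ is a disjoint union of directed cycles. Since every structure trivially maps to the loop $\vec{\bC}_1$, I read the first alternative as concerning directed cycles that are loopless ($n\ge 2$), plus the degenerate case that $\bD$ has a loop.

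\emph{First alternative.} Suppose $\bD\to\vec{\bC}_n$ with $n\ge 2$. Then $\vec{\bC}_n$ is a loopless digraph receiving a homomorphism from $\bD$. The map $(x_1,\dots,x_p)\mapsto x_1+\dots+x_p \bmod n$ is a $p$-ary cyclic polymorphism of $\vec{\bC}_n$ for every prime $p>n$. By Theorem~\ref{thm:CSP-dichotomy}, $\vec{\bC}_n$ does not pp-construct $\bK_3$, so $\bD$ does not hereditarily pp-construct $\bK_3$. If instead $\bD$ has a loop, then a constant map is a polymorphism of $\bD$, so $\bD$ itself does not pp-construct $\bK_3$.

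\emph{Second alternative.} Assume $\bD$ is loopless and $\bD\not\to\vec{\bC}_n$ for every $n\ge 2$. Taking $\bH=\bD$ shows it suffices to prove: every loopless digraph $\bH$ with $\bD\to\bH$ pp-constructs $\bK_3$.
\begin{enumerate}
\item Let $S\subseteq H$ be the set of vertices $v$ that have a directed walk of length $|H|$ starting at $v$ and one ending at $v$. This unary relation is pp-definable by a pp-formula with $2|H|$ existential variables.
\item Let $\bH'$ be the subdigraph of $\bH$ induced by $S$. By pigeonhole, every vertex of $S$ has an out-neighbour and an in-neighbour that lie on long walks, hence in $S$, so $\bH'$ is smooth.
\item The image of the smooth digraph $\bD$ lies in $S$, because every vertex of $\bD$ has arbitrarily long forward and backward walks. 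Hence $\bD\to\bH'$.
\item Restricting $\bH$ to a pp-definable subset is a pp-construction (a $1$-dimensional pp-power with formulas $\delta(x,y)\land S(x)\land S(y)$, with the non-$S$ vertices handled by the standard argument that pp-definable induced substructures are pp-constructible~\cite{wonderland}). So it suffices to show that $\bH'$ pp-constructs $\bK_3$.
\end{enumerate}

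Suppose for contradiction that $\bH'$ does not pp-construct $\bK_3$. By Barto--Kozik--Niven, its core is a disjoint union of directed cycles $\vec{\bC}_{n_1}+\dots+\vec{\bC}_{n_r}$. Each $n_i\ge 2$, since $\bH'$ is loopless and so is its core. Because $\vec{\bC}_m\to\vec{\bC}_g$ whenever $g$ divides $m$, with $g=\gcd(n_1,\dots,n_r)$ we get
\[\bD\to\bH'\to\vec{\bC}_{n_1}+\dots+\vec{\bC}_{n_r}\to\vec{\bC}_{n_1}.\]
This gives $\bD\to\vec{\bC}_{n_1}$ with $n_1\ge 2$, contradicting the assumption.

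The main obstacle is step~4: the passage to the smooth part must be a genuine pp-construction. The subtle point is that a pp-power has the whole of $H^d$ as its domain. To handle it, I would first pass to the core of $\bH$ and use constants (Theorem~\ref{thm:core_constants}), or cite the standard fact that pp-definable subsets yield pp-constructions. Everything else is a direct application of the smooth-digraph dichotomy together with the divisibility of cycle homomorphisms.
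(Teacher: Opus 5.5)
Your reduction (pp-define the smooth part $S$, apply Barto--Kozik--Niven to the core of $\bH'$, and pull back along $\bD\to\bH'$) is the argument the paper leaves implicit in ``immediate consequence of~\cite{BartoKozikNiven}''. Steps 1--4 are sound, and step 4 is no real obstacle: vertices outside $S$ are isolated in the pp-power and can be sent anywhere in $S$, so the pp-power is homomorphically equivalent to $\bH'$. Your reading $n\ge 2$, with looped $\bD$ handled separately, is also the right repair of the statement.

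\textbf{The gap is in the last display.} The map $\vec{\bC}_{n_1}+\dots+\vec{\bC}_{n_r}\to\vec{\bC}_{n_1}$ exists only when $n_1$ divides every $n_i$. In general the union maps only to $\vec{\bC}_g$ with $g=\gcd(n_1,\dots,n_r)$, and $g$ may be $1$, which gives no contradiction. This cannot be patched, because the statement is false for disconnected $\bD$. Take $\bD=\vec{\bC}_2+\vec{\bC}_3$: it is smooth and loopless and maps to no $\vec{\bC}_n$ with $n\ge 2$. Yet for every prime $p\ge 5$ it has a $p$-ary cyclic polymorphism, so $\bD$ does not even pp-construct $\bK_3$. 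To build it:
\begin{itemize}
\item on tuples inside a single cycle, use $p^{-1}(x_1+\dots+x_p)$;
\item on mixed tuples, note that the cyclic shift acts freely on the pattern of components because $p$ is prime, so fix a coordinate projection on one pattern per orbit and extend by cyclicity. This is a polymorphism because arcs of the power preserve patterns and commute with the shift.
\end{itemize}

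\textbf{The missing hypothesis is that $\bD$ be weakly connected.} Then the image of $\bD$ in the core of $\bH'$ lies in a single $\vec{\bC}_{n_i}$ with $n_i\ge 2$, and your contradiction goes through. For loopless $\bD$ in general, the correct criterion is whether some weak component maps to no $\vec{\bC}_n$ with $n\ge 2$. This costs nothing in the paper, which only applies the theorem to connected digraphs, e.g.\ the semicomplete digraph of Lemma~\ref{lem:AB-3} and the $\bK_4$ in Lemma~\ref{lem:4Alt-ontop}.

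\textbf{A smaller slip in the first alternative.} The map $x_1+\dots+x_p \bmod n$ is a polymorphism of $\vec{\bC}_n$ only if $p\equiv 1\pmod n$, since an arc in every coordinate raises the sum by $p$. Use $p^{-1}(x_1+\dots+x_p) \bmod n$ instead (a prime $p>n$ is invertible modulo $n$), or the Maltsev polymorphism $x-y+z$.
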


Here we consider a similar notion for $2$-edge-coloured graphs. 
We say that a $2$-edge-coloured graph $\bH$ \emph{hereditarily pp-constructs} $\bK_3$
if $\bH$ pp-constructs $\bK_3$, and $\bH'$ pp-constructs $\bK_3$
whenever $\bH\to \bH'$ and $\bH'$ does not contain an $\ast$-loop.
For instance,  suppose that $\mathbb H$ contains an \emph{$\ast$-odd-cycle}, i.e., 
a sequence of vertices $h_1,\dots, h_{2n+1}$ such that
$h_1h_{2n+1}\in R\cap B$ and $h_ih_{i+1}\in R\cap B$ for every $i\in[2n]$.
In any such case, if $\mathbb H$ is loopless, then $\mathbb H$ hereditarily pp-constructs $\bK_3$:
indeed, if $\mathbb H\to \mathbb H'$ and $\mathbb H'$ has no $\ast$-loops, then the formula
$\delta_E(x,y):= R(x,y)\land B(x,y)$ pp-defines a loopless non-bipartite graph, and
any loopless non-bipartite graph pp-constructs $\bK_3$.

\begin{observation}\label{obs:red-odd-cycle}
    Let $\bH$ be a $2$-edge-coloured graph without $\ast$-loops. If there is a
    red-odd-cycle (resp.\ blue-odd-cycle) $v_1,\dots, v_{2n+1}$ in $\bG$ such that $v_iv_i\in B$
    (resp.\ $v_iv_i\in R$) for every $i\in[2n+1]$, then $\bH$ hereditarily pp-constructs $\bK_3$. 
\end{observation}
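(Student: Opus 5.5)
The plan is to exhibit one pp-formula that works uniformly for $\bH$ and for every target $\bH'$ of a homomorphism from $\bH$. Then we reduce to the fact, already used in the $\ast$-odd-cycle discussion above, that every loopless non-bipartite graph pp-constructs $\bK_3$. I treat the red-odd-cycle case; the blue case is obtained by swapping colours (or by passing to the dual $\overline{\bH}$).

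Consider the pp-formula
\[
\delta_E(x,y) := R(x,y)\land B(x,x)\land B(y,y).
\]
For any $2$-edge-coloured graph $\bH'$, it defines the graph $\bG_{\bH'}$ whose vertices are the blue vertices of $\bH'$ and whose edges are the red edges among them. The relation is symmetric because $R(\bH')$ is symmetric. If $\bH'$ has no $\ast$-loops, then $\bG_{\bH'}$ is loopless: a loop at $u$ would mean $uu\in R(\bH')\cap B(\bH')$.

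Now let $f\colon \bH\to\bH'$ be a homomorphism, where $\bH'$ has no $\ast$-loops. The case $\bH'=\bH$ with $f$ the identity is allowed, and it gives the first half of the definition of hereditarily pp-constructing $\bK_3$. Since $v_iv_i\in B(\bH)$, we get $f(v_i)f(v_i)\in B(\bH')$, so each $f(v_i)$ is a vertex of $\bG_{\bH'}$. Since $v_iv_{i+1}\in R(\bH)$ (indices modulo $2n+1$), each pair $f(v_i)f(v_{i+1})$ is an edge of $\bG_{\bH'}$. Hence $f(v_1),\dots,f(v_{2n+1})$ is a closed walk of odd length in the loopless graph $\bG_{\bH'}$. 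A closed walk of odd length contains an odd cycle, so $\bG_{\bH'}$ is non-bipartite.

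It remains to conclude that $\bH'$ pp-constructs $\bK_3$. The graph $\bG_{\bH'}$ is, up to homomorphic equivalence, a pp-power of $\bH'$ of dimension one. Strictly, the pp-power has domain $H'$ with edge relation $\delta_E$, and the non-blue vertices are isolated, which does not affect the argument. A loopless non-bipartite graph pp-constructs $\bK_3$ by the Hell--Ne\v{s}et\v{r}il theorem. Alternatively, one can use Theorem~\ref{thm:HH-digraphs}: viewed as a symmetric digraph, a graph containing an odd cycle is a smooth digraph on its non-isolated part, and it admits no homomorphism to any $\vec{\bC}_m$, since a symmetric edge forces $m\le 2$ and the odd cycle forbids $m=2$ and $m=1$ is a loop. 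Finally, pp-constructibility is transitive~\cite{wonderland}, so $\bH'$ pp-constructs $\bK_3$.

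The only delicate point is the role of the blue-loop conjuncts $B(x,x)$ and $B(y,y)$. Without them, red edges alone could collapse onto a red loop in $\bH'$. With them, any loop in the defined graph would be a vertex carrying both a blue and a red loop, which is exactly the excluded $\ast$-loop. Everything else is routine.
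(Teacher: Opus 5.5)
Your proof is correct and matches the paper's argument: the same pp-formula $B(x,x)\land B(y,y)\land R(x,y)$, looplessness from the absence of $\ast$-loops in $\bH'$, non-bipartiteness from the image of the odd cycle, and the standard fact that loopless non-bipartite graphs pp-construct $\bK_3$. You also spell out details the paper leaves implicit: the identity homomorphism for the first clause of the definition, odd closed walks versus odd cycles, and transitivity of pp-constructions.
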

\begin{proof}
     Suppose that $\bH\to \bH'$ and that $\bH'$ has no $\ast$-loops.
     It follows from the assumption on $\bH$ that the undirected graph $(H', E)$
     where $E$ is the set of pairs $(x,y)$ that satisfy the pp-formula
     $ B(x,x)\wedge B(y,y)\wedge R(x,y)$ is a non-bipartite graph. Hence, 
    $\bH'$ pp-constructs a non-bipartite loopless graph, and thus it pp-constructs
    $\bK_3$.
\end{proof}

We now consider the equivalence relation $\sim_S$ defined on $4$-tuples of vertices
defined by the Siggers identity, i.e., the smallest equivalence
relation that contains $\sim'$ where
\[
    (x,y,z,w)\sim' (x',y',z',w'):= (x,y,z,w) = (a,r,e,a) \text{ and } (x',y',z',w') = (r,a,r,e).
\]
Given a $2$-edge-coloured graph $\mathbb H$ we define the \emph{Siggers power}
$\Sig(\mathbb H)$ of $\mathbb H$ to be the quotient graph $\mathbb H^4/{\sim_S}$, i.e., the fourth power of $\mathbb H$ 
factor by the equivalence relation defined by the Siggers identity.
So a Siggers polymorphism is a homomorphism $f\colon\Sig(\bH)\to \bH$.

\begin{lemma}\label{lem:siggers-power}
    If $\bH$ has no $\ast$-loops, and $\Sig(\bH)$ contains an $\ast$-loop or an $\ast$-odd-cycle, 
    then $\bH$ hereditarily pp-constructs $\bK_3$. 
\end{lemma}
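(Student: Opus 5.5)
Two things must be shown: $\bH$ pp-constructs $\bK_3$, and every $\bH'$ with $\bH\to\bH'$ and no $\ast$-loops also pp-constructs $\bK_3$. I will prove the second claim for an arbitrary such $\bH'$. The case $\bH'=\bH$ (via the identity homomorphism) then gives the first claim.

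The key step is functoriality of the Siggers power. If $g\colon\bH\to\bH'$ is a homomorphism, then the coordinatewise map $g^4\colon\bH^4\to\bH'^4$ is a homomorphism. It also respects $\sim'$: it sends $(a,r,e,a)$ to $(g(a),g(r),g(e),g(a))$ and $(r,a,r,e)$ to $(g(r),g(a),g(r),g(e))$, which are again $\sim'$-related. Hence $g^4$ respects $\sim_S$ and induces a homomorphism $\Sig(\bH)\to\Sig(\bH')$, since edges of a quotient are images of edges of $\bH^4$.

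Homomorphisms send $\ast$-loops to $\ast$-loops. They send an $\ast$-odd-cycle to a closed $\ast$-walk of odd length, which contains either an $\ast$-loop or an $\ast$-odd-cycle. So $\Sig(\bH')$ inherits the hypothesis: it contains an $\ast$-loop or an $\ast$-odd-cycle.

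Now suppose, for contradiction, that $\bH'$ has no $\ast$-loops and does not pp-construct $\bK_3$. By Theorem~\ref{thm:CSP-dichotomy} there is a Siggers polymorphism $f\colon\bH'^4\to\bH'$. Since $f$ satisfies the Siggers identity, it factors through the quotient as a homomorphism $\Sig(\bH')\to\bH'$.

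\begin{itemize}
\item If $\Sig(\bH')$ has an $\ast$-loop, its image is an $\ast$-loop in $\bH'$, which is a contradiction.
\item If $\Sig(\bH')$ has an $\ast$-odd-cycle, its image is a closed odd $\ast$-walk in $\bH'$. Since $\bH'$ has no $\ast$-loops, this walk contains an $\ast$-odd-cycle of $\bH'$. Then $\delta_E(x,y):=R(x,y)\land B(x,y)$ pp-defines a loopless non-bipartite graph on $H'$, so $\bH'$ pp-constructs $\bK_3$, which is again a contradiction.
\end{itemize}

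In fact the second bullet can be stated directly without contradiction: once $\bH'$ contains an $\ast$-odd-cycle and no $\ast$-loops, the argument preceding Observation~\ref{obs:red-odd-cycle} already shows that $\bH'$ pp-constructs $\bK_3$. Either way, $\bH'$ pp-constructs $\bK_3$, which completes the proof.

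The main obstacle is checking that the quotient $\Sig(\cdot)$ behaves well. Its edges are defined as images of edges of $\bH^4$ under the quotient map, so both facts used above hold: a Siggers polymorphism factors through $\Sig(\bH')$, and $g^4$ descends to a homomorphism $\Sig(\bH)\to\Sig(\bH')$. With that convention fixed, both facts are routine.
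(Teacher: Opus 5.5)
Your proof is correct and follows the same route as the paper: you lift $g\colon\bH\to\bH'$ coordinatewise to a homomorphism $\Sig(\bH)\to\Sig(\bH')$, compose it with the Siggers polymorphism of $\bH'$ viewed as a homomorphism $\Sig(\bH')\to\bH'$, and conclude via the pp-definition $R(x,y)\land B(x,y)$ of a loopless non-bipartite graph. You also make explicit several points the paper leaves implicit, all correctly: that $\sim_S$ is respected, that a closed odd $\ast$-walk suffices, and that taking $\bH'=\bH$ with the identity gives the non-hereditary part.
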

\begin{proof}
    Suppose that $g\colon \bH\to \bH'$, and anticipating a contradiction suppose that
    $\bH'$ does not contain an $\ast$-loop, and that $\bH'$ does not pp-construct $\bK_3$.
    It follows by Theorem~\ref{thm:CSP-dichotomy} that $\bH'$ has a 
    Siggers polymorphism, i.e., a homomorphism $f\colon \Sig(\bH')\to \bH'$. 
    Since $\bH\to \bH'$, it is also the case that $\Sig(\bH)\to \Sig(\bH')$
    (the mapping $g'\colon\Sig(\bH)\to \Sig(\bH')$ is obtained by applying $g$
    component wise is a homomorphism). 
    This implies that either $\bH'$ contains an $\ast$-loop or an $\ast$-odd-cycle, 
    and since $\bH'$ is has no $\ast$-loops, we conclude that $\bH'$ pp-defines
    a loopless non-bipartite graph via $E(x,y):= R(x,y)\land B(x,y)$.
    This implies that $\bH'$ pp-constructs $\bK_3$  contradicting our assumption.
\end{proof}

Similarly,  for a positive integer $p\ge 2$, we define the equivalence
relation $\sim_p$ to be the smallest equivalence relation that contains
$\sim_p'$ where
\[
    (x_1,\dots, x_p)\sim_p'(y_1,\dots, y_p)\Leftrightarrow x_2 = y_1, \dots x_p = y_{p-1} \text{ and } x_1 = y_p.
\]
Given a $2$-edge-coloured graph $\bH$ we define the \emph{$p$-cyclic power}
$\Cyc_p(\mathbb H)$ of $\bH$ to be the quotient graph $\bH^p/{\sim_p}$.
With analogous arguments as above, one can prove the following lemma. 

\begin{lemma}\label{lem:cyclic-power}
    If $\bH$ has no $\ast$-loops, and $\Cyc_p(\bH)$ contains an
    $\ast$-loop or an $\ast$-odd-cycle for some prime $p> |H|$,
    then $\bH$ is a hereditarily hard $2$-edge-coloured graph. 
\end{lemma}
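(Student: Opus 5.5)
We mirror the proof of Lemma~\ref{lem:siggers-power}, replacing the Siggers identity by the $p$-ary cyclic identity. The conclusion ``hereditarily hard'' is read as ``hereditarily pp-constructs $\bK_3$''. So we must show that $\bH$ pp-constructs $\bK_3$, and that every $\bH'$ without $\ast$-loops with $\bH\to\bH'$ pp-constructs $\bK_3$. The first claim is the special case $\bH'=\bH$ via the identity map, since $\bH$ has no $\ast$-loops. We therefore fix a homomorphism $g\colon \bH\to\bH'$ with $\bH'$ free of $\ast$-loops. Towards a contradiction, we assume $\bH'$ does not pp-construct $\bK_3$.

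\textbf{Step 1 (cyclic polymorphism).} By Theorem~\ref{thm:CSP-dichotomy}, $\bH'$ has a $q$-ary cyclic polymorphism for every prime $q>|H'|$. Here lies the main obstacle: the hypothesis only provides a prime $p>|H|$, and $|H'|$ may exceed $|H|$, so we cannot apply the dichotomy at arity $p$ directly. To get around this, we pass to the image. Let $\bH''$ be the substructure of $\bH'$ induced on $g(H)$, so that $|H''|\le |H|<p$ and $\bH''$ has no $\ast$-loops. If $\bH''$ pp-constructed $\bK_3$, we would still need to transfer this to $\bH'$, which is not automatic.

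We therefore argue differently. We instead apply the cyclic-term theorem of Barto--Kozik in its general form: a finite structure without a $\bK_3$ pp-construction has a cyclic polymorphism of every prime arity $p$ greater than its domain size. As a fallback, we note that $\Cyc_p(\bH)$ containing an $\ast$-loop or an $\ast$-odd-cycle is witnessed by finitely many $p$-tuples over $H$, and that $p>|H|$ is used only to guarantee that the equivalence $\sim_p$ behaves well on tuples over $H$. The natural, and I expect intended, route is to apply the cyclic polymorphism for $\bH'$ of arity $p$ whenever $p>|H'|$. If that is not available, we instead take the $p$-ary cyclic polymorphism of $\bH'$ restricted to $g(H)^p$; this restriction is still cyclic, and its image may leave $g(H)$, but that does no harm.

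\textbf{Step 2 (lifting $g$).} Applying $g$ coordinatewise gives a homomorphism $\bH^p\to\bH'^p$. This map respects $\sim_p$, because cyclic shifts commute with coordinatewise maps. Hence it induces a homomorphism $\Cyc_p(\bH)\to\Cyc_p(\bH')$. A cyclic polymorphism $f$ of $\bH'$ is constant on $\sim_p$-classes, so it is exactly a homomorphism $\Cyc_p(\bH')\to\bH'$.

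\textbf{Step 3 (conclusion).} Composing the maps from Step~2 gives $\Cyc_p(\bH)\to\bH'$. Homomorphisms map $\ast$-loops to $\ast$-loops and $\ast$-odd-cycles to closed $\ast$-walks of odd length, so $\bH'$ contains an $\ast$-loop or an $\ast$-odd closed walk. The former is excluded by assumption. In the latter case, the pp-formula $R(x,y)\land B(x,y)$ defines a loopless non-bipartite graph on $H'$, which pp-constructs $\bK_3$. This contradicts our assumption and completes the proof.
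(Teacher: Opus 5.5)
Your Steps~2 and~3 are exactly the analogue of the proof of Lemma~\ref{lem:siggers-power} that the paper intends; its whole proof is ``analogous arguments as above''. Those steps are correct, and so is your treatment of the non-hereditary part via $\bH'=\bH$, where $p>|H|=|H'|$ and Theorem~\ref{thm:CSP-dichotomy} applies.

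The gap is Step~1 in the hereditary case $|H|<p\le|H'|$. There you need a cyclic homomorphism $\bH^p\to\bH'$, equivalently $\Cyc_p(\bH)\to\bH'$, and none of your workarounds supplies one:
\begin{itemize}
\item The ``general form'' of the Barto--Kozik theorem, applied to $\bH'$, again only yields arities $p>|H'|$. It is literally the first item of Theorem~\ref{thm:CSP-dichotomy}.
\item Restricting ``the $p$-ary cyclic polymorphism of $\bH'$'' to $g(H)^p$ presupposes exactly the object whose existence is in question.
\item The remark about $\sim_p$ does not touch the issue: $\sim_p$ is simply the orbit relation of the cyclic shift, for every $p$.
\item Passing to the image $\bH'[g(H)]$ only helps when that image does not itself pp-construct $\bK_3$. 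Otherwise, as you note yourself, hardness of a substructure does not transfer to $\bH'$.
\end{itemize}

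No general principle can repair this, at least for general finite structures, because the arity bound for cyclic maps is governed by the target, not the source. Take $\bB=(\mathbb{Z}_5;\,x+y=z+w,\{0\},\{1\})$, which has the Maltsev polymorphism $x-y+z$, and its induced substructure $\bA$ on $\{0,1\}$. Every homomorphism $f\colon\bA^5\to\bB$ satisfies $f(\bar x)+f(\bar y)=f(\bar x\vee\bar y)+f(\bar x\wedge\bar y)$ and $f(0^5)=0$. A cyclic such $f$ would therefore give $1=f(1^5)=5f(10000)=0$ in $\mathbb{Z}_5$, a contradiction.

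As written, your argument establishes the conclusion only for targets $\bH'$ that have a $p$-ary cyclic polymorphism. This covers the case $|H'|<p$, and in particular $\bH'=\bH$. The paper's one-line sketch silently has the same limitation. However, the lemma is later invoked only with target $\bH$ itself (Proposition~\ref{prop:2-element-constructions}, Lemma~\ref{lem:K2-case}). To get the full hereditary statement along your route, you would need a stronger hypothesis. For example, require that $\Cyc_q(\bH)$ contains an $\ast$-loop or $\ast$-odd-cycle for arbitrarily large primes $q$, so that $q>|H'|$ can be chosen.
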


We apply Lemma~\ref{lem:siggers-power} to show that the $2$-edge-coloured
reflexive graphs from Figure~\ref{fig:3-element} (and their duals)
hereditarily pp-construct $\bK_3$.

\begin{figure}[ht!]
\centering
    \begin{tikzpicture}
    
        \begin{scope}[scale=0.7]
            \node (L1) at (0,-1.6) {(3A)};
            \node (0) [vertex, label = left:{\scriptsize $0$}] at (90:1.3){};
            \node (1) [vertex, label = below:{\scriptsize $1$}] at (210:1.3){};
            \node (2) [vertex, label = below:{\scriptsize $2$}] at (330:1.3){};
            
            \draw [edge, red, dashed] (0) to [out=55,in=125, looseness=12] (0);
             \draw [edge, red, dashed] (1) to [out=55 + 90,in=125 + 90, looseness=12] (1);
            \draw [edge,  red, dashed] (2) to [out=55 -90,in=125 + -90, looseness=12] (2);
               
            \foreach \from/\to in {0/1, 1/2, 2/0} 
                \draw [edge, blue] (\from) to [bend right = 20] (\to);
            
        \end{scope}

        \begin{scope}[scale=0.7,xshift=7.5cm]
           \node (L1) at (0,-1.6) {(3B)};
            \node (0) [vertex, label = left:{\scriptsize $0$}] at (90:1.3){};
            \node (1) [vertex, label = below:{\scriptsize $1$}] at (210:1.3){};
            \node (2) [vertex, label = below:{\scriptsize $2$}] at (330:1.3){};
            
            \draw [edge, red, dashed] (0) to [out=55,in=125, looseness=12] (0);
            \draw [edge, red, dashed] (1) to [out=55 + 90,in=125 + 90, looseness=12] (1);
            \draw [edge, blue] (2) to [out=55 -90,in=125 + -90, looseness=12] (2);
               
            \foreach \from/\to in {0/1, 2/0, 1/2} 
                \draw [edge, blue] (\from) to [bend right = 20] (\to);
            \foreach \from/\to in {1/2} 
                \draw [edge, red, dashed] (\from) to [bend left = 20] (\to);
        \end{scope}
    
        \begin{scope}[scale=0.7, xshift = 15cm]
            \node (L1) at (0,-1.6) {(3C)};
            \node (0) [vertex, label = left:{\scriptsize $0$}] at (90:1.3){};
            \node (1) [vertex, label = below:{\scriptsize $1$}] at (210:1.3){};
            \node (2) [vertex, label = below:{\scriptsize $2$}] at (330:1.3){};
            
            \draw [edge, red, dashed] (0) to [out=55,in=125, looseness=12] (0);
            \draw [edge, red, dashed] (1) to [out=55 + 90,in=125 + 90, looseness=12] (1);
            \draw [edge, blue] (2) to [out=55 -90,in=125 + -90, looseness=12] (2);
               
            \foreach \from/\to in {0/1, 1/2} 
                \draw [edge, blue] (\from) to [bend right = 20] (\to);
            \foreach \from/\to in {1/2, 0/2} 
                \draw [edge, red, dashed] (\from) to [bend left = 20] (\to);
        \end{scope}

        \begin{scope}[scale=0.7, yshift = -5.1cm, xshift = 3.25cm]
            \node (L1) at (0,-1.4) {(4A)};
            \node (0) [vertex, label = right:{\scriptsize $0$}] at (1,2){};
            \node (1) [vertex, label = left:{\scriptsize $1$}] at (-1,2){};
            \node (2) [vertex, label = left:{\scriptsize $2$}] at (-1,0){};
            \node (3) [vertex, label = right:{\scriptsize $3$}] at (1,0){};
            
            \draw [edge, blue] (0) to [out=55,in=125, looseness=12] (0);
             \draw [edge, blue] (1) to [out=55,in=125, looseness=12] (1);
            \draw [edge,  red, dashed] (2) to [out=55 - 180,in=125 + -180, looseness=12] (2);
            \draw [edge,  red, dashed] (3) to [out=55 - 180,in=125 + -180, looseness=12] (3);

            \draw [edge, red, dashed] (2) to [bend left = 20] (3);
            \draw [edge, blue] (2) to [bend right = 20] (3);
            
            \foreach \from/\to in {0/1, 1/2, 1/3} 
                \draw [edge, red, dashed] (\from) to  (\to);
            \foreach \from/\to in {0/2, 0/3} 
                \draw [edge, blue] (\from) to  (\to);
            
        \end{scope}

        \begin{scope}[scale=0.7, yshift = -4cm, xshift = 11cm]
           \node (L1) at (0,-2.5) {(5A)};
            \node (0) [vertex, label = left:{\scriptsize $0$}] at (90:1.5){};
            \node (1) [vertex, label = left:{\scriptsize $1$}] at (162:1.5){};
            \node (2) [vertex, label = left:{\scriptsize $2$}] at (234:1.5){};
            \node (3) [vertex, label = right:{\scriptsize $3$}] at (306:1.5){};
            \node (4) [vertex, label = right:{\scriptsize $4$}] at (18:1.5){};

            \draw [edge, blue] (0) to [out=55,in=125, looseness=12] (0);
            \draw [edge, red, dashed] (1) to [out=55, in=125, looseness=12] (1);
            \draw [edge, red, dashed] (2) to [out=55 + 180,in=125 + 180, looseness=12] (2);
            \draw [edge, red, dashed] (3) to [out=55 + 180,in=125 + 180, looseness=12] (3);
            \draw [edge, blue] (4) to [out=55, in=125, looseness=12] (4);
               
            \foreach \from/\to in {4/0, 0/3, 1/2, 1/4, 2/4} 
                \draw [edge, blue] (\from) to  (\to);
            \foreach \from/\to in {0/1, 1/3, 2/3, 3/4, 0/2} 
                \draw [edge, red, dashed] (\from) to  (\to);
                
        \end{scope}

\end{tikzpicture}
\caption{Some minimal hereditarily-hard reflexive complete $2$-edge-coloured graphs.}
\label{fig:3-element}
\end{figure}
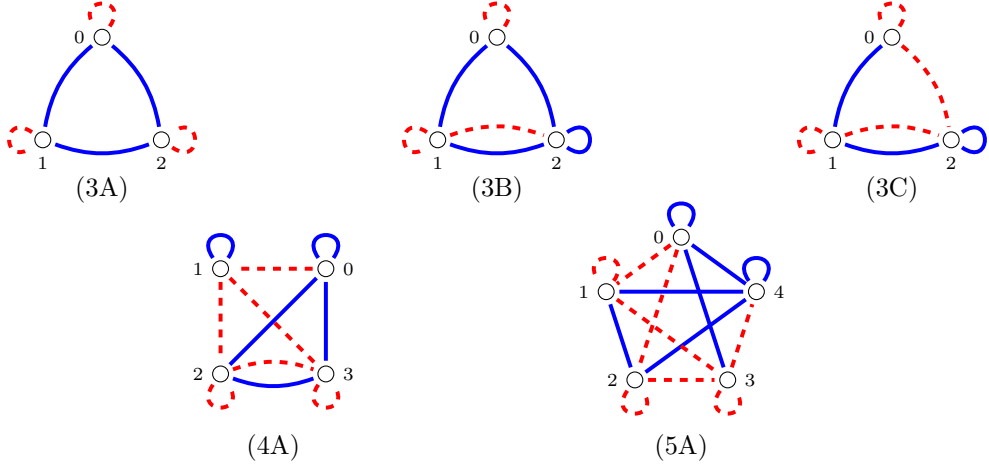

\begin{lemma}\label{lem:3-element-sig}
    If $\bH$ is a $2$-edge-coloured graph from Figure~\ref{fig:3-element}, then $\bH$
    and its dual $\overline\bH$ hereditarily pp-construct $\bK_3$.
\end{lemma}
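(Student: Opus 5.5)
We use Lemma~\ref{lem:siggers-power} throughout. Each graph in Figure~\ref{fig:3-element} is reflexive with only monochromatic loops, so it has no $\ast$-loops. It therefore suffices to exhibit, for each such $\bH$, an $\ast$-loop or an $\ast$-odd-cycle in $\Sig(\bH)$. Such an object is a short sequence of $4$-tuples $\bar t^1,\dots,\bar t^{2n+1}$ (or a single $\bar t$ for a loop) that are consecutive modulo $\sim_S$. Concretely, we need $4$-tuples $\bar u^i,\bar v^i$ with $\bar u^i\sim_S \bar t^i$ and $\bar v^{i}\sim_S\bar t^{i+1}$ such that $\bar u^i\bar v^i\in B(\bH^4)$, together with possibly different representatives forming a red edge of $\bH^4$. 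The basic tool is the Siggers pair: $(a,r,e,a)$ and $(r,a,r,e)$ are identified. To get an $\ast$-loop it suffices to find $a,r,e$ and $a',r',e'$ such that $(a,r,e,a)$ is joined to $(r',a',r',e')$ coordinatewise by blue edges, and also (with another choice) by red edges, all in one $\sim_S$-class.

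The duals need no separate work. Swapping colours is an isomorphism $\bH^4\to\overline{\bH}^4$ that commutes with $\sim_S$, so $\Sig(\overline\bH)=\overline{\Sig(\bH)}$, and $\ast$-loops and $\ast$-odd-cycles are self-dual notions.

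For the three-element graphs we try to build an $\ast$-loop directly, and the rigid structure helps here. In (3A) every loop is red and every non-loop edge is blue. A blue edge of $\bH^4$ is then a pair of tuples differing in every coordinate, and a red edge is a pair of equal tuples. So we need a class containing two tuples that differ everywhere, and also (trivially) a tuple red-adjacent to itself. Take $(0,1,2,0)\sim(1,0,1,2)$: these differ in all four coordinates, and the class is red-looped because every tuple is red-adjacent to itself. This gives an $\ast$-loop.

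For (3B) and (3C) the computation is similar but more constrained. A red edge between $x$ and $y$ requires $x=y\in\{0,1\}$ or $\{x,y\}=\{1,2\}$ in (3B); in (3C) it requires $\{x,y\}\subseteq\{0,1\}$ with $x=y$, or $\{x,y\}\in\{\{1,2\},\{0,2\}\}$. So we search for a Siggers pair linked blue in one representative pair and red in another, possibly passing through a few $\sim_S$-steps. If no $\ast$-loop exists we fall back to an $\ast$-triangle on three classes.

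For (4A) and (5A) a direct loop is less likely, so we look for an $\ast$-triangle (or $\ast$-5-cycle) in $\Sig(\bH)$ and exploit the $\ast$-edge $23$ in (4A). Alternatively we first compute a pp-definable substructure, for instance via $\delta(x,y)=\exists z\,R(x,z)\wedge B(z,y)$ as in Example~\ref{ex:pp-def}, and check whether it yields a loopless non-bipartite $\ast$-graph. That alone would not give the hereditary property, though, so the Siggers route stays primary.

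The main obstacle is the finite search for explicit witnesses in (4A) and (5A). There the colour constraints are mixed, and the witness may need several $\sim_S$-steps within a class; I would find it by hand-guided or computer-assisted search over $4$-tuples.
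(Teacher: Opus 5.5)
Your framework is the same as the paper's. You apply Lemma~\ref{lem:siggers-power} to each graph, and you get the duals for free because colour-swapping commutes with $\sim_S$, so $\Sig(\overline\bH)=\overline{\Sig(\bH)}$. Your treatment of (3A) is correct. The tuples $(0,1,2,0)$ and $(1,0,1,2)$ are $\sim_S$-equivalent and differ in every coordinate, so their class carries a blue loop, and every class carries a red loop. The paper handles (3A) through Observation~\ref{obs:red-odd-cycle} instead; either works.

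The genuine gap is that for (3B), (3C), (4A) and (5A) you exhibit nothing. Lemma~\ref{lem:siggers-power} is only a sufficient condition, so there is no a priori guarantee that $\Sig(\bH)$ contains an $\ast$-loop or an $\ast$-odd-cycle. For these four graphs the explicit witness \emph{is} the proof, and a plan to search for it does not establish the lemma.

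Your planned first attempt, an $\ast$-loop, cannot succeed for any of these four graphs. The $\sim_S$-classes are of three kinds:
\begin{itemize}
\item singletons;
\item pairs $\{(a,r,e,a),(r,a,r,e)\}$ with $e\notin\{a,r\}$;
\item triples $\{(a,r,r,a),(r,a,r,r),(a,r,a,r)\}$ with $a\neq r$.
\end{itemize}
In a triple, every internal pair of tuples (including a tuple with itself) has a coordinate equal to $(r,r)$. The same holds in a pair with $a=r$, with $(a,a)$ in place of $(r,r)$. So all internal edges of such a class have the colour of that single loop. In a pair with $a,r,e$ distinct, both colours occur only in two situations. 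Either $a,r,e$ span an $\ast$-triangle, or they span a triangle all of whose edges are red and all of whose loops are blue (or vice versa). This happens in (3A) but not in (3B), (3C), (4A) or (5A), so for those you need odd cycles through distinct classes.

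The paper supplies these witnesses:
\begin{itemize}
\item For (3C), take the classes $[0102]=\{0102,1021\}$, $[1011]=\{0110,1011,0101\}$ and $[1012]=\{1012,0120\}$. The blue edges are $0102$--$1011$, $0101$--$1012$, $1012$--$0102$, and the red edges are $0102$--$0101$, $1011$--$1012$, $0120$--$0102$ in $\bH^4$.
\item For (3B), the same three classes work. The blue edges are $0102$--$1011$, $1011$--$0120$, $0120$--$1021$, and the red edges are $1021$--$1011$, $1011$--$1012$, $1012$--$1021$. The paper's figure draws the last red edge as $1012$--$0102$, which is not red because $01\notin R(\mathrm{3B})$.
\item For (4A), use the classes of $1021$, $1012$ and $0133$.
\item For (5A), use an $\ast$-$5$-cycle through the classes of $1301$, $0420$, $3213$, $0140$ and $2032$.
\end{itemize}
Writing down and checking these witnesses is exactly what your proposal is missing.
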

\begin{proof}
    We prove the claim for $\bH$ in Figure~\ref{fig:3-element}, their duals
    follows with symmetric arguments.
    The case when $\bH$ is $\mathrm{3A}$ is covered by Observation~\ref{obs:red-odd-cycle}.
    For the remaining cases we show that $\Sig(\bH)$ contains an $\ast$-triangle, and
    the claim thus follows by Lemma~\ref{lem:siggers-power}.
    We now depict an $\ast$-triangle in $\Sig(\bH)$ for each $\bH\in \{\mathrm{3B,3C, 4A}\}$:
    a sequence $abcd$ represents the tuple $(a,b,c,d)$, edges represent edges in $\bH^4$,
    and $\sim_S$-equivalence classes are represented by vertex colour classes (black, white, and grey).
    \begin{center}
    \begin{tikzpicture}

        \begin{scope}[scale=0.7]
            \node (L1) at (0,-2) {\scriptsize $\ast$-triangle in Sig(3B)};
            \node (0) [vertex, fill = black, label = below:{\scriptsize $1021$}] at (90:1.5){};
            \node (00) [vertex, fill = black,  label = above:{\scriptsize $0102$}] at (90:2){};
            \node (1) [vertex, label = left:{\scriptsize $1011$}] at (210:1.75){};
            \node (2) [vertex,  fill = gray, label = left:{\scriptsize $1012$}] at (330:1.5){};
            \node (22) [vertex,  fill = gray, label = right:{\scriptsize $0120$}] at (330:2){};
               
            \foreach \from/\to in {00/1,  22/0, 1/22} 
                \draw [edge, blue] (\from) to [bend right = 20] (\to);
            \foreach \from/\to in {0/1, 1/2, 2/00} 
                \draw [edge, red, dashed] (\from) to [bend right = 20] (\to);
        \end{scope}
    
        \begin{scope}[scale=0.7,xshift=7cm]
           \node (L1) at (0,-2) {\scriptsize $\ast$-triangle in Sig(3C)};
            \node (00) [vertex, fill = black,  label = above:{\scriptsize $0102$}] at (90:1.75){};
            \node (1) [vertex, label = right:{\scriptsize $1011$}] at (210:1.5){};
            \node (11) [vertex, label = left:{\scriptsize $0101$}] at (210:2){};
            \node (2) [vertex,  fill = gray, label = right:{\scriptsize $1012$}] at (330:2){};
            \node (22) [vertex,  fill = gray, label = left:{\scriptsize $0120$}] at (330:1.5){};
               
            \foreach \from/\to in {00/1, 11/2, 2/00} 
                \draw [edge, blue] (\from) to [bend right = 20] (\to);
            \foreach \from/\to in {00/11, 1/2, 22/00} 
                \draw [edge, red, dashed] (\from) to [bend right = 20] (\to);
        \end{scope}
    
        \begin{scope}[scale=0.7, xshift = 14cm]
            \node (L1) at (0,-2) {\scriptsize $\ast$-triangle in $\Sig(\mathrm{4A})$};
            \node (0) [vertex, fill = black, label = below:{\scriptsize $1021$}] at (90:1.5){};
            \node (00) [vertex, fill = black,  label = above:{\scriptsize $0102$}] at (90:2){};
            \node (1) [vertex, label = 45:{\scriptsize $0120$}] at (210:1.5){};
            \node (11) [vertex, label = left:{\scriptsize $1012$}] at (210:2){};
            \node (2) [vertex, fill = gray, label = right:{\scriptsize $0133$}] at (330:1.75){};
               
            \foreach \from/\to in {00/1, 1/2, 2/00} 
                \draw [edge, blue] (\from) to [bend right = 20] (\to);
            \foreach \from/\to in {00/11, 11/2, 2/0} 
                \draw [edge, red, dashed] (\from) to [bend right = 20] (\to);
        \end{scope}
    
\end{tikzpicture}
\end{center}
Following the same conventions as in the previous illustration, we depict
$\ast$-cycles of length five in  $\Sig(\mathrm{5A})$.
\begin{center}
    \begin{tikzpicture}
        \begin{scope}[scale=0.8]
             \node (L1) at (0,-2.5) {\scriptsize 5-element subgraph of $\Sig(\mathrm{5A})$};
            \node (0) [vertex, fill = black, label = below:{\scriptsize $1301$}] at (90:1.4){};
            \node (00) [vertex, fill = black,  label = above:{\scriptsize $3130$}] at (90:2){};
            \node (1) [vertex, label = right:{\scriptsize $0420$}] at (162:1.4){};
            \node (11) [vertex, label = left:{\scriptsize $4042$}] at (162:2){};
            \node (2) [vertex, fill = gray, label = right:{\scriptsize $2321$}] at (234:1.4){};
            \node (22) [vertex, fill = gray, label = left:{\scriptsize $3213$}] at (234:2){};
            \node (33) [vertex, fill = lightgray, label = right:{\scriptsize $0140$}] at (306:2){};
            \node (3) [vertex, fill = lightgray, label = above:{\scriptsize $1014$}] at (306:1.4){};
            \node (4) [vertex, fill = magenta, label = left:{\scriptsize $2032$}] at (18:1.4){};
            \node (44) [vertex, fill = magenta,  label = right:{\scriptsize $0203$}] at (18:2){};

            \foreach \from/\to in {0/11, 1/22, 22/33, 33/44, 44/00} 
                \draw [edge, blue] (\from) to [bend right = 20] (\to);
            \foreach \from/\to in {0/1, 1/2, 3/44, 22/3, 4/00} 
                \draw [edge, red, dashed] (\from) to [bend right = 20] (\to); 
        \end{scope}
\end{tikzpicture}
\end{center}
\end{proof}

A simple structural implication of this lemma is that whenever a $2$-edge-coloured
graph $\bH$ contains either of the graphs from Figure~\ref{fig:ast-path}
it pp-constructs $\bK_3$. Indeed, regardless of the edge type connecting
the end vertices in these paths, we can find either of the graphs 3B, 3C, or
one of their duals as subgraphs of $\bH$.

\begin{corollary}\label{cor:ast-edges}
    Let $\bH$ be a reflexive complete $2$-edge-coloured  graph without
    $\ast$-loops. If $\bH$ does not pp-construct $\bK_3$, then 
    $\bH$ does not contain either of the graphs in Figure~\ref{fig:ast-path}
    as subgraphs.
\end{corollary}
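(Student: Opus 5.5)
The plan is to argue by contrapositive. Suppose $\bH$ is a reflexive complete $2$-edge-coloured graph with no $\ast$-loops that contains one of the graphs of Figure~\ref{fig:ast-path} as a (not necessarily induced) subgraph. We want to show that $\bH$ pp-constructs $\bK_3$. The tool is Lemma~\ref{lem:3-element-sig}, which says that 3B, 3C and their duals hereditarily pp-construct $\bK_3$. Hence it suffices to find a copy of one of these four graphs as a subgraph of $\bH$. The identity map from that copy into $\bH$ is then a homomorphism into a structure without $\ast$-loops, so $\bH$ pp-constructs $\bK_3$ by the definition of hereditary pp-construction.

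I expect the graphs of Figure~\ref{fig:ast-path} to be paths on three vertices $x,y,z$ whose loops are fixed, with $xy$ and $yz$ carrying prescribed colours, at least one of them an $\ast$-edge. The argument is a case split on the edge type of the end pair $xz$. Since $\bH$ is complete, $xz$ is red, blue, or an $\ast$-edge.

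\begin{itemize}
\item For each case, compare the resulting triangle with 3B and 3C. For instance, 3B has loops red, red, blue with $01$ and $02$ blue and $12$ an $\ast$-edge. 3C has the same loops, with $01$ blue, $02$ red, and $12$ an $\ast$-edge.
\item If $xz$ is an $\ast$-edge, the triangle contains both the red and blue versions of that edge. It therefore contains 3B or 3C as a subgraph, because a subgraph only needs the specified edges to be present.
\item If $xz$ is monochromatic, one of the two colours gives 3B and the other gives 3C, or their duals when the loop pattern is the dual one (blue, blue, red). One has to match which vertex plays the role of vertex $2$, the one whose loop colour differs, and check that the $\ast$-edge of the path lands on the pair $12$.
\end{itemize}

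The main obstacle is only the bookkeeping. For each path in Figure~\ref{fig:ast-path} and each of the three possible edge types on $xz$, one must choose a labelling of $\{x,y,z\}$ that embeds 3B, 3C, $\overline{\mathrm{3B}}$ or $\overline{\mathrm{3C}}$. Loops must agree in colour, every edge of the target must appear in $\bH$, and extra edges in $\bH$ are harmless.

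I would first treat the loop patterns with two red loops and one blue loop. The dual patterns then follow by applying $\overline{(\cdot)}$, which swaps red and blue everywhere and maps 3B and 3C to their duals. With this symmetry, each path needs only two nontrivial checks, namely $xz$ red and $xz$ blue; the $\ast$ case is dominated by either.
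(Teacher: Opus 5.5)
Your proposal is correct and follows the paper's own argument: split on the edge type between the two endpoints of the path, find a copy of 3B, 3C or one of their duals in $\bH$, and conclude with Lemma~\ref{lem:3-element-sig} together with the absence of $\ast$-loops. In Figure~\ref{fig:ast-path} both path edges are $\ast$-edges, so your matching works as follows. Send $1\mapsto$ the centre, $2\mapsto$ the endpoint whose loop colour differs, and $0\mapsto$ the other endpoint; then in the first graph a blue (resp.\ red) end edge yields 3B (resp.\ 3C), and in the second graph a red (resp.\ blue) end edge yields $\overline{\mathrm{3B}}$ (resp.\ $\overline{\mathrm{3C}}$).
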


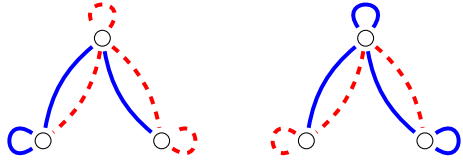
\begin{figure}[ht!]
\centering
    \begin{tikzpicture}
    
        \begin{scope}[scale=0.7]
            \node (0) [vertex] at (90:1.3){};
            \node (1) [vertex] at (210:1.3){};
            \node (2) [vertex] at (330:1.3){};
            
            \draw [edge, red, dashed] (0) to [out=55,in=125, looseness=12] (0);
             \draw [edge, blue] (1) to [out=55 + 90,in=125 + 90, looseness=12] (1);
            \draw [edge,  red, dashed] (2) to [out=55 -90,in=125 + -90, looseness=12] (2);
               
            \foreach \from/\to in {0/1, 0/2} 
                \draw [edge, blue] (\from) to [bend right = 20] (\to);
                \foreach \from/\to in {0/1, 0/2} 
                \draw [edge, red, dashed] (\from) to [bend left = 20] (\to);
            
        \end{scope}
    
        \begin{scope}[scale=0.7, xshift=5cm]
            \node (0) [vertex] at (90:1.3){};
            \node (1) [vertex] at (210:1.3){};
            \node (2) [vertex] at (330:1.3){};
            
            \draw [edge, blue] (0) to [out=55,in=125, looseness=12] (0);
            \draw [edge, red, dashed] (1) to [out=55 + 90,in=125 + 90, looseness=12] (1);
            \draw [edge, blue] (2) to [out=55 -90,in=125 + -90, looseness=12] (2);
               
            \foreach \from/\to in {0/1, 0/2} 
                \draw [edge, blue] (\from) to [bend right = 20] (\to);
                \foreach \from/\to in {0/1, 0/2} 
                \draw [edge, red, dashed] (\from) to [bend left = 20] (\to);
        \end{scope}

\end{tikzpicture}
\caption{Two paths on three vertices such that if a reflexive complete $2$-edge-coloured 
graph $\bH$ contains either of them as subgraphs, then $\bH$ pp-constructs $\bK_3$.}
\label{fig:ast-path}
\end{figure}

\subsection*{Restricted hereditarily pp-constructions}

Using our results from above, we show that the three element
$2$-edge-coloured graph from Figure~\ref{fig:H3} hereditarily
pp-constructs $\bK_3$ in the class of reflexive $2$-edge-coloured graphs.
That is, if $\bG$ is a reflexive $2$-edge-coloured 
graph without $\ast$-loops and $\bH_3\to \bG$, then $\bG$ pp-constructs
$\bK_3$.

The technique we use is similar as above. We proceed by contradiction
and notice that if $\bH_3\to \bG$ and $\bG$ has a Siggers polymorphism, 
then $\Sig(\bH_3)\to \bG$. We then use the fact that $\bG$ is a reflexive
graph to observe that this implies that there is graph $\bH$
(from Figure~\ref{fig:3-element}, or one of their duals)
that hereditarily pp-constructs $\bK_3$, and such that $\bH\to \bG$.
This will settle our claim.

\begin{figure}[ht!]
    \centering
    \begin{tikzpicture}
    \begin{scope}[scale=0.7, xshift=5cm]
            \node (L1) at (0,-1.6) {$\bH_3$};
            \node (0) [vertex, label = left:{\scriptsize $0$}] at (90:1.3){};
            \node (1) [vertex, label = below:{\scriptsize $1$}] at (210:1.3){};
            \node (2) [vertex, label = below:{\scriptsize $2$}] at (330:1.3){};
            
            \draw [edge, red, dashed] (0) to [out=55,in=125, looseness=12] (0);
            \draw [edge, red, dashed] (1) to [out=55 + 90,in=125 + 90, looseness=12] (1);
            \draw [edge, blue] (2) to [out=55 -90,in=125 + -90, looseness=12] (2);
               
            \foreach \from/\to in {0/1, 1/2} 
                \draw [edge, blue] (\from) to [bend right = 20] (\to);
            \foreach \from/\to in {0/2} 
                \draw [edge, red, dashed] (\from) to [bend left = 20] (\to);
                
        \end{scope}

\end{tikzpicture}
\caption{A $2$-edge-coloured graph $\bH_3$ on three vertices that is hereditarily
hard for the class of reflexive $2$-edge-coloured graphs.}
\label{fig:H3}
\end{figure}
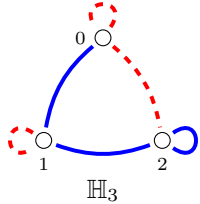

\begin{lemma}\label{lem:H3}
    Let $\bG$ be a reflexive $2$-edge-coloured  graph with no $\ast$-loops.
    If $\bH_3\to \bG$, then $\bG$  pp-constructs $\bK_3$.
\end{lemma}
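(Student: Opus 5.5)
Following the strategy announced before Figure~\ref{fig:H3}, I argue by contradiction. Let $g\colon \bH_3\to \bG$ with $\bG$ reflexive, without $\ast$-loops, and suppose $\bG$ does not pp-construct $\bK_3$. By Theorem~\ref{thm:CSP-dichotomy} there is a Siggers polymorphism $f\colon \Sig(\bG)\to \bG$. Applying $g$ coordinatewise gives a homomorphism $\Sig(\bH_3)\to\Sig(\bG)$, so $\Sig(\bH_3)\to \bG$. Then I locate inside $\Sig(\bH_3)$ a small substructure $\bF$ on three to five $\sim_S$-classes, and study its image $\bF'$ in $\bG$.

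The first task is to find $\bF$, in the style of the pictures in Lemma~\ref{lem:3-element-sig}. I would build it from the tuples $0102\sim 1021$ (since $(a,r,e)=(1,0,2)$), together with $1011$, $0101$, $1012\sim 0120$, and check edges in $\bH_3^4$ coordinatewise using $R(\bH_3)=\{00,11,02\}$ and $B(\bH_3)=\{22,01,12\}$. I expect these classes to span, say, a triangle whose pairs are joined by edges of prescribed colours, some possibly $\ast$, with prescribed loop colours. The loops come from the diagonal: a class containing a tuple $\bar a$ has loop colour given by $\bar a\bar a$, which exists only if all entries share a loop colour. Since $2$ is the only blue vertex, many classes have no loop in $\Sig(\bH_3)$.

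This is exactly where reflexivity of $\bG$ is used. In $\bF'$ every vertex carries a loop, red or blue but not both. So I would split into cases according to the loop colours of the images of the relevant classes. In each case the goal is to show that $\bF'$ contains as a subgraph one of 3B, 3C, 4A, 5A, or their duals, or one of the $\ast$-paths of Figure~\ref{fig:ast-path}. Alternatively, an $\ast$-odd cycle together with the absence of $\ast$-loops should directly yield a loopless non-bipartite graph via $R\wedge B$.

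One concern is that the image $\bF'$ may identify vertices. If two classes joined by an edge of one colour are identified, they only create a loop of that colour, and that must be consistent with no $\ast$-loops. If they are joined by both colours, they cannot be identified at all, and this restricts which collapses are possible. Each surviving configuration contains a hereditarily hard graph $\bH$ with $\bH\to\bG$. Then $\bG$ pp-constructs $\bK_3$ by Lemma~\ref{lem:3-element-sig}, Corollary~\ref{cor:ast-edges}, or Observation~\ref{obs:red-odd-cycle}, which is the contradiction.

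I expect the main obstacle to be choosing $\bF$ so that the case analysis on loop colours and identifications stays finite and small. Every branch must land in Figure~\ref{fig:3-element} or Figure~\ref{fig:ast-path}. I would first try the $\ast$-triangle candidates already used for 3C, since $\bH_3$ is a subgraph of 3C with the edge $12$ not starred. If that fails, I would enlarge $\bF$ using the loop at $2$.
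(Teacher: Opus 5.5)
Your strategy is the paper's: a Siggers polymorphism $f$, then $\Sig(\bH_3)\to\bG$, then a case split on the loop colours that reflexivity of $\bG$ forces on images of loopless $\sim_S$-classes. However, the proposal stops where the proof's content begins. You never exhibit $\bF$, and the candidate you plan to try first cannot close the case analysis.

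Since $0101\sim_S 1011$ (take $a=1$, $r=0$, $e=1$), your tuples give only three classes: $X=[0102]=[1021]$, $W=[1011]$ and $Z=[0120]=[1012]$. The edges of the $\ast$-triangle in $\Sig(\mathrm{3C})$ that need $12\in R$ disappear in $\Sig(\bH_3)$. Checking coordinatewise over all representatives gives the following structure. $W$ has a red loop and $X,Z$ are loopless. $XW$ and $WZ$ are blue only. $XZ$ is an $\ast$-edge, red via the pair $0102,0120$ and blue via the pair $1021,0120$.

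Two of the three cases close:
\begin{itemize}
\item If $f(X)$ and $f(Z)$ both have red loops, you get $\mathrm{3A}\to\bG$.
\item If their loops differ, you get $\mathrm{3B}\to\bG$, with $0\mapsto f(W)$ and the $\ast$-edge onto $12$.
\end{itemize}
If both loops are blue, all you know is that $\bG$ receives the reflexive structure $\bT$: a blue-looped $\ast$-edge $b_1b_2$ plus a red-looped vertex joined to $b_1,b_2$ by blue edges only. In $\bT$ the pair $\{b_1,b_2\}$ is a homogeneous $\ast$-edge over a single red loop. So $\bT$ carries the WNU polymorphisms built in the proof of Theorem~\ref{thm:bounded_width} and does not pp-construct $\bK_3$. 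Hence no hereditarily hard graph maps to $\bT$, and no analysis of $f[\bF]$ for these three classes can yield the contradiction. Your worry about identifications is beside the point: you only ever need a homomorphism from a hard graph into $\bG$, so non-injectivity of $f$ is harmless.

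The missing ingredient is the singleton class $Y=[2100]$. Coordinatewise, $2100$ is joined by red edges to $0102$ and to $0120$, and by a blue edge to $1011$. With $Y$ added, the both-blue case splits on the loop colour of $f(Y)$:
\begin{itemize}
\item If $f(Y)$ is blue, you get $\overline{\mathrm{3A}}\to\bG$ on $f(X),f(Y),f(Z)$.
\item If $f(Y)$ is red, you get $\overline{\mathrm{4A}}\to\bG$ via $0\mapsto f(Y)$, $1\mapsto f(W)$, $2\mapsto f(X)$, $3\mapsto f(Z)$.
\end{itemize}
This four-class substructure of $\Sig(\bH_3)$ and the resulting four-way case split are the paper's entire proof, concluded by Lemma~\ref{lem:3-element-sig}. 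In the mixed-colour case the graph that maps in is $\mathrm{3B}$, although the paper's text names $\mathrm{3C}$.
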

\begin{proof}
    We show that if $\bG$ admits a Siggers polymorphism, then there
    is a structure $\bH$ from Figure~\ref{fig:3-element} 
    such that $\bH\to \bG$, and so  the claim follows from Lemma~\ref{lem:3-element-sig}.
    Since $\bH_3\to \bG$, if $\Sig(\bG)\to \bG$, then $\Sig(\bH_3)\to \bG$.
    We consider the following 4-element structure induced in $\Sig(\bH_3)$
    where edges represent edges in $(\bH_3)^4$,  colour classes represent
    $\sim_S$-equivalence classes, and to simplify notation we write $abcd$
    instead of $(a,b,c,d)$.
    \begin{center}
    \begin{tikzpicture}
        \begin{scope}[scale=0.8]
            \node (L1) at (0,-2.75) {\scriptsize 4-element subgraph of $\Sig(\bH_3)$};
            \node (0) [vertex, fill = black, label = right:{\scriptsize $1021$}] at (90:2){};
            \node (00) [vertex, fill = black,  label = left:{\scriptsize $0102$}] at (90:1.5){};
            \node (11) [vertex, label = below left:{\scriptsize $1011$}] at (180:1.75){};
            \node (22) [vertex, fill = gray, label = below:{\scriptsize $2100$}] at (270:1.75){};
            \node (33) [vertex, fill = lightgray, label = right:{\scriptsize $0120$}] at (0:1.75){};

            \draw [edge, red, dashed] (11) to [out=55 +45,in=125 +45, looseness=12] (11);
              
            \foreach \from/\to in {00/11, 33/0, 11/22, 11/33} 
                \draw [edge, blue] (\from) to [bend right = 10] (\to);
            \foreach \from/\to in {33/00, 22/00, 22/33} 
                \draw [edge, red, dashed] (\from) to (\to);       
        \end{scope}
\end{tikzpicture}
\end{center}
Let $f\colon\Sig(\bH_3)\to \bG$ be a homomorphism, and to simplify notation we write
$f(abcd)$ to denote the image of the equivalence class of $(a,b,c,d)$, in symbols
$f(abcd):=f([(a,b,c,d)]_{\sim_S})$. Since $\bG$ is reflexive, each of $f(1021)$, 
$f(2100)$ and $f(0120)$ are incident to some loop; we now proceed by a case distinction
depending on the colour of these loops. 
\begin{itemize}
    \item If $f(0102)$ and $f(0120)$ are incident to  red loops, then 3A$\to \bG$;
    \item If $f(0102)$ and $f(0120)$ are incident to loops of different colours, 
    then 3C$\to\bG$; 
    \item If $f(0102)$ and $f(0120)$ are incident to blue loops, and so is
    $f(2100)$ then $\overline{\mathrm{3A}}\to \bG$;
    \item If $f(0102)$ and $f(0120)$ are incident to blue loops, and 
    $f(2100)$ is incident to a red loop, then $\overline{\mathrm{4A}}\to \bG$.
\end{itemize}
It is clear that these cases cover all possible cases. We thus conclude
via Lemma~\ref{lem:3-element-sig}  that if $\bG$ does not contain an
$\ast$-loop, then $\bG$ pp-constructs $\bK_3$.
\end{proof}

\begin{corollary}\label{cor:unique-edge-type}
    Let $\bH$ be a reflexive complete $2$-edge-coloured  graph, and
    $\bC$ be a connected component of $(H_R, B)$ (resp.\ of $(H_B, R)$).
    If $\bH$ does not pp-construct $\bK_3$, then every blue vertex $v\in H_B$
    (resp.\ red vertex $b\in H_R$) is connected to  every $c\in C$ by the same
    edge type (either red or blue, but not both).
\end{corollary}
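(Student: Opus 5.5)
We derive this from Lemma~\ref{lem:H3}: every forbidden configuration will produce a homomorphism $\bH_3\to\bH$. Since $\bH$ is reflexive and, by the standing convention of this section, has no $\ast$-loops (with an $\ast$-loop the core of $\bH$ is a single vertex), Lemma~\ref{lem:H3} then gives that $\bH$ pp-constructs $\bK_3$, a contradiction. We treat a component $\bC$ of $(H_R,B)$ and a blue vertex $v\in H_B$. The dual case, a component of $(H_B,R)$ and a red vertex, follows by applying the same argument to $\overline{\bH}$. This is legitimate because swapping colours is a bijection on pp-formulas, so $\bH$ pp-constructs $\bK_3$ if and only if $\overline{\bH}$ does.

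\emph{Step 1 (no $\ast$-edges).} Suppose $vc\in B\cap R$ for some $c\in C$, and assume $|C|\ge 2$. Since $\bC$ is connected there is $c'\in C$ with $cc'\in B$, and $cc'\notin R$ because otherwise $cc'$ would be an $\ast$-edge joining two red vertices. The edge $vc'$ exists by completeness.
\begin{itemize}
  \item If $vc'\in R$, map $0\mapsto c'$, $1\mapsto c$, $2\mapsto v$. The loops match, $01=c'c$ is blue, $12=cv$ is blue (it is an $\ast$-edge), and $02=c'v$ is red.
  \item If $vc'\in B$, map $0\mapsto c$, $1\mapsto c'$, $2\mapsto v$. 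Now $01$ and $12=c'v$ are blue, and $02=cv$ is red.
\end{itemize}
In both cases $\bH_3\to\bH$. If instead $|C|=1$, the ``same edge type'' requirement is vacuous. The ``not both'' requirement is exactly where I expect trouble: the two-vertex structure with one red loop, one blue loop and an $\ast$-edge between them is tractable. I would therefore first check whether the intended reading excludes trivial components, or whether the later uses only need components with at least one edge. If a genuine argument is required, I would try to obtain a Figure~\ref{fig:ast-path} configuration from a third vertex and invoke Corollary~\ref{cor:ast-edges}.

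\emph{Step 2 (same type along $C$).} By Step~1, $v$ is joined to each $c\in C$ by exactly one colour. Suppose two blue-adjacent $c,c'\in C$ receive different colours, say $vc\in B$ and $vc'\in R$. Then $0\mapsto c'$, $1\mapsto c$, $2\mapsto v$ is a homomorphism $\bH_3\to\bH$: the loops match, $c'c$ is blue, $cv$ is blue and $c'v$ is red. Hence the colour of $vc$ is constant along blue edges of $\bC$, and by connectivity it is constant on all of $C$.

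The main obstacle is the degenerate situation $|C|=1$ in Step~1. Everything else amounts to matching small configurations to $\bH_3$.
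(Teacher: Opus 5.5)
Your argument for components that contain a blue edge is correct, and it is essentially the paper's proof. The paper also fixes a blue edge between two red vertices of $C$ and a blue vertex. It excludes mixed colours via $\bH_3$ (Lemma~\ref{lem:H3}) and $\ast$-edges via 3C or 3D, then propagates along $C$. Using $\bH_3$ for the $\ast$-edge case as well is legitimate and more uniform, since $\bH_3$ is a spanning subgraph of both 3C and 3D. Your two sub-cases also explicitly handle configurations such as $vc\in B\cap R$, $vc'\in B\setminus R$, $cc'\in B\setminus R$, where neither 3C nor 3D maps into the triangle on $\{c,c',v\}$.

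Your worry about $|C|=1$ is justified, but the right conclusion is that the statement is false there, not that it needs another argument. Take the two-vertex graph you mention: $cc\in R$, $vv\in B$, $cv\in B\cap R$. It does not pp-construct $\bK_3$, because it is a $2$-SAT template. Yet $\{c\}$ is a component of $(H_R,B)$ that is joined to the blue vertex $v$ by both colours. It has no third vertex, so your fallback plan of building a configuration from Figure~\ref{fig:ast-path} cannot work. Bichromatic $\ast$-edges even occur as alternating components of tractable templates in Theorem~\ref{thm:main}. The paper's proof silently makes the same restriction: it only considers blue edges joining two red vertices. The later applications, e.g.\ in Lemma~\ref{lem:mono-vs-bipartite}, concern components containing a blue edge. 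So state and prove the corollary for components with at least two vertices.

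Two smaller points.
\begin{itemize}
\item The absence of $\ast$-loops is not a convention stated in the paper, and passing to the core does not justify it; it is a genuine extra hypothesis. Add to $\bH_3$ a vertex with an $\ast$-loop, joined to all other vertices by red edges only. The constant tuple at that vertex satisfies every pp-formula, so this graph does not pp-construct $\bK_3$. Yet it violates the conclusion for $C=\{0,1\}$ and the blue vertex $2$. The paper's proof relies on this hypothesis just as yours does.
\item Drop the assertion $cc'\notin R$. It is never used, since a homomorphism only has to preserve edges. Its justification is also invalid, because nothing rules out an $\ast$-edge between two red vertices.
\end{itemize}
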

\begin{proof}
    It suffices to show that for each blue edge $uv$ of $\bH$ connecting red vertices,
    each blue vertex $b$ is connected to $u$ and to $v$ either by a blue or by a red edge,
    but not both.
    Firstly, if $b$ is connected to $u$ or to $v$ 
    by an $\ast$-edge,
    then there is a homomorphism from 3C or 3D 
    to $\bH$ showing that $\bH$ pp-constructs
    $\bK_3$ (Lemma~\ref{lem:3-element-sig}).
    If $b$ is connected to $u$ and to $v$
    by different type of edges, then we can find a homomorphism from $\bH_3$
    (Figure~\ref{fig:H3}) to $\bH$, and conclude that $\bH$ pp-constructs $\bK_3$
    (Lemma~\ref{lem:H3}).
\end{proof}

\section{Alternating components}
\label{sec:alt-components}

We begin this section by introducing the \emph{alternating digraph}
$\Alt(\bH)$ of a reflexive complete $2$-edge-coloured graph $\bH$, and we use
this digraph to introduce the alternating components of $\bH$. We then
use the results from Section~\ref{sec:HH}
to present a series of structural properties $\Pi$ such that
if an alternating component $\bA$ of $\bH$ does not satisfy $\Pi$,
then $\bH$ pp-constructs $\bK_3$. 

The vertex set of $\Alt(\bH)$ is $H$, and there is an arc $(x,y)$ if
$xx\in B(\bH)$ and $xy\in R(\bH)$ or $xx\in R(\bH)$ and $xy\in B(\bH)$.
Equivalently, the arc set of $\Alt(\bH)$ is defined by the
quantifier-free positive formula:
\begin{equation}
    \alpha(x,y):= (B(x,x)\land R(x,y))\lor (R(x,x)\land B(x,y)) .
\end{equation}
We further consider $\Alt(\bH)$ as a vertex coloured digraph where
a vertex $h$ is coloured with blue if $h\in H_B$, and with red
if $h\in H_R$, i.e., $h$ is coloured with blue (resp.\ red) if
it is incident to a blue (resp.\ red) loop in $\bH$.
In particular, we say that an arc $(u,v)$ in $\Alt(\bH)$ is
monochromatic if $u$ and $v$ have the same vertex colour.

\begin{remark}\label{rmk:trivial}
    If $\bH$ is a reflexive complete $2$-edge-coloured  graph, then
    for a blue vertex $b$ and a red vertex $r$ at least
    one of $(b,r)$ or $(r,b)$ is an arc in $\Alt(\bH)$, and
    both are arcs if and only if $rb$ is an $\ast$-edge in
    $\bH$. Also, for every substructure $\bH'$ of $\bH$,
    the digraph $\Alt(\bH')$ is isomorphic to the subdigraph
    of $\Alt(\bH)$ induced by the vertex set $H'$.
\end{remark}

The \emph{alternating component} of a vertex $h$ in $\bH$ is
the strongly connected component of $h$ in $\Alt(\bH)$, i.e.,
the maximal subset $H'\subseteq H$ containing $h$ and such
that for every pair of vertices $u,v\in \bH$ there is
a directed path from $u$ to $v$ and from $v$ to $u$
in $\Alt(\bH)$. In particular, every $\ast$-edge is contained
in some alternating component of $\bH$.

\begin{observation}\label{obs:endomorphism}
    Let $\bH$ be a reflexive complete $2$-edge-coloured  graph, 
    and let $\bA$ be an alternating component of $\bH$. 
    If $f\colon \bA\to \bA$ is an endomorphism of $\bA$,
    then the following mapping is an endomorphism of $\bH$
    \[
        g(h) =  \begin{cases}
                    f(h) & \text{if } h\in A,\\
                    h & \text{otherwise.}
                \end{cases}
    \]
\end{observation}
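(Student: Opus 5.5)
We need to check that $g$ preserves blue and red edges. Fix an edge $uv$ of $\bH$ of some colour. Pairs with both ends in $A$ are handled by $f$ being an endomorphism of $\bA$ (the substructure induced by $A$), and pairs with both ends outside $A$ are fixed by $g$. The only real work is the mixed case: $u\in A$, $v\notin A$, where $g(u)=f(u)\in A$ and $g(v)=v$. We must show that if $uv\in B(\bH)$ then $f(u)v\in B(\bH)$, and similarly for red. For loops, note that $g$ preserves loops since $f$ does.

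The key structural claim is that every vertex $v\notin A$ is joined to all of $A$ in a ``homogeneous'' way. That is, for every $a\in A$ the edge $av$ is monochromatic and has the colour of the loop at $a$ (so $(a,v)\notin\Alt(\bH)$ and $(v,a)\in \Alt(\bH)$), or for every $a\in A$ the edge $av$ has the colour opposite to the loop at $a$.

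To prove this, recall that $\bH$ is complete, so for each $a$ the edge $av$ is blue or red. Writing $\lambda(a)$ for the loop colour of $a$, the arc $(a,v)$ lies in $\Alt(\bH)$ exactly when $av$ carries the colour opposite to $\lambda(a)$. If both arcs $(a,v)$ and $(v,a')$ exist for some $a,a'\in A$, then strong connectivity of $A$ gives a path $a'\to a$, and so $v$ would lie in the strongly connected component $A$, a contradiction. Hence either no $a\in A$ has an arc to $v$, or no $a'\in A$ receives an arc from $v$.

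In the first case, every $av$ carries only the colour $\lambda(a)$. In the second case, every $av$ carries the colour opposite to $\lambda(a)$. The loop-colour / colour-of-$v$ bookkeeping needs care here: in the second case the edge $av$ also contains no colour that would create the arc $(v,a)$, and that colour depends on $v$'s loop. So in the second case I would phrase things via $v$'s loop: every $av$ carries only the colour opposite to $\lambda(v)$. In the first case, every $av$ contains the colour $\lambda(v)$ and lacks the colour opposite to $\lambda(a)$.

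Now apply $f$. Since $f$ preserves loops, $\lambda(f(a))\supseteq\lambda(a)$; without $\ast$-loops this is an equality, and the $\ast$-loop case is trivial because the core is then a single vertex. I expect the main obstacle to be the first case, which gives the colour of $av$ in terms of $\lambda(a)$, namely $av$ has colour exactly $\lambda(a)$. Then $f(a)v$, applying the same dichotomy to $f(a)\in A$, has colour $\lambda(f(a))=\lambda(a)$. In the second case, $av$ has only the colour opposite to $\lambda(v)$ for every $a$, including $f(a)$, so the edge colour is preserved. In both cases $g$ maps $uv$ to an edge of the same colour. This completes the argument, modulo handling $\ast$-loops separately, as noted.
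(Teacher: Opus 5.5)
Your case split (no arc from $A$ to $v$, versus no arc from $v$ into $A$) is the right mechanism. The paper states this observation without proof, and your split is the same reasoning behind Observation~\ref{obs:topological-ordering}. However, the colour bookkeeping in your second case is backwards. No arc $(v,a)$ means that $va$ lacks the colour opposite to $\lambda(v)$. By completeness, $av$ then carries exactly the colour $\lambda(v)$, not ``only the colour opposite to $\lambda(v)$''. For a counterexample, take a red-looped $a$ and a blue-looped $v$ joined only by a blue edge. Your conclusion survives, because you only use that this colour does not depend on $a$. There are two smaller inaccuracies, neither of which you use, so simply drop them:
\begin{itemize}
\item in the first case, $av$ need not contain $\lambda(v)$ (red-looped $a$, blue-looped $v$, red edge $av$);
\item the parenthetical ``$(v,a)\in\Alt(\bH)$'' in your structural claim need not hold.
\end{itemize}

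The genuine gap is the $\ast$-loop case. Saying ``the core is then a single vertex'' is a non sequitur. The statement concerns $\bH$ itself and the particular map $g$, and knowing that the core of $\bH$ is trivial says nothing about whether $g$ preserves edges. You need $\lambda(f(a))=\lambda(a)$ exactly in your first case, so this step cannot be skipped. The fix requires no separate case. A vertex $x$ with an $\ast$-loop has the arc $(x,y)$ to every $y\neq x$: whatever colour $xy$ carries, it is opposite to one of the two loop colours at $x$. This settles both cases:
\begin{itemize}
\item In the first case, no vertex of $A$ can carry an $\ast$-loop. Loop preservation by $f$ then forces $\lambda(f(a))=\lambda(a)$, so $f(a)v$ has exactly the colour $\lambda(a)$, just like $av$.
\item In the second case, $v$ itself cannot carry an $\ast$-loop. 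Every edge between $v$ and $A$ has exactly the colour $\lambda(v)$, so $av$ and $f(a)v$ agree.
\end{itemize}
With these corrections your argument is complete.
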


\begin{observation}\label{obs:automorphism}
    Let $\bH$ be a reflexive complete $2$-edge-coloured  graph, 
    and let $\bA$ be an alternating component of $\bH$. 
    If $f\colon \bH\to \bH$ is an automorphism of $\bH$,
    then $f[\bA]$ is an alternating component of $\bH$.
\end{observation}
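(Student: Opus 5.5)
The plan is to show that an automorphism $f$ of $\bH$ induces an automorphism of the alternating digraph $\Alt(\bH)$ that also preserves the vertex colouring. Strongly connected components are an isomorphism invariant of digraphs, so this will give the claim at once.

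First I will check that $f$ preserves loop colours. Since $f$ is a bijective endomorphism of a finite structure, it permutes the finite sets $R(\bH)$ and $B(\bH)$. Indeed, $f$ maps $B(\bH)$ injectively into itself, because $f\times f$ is injective on pairs, and hence onto $B(\bH)$; the same holds for $R(\bH)$. Consequently $f^{-1}$ is also a homomorphism, and so $uv\in B(\bH)$ if and only if $f(u)f(v)\in B(\bH)$, and likewise for $R(\bH)$. In particular $xx\in B(\bH)$ if and only if $f(x)f(x)\in B(\bH)$, so $f$ maps $H_B$ onto $H_B$ and $H_R$ onto $H_R$.

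Next I will show that $f$ is an automorphism of $\Alt(\bH)$. The arc relation is given by the quantifier-free positive formula $\alpha(x,y)$, which is built only from the atoms $B(x,x)$, $R(x,y)$, $R(x,x)$ and $B(x,y)$. Each of these atoms is both preserved and reflected by $f$, as shown above. Hence $(x,y)$ is an arc of $\Alt(\bH)$ if and only if $(f(x),f(y))$ is an arc, so $f$ is a digraph automorphism of $\Alt(\bH)$.

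Finally I will transfer strong connectivity. A directed path $u=x_0,\dots,x_k=v$ in $\Alt(\bH)$ maps to the directed path $f(x_0),\dots,f(x_k)$, and applying $f^{-1}$ gives the converse. Therefore $u$ and $v$ are mutually reachable if and only if $f(u)$ and $f(v)$ are. It follows that $f[A]$ is strongly connected. It is also maximal: if some $w\notin f[A]$ were mutually reachable with the vertices of $f[A]$, then $f^{-1}(w)\notin A$ would be mutually reachable with the vertices of $A$, contradicting the maximality of $A$. So $f[\bA]$ is an alternating component of $\bH$.

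The only step requiring care is the first one. The definition of automorphism in the paper is only "bijective endomorphism", so I must justify that the inverse preserves edges. The finiteness argument above handles this, and the remaining steps are routine.
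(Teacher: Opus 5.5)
Your proof is correct, and it is the argument the paper leaves implicit (the observation is stated there without proof): the arcs of $\Alt(\bH)$ are defined by the quantifier-free formula $\alpha$, so an automorphism of $\bH$ is an automorphism of $\Alt(\bH)$ and therefore permutes its strongly connected components. Your finiteness argument that $f^{-1}$ also preserves $R$ and $B$ is a correct detail that the paper does not spell out.
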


Recall that a \emph{topological ordering} 
$\bD_1 \le \dots \le \bD_k$ of the strongly
connected components of a digraph $\bD$
is a linear ordering such that whenever 
there is an arc from $u\in \bD_i$ to $v\in \bD_j$
it is the case that $i\le  j$.  When we talk
about a topological ordering of the alternating components
of a $2$-edge-coloured graph $\bH$, we refer to a
topological ordering of the strongly connected components
of $\Alt(\bH)$. Using the definition of the arc set of
$\Alt(\bH)$, it readily follows that if 
$\bA_1\leq\dots\leq \bA_k$ is a topological ordering of the
alternating components of $\bH$, then $A_k$ is a homogeneous
set in $\bH$. By applying this argument inductively, we
obtain the following  observation that connects the notions of
homogeneous concatenation from Section~\ref{sec:hom-con}
and alternating components.

\begin{observation}\label{obs:topological-ordering}
    Let $\bH$ be a reflexive complete $2$-edge-coloured  graph.
    If $\bA_1 \le \cdots \le \bA_k$
    is a topological ordering of the alternating components of $\bH$, then
    $\bH = \bA_1 \triangleleft_h \dots \triangleleft_h \bA_k$.
\end{observation}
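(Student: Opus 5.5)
We prove Observation~\ref{obs:topological-ordering} by induction on the number $k$ of alternating components. The base case $k=1$ is trivial, since then $\bH=\bA_1$.

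For the inductive step, the key claim is that the last component $A_k$ is a homogeneous set in $\bH$.

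Take $h\in H\setminus A_k$ and $s\in A_k$. Because $h$ lies in some component $\bA_i$ with $i<k$, the topological ordering forbids the arc $(s,h)$ in $\Alt(\bH)$. We then unpack the definition of $\alpha$.
\begin{itemize}
    \item If $ss\in B(\bH)$, the absence of $(s,h)$ means $sh\notin R(\bH)$. Since $\bH$ is complete, $sh\in B(\bH)$.
    \item If $ss\in R(\bH)$, then symmetrically $sh\in R(\bH)$ and $sh\notin B(\bH)$.
    \item The case $ss\in B\cap R$ cannot occur. If $s$ had an $\ast$-loop, both disjuncts of $\alpha(s,h)$ would be available, and by completeness at least one of them would hold, giving the forbidden arc $(s,h)$. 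So every vertex of a non-last component, and by the same argument every vertex of $A_k$ adjacent to some earlier vertex, carries a loop of exactly one colour. If $k=1$ there is nothing to check.
\end{itemize}
Thus $h$ is joined to $s$ by a blue edge exactly when $ss\in B(\bH)$, and by a red edge exactly when $ss\in R(\bH)$. This is precisely the homogeneity condition, so $A_k$ is a homogeneous set.

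It follows that $\bH$ is the homogeneous concatenation $\bH[H\setminus A_k]\triangleleft_h\bA_k$, since the edges between $H\setminus A_k$ and $A_k$ are exactly those prescribed by the definition of $\triangleleft_h$.

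To apply the induction hypothesis, let $\bH'=\bH[H\setminus A_k]$. By Remark~\ref{rmk:trivial}, $\Alt(\bH')$ is the subdigraph of $\Alt(\bH)$ induced by $H\setminus A_k$. Removing a union of strongly connected components from a digraph leaves the remaining strongly connected components unchanged. Hence the alternating components of $\bH'$ are $\bA_1,\dots,\bA_{k-1}$, and the induced ordering $\bA_1\le\dots\le\bA_{k-1}$ is still topological.

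The induction hypothesis gives $\bH'=\bA_1\triangleleft_h\dots\triangleleft_h\bA_{k-1}$, and therefore $\bH=\bA_1\triangleleft_h\dots\triangleleft_h\bA_k$. Here the concatenation is read left-associatively, each new block being homogeneous over everything before it.

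The only delicate point is the colour bookkeeping in the homogeneity claim, namely excluding $\ast$-loops on the relevant vertices and using completeness. The rest is routine.
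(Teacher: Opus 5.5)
Your proof is correct and follows the same route as the paper: since $\Alt(\bH)$ has no arc from $A_k$ back to an earlier component, completeness forces $A_k$ to be homogeneous, and you then induct on the remaining components, as the paper does. One slip in an aside you never use: a vertex with an $\ast$-loop has an arc to every other vertex of $\Alt(\bH)$, so it can only lie in $A_1$, and the correct claim is that every vertex of a non-\emph{first} (not non-last) component carries a loop of exactly one colour.
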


In the rest of this section we study the structure of the
alternating components of $2$-edge-coloured graphs that 
do not pp-construct $\bK_3$. We begin with the following
claim, where $A_B$ and $A_R$ denote the set of blue and of red
vertices in some vertex set $A$, respectively.

\begin{lemma}\label{lem:mono-vs-bipartite}
    Let $\bH$ be a reflexive complete $2$-edge-coloured  graph,
    and let $\bA$ be an alternating component of $\bH$. If $\bH$
    does not pp-construct $\bK_3$, then one of the following
    holds
    \begin{itemize}
        \item $A = A_R$ and $(A, B)$ is a bipartite connected component of $(H_R, B)$,
        \item $A = A_B$ and $(A, R)$ is a bipartite connected component of $(H_B, R)$, or
        \item $A_R$ induces a reflexive red clique with no blue edges,
        and $A_B$ induces a reflexive blue clique with no red edges.
    \end{itemize}
\end{lemma}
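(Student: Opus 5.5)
We split according to whether $\bA$ is monochromatic in vertex colours, and use throughout the description of $\Alt(\bH)$. Between two red vertices there is an arc (in both directions) exactly when they are joined by a blue edge. Between two blue vertices there is an arc exactly when they are joined by a red edge. Between a blue and a red vertex there is at least one arc (Remark~\ref{rmk:trivial}).

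\textbf{Monochromatic case.} Suppose $A=A_R$; the case $A=A_B$ is dual. Arcs among red vertices are symmetric and correspond exactly to blue edges, so $A$ lies inside a single connected component $\bC$ of $(H_R,B)$. Conversely, every vertex of $\bC$ is reachable from $A$ and reaches $A$ by blue edges among red vertices. Hence $A=C$.

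Bipartiteness follows from Observation~\ref{obs:red-odd-cycle}: a blue odd cycle on red vertices makes $\bH$ pp-construct $\bK_3$. If such a cycle uses $\ast$-edges it is still a blue odd cycle, so the observation applies. If $\bH$ has an $\ast$-loop the statement is vacuous or trivial; I would dispose of that case first, since that vertex is homogeneous.

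\textbf{Mixed case.} Now suppose $A$ contains both a blue and a red vertex. I first show that $A_R$ has no blue edges, and dually $A_B$ has no red edges.

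Suppose $uv$ is a blue edge with $u,v\in A_R$, and let $\bC$ be the component of $(H_R,B)$ containing it. By Corollary~\ref{cor:unique-edge-type}, every blue vertex $b$ is joined to all of $C$ by the same single edge type. Consider the arcs between $b$ and $C$:
\begin{itemize}
\item if that type is red, all arcs go from $b$ into $C$, namely $(b,c)$, and none return, since $cb$ is not blue;
\item if it is blue, all arcs go from $C$ to $b$.
\end{itemize}
So for each blue vertex the arcs between it and $C$ all point the same way. Arcs between $C$ and the remaining red vertices do not exist, since these would be blue edges and would enlarge the component.

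Hence no directed path can leave $C$ and return, so $C$ is a union of strongly connected pieces containing no blue vertex. As $A$ meets $C$, we get $A\subseteq C\subseteq H_R$, contradicting $A_B\neq\varnothing$. So $A_R$ induces only red edges. By completeness and reflexivity, $A_R$ is then a reflexive red clique with no blue edges; dually for $A_B$.

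\textbf{Main obstacle.} The delicate step is the strong-connectivity argument: showing a vertex of $C$ cannot be strongly connected to a blue vertex when paths may wander through other vertices. The cleanest route is to show that $C$ is a homogeneous-like block, meaning all arcs between $C$ and any fixed outside vertex share one direction. Then contracting $C$ preserves reachability, and any cycle through $C$ and a blue vertex $b$ would have to enter and leave $C$. I would verify that this cannot happen: every blue vertex $b$ is either a pure source or a pure sink with respect to $C$, and red vertices outside $C$ have no arcs to $C$. So a closed walk through $C$ never exits $C$, and $C$ is a union of components of $\Alt(\bH)$ consisting only of red vertices.
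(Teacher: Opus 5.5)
Your monochromatic case is correct and matches the paper's argument. The gap is in the mixed case, at the step ``hence no directed path can leave $C$ and return'', and its restatement in your last paragraph.

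What you have actually shown is weaker:
\begin{itemize}
\item each single blue vertex is either a pure out-neighbour of $C$ (joined to $C$ by blue edges) or a pure in-neighbour of $C$ (joined by red edges);
\item red vertices outside $C$ are not adjacent to $C$ in $\Alt(\bH)$.
\end{itemize}
This only forbids closed walks $c\to b\to c'$ that pass through one blue vertex. It does not forbid leaving $C$ through a blue out-neighbour $b'$ and re-entering through a \emph{different} blue in-neighbour $b$. Two examples are $u\to b'\to r\to b\to u$ with $r\in H_R\setminus C$, $b'r\in R$, $rb\in B$, and $u\to b'\to b\to u$ with $b'b\in R$. Contracting $C$ does not help, since the contracted vertex simply has both in- and out-neighbours.

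In fact no argument using only these facts can work. Take the structure 5A of Figure~\ref{fig:3-element}, with $C=\{1,2\}$. Vertex $4$ is joined to $C$ only by blue edges, vertex $0$ only by red edges, and the red vertex $3$ has no arcs to $C$. Yet $1\to 4\to 3\to 0\to 1$ is a directed cycle of $\Alt(\mathrm{5A})$. So the alternating component of $1$ is bichromatic and contains the blue edge $12$ between red vertices.

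The missing ingredient is a further use of the hypothesis that $\bH$ does not pp-construct $\bK_3$, beyond Corollary~\ref{cor:unique-edge-type}. You need that 5A hereditarily pp-constructs $\bK_3$ (Lemma~\ref{lem:3-element-sig}, via an $\ast$-$5$-cycle in its Siggers power). You also need that $\bH_3$ and $\overline{\bH_3}$ do so within reflexive graphs (Lemma~\ref{lem:H3}). The paper uses these as follows:
\begin{enumerate}
\item By a shortest-path argument, any red vertex $r\notin C$ that reaches $u$ does so along a path $r\to b\to u$.
\item Every blue out-neighbour $b'$ of $u$ is then an out-neighbour of $r$: the case $bb'\in B$ gives a copy of 5A, and the case $bb'\in R$ gives $\overline{\bH_3}$.
\item This rules out red vertices of $A$ outside $C$.
\item When $A_R=C$, a route from a blue out-neighbour of $C$ to a blue in-neighbour through blue vertices yields $\overline{\bH_3}$.
\end{enumerate}

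Separately, the $\ast$-loop case is not ``vacuous or trivial'', and such a vertex need not be homogeneous. Take a vertex $w$ with an $\ast$-loop, a red vertex $x$ and a blue vertex $y$, with $wx\in B\setminus R$ and $wy\in R\setminus B$. These lie in one alternating component that violates all three items, yet $\bH$ retracts onto $w$ and so does not pp-construct $\bK_3$. Absence of $\ast$-loops therefore has to be a tacit hypothesis, as it is for Observation~\ref{obs:red-odd-cycle} and Lemma~\ref{lem:H3}.
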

\begin{proof}
    Assume that $A = A_R$, and notice that, in this case,
    there is an arc $(a,b)$ for $a,b\in A$ if and only if there is a
    blue edge connecting $a$ and $b$ in $\bH$.  Hence $(A, B)$
    is a connected component of $(H_R, B)$. 
    It follows from
    Observation~\ref{obs:red-odd-cycle} that it is bipartite
    because $\bH$ does not pp-construct $\bK_3$. The case $A = A_B$
    follows with symmetric arguments. 

    To show that exactly one of the three itemized cases must hold, 
    we argue that if the third one does not, then one of the
    first two does.
    Up to colour symmetry, if suffices to show 
    that if $\bA$ contains a pair of vertices $u$ and $v$ such that $uv\in B(\bA)$,
    and $uu,vv\in R(\bA)$, then $A = A_R$. Let $\bC$ be the connected
    component of $u$ and $v$ in $(H_R, B)$.
    Clearly, $C\subseteq A$, and using Corollary~\ref{cor:unique-edge-type}
    and the definition of $\Alt(\bH)$ we see that 
    for every blue vertex $b$\blue{,} either $(b,c)$ is an arc for all $c\in C$,
    or $(c,b)$ is an arc for all $c\in C$, but not both. It also follows from
    the definition of $C$ that if $r$ is a red vertex not in $C$, then
    there is no arc in $\Alt(\bH)$ connecting $r$ to some $c\in C$.
    In particular, if $u$ is reached by some red vertex $r\in H_R\setminus C$,
    then it   must be a directed path of length at least $2$. 
    We now show that if $r$ reaches $u$, then $r$ reaches $u$ by a directed path
    of length (exactly) two.
    We also show that
    every out-neighbour 
    of $u$ in $H_B$ is an out-neighbour of $r$,
    and that every in-neighbour 
    of $r$ in $H_B$ is an in-neighbour of $u$.
    
    To do so, we first show that every out-neighbour $b'\in H_B$ of $u$ 
    is an out-neighbour of $r$ when there is a directed path of length
    $2$ from $r$ to $u$. This yields the following structure 
    in $\Alt(\bH)$ (to the left), and in $\bH$ (to the right),
    where $u$, $v$ and $r$ have already been introduced, 
    and $b$ is a vertex witnessing that there is a directed path of
    length two from $r$ to $u$ (and to $v$) in $\Alt(\bH)$.
       \begin{center}
    \begin{tikzpicture}
    \begin{scope}[scale=0.7]
            \node (0) [vertex, fill = blue, label = left:{\scriptsize $b$}] at (90:1.5){};
            \node (1) [vertex, fill = red,  label = left:{\scriptsize $u$}] at (162:1.5){};
            \node (2) [vertex, fill = red, label = left:{\scriptsize $v$}] at (234:1.5){};
            \node (3) [vertex, fill = red, label = right:{\scriptsize $r$}] at (306:1.5){};
            
            \foreach \from/\to in {3/0, 0/1, 0/2} 
                \draw [arc] (\from) to  (\to);
            \foreach \from/\to in {1/2, 2/1} 
                \draw [arc] (\from) to [bend right = 20]  (\to);
        \end{scope}
        
    \begin{scope}[xshift = 5cm, scale=0.7]
            \node (0) [vertex, label = left:{\scriptsize $b$}] at (90:1.5){};
            \node (1) [vertex, label = left:{\scriptsize $u$}] at (162:1.5){};
            \node (2) [vertex, label = left:{\scriptsize $v$}] at (234:1.5){};
            \node (3) [vertex, label = right:{\scriptsize $r$}] at (306:1.5){};
            
            \draw [edge, blue] (0) to [out=55,in=125, looseness=12] (0);
            \draw [edge, red, dashed] (1) to [out=55, in=125, looseness=12] (1);
            \draw [edge, red, dashed] (2) to [out=55 + 180,in=125 + 180, looseness=12] (2);
            \draw [edge, red, dashed] (3) to [out=55 + 180,in=125 + 180, looseness=12] (3);
            
            \foreach \from/\to in { 0/3, 1/2} 
                \draw [edge, blue] (\from) to  (\to);
            \foreach \from/\to in {0/1, 1/3, 2/3, 0/2} 
                \draw [edge, red, dashed] (\from) to  (\to);
        \end{scope}
\end{tikzpicture}
\end{center}

Assume that there is a vertex $b'\in H_B$ such that
$b'$ is connected to 
$u$ by a blue edge (and so, also to $v$), and $b'$ is connected to
$r$ by a red edge, and consider the following to possible cases:
either $bb'$ is a red edge, or $bb'$ is a blue edge. In the former case,
there is a homomorphism from $\overline \bH_3$
(see Figure~\ref{fig:H3}) to the substructure induced by $b',b$ and $r$. In the
latter one, there is a homomorphism
from 5A (see Figure~\ref{fig:3-element}) to the substructure
induced by $\{u,v,r,b,b'\}$. By Lemmas~\ref{lem:3-element-sig} and~\ref{lem:H3},
any of these cases implies that $\bH$ pp-constructs $\bK_3$, contradicting our
assumption.  Therefore, every blue vertex $b'$ connected to $u$ by a blue edge
is connected to $r$ by a blue edge as well, and if $b'$ is connected to
$r$ by a red edge, then it is connected to $u$ by a red edge as well.
Translating back to $\Alt(\bH)$, this means that 
blue out-neighbour of $u$ in $\Alt(\bH)$ is an out-neighbour of $r$,
and every in-neighbour $b'\in H_B$ of $r$ is an in-neighbour of $u$.
By the latter fact, it follows that if there is a directed path from some
red vertex $r'$ to $u$ (equivalently, to some $c\in C$), then there is
a directed path of length two. Indeed, let $r'= r_1,b_1,\dots, r_{k-1},b_{k-1},r_k = u$
be a shortest path from $r'$ to $u$ --- we assume that this path alternates
between blue and red vertices, because otherwise, we can 
find a homomorphism from $\bH_3$ or from $\overline{\bH}_3$ to $\bH$
contradicting Lemma~\ref{lem:H3}
and the assumption that $\bH_3$ does not
pp-construct $\bK_3$. By the previous arguments, 
it follows that every in-neighbour of $r_{k-1}$ is an in-neighbour of $u$,
hence, if $k > 2$, we find a shorter path $r',b_1,\dots, r_{k-2},b_{k-2},u$,
contradicting the choice of path from $r'$ to $u$.

Finally, to show that the alternating component $\bA$ of $u$
equals $\bC$ we distinguish between the following cases, 
and each of these will lead to a contradiction.
\begin{itemize}
    \item Assume there is another red vertex $r\in A\setminus C$.
    We can choose $r\in A\setminus C$ so that there is
    a directed $ub$-path  $ubr$ of length $2$, where $b\in H_B$.
    By the arguments above, since $r$ also reaches $u$, every blue
    in-neighbour of $r$ is an in-neighbour of $u$.  Hence,
    $(u,b)$ and $(b,u)$ are arcs of $\Alt(\bH)$, contradicting
    Remark~\ref{rmk:trivial} because $(u,v)$ is a monochromatic arc of $\Alt(\bH)$.
    \item $A_R = C$, in this case, by Remark~\ref{rmk:trivial}
    all blue vertices are connected to each $c\in C$ by an out-going
    arc or an in-going arc but not both. Hence, there must be an
    arc in $\Alt(\bH)$ connecting the blue out-neighbour of $C$ with the
    blue in-neighbours of $C$. This yields a monochromatic arc connecting blue
    vertices that contradicts Remark~\ref{rmk:trivial}.
\end{itemize}
\end{proof}

\begin{proposition}\label{prop:comp-with-ast-edges}
    Let $\bH$ be a reflexive complete $2$-edge-coloured  graph that is a core,
    and $\bA$ an alternating component of $\bH$. If $\bH$ does not pp-construct
    $\bK_3$, and $\bA$ contains an $\ast$-edge, then $|A| = 2$. 
\end{proposition}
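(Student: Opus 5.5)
We combine Lemma~\ref{lem:mono-vs-bipartite}, Corollary~\ref{cor:ast-edges}, and the core assumption. Since $\bH$ is a core with at least two vertices, it has no $\ast$-loop; otherwise its core would be a single vertex and the claim is vacuous.

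\textbf{Step 1: the $\ast$-edge joins vertices of different colours.} Let $uv$ be an $\ast$-edge in $\bA$. If $u,v$ were both red, then $uv\in B$ with $uu,vv\in R$. By the second half of the proof of Lemma~\ref{lem:mono-vs-bipartite}, $A=A_R$ and $(A,B)$ is a bipartite component of $(H_R,B)$. Then $\bA$ consists of red vertices with blue and red edges among them. We would argue directly that $\bA$ maps onto the $\ast$-edge $\{u,v\}$ along the bipartition. Every vertex outside $A$ sees all of $A$ with a single colour: for blue vertices this is Corollary~\ref{cor:unique-edge-type}, and for red vertices outside $C$ there are no arcs to $C$, so the edges are red. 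Hence Observation~\ref{obs:endomorphism} gives a non-injective endomorphism unless $|A|=2$. The all-blue case is dual. It therefore remains to treat the case where $u$ is blue and $v$ is red, which is the third case of Lemma~\ref{lem:mono-vs-bipartite}: $A_R$ is a reflexive red clique with no blue edges, and $A_B$ is a reflexive blue clique with no red edges.

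\textbf{Step 2: the mixed case.} Suppose $w\in A\setminus\{u,v\}$, say $w$ red (the blue case is symmetric). Then $wv$ is red only. By Corollary~\ref{cor:ast-edges} (Figure~\ref{fig:ast-path}), $w$ cannot be joined to $u$ by an $\ast$-edge, since then $u$ would be a blue centre with two red $\ast$-neighbours. So $wu$ is monochromatic.
\begin{itemize}
\item If $wu$ is blue, we get an arc $(w,u)$, and $u$ has arcs to and from $v$. Then the map $w\mapsto v$, identity elsewhere on $A$, is a candidate retraction.
\item If $wu$ is red, $w$ has an arc from $u$ only.
\end{itemize}

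\textbf{Step 3: building the retraction.} We then aim to show that the whole component retracts onto $\{u,v\}$ via $f(x)=u$ for $x\in A_B$ and $f(x)=v$ for $x\in A_R$. Inside $A$ this respects all edges: red--red pairs are red and go to the red loop at $v$, blue--blue pairs go to the blue loop at $u$, and mixed pairs go to the $\ast$-edge $uv$. By Observation~\ref{obs:endomorphism} this extends to an endomorphism of $\bH$. It is non-injective if $|A|>2$, contradicting that $\bH$ is a core.

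\textbf{Main obstacle.} Observation~\ref{obs:endomorphism} requires $f$ to be an endomorphism of $\bA$ as a substructure, and it is here that the interaction with vertices outside $A$ must be consistent. I expect this to hold because, by the topological ordering (Observation~\ref{obs:topological-ordering}), each outside vertex sees $A$-vertices according to their loop colour, or else $A$ is homogeneous relative to later components. For earlier components I would verify this by hand, using Corollary~\ref{cor:unique-edge-type} and the forbidden graphs 3B, 3C and $\bH_3$. This is the step I would check first.
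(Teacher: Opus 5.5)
Your Steps 1 and 3 are exactly the paper's proof: Lemma~\ref{lem:mono-vs-bipartite} pins down the shape of $\bA$, and collapsing $\bA$ onto the $\ast$-edge $uv$ is an endomorphism of $\bA$. In the monochromatic case the collapse goes along the bipartition; in the mixed case it sends $A_B\mapsto u$ and $A_R\mapsto v$. Observation~\ref{obs:endomorphism} then extends it to a non-injective endomorphism of $\bH$ unless $|A|=2$.

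The ``main obstacle'' you flag is not an obstacle. It is precisely the content of that observation, which follows from the topological ordering of alternating components alone, so no hardness results are needed. Step 2 can simply be dropped. Its appeal to Corollary~\ref{cor:ast-edges} is a misreading: Figure~\ref{fig:ast-path} only forbids a centre whose two $\ast$-neighbours have different loop colours, not a blue centre with two red $\ast$-neighbours. Nothing later in your argument depends on it.
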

\begin{proof}
    Use Lemma~\ref{lem:mono-vs-bipartite} and Observation~\ref{obs:endomorphism}.
\end{proof}

\subsection{Monochromatic components}

We say that an alternating component $\bA$ is monochromatic,
if all vertices in $\bA$ are incident to loops of the same colour.
In this subsection we show that if $\bA$ is a monochromatic
alternating component of $\bH$, and $\bH$ is a core that
does not pp-construct $\bK_3$, then $\bA$ has at most 
two vertices.

\begin{lemma}\label{lem:mono-components}
    Let $\bH$ be a reflexive complete $2$-edge-coloured  graph
    that is a core, and let $\bA$ be a monochromatic alternating component
    of $\bH$. If $\bH$ does not pp-construct $\bK_3$,
    then $(\bH,h_1,\dots, h_n)$ pp-defines the set $A$.
\end{lemma}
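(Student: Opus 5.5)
The plan is to avoid writing down an explicit pp-formula and instead use the Galois correspondence between pp-definability and polymorphisms. With all constants added, a set $A\subseteq H$ is pp-definable in $(\bH,h_1,\dots,h_n)$ if and only if every idempotent polymorphism $f$ of $\bH$ maps $A^m$ into $A$. By colour symmetry (passing to $\overline\bH$) I assume that $\bA$ is red, i.e.\ $A=A_R$. Lemma~\ref{lem:mono-vs-bipartite} then says that $(A,B)$ is a bipartite connected component of $(H_R,B)$. If $|A|=1$ the claim is trivial, since $A=\{a\}$ is given by a constant. So fix a blue edge $aa'$ with $a,a'\in A$, let $A=S_0\cup S_1$ be the bipartition with $a\in S_0$ and $a'\in S_1$, and let $N$ exceed the diameter of $(A,B)$.

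\textbf{Step 1: reduce to tuples over $\{a,a'\}$.} Let $f$ be an idempotent $m$-ary polymorphism and $\bar x\in A^m$. For each $i$ there is a blue walk of length exactly $2N$ inside $(A,B)$ from $x_i$ to $a$ if $x_i\in S_0$, and to $a'$ if $x_i\in S_1$. Running these walks in parallel gives a blue walk in $\bH^m$ from $\bar x$ to some $\bar w\in\{a,a'\}^m$. All tuples on this walk consist of red vertices, so $f$ maps every step to a red vertex, and consecutive images are joined by blue edges. Hence $f(\bar x)$ lies in the same connected component of $(H_R,B)$ as $f(\bar w)$, and it suffices to show $f(\bar w)\in A$.

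\textbf{Step 2: the component of $f(\bar w)$.} Let $\bar w'$ be obtained from $\bar w$ by swapping $a$ and $a'$ in every coordinate. Then $f(\bar w)f(\bar w')$ is a blue edge between red vertices. So $f(\bar w)$ lies in a non-trivial component $A'$ of $(H_R,B)$, which by Lemma~\ref{lem:mono-vs-bipartite} is bipartite. Next I compare how outside vertices see $A$ and $A'$.

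\emph{Blue vertices.} Every blue vertex $b$ is joined to all of $A$ by a single edge type, by Corollary~\ref{cor:unique-edge-type}. Using $f(b,\dots,b)=b$, the vertex $b$ is therefore joined to $f(\bar w)$ by that same type, and again by Corollary~\ref{cor:unique-edge-type} to all of $A'$.

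\emph{Red vertices.} A red vertex outside a component of $(H_R,B)$ is joined to it only by red edges.

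\textbf{Step 3: $A'=A$, which I expect to be the main obstacle.} Suppose $A'\neq A$. Map $A$ onto the blue edge $f(\bar w)f(\bar w')$ of $A'$ by sending $S_0\mapsto f(\bar w)$ and $S_1\mapsto f(\bar w')$, and fix every other vertex. Step 2 gives exactly the compatibility needed for Observation~\ref{obs:endomorphism}-style reasoning: every vertex outside $A\cup A'$ sees $A$ and $A'$ in the same way, and $A$ and $A'$ are joined to each other only by red edges. Hence this map is an endomorphism of $\bH$. It is not injective, because it is not surjective (nothing maps into $A$). This contradicts that $\bH$ is a core, so $A'=A$.

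The delicate points are the following.
\begin{itemize}
\item Confirming that the blue edges of $A$ inside the red component are mapped correctly: this holds because $(A,B)$ is bipartite, so $S_0\mapsto f(\bar w)$, $S_1\mapsto f(\bar w')$ sends blue edges to the blue edge $f(\bar w)f(\bar w')$.
\item Confirming that the red loops and the red edges of $A$ are preserved. If edges inside $A$ are not uniformly red, I would use Corollary~\ref{cor:ast-edges} and Proposition~\ref{prop:comp-with-ast-edges} to exclude $\ast$-edges inside $\bA$.
\end{itemize}
If this direct endomorphism argument fails, the fallback is to compose with automorphisms (Observation~\ref{obs:automorphism}) and to argue by strong connectivity in $\Alt(\bH)$ instead.
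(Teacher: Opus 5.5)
\textbf{There is a gap in Step 3.} Using idempotent polymorphisms through the Galois connection, instead of the paper's explicit formula $\delta$, is legitimate, and Steps 1 and 2 are fine. Step 3, however, does not go through as written.

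Put $p:=f(\bar w)$ and $q:=f(\bar w')$. Because $A'$ is fixed pointwise, your map $g$ is a homomorphism only if two further conditions hold.
\begin{enumerate}
\item[(i)] $pq\in R(\bH)$. This is needed because $A$ may contain red edges between $S_0$ and $S_1$, for example when $\bA$ is a monochromatic $\ast$-edge.
\item[(ii)] $pv,qv\in R(\bH)$ for every $v\in A'$. This is needed because every $u\in A$ is joined to every $v\in A'$ by a red edge, while $g(u)\in\{p,q\}\subseteq A'$.
\end{enumerate}
Step 2 gives neither condition, since it only controls vertices outside $A\cup A'$. For instance, if $A'=\{p,q\}\cong\bK_2^\ast$, then for $u\in S_0$ the red edge $uq$ is sent to the blue-only edge $pq$.

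Your proposed remedy does not address this. Corollary~\ref{cor:ast-edges} and Proposition~\ref{prop:comp-with-ast-edges} do not exclude $\ast$-edges inside monochromatic components; monochromatic $\ast$-edges are among the admissible blocks of Figure~\ref{fig:hom-blocks}. Moreover, the problem is red edges being sent to blue-only edges, not $\ast$-edges inside $\bA$.

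\textbf{The fix.} What is missing is one more use of idempotency, this time against constants from $A'$. Take $c\in A'$.
\begin{itemize}
\item Each $w_ic$ joins red vertices in different components of $(H_R,B)$, so it is a red edge.
\item Hence $\bar w\,(c,\dots,c)\in R(\bH^m)$, so $pc=f(\bar w)f(c,\dots,c)\in R(\bH)$.
\item Likewise $qc\in R(\bH)$.
\end{itemize}
So $p$ and $q$ are red-adjacent to all of $A'$; in particular $pq$ is an $\ast$-edge. Conditions (i) and (ii) then hold, your $g$ is a non-surjective endomorphism, and the contradiction with $\bH$ being a core follows.

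\textbf{Comparison with the paper.} With this repair your argument is a valid alternative route. The paper writes an explicit formula and treats competing components in two cases:
\begin{itemize}
\item competing components without $\ast$-edges are ruled out by conjuncts $B(x,h_i)$ with constants $h_i$ in those components;
\item competing $\ast$-edge components are ruled out by the core/endomorphism argument.
\end{itemize}
Your polymorphism version merges the two cases, since the constants force $\{p,q\}$ to behave like an $\ast$-edge seen by red edges from all of $A'$. It also spares you from writing the defining formula down precisely.
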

\begin{proof}
    Proving the claim for blue components and for red components can be done with
    symmetric arguments. We prove the claim for a blue component $\bA$, 
    and notice that if $A = \{h_i\}$ for some $i\in[n]$, then $x = h_i$ pp-defines $A$.
    Also, if $\bA$ is the unique blue alternating component, then $B(x,x)$ pp-defines $A$. 
    So we assume that $|A|\ge 2$, and that $\bA$ is not the unique blue 
    alternating component of $\bH$.
    
    Without loss of generality we assume that the vertices $h_1,\dots, h_k$ are
    blue vertices, and the vertices $h_{k+1},\dots, h_n$ are red vertices. By 
    Corollary~\ref{cor:unique-edge-type}, we assume that for each red vertex
    $h_j$ there is a unique edge type connecting $h_j$ to each vertex in $\bA$.  
    We denote by $E_j\in\{R,B\}$ the unique edge type connecting $h_j$
    to the vertices
    in (the already fixed component) $\bA$.
    Since $|A|>2$ every vertex in $A$ is incident to a red edge. We claim that $A$ is
    pp-definable via the formula
    \[
        \delta(x):= B(x,x) \land (\exists y.\; R(x,y)\land B(y,y)) \land \bigwedge_{i\le k} B(x,b_i) \land \bigwedge_{k\le j\le n} E_j(x,h_j).
    \]
    It follows from all the current assumptions that $\bH\models \delta(a)$
    for every $c\in A$. We now see that if $\bH\models \delta(h)$, then
    $h\in A$. It is immediate to see that $h$ belongs to a blue component
    with at least two vertices (the first two conjuncts guarantee this fact).
    Also notice that if $\bC$ is a blue component with at least two vertices
    and no $\ast$-edge, then for every $u\in C$ there is a some $i\le k$
    such that $h_i\in C$ such that $uh_i\in R$ and $uh_i\not\in B$
    (simply choose $h_i$ to be  a vertex in $\bC$ connected to $u$ by a red edge).
    Hence, $h$ belongs to $\bA$ or to a blue component $\bC$ on at least two vertices
    that contains an $\ast$-edge. To conclude the proof we show that
    there is some red vertex $h_j$ such that the colour of the edges
    connecting $h_j$ to $\bC$ is different from the colour of the
    edges connecting $h_j$ to $\bA$. Firstly, it follows from
    Corollary~\ref{cor:unique-edge-type} (second item) that $\bC$ is an $\ast$-edge whose
    vertices are incident to blue loops. Since $(A,R)$ is bipartite, there is a
    homomorphism $f\colon \bA\to \bC$, and since $\bH$ is a core, the extension $\bar f$
    of $f$ that acts as the identity in $H\setminus A$ is not an endomorphism.
    Since each vertex $c\in A\cup C$ is connected to all blue vertices in
    $H_B\setminus (A\cup C)$ only by blue edges, there must be some red
    vertex $r_j$ with $k\le j \le n$
    such that  $r_j$ is connected to vertices in $A$  with a different edge type
    than the edge type connecting $v$ to the vertices in $C$
    (otherwise, $\bar f\colon \bH\to \bH$ is a non-surjective homomorphism). 
    This shows that if $c\in C$, then $\bH\not\models E_j(c,h_j)$,
    and so $\bH\not\models \delta(c)$.
    Putting all together we conclude that if $\bH\models \delta(h)$, then
    $h\in A$, and thus $(\bH, h_1,\dots, h_n)$ pp-defines the alternating
    set $A$.
\end{proof}

\begin{corollary}\label{cor:mono-size-2}
    Let $\bH$ be a reflexive complete $2$-edge-coloured  graph that is a core,
    and $\bA$ a monochromatic alternating component of $\bH$. 
    If $\bH$ does not pp-construct $\bK_3$, then $|A|\le 2$.
\end{corollary}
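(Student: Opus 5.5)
The plan is to combine Lemma~\ref{lem:mono-components} with the standard fact that pp-definable subsets of a core with constants yield pp-constructions of the induced substructure. Assume without loss of generality (by duality) that $\bA$ is a blue monochromatic alternating component. By Lemma~\ref{lem:mono-vs-bipartite}, $A = A_B$ and $(A,R)$ is a bipartite connected component of $(H_B,R)$. Suppose for contradiction that $|A|\ge 3$; we aim to show that $\bH$ pp-constructs $\bK_3$.

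\textbf{Step 1.} By Theorem~\ref{thm:core_constants}, since $\bH$ is a core, $\bH$ pp-constructs $(\bH,h_1,\dots,h_n)$. By Lemma~\ref{lem:mono-components}, $(\bH,h_1,\dots,h_n)$ pp-defines $A$. Hence the induced substructure $\bH[A]$ is pp-definable: relativise each relation to $A$ by conjoining the pp-definition of $A$ to each variable. Moreover, with the constants available, every subset $\{a\}\subseteq A$ is pp-definable as well, so $(\bH[A], a_1,\dots,a_m)$ is pp-constructible from $\bH$. Pp-constructibility is transitive, so it suffices to show that $(\bH[A],a_1,\dots,a_m)$ pp-constructs $\bK_3$ when $|A|\ge3$.

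\textbf{Step 2.} $\bH[A]$ consists of blue loops, and on $A$ we have a connected bipartite red graph. Every pair not joined by red is joined by blue; also, by Proposition~\ref{prop:comp-with-ast-edges} and $|A|\ge 3$, there are no $\ast$-edges in $A$, so each pair carries exactly one colour. With all constants present, the only polymorphisms are idempotent. The key combinatorial point is that a connected bipartite red graph on $\ge3$ vertices contains a red path $x,y,z$; then $xz$ is blue (bipartite means no red triangle, and no $\ast$-edge). I would then exhibit a pp-definition (with constants) of something hard. A natural first try is to use the colour-switching digraph $\delta(u,v):=\exists w\,(R(u,w)\land B(w,v))$ restricted to $A$, in the spirit of Proposition~\ref{prop:full-hom}. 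In Proposition~\ref{prop:full-hom}, all-blue-loop structures avoiding the three obstructions admit full homomorphisms to $\overline{(\bK_1+\bK_2)^\ast}$. So the alternative route is: $\bH[A]$ with red graph connected bipartite on $\ge 3$ vertices, in which red nonedges become blue, is not a blow-up fitting that pattern with a homogeneous vertex. Then either $\bH[A]$ contains an induced $\overline{\bK_3^\ast},\overline{(2\bK_2)^\ast},\overline{\bP_4^\ast}$, which gives $\bK_3$ by Proposition~\ref{prop:full-hom}'s argument; or its red graph is a complete bipartite graph $K_{s,t}$ with $s+t\ge3$.

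\textbf{Step 3 (main obstacle).} The remaining case is a red $K_{s,t}$ with, say, $s\ge2$. Here two vertices on the same side are twins, so $\bH[A]$ alone retracts onto a red edge. It is the core property of $\bH$ together with the constants that must rule this out. I would use Observation~\ref{obs:endomorphism}: the map identifying two twins $a,a'$ on the same side is an endomorphism of $\bA$, hence extends to a non-injective endomorphism of $\bH$, contradicting that $\bH$ is a core. This needs the twins to be twins in $\bA$ regardless of outside vertices, which the observation guarantees. So in fact the case split should be organised as follows. If the red graph is complete bipartite, contradict the core property directly. Otherwise, a connected bipartite non-complete-bipartite graph contains an induced red $P_4$, i.e.\ $\overline{\bP_4^\ast}$ in $\bH[A]$, which yields $\bK_3$ via Proposition~\ref{prop:full-hom}'s pp-definition on the pp-definable set $A$ (this is where Step~1 is needed). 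I expect verifying that the digraph argument of Proposition~\ref{prop:full-hom} transfers to the relativised structure to be the delicate point.
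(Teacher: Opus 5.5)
Your argument is correct and is essentially the paper's proof. Both pp-construct $\bH[A]$ from $(\bH,h_1,\dots,h_n)$ via Lemma~\ref{lem:mono-components}, exclude the bad case with Proposition~\ref{prop:full-hom}, and collapse same-side twins using Observation~\ref{obs:endomorphism} and the core property; your split into ``red graph complete bipartite'' versus ``induced $\overline{\bP_4^\ast}$'' (with bipartiteness from Lemma~\ref{lem:mono-vs-bipartite}) simply replaces the paper's step ``full homomorphism to $(\bK_1+\bK_2)^\ast$, hence to $\bK_2^\ast$ by connectivity''. The point you flag as delicate is not: Proposition~\ref{prop:full-hom} applies verbatim to $\bH[A]$ as a standalone reflexive complete $2$-edge-coloured graph with only blue loops and no $\ast$-loops, so an induced $\overline{\bP_4^\ast}$ makes $\bH[A]$ pp-construct $\bK_3$, and transitivity of pp-constructions finishes.
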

\begin{proof}
    The claim follows from Corollary~\ref{cor:unique-edge-type}
    when $\bA$ contains an $\ast$-edge, so we assume that $\bA$
    does not contain an $\ast$-edge.
    Since $\bH$ is a core, it pp-constructs $(\bH, h_1,\dots, h_n)$.
    Hence, by Lemma~\ref{lem:mono-components}, $\bH$ pp-constructs
    $\bA$. Assume without loss of generality that all vertices in
    $\bA$ are incident to a red loop. Since $\bH$ does not pp-construct
    $\bK_3$, $\bA$ cannot pp-construct $\bK_3$ either, and it thus
    follows from Proposition~\ref{prop:full-hom} that $\bA$ admits 
    a full-homomorphism to $(\bK_1 + \bK_2)^\ast$. Since $\bA$
    is an monochromatic alternating component, $(A, R)$ is connected
    and so, $\bA$ must admit a full-homomorphism to $\bK_2^\ast$.
    Finally, since $\bH$ is a core, and every endomorphism of $\bA$
    extends to an endomorphism of $\bH$ (Corollary~\ref{cor:unique-edge-type}),
    we conclude that $\bA$ is isomorphic to either $\bK_1^\ast$ or
    to $\bK_2^\ast$. 
\end{proof}

\subsection{Bichromatic components}

The goal of this subsection is to show that if $\bH$ is a core
that does not pp-construct $\bK_3$, and $\bA$ is an alternating
bichromatic component of $\bH$, then $\bA$ is either an $\ast$-edge connecting
a blue and a red vertex,  or the structure depicted in
Figure~\ref{fig:bichromatic}.

\begin{figure}[ht!]
    \centering
    \begin{tikzpicture}

        \begin{scope}[scale=0.7, xshift = 8cm]
            \node (L1) at (0,-1.4) {(4Alt)};
            \node (0) [vertex, label = left:{\scriptsize $0$}] at (-1,2){};
            \node (1) [vertex, label = right:{\scriptsize $1$}] at (1,2){};
            \node (2) [vertex, label = left:{\scriptsize $2$}] at (-1,0){};
            \node (3) [vertex, label = right:{\scriptsize $3$}] at (1,0){};
            
            \draw [edge, blue] (0) to [out=55,in=125, looseness=12] (0);
             \draw [edge, blue] (1) to [out=55,in=125, looseness=12] (1);
            \draw [edge,  red, dashed] (2) to [out=55 - 180,in=125 + -180, looseness=12] (2);
            \draw [edge,  red, dashed] (3) to [out=55 - 180,in=125 + -180, looseness=12] (3);

            \foreach \from/\to in {3/2, 0/3, 2/1} 
                \draw [edge, red, dashed] (\from) to  (\to);
            \foreach \from/\to in {0/1, 0/2, 1/3} 
                \draw [edge, blue] (\from) to  (\to);
        \end{scope}

\end{tikzpicture}
\caption{A type of bichromatic alternating component.}
\label{fig:bichromatic}
\end{figure}
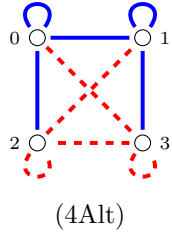

\begin{observation}\label{obs:4Alt}
    Let $\bH$ be a reflexive complete $2$-edge-coloured  graph that is
    a core, and $\bA$ an alternating bichromatic component of $\bH$
    with no $\ast$-edges.
    If 
    $|A_B| = 2$ or $|A_R| = 2$, then 
    $\bA$ is isomorphic to $\text{4Alt}$ (Figure~\ref{fig:bichromatic})
    or $\bH$ pp-constructs $\bK_3$.
\end{observation}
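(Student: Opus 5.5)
The plan is to reduce everything to the structure of $\Alt(\bH)$ on $A$, and then use the core property to collapse twins. Assume $\bH$ does not pp-construct $\bK_3$; we show $\bA\cong\text{4Alt}$. By symmetry of colours (passing to $\overline\bH$ if needed) I will treat $|A_B|=2$, say $A_B=\{b_1,b_2\}$.

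\textbf{Step 1: structure of $\bA$.} Since $\bA$ is bichromatic, Lemma~\ref{lem:mono-vs-bipartite} puts us in its third case. So $A_R$ is a reflexive red clique with no blue edges, and $A_B$ is a reflexive blue clique with no red edges. By the definition of $\Alt(\bH)$, no arc joins two vertices of the same colour inside $A$. Every pair $b\in A_B$, $r\in A_R$ is joined by exactly one edge type, because $\bA$ has no $\ast$-edges. By Remark~\ref{rmk:trivial}, exactly one of the arcs $(b,r)$ or $(r,b)$ is then present: $(b,r)$ if $br$ is red, and $(r,b)$ if $br$ is blue. Thus $\Alt(\bH)[A]$ is a strongly connected bipartite tournament between $A_B$ and $A_R$.

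\textbf{Step 2: counting types.} Strong connectivity forces every $r\in A_R$ to have both an in-arc and an out-arc. Since $|A_B|=2$, each such $r$ is blue to exactly one of $b_1,b_2$ and red to the other. Let $T_i$ be the set of red vertices in $A$ that are blue to $b_i$, for $i=1,2$. Both $T_1$ and $T_2$ are non-empty. Otherwise, say $T_2=\varnothing$: then $b_2$ has only out-arcs inside $A$, so $b_2$ is never reached from $A_R$, and since there are no arcs inside $A_B$ it is not reached at all, contradicting strong connectivity.

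\textbf{Step 3: twins and the core property (main step).} I claim that any two $r,r'$ in the same $T_i$ have identical neighbourhoods in $\bH$.
\begin{itemize}
\item Inside $A$ this is clear from Steps 1 and 2: both are red to each other and to the rest of $A_R$, and have the same edges to $b_1,b_2$.
\item For $h\notin A$, observe that $h$ cannot have both an arc into $A$ and an arc from $A$, or else $h$ would lie in the strong component $A$. So all arcs between $h$ and $A$ point the same way.
\item If $h$ is red, a blue edge $hr$ would create both arcs $(h,r)$ and $(r,h)$, which is impossible. Hence $h$ is joined to all of $A_R$ by red edges only.
\item If $h$ is blue, the uniform arc direction means $h$ is joined to all of $A_R$ by a single colour.
\end{itemize}
Consequently, the map sending $r'\mapsto r$ and fixing every other vertex is a non-surjective endomorphism of $\bH$, contradicting that $\bH$ is a core. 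Hence $|T_1|=|T_2|=1$.

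\textbf{Step 4: identification.} Write $T_1=\{r_1\}$ and $T_2=\{r_2\}$. Then $A=\{b_1,b_2,r_1,r_2\}$ with the following edges:
\begin{itemize}
\item $b_1b_2$ is blue and $r_1r_2$ is red;
\item $b_1r_1$ and $b_2r_2$ are blue;
\item $b_1r_2$ and $b_2r_1$ are red.
\end{itemize}
This is exactly 4Alt under the matching $0=b_1$, $1=b_2$, $3=r_1$, $2=r_2$. The case $|A_R|=2$ is the dual argument with colours swapped, giving the same (self-dual up to relabelling) picture.

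The step I expect to need the most care is the twin argument in Step 3. In particular, one must check that the uniform arc direction for outside vertices really does pin down the edge colour to the whole of $A_R$. This is where both Remark~\ref{rmk:trivial} and the absence of $\ast$-edges between $h$ and $A$ are used.
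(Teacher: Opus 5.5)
Your proof is correct and follows essentially the paper's argument. Lemma~\ref{lem:mono-vs-bipartite} plus strong connectivity splits the larger colour class into two types, according to which of the two vertices on the other side each vertex is blue to, and the core property then forces each type to be a singleton; the only difference is that you verify the twin property in all of $\bH$ directly, where the paper cites Observation~\ref{obs:endomorphism}. One harmless slip in Step~4: since $03$ is red in 4Alt, the correct matching is $0=b_1$, $1=b_2$, $2=r_1$, $3=r_2$.
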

\begin{proof}
    Suppose that $\bH$ does not pp-construct $\bK_3$. Since
    $\bA$ is a bichromatic component, there is no red edge
    connecting vertices in $A_B$, and no blue edge
    connecting vertices in $A_R$ (Lemma~\ref{lem:mono-vs-bipartite}).
    Hence, if $A_R = \{u,v\}$, then each vertex $a\in A_B$ is connected to
    one vertex in $A_R$ by a blue edge, and to one vertex in $A_R$ by a 
    red edge (otherwise, $a$ would be a sink or source in $\Alt(\bA)$
    contradicting the assumption that $\Alt(\bA)$ is strongly connected. 
    Since $\bA$ has no $\ast$-edges, we can partition $A_B$ into 
    $A_B^1:=\{a\in A_B\colon au\in B(\bH), av\in R(\bH)$, and
    $A_B^2:=\{a\in A_B\colon au\in R(\bH), av\in B(\bH)$. Hence, 
    contracting $A_B^1$ and $A_B^2$ to a single vertex each, yields an endomorphism
    of $\bA$. Since every endomorphism of $\bA$ extends to an endomorphism
    of $\bH$ (Observation~\ref{obs:endomorphism}), and $\bH$ is a core,
    it must be the case that $|A_B^1| = 1$, that $|A_R^1| = 1$, and so,
    $\bA$ is isomorphic to 4Alt from Figure~\ref{fig:bichromatic}.
\end{proof}

\begin{lemma}\label{lem:alt-subsets}
    Let $\bH$ be a reflexive complete $2$-edge-coloured  graph that is
    a core, and $\bA$ a bichromatic alternating component of $\bH$. 
    If $\bH$ does not pp-construct $\bK_3$, 
    then $(\bH,h_1,\dots, h_n)$ pp-defines a pair of sets $S_B$ and
    $S_R$ such that,
    \begin{itemize}
        \item $A_B\subseteq S_B$ and $A_R\subseteq S_B$, and
        \item no vertex $s\in S_B$ (resp.\ $s\in S_R$)
        is incident to an $\ast$-edge
        $sv$ where $v\in H_R$ (resp.\ $v\in H_B$).
    \end{itemize}
\end{lemma}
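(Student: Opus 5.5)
I read the statement with the obvious typo corrected: the intended conclusion is $A_B\subseteq S_B$ and $A_R\subseteq S_R$. The plan is to define both sets by pp-formulas over $(\bH,h_1,\dots,h_n)$ that use the fixed vertices of $\bA$ as constants, in the style of the formula $\delta$ in Lemma~\ref{lem:mono-components}. I treat $S_B$; the set $S_R$ follows by colour symmetry.

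Since $\bA$ is bichromatic, Lemma~\ref{lem:mono-vs-bipartite} says that $A_B$ is a reflexive blue clique with no red edges and $A_R$ is a reflexive red clique with no blue edges. Strong connectivity of $\Alt(\bA)$ gives, for every $a\in A_B$, some $r\in A_R$ with $(r,a)$ an arc, i.e.\ $ra\in B$, and some $r'\in A_R$ with $(a,r')$ an arc, i.e.\ $ar'\in R$. I therefore fix one red vertex $r_0\in A_R$ and propose
\[
  \sigma_B(x):= B(x,x)\land \bigwedge_{r\in A_R} E_r(x,r),
\]
where $E_r\in\{R,B\}$ is chosen so that $E_r(a,r)$ holds for all $a\in A_B$. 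If no single colour works for all of $A_B$, I instead use the weaker formula $B(x,x)\land R(x,r_1)\land B(x,r_2)$ for a suitable pair $r_1,r_2\in A_R$. In either version $A_B\subseteq S_B$ holds by construction.

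The real content is the second item. Suppose $s\in S_B$ and $sv$ is an $\ast$-edge with $v\in H_R$. Then $s$ has a red neighbour $r_1$ and a blue neighbour $r_2$ inside $A_R$, and $r_1r_2$ is a red edge, not blue. Using the loop colours of $s$, $v$, $r_1$, $r_2$, I would do a case analysis on the edges $vr_1$ and $vr_2$. The aim in each case is to exhibit a homomorphic copy of one of 3B, 3C, $\bH_3$, or their duals, or of one of the paths in Figure~\ref{fig:ast-path}. Examples of the patterns to find are an $\ast$-edge from a blue vertex to a red vertex that also has a blue edge to a red vertex, or a red vertex joined to blue vertices by edges of different types. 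Lemma~\ref{lem:3-element-sig}, Lemma~\ref{lem:H3} and Corollary~\ref{cor:ast-edges} then show that $\bH$ pp-constructs $\bK_3$, a contradiction. Corollary~\ref{cor:unique-edge-type} should dispose of the cases where $v$ lies in a nontrivial blue component: there $v$ must see all of that component, including $s$ if $s$ is adjacent to it, by a single edge type, which rules out the $\ast$-edge.

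The main obstacle is the case where $v$ and $r_1,r_2$ together form none of the small forbidden patterns, for instance when $vr_1$ and $vr_2$ are both $\ast$-edges or both monochromatic of the same colour. If the case analysis cannot exclude this, I would add constraints to $\sigma_B$: conjuncts $E(x,h)$ for further constants $h$ outside $A$, chosen via the core argument of Lemma~\ref{lem:mono-components}. That argument runs as follows. If some $s\in S_B$ had an $\ast$-edge to a red vertex, then mapping $A$ into $\{s,v\}$ (legitimate because $\{s,v\}$ is a reflexive $\ast$-edge) and fixing everything else would give a non-injective endomorphism unless some fixed vertex separates $A$ from $\{s,v\}$ by edge type. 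Adding an equality constraint $E(x,h)$ at each such separating vertex $h$ removes the offending $s$. Checking that this map is an endomorphism, using Observation~\ref{obs:endomorphism} and the homogeneity from Observation~\ref{obs:topological-ordering}, is where I expect most of the work to be.
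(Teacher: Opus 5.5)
You have a genuine gap in the first item: your set $S_B$ does not contain $A_B$. Since $\Alt(\bA)$ is strongly connected and $A_R$ spans no blue edge (Lemma~\ref{lem:mono-vs-bipartite}), every $r\in A_R$ has an out-arc and an in-arc inside $A$. That is, $r$ has a blue edge to some $a\in A_B$ and a red edge to some $a'\in A_B$, with $a\neq a'$ when $A$ has no $\ast$-edge. Hence no colour $E_r$ satisfies $E_r(a,r)$ for all $a\in A_B$. Your fallback $B(x,x)\land R(x,r_1)\land B(x,r_2)$ also never contains all of $A_B$, since that would leave $r_1$ without an out-arc in $\Alt(\bA)$. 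For example, in 4Alt labelled as in Figure~\ref{fig:bichromatic}, $R(x,3)\land B(x,2)$ holds at $0$ but not at $1$. The red vertices of $A$ are exactly the ones that distinguish vertices of $A_B$, so they are the wrong constants. The missing idea is to use the red vertices \emph{outside} the component. Each $r\in H_R\setminus A_R$ is joined to all of $A_B$ by one and the same single colour $E_r$: if $ra\in B$ and $rb\in R$ with $a,b\in A_B$, the arcs $(b,r)$ and $(r,a)$ in $\Alt(\bH)$ would put $r$ into $A$. The paper takes $\sigma(x):=B(x,x)\land\bigwedge_{r\in H_R\setminus A_R}E_r(r,x)$, which contains $A_B$ for free.

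For the second item, your local case analysis is not carried out. The core argument you fall back on is the paper's: if $h\in S_B$ has an $\ast$-edge to $h'\in H_R$, send $A_B$ to $h$ and $A_R$ to $h'$, extend by the identity, and contradict that $\bH$ is a core. The first step is a homomorphism $\bA\to\bH[\{h,h'\}]$ because $A_B$ and $A_R$ are monochromatic cliques. There are two cautions. First, this needs $A$ not to be a bichromatic $\ast$-edge. For $A=\{b,r\}$ the map is the identity, and the statement itself fails, since $b\in A_B$ lies on the $\ast$-edge $br$; the lemma is only applied with $|A_B|\ge 3$. Second, your worry about the extension is well founded. Conjuncts on $x$ do not constrain the partner $h'$, yet the extension must send each edge $a'w$ with $a'\in A_R$ and $w\in H_B\setminus(A_B\cup\{h\})$ to an edge $h'w$ of the same colour. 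For instance, let $\bA'$ be a bichromatic $\ast$-edge $hh'$. Then $\text{4Alt}\triangleleft_h\overline{\bK_2^\ast}\triangleleft_h\bA'$ is a core in which $h$ satisfies $\sigma$, but $2w\in B$ while $h'w\in R\setminus B$ for $w$ in the middle component. So this step needs an argument going beyond the definitions, presumably using that $\bH$ does not pp-construct $\bK_3$. The paper's one-line justification does not provide it either.
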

\begin{proof}
    We show that such a set $S_B$ exists, and the existence of $S_R$
    can be done with symmetric arguments. Let $r_1,\dots, r_k$
    be an enumeration of the vertices in $H_R\setminus A_R$. 
    We first notice that for each $i\in [k]$ there is a unique
    edge type $E_i$ (either blue or red, but not both) connecting $r_i$
    to any vertex in $A_B$. Assume for a contradiction that there are
    vertices $a,b\in A_B$ such that $r_ia\in B(\bH)$ and $r_ib\in R(\bH)$. 
    Hence, there are arcs $(b,r_i)$ and $(r_i,a)$ in $\Alt(\bH)$, and
    since $a$ and $b$ belongs to the same strongly connected component
    in $\Alt(\bH)$, $r_i$ also belongs to this component, contradicting
    the choice of $r_i\in H_R\setminus A_R$. We claim that 
    the subset $S_B$ pp-defined by the following formula satisfies
    the claims of this lemma
    \[
        \sigma(x):=B(x,x)\land \bigwedge_{i\in[k]} E_i(r_i,x).
    \]
    By the arguments above, it follows that every vertex $a\in A_B$
    satisfies $\sigma(x)$ in $\bH$. We now show that if $\bH\models\sigma(h)$
    for some $h\in H_B$, then there is no $h'\in H_R$ such that $hh'$ is
    an $\ast$-edge. Consider the mapping $f\colon A\to \{h,h'\}$
    defined by $f(a) = h$ if $a\in A_B$ and $f(a) = h'$ if $a\in A_R$.
    Since $\bH$ does not pp-construct $\bK_3$, it follows from
    Lemma~\ref{lem:mono-vs-bipartite} that $A_R$ and $A_B$ induce monochromatic
    reflexive cliques, and so $f\colon\bA\to \bH[\{h,h'\}]$ is a homomorphism. 
    Since $\bH$ is a core, Corollary~\ref{cor:ast-edges} guarantees
    that $\bH[\{h,h'\}]$ is an alternating component  of $\bH$. Now,
    using similar arguments as before, e.g., using the definitions of 
    alternating components and of $\sigma$,
    one can notice that $f$ extends to an endomorphism of $\bH$ by defining 
    $f'$ as the identity on $H\setminus A$ and $f' = f$ on $A$. This contradicts
    the assumption that $\bH$ is a core, and so the claim follows.
\end{proof}

Recall that a digraph $\bD$ is \emph{smooth} if every vertex
has positive in-degree and positive out-degree. Also, $\bD$ is 
a \emph{semicomplete digraph} if for every pair of vertices
$u,v\in D$ at least one arc $(u,v)$ or $(v,u)$ is present in $\bD$.

\begin{lemma}\label{lem:AB-3}
    Let $\bH$ be a reflexive complete $2$-edge-coloured  graph
    that is a core, and $\bA$ a bichromatic alternating component
    of a $\bH$. 
    If $\bH$ does not pp-construct $\bK_3$, and $|A_B|\ge 3$, then the digraph 
    \[
        \bD:= (A_B,\{(x,y)\colon \exists z\in H.\; R(z,z)\land B(x,z)\land R(y,z)\})
    \]
    is semicomplete smooth digraph that contains at least two directed
    cycles.
\end{lemma}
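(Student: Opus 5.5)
The plan is to work entirely with the structure forced by the earlier results. By Lemma~\ref{lem:mono-vs-bipartite}, $A_B$ induces a reflexive blue clique with no red edges, and $A_R$ induces a reflexive red clique with no blue edges. Since $|A|\ge 4$, Proposition~\ref{prop:comp-with-ast-edges} shows that $\bA$ has no $\ast$-edges. As in the proof of Lemma~\ref{lem:alt-subsets}, each vertex outside $A$ is joined to all of $A_B$ by a single edge type. The same holds for $A_R$, by that lemma's argument applied with the colours swapped. For every $z\in A_R$ I write $X_z:=\{x\in A_B\colon xz\in B\}$ and $Y_z:=\{y\in A_B\colon yz\in R\}$. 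These two sets partition $A_B$, because there are no $\ast$-edges. By definition $X_z\times Y_z\subseteq E(\bD)$, so every arc of $\bD$ is witnessed by some $z\in A_R$, since vertices outside $A$ do not separate $A_B$.

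\emph{Semicompleteness.} Take distinct $x,y\in A_B$. If no $z\in A_R$ separates them, then $x$ and $y$ have identical edge types to every vertex of $H\setminus\{x,y\}$. This uses three facts: $A_B$ is a blue clique, vertices outside $A$ are uniform on $A_B$, and the vertices of $A_R$ do not separate $x,y$. Then the map sending $x$ to $y$ and fixing everything else is a non-surjective endomorphism of $\bH$, contradicting that $\bH$ is a core. Hence some $z\in A_R$ has, say, $x\in X_z$ and $y\in Y_z$, which gives the arc $(x,y)$.

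\emph{Smoothness.} Fix $x\in A_B$. Because $\Alt(\bA)$ is strongly connected, $x$ has an in-neighbour in $\Alt(\bH)$ inside $A$. This in-neighbour must be some $z'\in A_R$ with $xz'\in B$, since arcs between blue vertices would need red edges inside $A_B$. Likewise $z'$ has an in-neighbour in $\Alt(\bH)$ within $A$. It cannot be red, since $A_R$ has no blue edges, so it is some $y\in A_B$ with $yz'\in R$. Then $y\neq x$ and $(x,y)\in E(\bD)$. The argument for in-degree is symmetric: $x$ has an out-neighbour $z\in A_R$ with $xz\in R$, and $z$ has an out-neighbour $w\in A_B$ with $wz\in B$, so $(w,x)\in E(\bD)$.

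\emph{Two directed cycles.} In a smooth semicomplete digraph, the initial and terminal strong components are non-trivial, since a singleton there would be a source or a sink. If these two components differ, each contains a cycle and we are done. Otherwise $\bD$ is strong. If $|A_B|\ge 4$, Moon's theorem (strong semicomplete digraphs are vertex-pancyclic) gives cycles of lengths $3$ and $4$. The main obstacle is the remaining case, where $\bD$ is a strong semicomplete digraph on exactly three vertices. If $\bD$ contains a digon, then together with a $3$-cycle, or with a second digon, we already have two cycles. So the only bad case is the directed triangle $x\to y\to z\to x$ with no digons. Here I use the witness sets. Pick $z'\in A_R$ witnessing $(x,y)$, so $x\in X_{z'}$ and $y\in Y_{z'}$. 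The third vertex $z$ lies in $X_{z'}$ or in $Y_{z'}$. In the first case we get the arc $(z,y)$, and in the second the arc $(x,z)$. Either way this is a digon with an existing triangle arc, a contradiction. Hence $\bD$ always has at least two directed cycles.
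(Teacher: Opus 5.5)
Your proposal is correct and follows the same route as the paper's proof. You use the core argument (collapsing $x$ onto $y$) for semicompleteness, strong connectivity of $\Alt(\bA)$ for smoothness, and, when $|A_B|=3$, the same witness argument showing that the third vertex forces a digon next to the directed triangle. Your version is more careful in places the paper glosses: you place all witnesses in $A_R$, derive smoothness explicitly from length-two alternating paths, and handle the non-strong and digon cases directly instead of asserting that a $3$-cycle must exist.
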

\begin{proof}
    It follows from Lemma~\ref{lem:mono-vs-bipartite} that $A_B$ induces
    a reflexive blue clique. From the definition of alternating component,
    every vertex $a\in A_B$ is connected to each vertex $b\in H_B\setminus A_B$
    only by a blue edge.  Since $\bH$ is core, mapping a vertex
    $a\in A_B$ to a vertex $b\in A_B\setminus\{a\}$ cannot be extended to an
    endomorphism of $\bH$. This implies that for every pair of different
    vertices $a,b\in A_B$ there is some vertex $c\in H_R$ such that
    $ac\in R(\bH)$ and $bc\in B(\bH)$. Hence, $\bD$ is a semicomplete
    digraph. The fact that $\bD$ is smooth follows from the assumption
    that $\bA$ is an alternating component (if $\bD$ has a source or
    a sink $v$, then $v$ is also a source or a sink in $\Alt(\bA)$, which
    implies that $\Alt(\bA)$ is not strongly connected). It is straightforward
    to observe that all smooth semicomplete digraphs on at least four
    vertices have at least two directed cycles. Consider the case where
    $|A_B| = 3$ and let $v_1,v_2,v_3$ be an enumeration of its vertex set.
    Since $\bD$ is smooth, it must contain a directed $3$-cycle, so 
    assume without loss of generality that $(v_1,v_2),(v_2,v_3),(v_3,v_1)\in E(\bD)$. 
    By the definition of $\bD$ we can find three vertices $u_1,u_2,u_3\in H_R$
    such that $v_1u_1,v_2u_2,v_3u_3\in B(\bH)$ and $u_1v_2,u_2v_3,v_3u_1\in E(\bH)$.
    Now, if $v_1u_2\in B(\bH)$, then $(v_1,v_3)\in E(\bD)$, and if 
    $v_1u_2\in R(\bH)$, then $(v_2,v_1)\in \bD$. In both cases we find
    a directed $2$-cycle, and a directed $3$-cycle in $\bD$, and so
    $\bD$ satisfies the claim of this lemma.
\end{proof}

\begin{proposition}\label{prop:alt-components}
    Let $\bH$ be a reflexive complete $2$-edge-coloured  graph that is
    a core, and let $\bA$ be an alternating bichromatic component of $\bH$.
    If $\bH$ does not pp-construct $\bK_3$, then $\bA$ is an $\ast$-edge
    connecting a vertex in $H_B$ with a vertex in $H_R$, or $\bA$
    is isomorphic to 4Alt (Figure~\ref{fig:bichromatic}). 
\end{proposition}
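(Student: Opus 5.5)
The proof splits according to whether $\bA$ contains an $\ast$-edge.

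\emph{Case 1: $\bA$ contains an $\ast$-edge.} By Proposition~\ref{prop:comp-with-ast-edges}, $|A|=2$. Since $\bA$ is bichromatic, it consists of one blue and one red vertex. Strong connectivity of $\Alt(\bA)$, together with Remark~\ref{rmk:trivial}, forces both arcs to be present, so the edge between them is an $\ast$-edge. This is the first alternative of the statement.

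\emph{Case 2: $\bA$ has no $\ast$-edges.} By Lemma~\ref{lem:mono-vs-bipartite} (third item), $A_B$ induces a reflexive blue clique and $A_R$ induces a reflexive red clique. All edges between $A_B$ and $A_R$ are monochromatic, so $\Alt(\bA)$ is a bipartite tournament between $A_B$ and $A_R$. Strong connectivity forces every vertex to have both an in-arc and an out-arc, hence $|A_B|\ge 2$ and $|A_R|\ge 2$. If $|A_B|=2$ or $|A_R|=2$, Observation~\ref{obs:4Alt} shows $\bA\cong$ 4Alt. It remains to rule out $|A_B|\ge 3$ and $|A_R|\ge 3$; by colour symmetry (passing to $\overline\bH$) it suffices to rule out $|A_B|\ge 3$.

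\emph{Ruling out $|A_B|\ge 3$.} Suppose $|A_B|\ge3$. Lemma~\ref{lem:AB-3} gives that the digraph $\bD$ on $A_B$ is a smooth semicomplete digraph with at least two directed cycles. The plan is to show that $\bH$ pp-constructs $\bD$ and then apply Theorem~\ref{thm:HH-digraphs}.

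1. Since $\bH$ is a core, Theorem~\ref{thm:core_constants} lets us work in $(\bH,h_1,\dots,h_n)$.
2. Lemma~\ref{lem:alt-subsets} supplies pp-definable sets $S_B\supseteq A_B$ and $S_R\supseteq A_R$ such that no vertex of $S_B$ has an $\ast$-edge to a red vertex.
3. Consider the pp-formula
\[
\psi(x,y):=\exists z.\; R(z,z)\land B(x,z)\land R(y,z)
\]
restricted to $x,y\in S_B$. This defines a digraph $\bD'$ on $S_B$ containing $\bD$ as an induced subdigraph. The absence of $\ast$-edges from $S_B$ to red vertices guarantees that $\bD'$ is loopless: a loop would need some $z$ joined to $x$ by both colours.
4. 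A smooth semicomplete digraph with two directed cycles contains either a directed $2$-cycle together with another cycle, or two cycles of coprime lengths. In either case it does not map to any directed cycle $\vec\bC_n$, since semicompleteness makes every pair of vertices adjacent, forcing distinct images.

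Then $\bD$, and hence the loopless digraph $\bD'\supseteq\bD$, hereditarily pp-constructs $\bK_3$ by Theorem~\ref{thm:HH-digraphs}. This contradicts the assumption that $\bH$ does not pp-construct $\bK_3$.

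\emph{Main obstacle.} The delicate step is the last one. We must ensure that $\bD\to\bD'$ lands in a loopless digraph, which is handled through $S_B$, and that $\bD$ genuinely does not map to a directed cycle. A semicomplete digraph with a $2$-cycle on vertices $u,v$ could a priori map to $\vec\bC_2$. So we would need Lemma~\ref{lem:AB-3}'s guarantee of a $2$-cycle \emph{and} a $3$-cycle, as its proof exhibits, or an analogous pair of cycles for $|A_B|\ge4$. The lengths $2$ and $3$ are coprime, so no homomorphism to any $\vec\bC_n$ exists. For the case $|A_B|\ge4$, I would first try to extract a $3$-vertex smooth subtournament plus one additional arc, mirroring the $|A_B|=3$ argument.
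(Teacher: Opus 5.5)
Your argument follows the paper's proof. You pass to $(\bH,h_1,\dots,h_n)$, take $S_B$ from Lemma~\ref{lem:alt-subsets}, and pp-define the loopless digraph on $S_B$ via $\exists z.\,R(z,z)\land B(x,z)\land R(y,z)$. You then conclude with Lemma~\ref{lem:AB-3} and Theorem~\ref{thm:HH-digraphs}. Your preliminary case split is correct and more explicit than the paper's one-line reduction: you use Proposition~\ref{prop:comp-with-ast-edges} for the $\ast$-edge case, and the in/out-arc argument to get $|A_B|,|A_R|\ge 2$.

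The flawed step is step~4. Its claimed dichotomy is false. Take two directed triangles and add all nine arcs from the first to the second. This is a smooth semicomplete digraph (in fact a tournament) whose only directed cycles are two $3$-cycles, so it has neither a $2$-cycle nor cycles of coprime lengths. As written, your argument therefore does not establish that $\bD$ avoids all directed cycles when $|A_B|\ge 4$, which is exactly the case you leave open in your last paragraph.

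Your remark that semicompleteness forces distinct images is the right idea; carried through, it makes cycle lengths irrelevant.
\begin{itemize}
\item A homomorphism from a semicomplete digraph into a loopless digraph is injective, since distinct vertices are joined by an arc.
\item In $\vec{\bC}_n$ with $n\ge 2$, two distinct vertices are adjacent only if they are consecutive. So three pairwise adjacent vertices exist only for $n=3$, and $\vec{\bC}_3$ has no $2$-cycle.
\item Hence a semicomplete digraph on at least four vertices, or on three vertices with a $2$-cycle, maps to no $\vec{\bC}_n$ with $n\ge 2$.
\end{itemize}
For $|A_B|=3$ you do not need to look inside the proof of Lemma~\ref{lem:AB-3}. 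A smooth semicomplete digraph on three vertices without a $2$-cycle is a smooth tournament, i.e.\ $\vec{\bC}_3$, which has only one directed cycle. So the lemma's conclusion of at least two directed cycles already forces a $2$-cycle.

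With this, step~4 holds for every $|A_B|\ge 3$, and no extraction of subtournaments is needed. The paper's proof is equally terse at this point, and this is the argument it implicitly relies on.
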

\begin{proof}
    By Observation~\ref{obs:4Alt} it suffices to show that if 
    $\bA$ is not an $\ast$-edge, then $|A_B| =2$ or $|A_R| = 2$.
    We show that if $|A_B|\ge 3$, then $\bH$ pp-constructs
    $\bK_3$, and since $\bH$ is a core, it is equivalent to show that
    $(\bH,h_1,\dots, h_n)$ pp-constructs $\bK_3$. Let $S_B$
    be a set pp-defined in $(\bH,h_1,\dots, h_n)$ that contains $A_B$
    and no vertex $v\in S_B$ belongs to an $\ast$-edge $vw$ where $w\in H_R$
    (Lemma~\ref{lem:alt-subsets}).
    Consider now the binary relation $E$ pp-defined by
    \[
        E(x,y):= \exists z.\; R(z,z)\land B(x,z)\land R(y,z).
    \]
    We claim that $(S_B, E)$ is a loopless digraph containing 
    a smooth digraph that does not map homomorphically to a directed
    cycle (and so, by Theorem~\ref{thm:HH-digraphs} we conclude that
    $(\bH,h_1,\dots, h_n)$ pp-constructs $\bK_3$). Firstly, the
    fact that $(S_B, E)$ is a loopless digraph follows from the 
    assumption that no vertex $v\in S_B$ belongs to an $\ast$-edge $vw$
    where $w\in H_R$. To find the smooth subdigraph that does not
    map homomorphically to a directed cycle we first use the fact that
    $A_B\subseteq S_B$, and then conclude via Lemma~\ref{lem:AB-3}.
\end{proof}

\subsection{Homogeneous concatenations}

Theorem~\ref{thm:homogeneous-Ptime} guarantees that if $\bH$
admits a specific kind of decomposition in terms of certain homogeneous
concatenations, then the list version of $\CSP(\bH)$ can be solved
in polynomial time.  Building on previous results from this section,
we obtain the following decomposition result for $2$-edge-coloured graphs
that do not pp-construct $\bK_3$.

\begin{proposition}\label{prop:intermediate-class}
    Let $\bH$ be a reflexive complete $2$-edge-coloured  graph. 
    If $\bH$ does not pp-construct $\bK_3$, then $\bH$ admits
    a decomposition
    \[
        \bH:= \bA_1 \triangleleft_h\dots \triangleleft_h \bA_k
    \]
    where each $\bA_i$ is an alternating component of $\bH$, and
    $\bA_i$ or its dual $\overline{\bA_i}$ is a structure
    from Figure~\ref{fig:hom-blocks}, for every $i\in[k]$.
\end{proposition}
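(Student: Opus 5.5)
The plan is to combine the topological-ordering decomposition with the component-by-component classification of Section~\ref{sec:alt-components}. Throughout I will assume, as the earlier results of this section do, that $\bH$ is a core. This is harmless for the intended application, since $\bH$ and its core have the same CSP and pp-construct the same structures. If the statement is really meant for arbitrary $\bH$, I would first pass to the core; this reduction is the point of the proof I am least sure about.

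\emph{Step 1: the decomposition.} Fix a topological ordering $\bA_1\le\dots\le\bA_k$ of the alternating components of $\bH$, i.e.\ of the strongly connected components of $\Alt(\bH)$. Observation~\ref{obs:topological-ordering} gives directly
\[
\bH=\bA_1\triangleleft_h\dots\triangleleft_h\bA_k ,
\]
with each $\bA_i$ an alternating component. It then remains to identify every $\bA_i$, up to duality, with one of the blocks in Figure~\ref{fig:hom-blocks}. I expect these blocks to be $\bK_1^\ast$, $\bK_2^\ast$, a reflexive monochromatic $\ast$-edge, a reflexive $\ast$-edge joining a blue and a red vertex, and 4Alt.

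\emph{Step 2: case analysis for a single component $\bA$.}
\begin{enumerate}
\item If $\bH$ has an $\ast$-loop, its core is a single vertex and the claim is trivial. So assume there are no $\ast$-loops; this hypothesis is needed by Corollary~\ref{cor:ast-edges} and related results.
\item If $\bA$ contains an $\ast$-edge, Proposition~\ref{prop:comp-with-ast-edges} gives $|A|=2$. Then $\bA$ is either a monochromatic reflexive $\ast$-edge (blue or red, the two being dual to each other) or a bichromatic $\ast$-edge.
\item If $\bA$ has no $\ast$-edge and is monochromatic, Corollary~\ref{cor:mono-size-2} gives $|A|\le2$. The proof of that corollary shows $\bA\cong\bK_1^\ast$ or $\bA\cong\bK_2^\ast$, up to duality.
\item If $\bA$ has no $\ast$-edge and is bichromatic, Proposition~\ref{prop:alt-components} shows that $\bA$ is isomorphic to 4Alt, up to duality.
\end{enumerate}
Lemma~\ref{lem:mono-vs-bipartite} guarantees that every alternating component is either monochromatic or has the bichromatic clique form, so these cases are exhaustive.

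\emph{Main obstacle.} The main difficulty is the dual case of Proposition~\ref{prop:alt-components}. That proof treats $|A_B|\ge3$ through Lemma~\ref{lem:AB-3}, and I would argue $|A_R|\ge3$ symmetrically by swapping the roles of $R$ and $B$. The other point needing care is the bookkeeping that matches the figure's list exactly, including the dual versions.
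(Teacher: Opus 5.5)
Your proposal is essentially the paper's proof. The decomposition comes from Observation~\ref{obs:topological-ordering}, and each alternating component is identified via Corollary~\ref{cor:mono-size-2}, Proposition~\ref{prop:comp-with-ast-edges} and Proposition~\ref{prop:alt-components}. As you noticed, all of these tacitly require $\bH$ to be a core, and the paper's one-line proof makes the same silent assumption.

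Your worry about the core is justified in a stronger sense than you suggest: passing to the core cannot rescue the statement for arbitrary $\bH$. For example, $(\bP_3)^\ast$ does not pp-construct $\bK_3$, since it is homomorphically equivalent to $\bK_2^\ast$, yet it is itself a single three-vertex alternating component. So the core hypothesis has to be added to the statement. Pedantically, a one-vertex $\ast$-loop core is not literally among the blocks of Figure~\ref{fig:hom-blocks} either, so your ``trivial'' case also needs the figure read generously.
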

\begin{proof}
    Combine Observation~\ref{obs:topological-ordering},  together
    with Corollary~\ref{cor:mono-size-2} and Proposition~\ref{prop:alt-components}.
\end{proof}

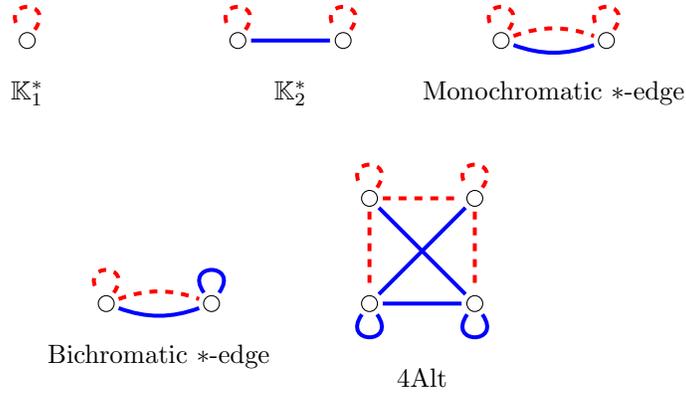
\begin{figure}[ht!]
    \centering
    \begin{tikzpicture}
    
        \begin{scope}[scale=0.7]
            \node (L1) at (0,-1) {$\bK_1^\ast$};
            \node (0) [vertex] at (0,0){};

            \draw [edge,  red, dashed] (0) to [out=55,in=125, looseness=12] (0);
        \end{scope}

        \begin{scope}[scale=0.7, xshift = 5cm]
            \node (L1) at (0,-1) {$\bK_2^\ast$};
            \node (0) [vertex] at (-1,0){};
            \node (1) [vertex] at (1,0){};
            
            \draw [edge, red, dashed] (0) to [out=55,in=125, looseness=12] (0);
             \draw [edge, red, dashed] (1) to [out=55,in=125, looseness=12] (1);
           
            \draw [edge, blue] (0) to  (1);
        \end{scope}

        \begin{scope}[scale=0.7, xshift = 10cm]
            \node (L1) at (0,-1) {Monochromatic $\ast$-edge};
            \node (0) [vertex] at (-1,0){};
            \node (1) [vertex] at (1,0){};
            
            \draw [edge, red, dashed] (0) to [out=55,in=125, looseness=12] (0);
             \draw [edge, red, dashed] (1) to [out=55,in=125, looseness=12] (1);
           
            \draw [edge, blue] (0) to [bend right = 20] (1);
            \draw [edge, red, dashed] (0) to [bend left = 20] (1);
        \end{scope}

        \begin{scope}[scale=0.7, yshift = -5cm, xshift = 2.5cm]
            \node (L1) at (0,-1) {Bichromatic $\ast$-edge};
            \node (0) [vertex] at (-1,0){};
            \node (1) [vertex] at (1,0){};
            
            \draw [edge, red, dashed] (0) to [out=55,in=125, looseness=12] (0);
             \draw [edge, blue] (1) to [out=55,in=125, looseness=12] (1);
           
            \draw [edge, blue] (0) to [bend right = 20] (1);
            \draw [edge, red, dashed] (0) to [bend left = 20] (1);
        \end{scope}
        
        \begin{scope}[scale=0.7, yshift = -5cm, xshift = 7.5cm]
            \node (L1) at (0,-1.4) {4Alt};
            \node (0) [vertex] at (-1,2){};
            \node (1) [vertex] at (1,2){};
            \node (2) [vertex] at (-1,0){};
            \node (3) [vertex] at (1,0){};
            
            \draw [edge, red, dashed] (0) to [out=55,in=125, looseness=12] (0);
             \draw [edge, red, dashed] (1) to [out=55,in=125, looseness=12] (1);
            \draw [edge,  blue] (2) to [out=55 - 180,in=125 + -180, looseness=12] (2);
            \draw [edge,  blue] (3) to [out=55 - 180,in=125 + -180, looseness=12] (3);
           
            \foreach \from/\to in {3/2, 0/3, 2/1} 
                \draw [edge, blue] (\from) to  (\to);
            \foreach \from/\to in {0/1, 0/2, 1/3} 
                \draw [edge, red, dashed] (\from) to  (\to);
        \end{scope}

\end{tikzpicture}
\caption{Admissible building components from Proposition~\ref{prop:intermediate-class}}
\label{fig:hom-blocks}
\end{figure}

\section{A structural classification}
\label{sec:structura-clas}

The aim of this section is to close the gap between
Theorem~\ref{thm:homogeneous-Ptime} and
Proposition~\ref{prop:intermediate-class}.
To do so, we will use the following well-known fact
(see, e.g.,~\cite[Proposition~1.2.11]{Book}).

\begin{proposition}\label{prop:core-orbits}
    Let $\bH$ be a $2$-edge-coloured graph that is a core. 
    If $O\subseteq H$ is the orbit of some $h\in H$, 
    then $O$ is pp-definable in $\bH$.
\end{proposition}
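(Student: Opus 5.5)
The plan is to use the standard fact that in a finite core every endomorphism is an automorphism, and to encode "being an endomorphism" by a primitive positive formula. Enumerate $H=\{h_1,\dots,h_n\}$ and take one variable $x_{h}$ for each $h\in H$. Let $\phi(x_{h_1},\dots,x_{h_n})$ be the conjunction of all atoms $B(x_u,x_v)$ for $uv\in B(\bH)$ and all atoms $R(x_u,x_v)$ for $uv\in R(\bH)$; this is the canonical conjunctive query of $\bH$. By construction, a tuple $(g_1,\dots,g_n)\in H^n$ satisfies $\phi$ in $\bH$ if and only if the map $h_i\mapsto g_i$ is an endomorphism of $\bH$. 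Since $\bH$ is a core, these tuples are exactly those given by automorphisms.

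For the fixed $h$, I would define
\[
    \psi(y):=\exists x_{h_1}\dots\exists x_{h_n}.\;\phi(x_{h_1},\dots,x_{h_n})\land y=x_h .
\]
The equality atom can be eliminated by substituting $y$ for $x_h$ and dropping the quantifier over $x_h$, so $\psi$ is a genuine pp-formula in the signature $\{R,B\}$.

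It remains to check that $\psi$ defines $O$. If $g\in O$, pick an automorphism $\alpha$ with $\alpha(h)=g$; the tuple $(\alpha(h_1),\dots,\alpha(h_n))$ witnesses $\psi(g)$. Conversely, if $\bH\models\psi(g)$, the witnesses give an endomorphism $e$ with $e(h)=g$. Since $\bH$ is a finite core, $e$ is an automorphism, so $g\in O$.

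No step is really an obstacle. The only point needing care is the use of the core hypothesis, namely that every endomorphism (not merely every injective one) is an automorphism, and this is exactly the definition of a core recalled in the paper. Finiteness is what guarantees that the formula has finitely many variables.
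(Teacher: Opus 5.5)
Your argument is correct. The paper gives no proof of its own and cites this as a well-known fact (Bodirsky's book, Proposition~1.2.11). Your construction is the standard proof of that fact: take the canonical conjunctive query of $\bH$, existentially quantify every variable except the one for $h$, and use the core property to turn the witnessing endomorphisms into automorphisms.
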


\subsection{The 4Alt case}

\begin{lemma}\label{lem:4Alt-ontop}
    Let $\bH$ be a reflexive complete $2$-edge-coloured  graph
    that is a core. Suppose that $\bH$ does not pp-construct $\bK_3$, 
    and let  $\bH:=\bA_1\triangleleft_h\dots \triangleleft_h \bA_k$
    as in Proposition~\ref{prop:intermediate-class}.
    If $\bA_j$ is isomorphic to \text{4Alt} for some $j\in[k]$,
    then $\bH$ pp-defines the set $A_1\cup \dots \cup A_j$.
\end{lemma}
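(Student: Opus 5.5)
The plan is to write an explicit pp-formula that uses existentially quantified variables $y_0,y_1,y_2,y_3$ as a ``copy'' of 4Alt, together with two atoms linking $x$ to that copy. I label 4Alt as in Figure~\ref{fig:bichromatic}. The vertices $0,1$ carry blue loops and $2,3$ carry red loops. The edges $01,02,13$ are blue, and $23,03,12$ are red. Let $\Delta(y_0,\dots,y_3)$ be the conjunction of all atoms of 4Alt, loops included. I propose
\[
  \psi(x):=\exists y_0,y_1,y_2,y_3.\; \Delta(y_0,y_1,y_2,y_3)\land \theta(y_0)\land R(x,y_3)\land B(x,y_1),
\]
where $\theta$ is a pp-definition of the orbit of the vertex $a_0\in A_j$ playing the role of $0$. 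This $\theta$ is available by Proposition~\ref{prop:core-orbits}, since $\bH$ is a core.

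First I check that every $x\in A_1\cup\dots\cup A_j$ satisfies $\psi$. For $x\in A_i$ with $i<j$, take $y$ to be the identity copy of $A_j$. Since $A_j$ is homogeneous in $\bA_1\triangleleft_h\dots\triangleleft_h\bA_j$, the vertex $x$ is red to the red-loop vertex $a_3$ and blue to the blue-loop vertex $a_1$. For $x\in A_j$, I use the automorphism $0\leftrightarrow 1$, $2\leftrightarrow 3$ of 4Alt to choose a suitable copy. The identity works for $x\in\{a_0,a_3\}$, and the swapped copy $(a_1,a_0,a_3,a_2)$ works for $x\in\{a_1,a_2\}$. Both copies put $y_0$ in the orbit of $a_0$, as required by $\theta$.

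Next I restrict where a copy can live. Homomorphisms preserve loop colours, so they preserve the formula $\alpha$ defining arcs of $\Alt$. Since $\Alt(\text{4Alt})$ is strongly connected, any witness $(y_0,\dots,y_3)$ lies inside a single alternating component $\bC$. By $\theta$, $\bC$ contains $\sigma(a_0)$ for some automorphism $\sigma$, so $\bC=\sigma(A_j)$ by Observation~\ref{obs:automorphism}. Now let $x$ lie strictly above $\bC$ in the topological order. By homogeneity, $x$ is joined to every vertex of $\bC$ only by edges of the colour of its own loop. Hence $x$ cannot be both red to $y_3$ and blue to $y_1$, so $\psi(x)$ fails. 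It remains to rule out vertices that are incomparable with $\bC$ in $\Alt(\bH)$ but still lie outside $A_1\cup\dots\cup A_j$. For such $x$, the edges to $\bC$ are again determined by a single colour (Corollary~\ref{cor:unique-edge-type} style), so they fail the same test, provided the relevant arcs point the right way.

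The main obstacle is the orbit issue: an automorphism might move $A_j$ to a different 4Alt component $\sigma(A_j)$. Then $\psi$ would define the down-set of every such image, not just $A_1\cup\dots\cup A_j$. My first attempt will show that $\sigma(A_j)=A_j$ for every automorphism $\sigma$. Suppose instead $\sigma(A_j)\neq A_j$. Since automorphisms preserve reachability in $\Alt(\bH)$, the two components would be incomparable. I then try to build a non-surjective endomorphism of $\bH$: fold $A_j$ onto $\sigma(A_j)$ via $\sigma$ and fix everything else. This should be a homomorphism because both components are homogeneous with respect to the same surrounding vertices, and that would contradict that $\bH$ is a core. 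If the folding argument does not go through, the fallback is to replace $\theta$ by the conjunction of pp-defined orbits of all four vertices of $A_j$, plus the one-colour conditions to the remaining orbits.
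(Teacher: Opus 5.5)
Your overall plan is sound and close to the paper's. You pp-define an orbit of a vertex of $A_j$ via Proposition~\ref{prop:core-orbits}. You then use homogeneity of the decomposition: a vertex in a later component is joined to $A_j$ only in the colour of its own loop, so it cannot be both red to a red-loop vertex and blue to a blue-loop vertex of the copy. Your check that $A_1\cup\dots\cup A_j$ satisfies $\psi$ is correct. So is your argument that any witness $(y_0,\dots,y_3)$ lies in a single alternating component, namely $\sigma(A_j)$ for some automorphism $\sigma$.

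\textbf{The gap.} The step you yourself flag as the main obstacle is left open. You only assert that folding $A_j$ onto $\sigma(A_j)$ ``should be'' a homomorphism; nothing shows that the two components see the rest of $\bH$ in the same way. The fallback is also not worked out. Without $\sigma(A_j)=A_j$, your formula could define the down-set of a different copy, so the proof is incomplete exactly where the lemma has content.

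\textbf{The missing observation.} It makes the folding unnecessary. By Remark~\ref{rmk:trivial}, every blue-loop vertex and every red-loop vertex are joined by an arc of $\Alt(\bH)$. Since $A_j$ contains vertices of both loop colours, it is joined by an arc to every other alternating component, so it is comparable with all of them. Hence $\sigma(A_j)$ cannot be incomparable with $A_j$. Automorphisms preserve the reachability order, so $\sigma(A_j)<A_j$ would force $\sigma^n(A_j)<A_j$ for all $n\ge 1$. This is absurd because $\sigma$ has finite order, and the case $>$ is symmetric. Therefore $\sigma(A_j)=A_j$, the orbit of $a_0$ is $\{a_0,a_1\}$, and $\bC=A_j$.

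The same fact disposes of your ``incomparable $x$'' case, whose treatment is actually wrong as written. A vertex with no arc to or from $\bC$ would be red to $y_3$ and blue to $y_1$, and hence would \emph{satisfy} $\psi$. It is harmless only because no such vertex exists.

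\textbf{Comparison with the paper.} With this fix your proof is complete and more elementary than the paper's. The paper rules out a second copy of 4Alt inside the orbit $O_R$ by pp-defining a digraph on $O_R$ and invoking Theorem~\ref{thm:HH-digraphs}, which uses that $\bH$ does not pp-construct $\bK_3$. Your route needs only that $\bH$ is a core. Once the orbits are known, the paper's two-orbit formula $\delta$ already suffices, so your gadget $\Delta$ is optional.
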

\begin{proof}
    First notice that both blue vertices $A_j\cap H_B$
    from $\bA_j$ belong to the same orbit $O_B$ in $\bH$,
    and both red vertices $A_j\cap H_R$ belong to the same
    orbit $O_R$. We first show that if $O_R = A_j\cap H_R$
    and  $O_B = A_j\cap H_B$, then $\bH$ pp-defines
    the set $A_1\cup \dots \cup A_j$. Using the fact that $O_B$
    and $O_R$ are pp-definable (Proposition~\ref{prop:core-orbits}),
    consider the following pp-definition,
    \[
        \delta(x):= \exists y,z.\; (y\in O_B \land B(x,y))\land (z\in O_R\land R(x,z)).
    \]
    Clearly, every vertex $a\in A_j$ satisfies $\delta(x)$ in $\bH$. 
    Since $A_j$ is a homogeneous set in $\bA_1\triangleleft_h\dots \triangleleft_h \bA_k$,
    every vertex $a\in A_1\cup \dots A_{j-1}$ satisfies $\delta(x)$ in $\bH$ as well.
    Now, suppose that $a\in A_\ell$ for some $\ell> j$. It follows
    again by homogeneity of $A_\ell$ in $\bA_1\triangleleft_h\dots \triangleleft_h \bA_\ell$
    that, if $aa\in R(\bH)$, then $ya\in R(\bH)$ for every $y\in O_B$, and
    if $aa\in B(\bH)$, then $za\in B(\bH)$ for every $z\in O_R$. Hence, 
    if $O_R = A_j\cap H_R$ and  $O_B = A_j\cap H_B$, then $\bH$ pp-defines
    the set $A_1\cup \dots \cup A_j$.

    To conclude the proof we show that $O_R = H_R\cap A_j$ (a symmetric
    proof shows that $O_B = H_B\cap A_j$). Proceeding by contradiction, 
    suppose that there is some $r\in O_R\setminus (H_R\cap A_j)$. 
    Using Observation~\ref{obs:automorphism}, we see that there
    is some $i\neq j$ such that $\bA_j\cong \bA_i\cong \text{4Alt}$
    and $A_i\cup A_j\subseteq O_R$. 
    Consider the digraph $\bD:=\bigl(O_R, \{(x,y)\colon xz\in B(\bH), yz\in R(\bH)$
    for some $z\in H\}\bigr)$. Since $O_R$ is pp-definable in $\bH$, 
    $\bD$ is pp-constructible in $\bH$. We arrive to a contradiction by
    showing that $\bD$ pp-constructs $\bK_3$. To do so, first notice
    that $\bD$ is a loopless digraph: this follows from the definition
    of alternating component, so no $a\in A_j$ is incident to 
    an $\ast$-edge, and thus, no $h\in O_R$ is incident to an $\ast$-edge.
    We now observe that $(A_i\cup A_j)\cap O_R$ induces a complete (symmetric) graph
    in $\bD$. Let $A_i\cap O_R = \{a_i,b_i\}$ and  $A_j\cap O_R = \{a_j,b_j\}$.
    Firstly,  one can easily notice that $(a_i,b_i)$ and $(b_i,a_i)$ are
    arcs of $\bD$: simply use the vertices in
    $O_B\cap A_i$ as witnesses for $z$. Analogously, we see that  
    $(a_j,b_j)$ and $(b_j,a_j)$ are arcs of $\bD$. Secondly, 
    since $a_ib_i\in B(\bH)$, and $a_jb_i\in R(\bH)$ (because $a_j$
    and $b_i$ belong to different alternating components), we
    see that $(a_i,b_j)$ is an arc of $\bD$. With similar arguments,
    we see that $(a,b)$ and $(b,a)$ are arcs of $\bD$ for every
    $a\in A_i\cap O_R$ and $b\in A_j\cap O_R$. Hence, $\bD$
    is a loopless digraph that contains $\bK_4$, and so,
    $\bD$ pp-constructs $\bK_3$ (see, e.g., Theorem~\ref{thm:HH-digraphs}),
    and $\bH$ pp-constructs $\bK_3$ as well.
\end{proof}

In the rest of this section, we will work with the
cyclic power $\Cyc_p(\bH)$ of a $2$-edge-coloured graph $\bH$ (introduced
in the paragraph before Lemma~\ref{lem:cyclic-power}). The vertices
of $\Cyc_p(\bH)$ are equivalence classes of the equivalence
relation $\sim_p$ in $H^p$. To simplify notation, we will refer
to an equivalence class $X$ of $\sim_p$ by a word $x_1\dots x_p$
where $(x_1,\dots, x_p)\in H^p$ is any representative of $X$. 
Also, for a positive integer $n$ we write $x^n$ to denote
the word consisting of $n$ repetitions of $x$.

\begin{proposition}\label{prop:2-element-constructions}
    Let $\bH$ be a reflexive complete $2$-edge-coloured  graph. 
    If $\bH$ does not pp-construct $\bK_3$, then $\bH$ admits
    a decomposition
    \[
        \bH:= \bA_1 \triangleleft_h\dots \triangleleft_h \bA_k
    \]
    where each $\bA_i$ an alternating component of $\bH$,
    and $|A_i|\leq 2$ 
    for every $i\in[k]$.
\end{proposition}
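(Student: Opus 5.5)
We may assume that $\bH$ is a core. Passing to the core changes neither $\CSP(\bH)$ nor whether $\bH$ pp-constructs $\bK_3$. By Observation~\ref{obs:topological-ordering} the decomposition into alternating components is canonical, and the core of a homogeneous concatenation is again one. By Proposition~\ref{prop:intermediate-class}, $\bH=\bA_1\triangleleft_h\dots\triangleleft_h\bA_k$ with each $\bA_i$ (or its dual) one of the blocks of Figure~\ref{fig:hom-blocks}. All blocks except 4Alt have at most two vertices. So it suffices to show the following: if some $\bA_j$ (up to duality) is isomorphic to 4Alt, then $\bH$ pp-constructs $\bK_3$. We argue by contradiction. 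Assume $\bH$ does not pp-construct $\bK_3$. Then Theorem~\ref{thm:CSP-dichotomy} and Corollary~\ref{cor:core_idempotent} give, for every prime $p>|H|$, an idempotent cyclic polymorphism, i.e.\ a homomorphism $f\colon\Cyc_p(\bH)\to\bH$.

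\textbf{Restricting to the top.} Let $U:=A_1\cup\dots\cup A_j$. By Lemma~\ref{lem:4Alt-ontop}, $U$ is pp-definable, so $\bH[U]$ is pp-constructible from $\bH$. Hence $\bH[U]$ also does not pp-construct $\bK_3$, and $f$ restricts to $\Cyc_p(\bH[U])\to\bH[U]$. In $\bH[U]$ the component $A_j$ is the last one, so it is homogeneous. Concretely, every vertex of $A_1\cup\dots\cup A_{j-1}$ is joined to the blue vertices $0,1$ of $A_j$ only by blue edges, and to the red vertices $2,3$ only by red edges. Since $\bH$ is a core without $\ast$-loops, $\bH[U]$ has no $\ast$-loops. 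Therefore it suffices to exhibit an $\ast$-loop or an $\ast$-odd-cycle in $\Cyc_p(\bH[U])$. Its image under $f$ would then be a closed odd $\ast$-walk in a structure without $\ast$-loops, and $R(x,y)\land B(x,y)$ would pp-define a non-bipartite loopless graph. This is exactly the argument of Lemma~\ref{lem:cyclic-power}, and it yields the contradiction.

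\textbf{Finding the odd $\ast$-cycle.} Inside 4Alt we have the following. Tuples in $\{0,1\}^p$ are pairwise blue-adjacent. A tuple $u\in\{0,1\}^p$ is blue-adjacent to $\sigma(u)$ (apply $0\mapsto2$, $1\mapsto3$ coordinatewise) and red-adjacent to $\tau(u)$ (apply $0\mapsto3$, $1\mapsto2$). Using only $A_j$, an $\ast$-edge between classes $[u]$ and $[w]$ would require the complement of $u$ to be a rotation of $u$. This is impossible for odd $p$, because the two words have different weights. So I will use \emph{mixed} words. These have some coordinates in a lower component $A_i$ with $i<j$ (which exists, since otherwise $\bH[U]=$ 4Alt, and one checks directly that $\Cyc_p(\text{4Alt})$ maps to no 4Alt via an odd cycle of the form below), or some coordinates in the opposite-coloured part of $A_j$. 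By homogeneity, the lower coordinates contribute an edge of the colour dictated by the partner coordinate in $A_j$. Choosing a lower vertex $c$ whose loop colour matches lets such a coordinate act as a ``wildcard'' that is compatible with both $0\leftrightarrow2$ and $0\leftrightarrow3$. The plan is to take words such as $c\,0^{a}1^{b}$ and $c\,2^{a}3^{b}$ together with their rotations. Using the wildcard position to absorb the one-coordinate mismatch between a rotation and its complement, I would build a cycle of classes of odd length (I expect length $3$ or $p$) in which consecutive classes are joined both by a blue representative pair and by a red representative pair.

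\textbf{Main obstacle.} The delicate step is the bookkeeping in the last paragraph. I must verify, coordinate by coordinate and modulo $\sim_p$, that each consecutive pair of classes really carries both colours, and that the cycle closes up with odd length. The parity obstruction for pure $A_j$-words shows that the wildcard coordinate from a lower component (or from the dual colour) is essential. If no suitable lower vertex exists, the fallback is to show directly that $\bH[U]=$ 4Alt is not a core or pp-constructs $\bK_3$, via Lemma~\ref{lem:H3} or Lemma~\ref{lem:3-element-sig}. Once 4Alt is excluded, every alternating component has at most two vertices, which proves the statement.
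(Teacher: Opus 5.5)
\textbf{There is a genuine gap in the central step.}

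Your overall frame matches the paper's. You reduce via Proposition~\ref{prop:intermediate-class} to excluding a copy of 4Alt, restrict to the pp-definable prefix from Lemma~\ref{lem:4Alt-ontop}, and seek a contradiction from a cyclic polymorphism $f\colon\Cyc_p(\bH)\to\bH$. The problem is that you then try to find an $\ast$-loop or an $\ast$-odd-cycle in $\Cyc_p(\bH[U])$, and such an object need not exist.

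\textbf{Counterexample.} Take $\bH:=\bK_2^\ast\triangleleft_h\text{4Alt}$. Here $\bK_2^\ast$ has vertices $r,r'$ with red loops and $rr'$ blue, and 4Alt is labelled as in the paper ($00,11,01,02,13\in B$ and $22,33,23,12,03\in R$). This $\bH$ is a core (one checks directly), it is exactly a configuration the proposition must rule out, and here $U=H$. Let $\bH':=\bK_2^\ast\triangleleft_h\bD$, where $\bD$ is a bichromatic $\ast$-edge $b\rho$.
\begin{itemize}
\item Fixing $r,r'$ and sending $0,1\mapsto b$ and $2,3\mapsto\rho$ is a homomorphism $\bH\to\bH'$.
\item By Theorems~\ref{thm:bounded_width} and~\ref{thm:CSP-dichotomy}, $\bH'$ has a $p$-ary cyclic polymorphism for every prime $p>4$. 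Applying the homomorphism coordinatewise gives $\Cyc_p(\bH)\to\Cyc_p(\bH')\to\bH'$.
\item $\bH'$ has no $\ast$-loop, and its only $\ast$-edge is $b\rho$.
\end{itemize}
Hence $\Cyc_p(\bH)$ contains neither an $\ast$-loop nor an odd $\ast$-cycle, whatever words you choose. Your ``wildcard'' also fails locally: by homogeneity, a lower vertex $c$ meets $0,1$ only in blue and $2,3$ only in red, and $cc$ has a single colour.

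\textbf{Why this cannot be repaired within your approach.} The argument of Lemma~\ref{lem:cyclic-power} can only ever prove \emph{hereditary} hardness. A 4Alt sitting above lower components is not hereditarily hard, since it maps into tractable $\ast$-loop-free structures. So the contradiction must use properties of $f$ that do not pass to homomorphic images: idempotency, and sets that are pp-definable with constants.

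\textbf{What the paper does instead.}
\begin{itemize}
\item The core property yields a monochromatic two-element component $\bA_\ell$ below the 4Alt, with only singleton components in between.
\item With $p=4m+1$, the words $0^{2m+1}r_\ell^{2m}$, $1^{2m+1}(r'_\ell)^{2m}$, $2^{2m+1}(r'_\ell)^{2m}$, $3^{2m+1}r_\ell^{2m}$ form an alternating $4$-cycle in $\Cyc_p(\bH)$, which $f$ must send into the 4Alt.
\item A red path from $3^p$ to $2^p$ through such words, together with the pp-definability of $\{2,3\}$ and $f(3^p)=3$, forces $f(2^p)=3$. This contradicts idempotency.
\end{itemize}

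\textbf{Your fallback also fails.} For the case $\bH[U]=\text{4Alt}$ you appeal to Lemmas~\ref{lem:H3} and~\ref{lem:3-element-sig}. But 4Alt maps onto a bichromatic $\ast$-edge, so none of the graphs in those lemmas maps into it. The paper handles this case with the digraph pp-defined by $\exists z.\,B(x,z)\land R(z,y)$. This digraph is loopless when there are no $\ast$-edges, and Theorem~\ref{thm:HH-digraphs} then applies.
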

\begin{proof}
    By Proposition~\ref{prop:intermediate-class}, $\bH$
    admits a decomposition $\bH:= \bA_1 \triangleleft_h\dots \triangleleft_h \bA_k$
    where each $\bA_i$ is an alternating component of $\bH$, and
    $\bA_i$ or its dual $\overline{\bA_i}$ is a structure
    from Figure~\ref{fig:hom-blocks}, for every $i\in[k]$.
    Thus, it suffices to prove that if some $\bA_i$ is isomorphic
    to 4Alt, then $\bH$ pp-constructs $\bK_3$, and by Lemma~\ref{lem:4Alt-ontop}
    we assume that $k$ is the first index  $i$ such that $\bA_i\cong$ 4Alt.

    Observe that if $\mathbb H$ contains no $\ast$-edges, then $\mathbb H$ pp-constructs
    $\mathbb K_3$:  consider the pp-definable binary relation
    $E(x,y):=\exists z.\; B(x,z)\land R(z,y)$; 
    with similar arguments as before,
    the reader can notice that $(H, E)$ is a loopless reflexive digraph, and
    the subdigraph induced by $A_k$ is a smooth digraph that does not map
    homomorphically to a directed cycle (conclude by Theorem~\ref{thm:HH-digraphs}).

    We now claim that there is some $\ell\in[k-1]$ such that
   $\bA_\ell$ is a monochromatic component of size 2, and that $\bA_j$
   is a monochromatic component of size 1 for each $\ell < j < k$. By the previous
   paragraph, we know that $\bA_i$ is an $\ast$-edge for some $i < k$, and so, if
   no $\bA_i$ is a bichromatic $\ast$-edge for some $i < k$, then our claim trivially 
   holds. So
    suppose  that there is some $i\in[k-1]$ such that 
    $\mathbb A_i$ is a bichromatic $\ast$-edge.
    Then, for some $\ell\in\{i+1,\ldots,k-1\}$, the component $\mathbb A_\ell$ is 
    monochromatic and has size 2, i.e., either $\mathbb K_2^*$,  $\overline{\bK_2^\ast}$
    or a monochromatic $\ast$-edge; otherwise all components $\mathbb A_{i+1},\ldots,\mathbb A_k$
    would collapse to $\mathbb A_i$ contradicting  that $\mathbb H$ is a core.
    So from now on we fix $\ell\in[k-1]$ such that $\bA_\ell$ is a monochromatic
    component of size $2$, and each $\bA_j$ consists of a single vertex 
    for $j\in\{\ell +1, \dots, k-1\}$.
 
    Denote the vertices of the copy of 4Alt with labels $0,1,2,3$ such that
    $00,11,01,02,13\in  B(\bH)$ and $22,33,23,12,03\in  R(\bH)$
    (this labelling corresponds to Figure~\ref{fig:bichromatic}).
    Without loss of generality, assume that $\mathbb A_\ell$  is
    a red component of size 2, i.e., its domain is $r_\ell, r_\ell'$  where 
    $r_\ell r_\ell, r_\ell' r_\ell'\in   R(\bH)$ and $r_\ell r_\ell'\in 
    B(\bH)$  (and possibly $r_\ell r'_\ell\in R(\bH)$).
    For each $j\in \{\ell+1,\ldots,k-1\}$, let $b_j$ (resp.\ $r_j$)
    denote the vertex of $\mathbb A_j$ if $b_jb_j\in B(\bH)$
    (resp.\ $r_jr_j\in R(\bH)$).

    Proceeding by contradiction, suppose that $\bH$ does not pp-construct
    $\bK_3$. Hence, by Lemma~\ref{lem:cyclic-power} there is a
    homomorphism $f\colon \Cyc_p(\bH)\to \bH$ for every prime 
    $p> |H|$. Let $p>|H|$ be a prime such that $p = 4m+1$
    for some positive integer $m$.
    Since $\mathbb H$ is a core, we assume that
    $f(x,\ldots,x)=x$ for each $x\in H$.

    Consider the elements
    $a:=a_0 = 0^{2m+1}r_\ell^{2m}\in \Cyc_p(\bH)$, $a_1 = 1^{2m+1}r_\ell'^{2m}$, $a_2 = 2^{2m+1}(r_\ell')^{2m}$, 
    and $a_3 = 3^{2m+1}r_\ell^{2m}$. Using the definition of $\sim_p$, and the structure
    decomposition of $\bH$, we see that $a_0a_0,a_1a_1\in B(\Cyc_p(\bH))$
    and $a_2a_2,a_3a_3\in R(\Cyc_p(\bH))$ (so $f(a_0), f(a_1)\in H_B$ and 
    $f(a_2),f(a_3)\in H_R$). Moreover, $a_0a_3$ is a red edge, 
    $a_3a_1$ is a blue edge, $a_1a_2$ is a red edge, and $a_2a_0$ is a blue edge. 
    Hence, $f$ maps $\{a_0,a_1,a_2,a_3\}$ to a bichromatic alternating component $\bA_i$
    of $\bH$.  Now notice that $a_0$ is connected to $(r_\ell')^p$ by
    a blue edge in $\Cyc_p(\bH)$, and since $f((r_\ell')^p) = r_\ell$, it must
    be the case that $\bA_i$ is an alternating component above $\bA_\ell$, i.e.,
    $i >\ell$. By the choice of $\ell$, every alternating component $\bA_i$
    with $\ell < i < k$ is a monochromatic component, and thus
    $i = k$. This implies that $f(a),f(a_1)\in\{0,1\}$ and $f(a_2),f(a_3)\in \{2,3\}$.

    Let $b = 0^{m+1} r_\ell^{m} 1^{2m}$ and $c = r_\ell^{m+1}1^{3m}$.
    Similarly as we showed that $f(a)\in\{0,1\}$, one can
    show that $f(b),f(c)\in \{0,1\}$, e.g., show that $f$ maps
    the elements $b_0 = b$, 
    $b_1 = 1^{m+1} r_\ell'^{m} 0^{2m}$,  $b_2  = 2^{m+1} r_\ell'^{m} 3^{2m}$,
     and $b_3 = 3^{m+1} r_\ell^{m} 2^{2m}$ to $\bA_k$.

    With similar arguments as we have used before, 
    one can notice that $\{0,1\}$ and that $\{2,3\}$ are pp-definable
    in $(\bH,h_1,\dots, h_n)$, and since $f$ is idempotent, it must be the
    case that $f(x_1\dots x_p )\in\{2,3\}$ 
    whenever $x_i\in\{2,3\}$, for all $i\in [p]$.
    Finally, consider the elements
    $u_0=3^p$, $u_1=3^{2m+1}2^{2m}$, $u_2=3^{m+1}2^{3m}$, and $u_3=2^p$. 
    Notice that $u_0a, au_1, u_1b,  bu_2, u_2c, cu_3\in R(\Cyc_p(\bH))$,
     and since $f(u_0) = f(3^p) = 3$, it must be the case that 
    $f(a)=f(b)=f(c)=1$ and $f(u_1)=f(u_2)=f(u_3)=f(u_0)=3$. 
    In particular, $f(2^p) = f(u_3) =3$ which contradicts the idempotency
    of $f$. Therefore, there is no idempotent cyclic polymorphism 
    $f\colon \bH^p\to \bH$, 
    and thus $\bH$ pp-constructs $\bK_3$.
\end{proof}

\subsection{The $\bK_2^\ast$ and $\overline{\bK_2^\ast}$ case}

We now show that if $\bH$ admits a decomposition 
$\bH:= \bA_1\triangleleft_h\dots \triangleleft_h\bA_k$,
and $\bA_i$ is isomorphic to $\bK_2^\ast$ or to
$\overline{\bK_2^\ast}$ for some $2\le i \le k$, then $\bH$
pp-constructs $\bK_3$.

\begin{lemma}\label{lem:consecutive-red}
    Let $\bH:= \bA_1 \triangleleft_h \dots \triangleleft_h \bA_k$
    be a decomposition of a reflexive complete $2$-edge-coloured 
    graph that is a core.
    If there is a pair $i,j\in[k]$ with $i < j$, such that
    $\bA_i$ and $\bA_j$ are isomorphic to $\bK_2^\ast$, 
    and each $\bA_\ell$ is a red alternating component for
    $i\le \ell \le j$, then $\bH$ pp-constructs $\bK_3$. 
\end{lemma}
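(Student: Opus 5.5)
The plan is to show that, under the hypotheses, $\bH$ has a non-injective endomorphism. That contradicts the assumption that $\bH$ is a core, so the configuration in the statement cannot occur inside a core and the lemma holds (vacuously). Concretely, write $A_i=\{u,v\}$ and $A_j=\{w,x\}$, where $uu,vv,ww,xx\in R(\bH)$, $uv,wx\in B(\bH)$ and $uv,wx\notin R(\bH)$. Define $g\colon H\to H$ by $g(w)=u$, $g(x)=v$ and $g(h)=h$ for every $h\notin A_j$. By Observation~\ref{obs:endomorphism}-style reasoning, it suffices to check that $g$ preserves both edge colours on every pair involving $A_j$.

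The key step is to determine exactly how $A_i$ and $A_j$ attach to the rest of $\bH$, using only homogeneity in the concatenation. Let $h\notin A_i\cup A_j$ lie in $A_\ell$. I would check three cases.
\begin{itemize}
\item If $\ell>j$, then by homogeneity of $A_\ell$ the vertex $h$ is joined to all of $u,v,w,x$ by edges of the colour of its own loop $hh$. So $hw$ and $hg(w)$ carry the same colours, and likewise for $x$.
\item If $i<\ell<j$, then $A_\ell$ is a red component by hypothesis. Homogeneity of $A_j$ (a later component with red loops) forces $hw,hx$ to be red only. Homogeneity of $A_\ell$ over $A_i$ forces $hu,hv$ to be coloured like $hh$, i.e.\ red only.
\item If $\ell<i$, then homogeneity of $A_i$ and of $A_j$ (both with red loops) makes $h$ red-only adjacent to all four vertices.
\end{itemize}
Inside $A_i\cup A_j$, the edges $uw,ux,vw,vx$ are red only by homogeneity of $A_j$. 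They are sent to the loops $uu$, $uv$-type pairs, namely $uu$, $uv$, $vu$, $vv$, and I need the colours to survive. Here is the one delicate point: $uw$ red maps to $uu$ red, which is fine; but $ux$ red maps to $uv$, which is \emph{blue only}.

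So the naive collapse fails on the pairs $ux$ and $vw$. I would repair this in one of two ways.
\begin{itemize}
\item First I would test whether some other identification works. The obvious candidate is to collapse $A_i$ onto $A_j$ instead, via $u\mapsto w$, $v\mapsto x$. The same cross pairs cause trouble there, so that is unlikely to help.
\item The fallback I expect to be needed is the cyclic-power method of Proposition~\ref{prop:2-element-constructions}. Assuming $\bH$ does not pp-construct $\bK_3$, I take an idempotent homomorphism $f\colon\Cyc_p(\bH)\to\bH$ for a prime $p>|H|$ (Lemma~\ref{lem:cyclic-power} together with Corollary~\ref{cor:core_idempotent}). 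I then consider mixed words such as $u^{(p+1)/2}w^{(p-1)/2}$ and its partners obtained by swapping $u\leftrightarrow v$ and $w\leftrightarrow x$. All letters are red, so these words have red loops. Two partners are joined by a blue edge, because coordinatewise they pair $u$ with $v$ and $w$ with $x$. Such a word and an unrelated red vertex are joined by red only.
\end{itemize}
Using the pp-definability of $A_i\cup\dots\cup A_j$ (via orbits and Proposition~\ref{prop:core-orbits}, as in Lemma~\ref{lem:4Alt-ontop}), I would force $f$ to send these words into a red component of size two lying in the range $i,\dots,j$.

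The main obstacle, as in Proposition~\ref{prop:2-element-constructions}, is building a chain of words $u^p=w_0,w_1,\dots,w_t=x^p$ in which consecutive words are joined by red-only edges in $\Cyc_p(\bH)$. Such edges force $f$ to stay constant along the chain. Combined with idempotency ($f(u^p)=u$ and $f(x^p)=x$), the chain would produce the contradiction $u=x$. I expect to choose the chain by shifting blocks of length $m$ with $p=4m+1$, exactly as with the words $a,b,c,u_0,\dots,u_3$ in that proof. Verifying that each consecutive pair is red-only, and never $\ast$, is the step I anticipate to require the most care.
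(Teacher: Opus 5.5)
Your first idea cannot work, and you are right to abandon it. $\bH$ is a core by hypothesis, and cores satisfying the hypotheses exist (e.g.\ $(2\bK_2)^\ast=\bK_2^\ast\triangleleft_h\bK_2^\ast$), so no non-injective endomorphism is available. The fallback, however, rests on a false step: red-only edges in $\Cyc_p(\bH)$ do \emph{not} force $f$ to be constant along a chain. A homomorphism only has to send red edges to red edges. In $\bH$, every pair among $u,v,w,x$ except $uv$ and $wx$ is a red edge; in particular $ux\in R(\bH)$. So $f(u^p)=u$ and $f(x^p)=x$ are compatible with any red chain from $u^p$ to $x^p$, and no contradiction $u=x$ can arise. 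The forcing in Proposition~\ref{prop:2-element-constructions} worked only because the images had already been confined to $\{0,1\}$ and $\{2,3\}$, between which the red edges of 4Alt form a perfect matching. Here it is the \emph{blue} edges that form the matching $\{uv,wx\}$, and a red edge only says ``not the blue partner''. Also, $A_i\cup\dots\cup A_j$ is not an orbit in general, so Proposition~\ref{prop:core-orbits} does not give you that set.

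The missing idea is already contained in your case analysis. Every vertex outside $A_i\cup A_j$ is joined to all of $u,v,w,x$ by one and the same single colour. Hence the map swapping $u\leftrightarrow w$ and $v\leftrightarrow x$ and fixing everything else is an \emph{automorphism}. Consequently the orbit $O$ of $u$, which is pp-definable by Proposition~\ref{prop:core-orbits}, contains $A_i\cup A_j$. All vertices of $O$ have red loops, none lies on an $\ast$-edge, and $O$ contains an induced copy of $(2\bK_2)^\ast$. So $\bH[O]$ has no full-homomorphism to $(\bK_1+\bK_2)^\ast$ (Observation~\ref{obs:K2+K1}), and it pp-constructs $\bK_3$ by Proposition~\ref{prop:full-hom}. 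More directly, $\exists z.\,B(x,z)\land R(z,y)$ is loopless on $O$ and contains $\bK_4$ on $\{u,v,w,x\}$. This is the paper's proof.

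If you prefer to stay with cyclic powers, the correct mechanism is this same blue-then-red relation. Take $y\in\{u,v,w,x\}^p$ with $y_t\neq y_{t+1}$ cyclically, and let $z$ be obtained by replacing each letter with its blue partner. Then $[y][z]$ is an $\ast$-edge of $\Cyc_p(\bH)$: blue via $y,z$ coordinatewise, and red via $z$ and the cyclic shift of $y$. Any cyclic polymorphism preserves $O$, so it maps this $\ast$-edge into $O$, where no vertex lies on an $\ast$-edge or $\ast$-loop.
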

\begin{proof}
    Let $O$ be the orbit of some $a\in A_i$, and clearly
    $A_i\subseteq O$. It is straightforward that there is an
   automorphism that maps $\bA_i$ 
   to $\bA_j$, $\bA_j$ to 
   $\bA_i$, and fixes the rest of $\bH$, and so, 
   $A_i\cup A_j\subseteq O$. Also,  every $h\in O$
   is incident to a red loop, and there is no $\ast$-edge
   induced by $O$ because $a$ is not incident to any $\ast$-edge
   (this follows from the choice of $\bA_i$, and from the definition
   of alternating component). Hence, $O$ induces a structure
   in the scope of Proposition~\ref{prop:full-hom}, and  since
   $A_i\cup A_j$ induces a copy of $(\bK_2+\bK_2)^\ast$, 
   we conclude that $\bH[O]$ does not admit a
   full-homomorphism to $(\bK_1 + \bK_2)^\ast$ (Observation~\ref{obs:K2+K1}).
   We conclude by Proposition~\ref{prop:full-hom} that 
   the substructure induced by $O$ pp-constructs $\bK_3$, and
   since $O$ is pp-definable in $\bH$, then $\bH$ pp-constructs
   $\bK_3$.  
\end{proof}

\begin{lemma}\label{lem:K2-on-top}
    Let $\bH$ be a reflexive complete $2$-edge-coloured  graph
    that is a core. If $\bH$ does not pp-construct
    $\bK_3$, $\bH$ admits a decomposition as in Theorem~\ref{thm:homogeneous-Ptime},
    or a substructure $\bH'$ of $\bH$ admits a decomposition
    $\bH':= \bA_1 \triangleleft_h \dots \triangleleft_h \bA_\ell$ such that
    \begin{itemize}
        \item $(\bH,h_1,\dots, h_n)$ pp-defines the union $A_1\cup\dots \cup A_\ell$,
        \item $A_\ell$ is the first index $i\ge 2$ such that $\bA_i$
        is isomorphic to $\bK_2^\ast$ or to $\overline{\bK_2^\ast}$, 
        \item and $A_{\ell-1}$ contains a vertex $a$  incident
        to loop of opposite colour to the loops in $\bA_\ell$.
    \end{itemize}
\end{lemma}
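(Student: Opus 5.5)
We begin from the decomposition given by Proposition~\ref{prop:2-element-constructions}: since $\bH$ does not pp-construct $\bK_3$, we may write $\bH = \bA_1\triangleleft_h\dots\triangleleft_h\bA_k$ with each $\bA_i$ an alternating component with $|A_i|\le 2$. By Proposition~\ref{prop:intermediate-class}, each $\bA_i$ (or its dual) is $\bK_1^\ast$, $\bK_2^\ast$, a monochromatic $\ast$-edge or a bichromatic $\ast$-edge. If no $\bA_i$ with $i\ge 2$ is isomorphic to $\bK_2^\ast$ or $\overline{\bK_2^\ast}$, this is already a decomposition as in Theorem~\ref{thm:homogeneous-Ptime}, so the first alternative holds. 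Otherwise, let $\ell\ge 2$ be the first index with $\bA_\ell\cong\bK_2^\ast$ (the dual case being symmetric), so its loops are red.

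We next obtain the third item by choosing the topological ordering well and, if necessary, passing to a substructure. Among all topological orderings of the alternating components, we choose one in which the components pairwise incomparable in $\Alt(\bH)$ and lying before $\bA_\ell$ are pushed as late as possible. Suppose every vertex of $\bA_{\ell-1}$ were red. We would then argue that either $\bA_{\ell-1}$ and $\bA_\ell$ can be swapped in the ordering, or we can descend to $\ell-1$ and iterate. Homogeneity of red components over red vertices means that consecutive red components with no arc between them are incomparable. If we reach index $1$ with only red components, then the block $\bA_1,\dots,\bA_\ell$ is entirely red. If it contains another $\bK_2^\ast$, Lemma~\ref{lem:consecutive-red} gives a contradiction. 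Otherwise, $\bA_1\triangleleft_h\dots\triangleleft_h\bA_\ell$ is a red structure with no $\ast$-loops, handled by Proposition~\ref{prop:full-hom}. Its full-homomorphism to $(\bK_1+\bK_2)^\ast$ together with the core property forces $\bA_1\triangleleft_h\bA_\ell$ to be recombinable into a $2$-element starting block. This yields a decomposition as in Theorem~\ref{thm:homogeneous-Ptime} after reordering, which is again the first alternative. Hence, when the first alternative fails, some blue vertex occurs just before the first red $\bK_2^\ast$. We then take $\bH'$ to be the substructure induced by the components of the ordering up to $\bA_\ell$, after discarding the intermediate components that are incomparable and can be moved above $\bA_\ell$.

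It remains to prove pp-definability of $A_1\cup\dots\cup A_\ell$ in $(\bH,h_1,\dots,h_n)$. We imitate the proof of Lemma~\ref{lem:4Alt-ontop}. Let $a_\ell,a'_\ell$ be the two vertices of $\bA_\ell$, and let $b$ be the blue vertex of $A_{\ell-1}$. The candidate definition is
\[ \delta(x):=\exists y.\; B(x,y)\land R(y,h_{a_\ell})\land R(y,h_{a'_\ell}) \ \land\ \dots, \]
augmented with constraints of the form $E(x,h_c)$ with respect to fixed constants $h_c$ in the later components. These constraints are chosen, using homogeneity, so that every vertex of a component above $\bA_\ell$ fails them. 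Vertices in $A_1,\dots,A_\ell$ satisfy all the conjuncts because each later component is homogeneous over them. Conversely, a vertex $x$ in a later component $\bA_j$ sees all of $A_\ell$ with the colour of its own loop. We pick conjuncts accordingly: a red-loop $x$ must be excluded via a blue-edge requirement to $a_\ell$, witnessed by $b$, and a blue-loop $x$ via a red requirement.

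The main obstacle is the second step. Making precise when components can be reordered or discarded, and showing that an all-red prefix ending in $\bK_2^\ast$ collapses into a valid Theorem~\ref{thm:homogeneous-Ptime} decomposition, requires a careful case analysis using the core property and Observation~\ref{obs:endomorphism}. I would attack it first by induction on $\ell$.
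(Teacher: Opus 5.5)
There is a genuine gap in your third step, the pp-definition of $A_1\cup\dots\cup A_\ell$, which is where most of the work of this lemma lies.

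Your formula $\delta$ is already false on both vertices of $A_\ell$. For $x=a_\ell$ the existential conjunct needs a $y$ with $B(a_\ell,y)\land R(y,a_\ell)$, i.e.\ an $\ast$-edge at $a_\ell$. But inside $\bA_\ell\cong\bK_2^\ast$ all edges are monochromatic, and by homogeneity every vertex outside $A_\ell$ is joined to $a_\ell$ in a single colour.

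Moreover, conjuncts $E(x,h_c)$ with $c$ in later components cannot do the separating you need. A constant $c\in A_j$ is joined to every vertex of $A_1\cup\dots\cup A_{j-1}$ only in the colour of $cc$. So it never distinguishes a vertex of $A_{\ell+1}\cup\dots\cup A_{j-1}$ from one of $A_1\cup\dots\cup A_\ell$. Concretely, let $\bA_{\ell+1}$ be a single blue loop $x$. Then $x$ satisfies your existential conjunct: take $y=b$, since $xb\in B(\bH)$ and $b$ is red-adjacent to $A_\ell$. It also satisfies every admissible conjunct $E(x,h_c)$, including $c=x$, although $x\notin A_1\cup\dots\cup A_\ell$. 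Such configurations (a blue loop, then $\bK_2^\ast$, then a blue loop) are cores, and nothing available at this point excludes them. They are eliminated only afterwards, in Lemma~\ref{lem:K2-case}, whose proof relies on the present lemma.

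The missing idea is to peel components off from the top, using the core property and existential witnesses rather than constants above $\bA_\ell$. Since $A_k$ is homogeneous, every endomorphism of $\bA_1\triangleleft_h\dots\triangleleft_h\bA_{k-1}$ extends by the identity on $A_k$, so this substructure is again a core. It therefore suffices to pp-define $A_1\cup\dots\cup A_{k-1}$ in $(\bH,h_1,\dots,h_n)$ and iterate, composing the definitions. This is a case analysis on $\bA_k$:
\begin{itemize}
\item A top blue loop is the only vertex with no red neighbour, since any other such vertex could be folded onto it. So $\exists z.\,R(x,z)$ works.
\item If $\bA_k$ is a blue monochromatic $\ast$-edge or $\overline{\bK_2^\ast}$, the core property, Lemma~\ref{lem:consecutive-red} and the choice of ordering force $\bA_{k-1}$ to contain a red vertex. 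One then uses $\exists z.\,(z\in A_{k-1})\land R(x,z)$, with $A_{k-1}$ pp-definable by Lemma~\ref{lem:mono-components}, or $R(x,c)$ for the red end $c$ of a bichromatic $\ast$-edge.
\item A bichromatic $\ast$-edge on top is removed in two such stages.
\end{itemize}

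By contrast, the step you flag as the main obstacle is short. Consecutive monochromatic components of the same colour have no arcs between them in $\Alt(\bH)$ and can be swapped. So fix an ordering in which, in such a pair, a loop always comes second. If $\bA_{\ell-1}$ were entirely red, it could not be:
\begin{itemize}
\item a red $\ast$-edge, because folding $\bA_\ell$ onto it gives a non-surjective endomorphism;
\item $\bK_2^\ast$, by Lemma~\ref{lem:consecutive-red};
\item a red loop, by the choice of ordering.
\end{itemize}
Your detour through Proposition~\ref{prop:full-hom} is also shaky, since its proof assumes there are no $\ast$-edges, which a red prefix may contain.
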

\begin{proof}
    By Proposition~\ref{prop:2-element-constructions}, we assume that
    $\bH:=\bA_1 \triangleleft_h \dots \triangleleft_h \bA_k$  where
    $|A_i|\le 2$ for each $i\in[k]$.
    Notice that if there is some $i\in[k]$ such that
    $\bA_i$ and $\bA_{i+1}$ are monochromatic components of the same
    colour, then $\bH$ can be equivalently decomposed
    as 
    \[
        \bH:= \bA_1 \triangleleft_h \dots \bA_{i+1} \triangleleft_h \bA_i\triangleleft_h \dots \triangleleft_h \bA_k.
    \]
    We choose a decomposition $\bH:=\bA_1 \triangleleft_h \dots \triangleleft_h \bA_k$
    such that if $\bA_i$ and $\bA_{i+1}$ are monochromatic components of the 
    same colour, and one of them is a loop, then $\bA_{i+1}$ is a loop.
    Notice if each $\bA_i$ is either an $\ast$-edge or a loop
    for $i> 2$, then $\bH$ admits a decomposition as in Theorem~\ref{thm:homogeneous-Ptime}. 
    If this is not the case, let $i>2$
    be the first integer such that $\bA_i$ is isomorphic to 
    $\bK_2^\ast$ or to $\overline{\bK_2^\ast}$, and assume without loss
    of generality that $\bA_i\cong \bK_2^\ast$.

    We first assume that Claim 1 (below) holds, i.e., 
    $(\bH, h_1,\dots, h_n)$ pp-defines  $A_1\cup \dots \cup A_i$. 
    We show that $A_{i-1}$ contains  a vertex $a$ such that $aa\in B(\bH)$. If
    not, $\bA_{i-1}$ is a red alternating component. Since $\bH$ is a core,
    $\bA_{i-1}$ is not an $\ast$-edge (otherwise, collapsing $\bA_i$ into $\bA_{i-1}$
    yields a non-surjective endomorphism of $\bH$). Also, 
    $\bA_{i-1} \not\cong \bK_2^\ast$ by Lemma~\ref{lem:consecutive-red}, and
    so $\bA_{i-1}$ is a red loop $aa$. But this contradicts the choice
    of decomposition $\bA_1 \triangleleft_h \dots \triangleleft_h \bA_k$.
    Hence, the statement of this lemma follows by proving the following claim.

    \vspace{0.2cm}
    \noindent
    \textbf{Claim 1. } $(\bH,h_1,\dots, h_n)$ pp-defines the union
    $A_1\cup \dots \cup A_i$.

    \vspace{0.1cm}
    \noindent
    \textit{Proof of Claim 1.}
    Since $A_k$ is a homogeneous set in $\bH$,
    every homomorphism  $f\colon \bA_1\triangleleft_h\dots \triangleleft_h \bA_{k-1}\to\bA_1\triangleleft_h\dots \triangleleft_h \bA_{k-1}$ extends to a homomorphism 
    $f'\colon \bH\to \bH$ by acting as the identity on $A_k$. 
    In particular, if $\bH$ is a core, then 
    $\bA_1\triangleleft_h\dots \triangleleft_h \bA_{k-1}$ is a core.
    Hence, it suffices to prove that $(\bH, h_1,\dots, h_n)$ 
    pp-defines the a set $A_1\cup \dots \cup A_{k-1}$,
    and the claim follows by finite induction for $i < k-1$.
    Consider now the following (boring) case distinction.
    \begin{itemize}
        \item $|A_k| = 1$, and without loss of generality, assume
        that $\bA_k$ induces a blue loop $aa\in B(\bH)$. In this case,
        $a$ is not incident to any red edges, and every vertex
        different not in $A_k$ must a have a red neighbour because
        $\bH$ is a core. Hence, $\exists z.\; R(x,z)$ pp-defines
        $A_1\cup \dots \cup A_{k-1}$, and $i\le k-1$.
        \item    
        $\bA_k$ induces a monochromatic $\ast$-edge, and
        we assume that $\bA_k = \{a,b\}$ where $aa,bb,ab\in B(\bH)$
        and $ab\in R(\bH)$.
        As $\bH$ is a core, $\bA_{k-1}$ can be either $\bK_1^*$, $\bK_2^\ast$,
        a red monochromatic $*$-edge,  or a bichromatic $*$-edge.
        In the first 3 cases,  $A_{k-1}$ is pp-definable (Lemma~\ref{lem:mono-components}),
        and so the formula $\exists z.\; z\in A_j\wedge R(x,z)$ pp-defines $A_1\cup \dots \cup A_{k-1}$.
        In the last case, let $c$ be the red vertex of
        of $\bA_{k-1}$, then the formula $R(x,c)$ pp-defines $A_1\cup \dots \cup A_{k-1}$.
        
       \item $\bA_k\cong \overline{\bK_2^\ast}$. Since $\bH$ does not
       pp-construct $\bK_3$, it follows from Lemma~\ref{lem:consecutive-red}
       that $\bA_{k-1}\not\cong \overline{\bK_2^\ast}$. 
       Since $\bH$ is a core, 
       $\bA_{k-1}$ is not an $\ast$-edge connecting vertices in $H_R$. 
       By the choice of decomposition, $\bA_{k-1}$ is not a red loop. 
       Hence, $\bA_{k-1}$ can be either $\bK_1^*$,  $\bK_2^\ast$,
       a red monochromatic $\ast$-edge, or a bichromatic $\ast$-edge
       and so, with the same arguments as in the previous case we see that
        $A_1\cup \dots A_{k-1}$ is pp-definable.
       
       \item $\bA_k\cong \bK_2^\ast$. Follows with symmetric
       arguments as the previous case.
       \item $\bA_k$ is a bichromatic $\ast$-edge $br\in B(\bH)\cap R(\bH)$
       with $bb\in B(\bH)$ and $rr\in R(\bH)$. Using the fact that
       $\bH$ is a core, we see that $\bA_k$ is not a bichromatic
       $\ast$-edge, and also not a loop. Hence $\bA_{k-1}$ is
       either a monochromatic $\ast$-edge, $\bK_2^\ast$ or 
       $\overline{\bK_2^\ast}$. By Lemma~\ref{lem:mono-components}, 
       the set $A_{k-1}$ is pp-definable in $(\bH,h_1,\dots, h_n)$. 
       Assume without loss of generality that $A_{k-1}\subseteq H_R$.
       The pp-formula $\exists z.\; z\in A_{k-1} \land R(x,z)$ pp-defines
       the set $A_1\cup \dots A_{k-1}\cup\{r\}$, and the substructure
       induces by this set admits a decomposition
       $\bA_1\triangleleft_h \dots \triangleleft_h \bA_{k-1}\triangleleft \bA'_k$
       where $A'_k =\{r\}$. Thus, with similar arguments as in the
       first case, $\bA_1\triangleleft_h \dots \triangleleft_h \bA_{k-1}\triangleleft \bA'_k$
       pp-defines $A_1\cup\dots \cup A_{k-1}$, and by composing pp-definitions
       we conclude that $(\bH,h_1,\dots, h_n)$ pp-defines
       $A_1\cup\dots \cup A_{k-1}$.
    \end{itemize}
    These are all possible cases because each $\bA_i$ contains at most
    at most two vertices, and so Claim 1 and this lemma follow.
\end{proof}

\begin{lemma}\label{lem:K2-case}
    Let $\bH$ be a reflexive complete $2$-edge-coloured  graph
    that is a core. If $\bH$ does not pp-construct
    $\bK_3$, then $\bH$ admits a decomposition as in Theorem~\ref{thm:homogeneous-Ptime}.
\end{lemma}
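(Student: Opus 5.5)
We argue by contradiction. Suppose $\bH$ is a core that does not pp-construct $\bK_3$ and has no decomposition as in Theorem~\ref{thm:homogeneous-Ptime}. Lemma~\ref{lem:K2-on-top} then gives a substructure $\bH'=\bA_1\triangleleft_h\dots\triangleleft_h\bA_\ell$ with three properties. First, $A':=A_1\cup\dots\cup A_\ell$ is pp-definable in $(\bH,h_1,\dots,h_n)$. Second, $\bA_\ell$ is the first component of index at least $2$ isomorphic to $\bK_2^\ast$ or $\overline{\bK_2^\ast}$; by duality we may take $\bA_\ell\cong\bK_2^\ast$, with vertices $r,r'$, red loops and $rr'\in B$ only. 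Third, $\bA_{\ell-1}$ contains a blue vertex $a$.

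Since $\bH$ is a core, Theorem~\ref{thm:core_constants} shows that $\bH$ pp-constructs $(\bH,h_1,\dots,h_n)$. This structure pp-defines $A'$, and we may add the constants in $A'$. So it suffices to show that $(\bH',h)_{h\in A'}$ pp-constructs $\bK_3$. Equivalently, we must show that $\bH'$ has no idempotent cyclic polymorphism of some prime arity $p>|H'|$, by Corollary~\ref{cor:core_idempotent} together with Lemma~\ref{lem:cyclic-power}. We then mimic the proof of Proposition~\ref{prop:2-element-constructions}. Assume $f\colon\Cyc_p(\bH')\to\bH'$ is idempotent, with $p=4m+1$ prime. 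By homogeneity of the concatenation, $a$ is joined to $r$ and to $r'$ by a red edge only, since both have red loops.

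\textbf{Step 1.} We show that $f$ maps the words over $\{r,r'\}$ into $A_\ell$. Here $\{r,r'\}$ is pp-definable with constants: it is the set of $x$ with $R(x,x)$ and $B(x,r)$ or $B(x,r')$, restricted to $A'$. Using the fact that $A_\ell$ is the topmost component and the constants of lower components, this defines exactly $A_\ell$. The restriction of $f$ to $\{r,r'\}$ is a cyclic polymorphism of $\bK_2^\ast$, and as $p$ is odd this is fine by itself. So the contradiction must come from mixing in $a$.

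\textbf{Step 2.} We consider words such as $x=a^{2m+1}r^{2m}$ and $y=a^{2m+1}r'^{2m}$, together with their rotations. We compute the edges of $\Cyc_p(\bH')$ between these words and the words $r^p$, $r'^p$, $r^{j}r'^{p-j}$. The aim is to build a blue walk of odd length among red-looped elements, in the spirit of Observation~\ref{obs:red-odd-cycle}. Concretely, we will try to exhibit elements $u_0=r^p,u_1,\dots,u_t=r'^p$, all red-looped in the cyclic power, forming an alternating chain, as in the final argument of Proposition~\ref{prop:2-element-constructions}. Idempotency forces $f(r^p)=r$, and propagation along the chain then forces $f(r'^p)=r$, contradicting $f(r'^p)=r'$.

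The chain itself has to exploit two facts. First, $a$ is blue, so on the coordinates where $a$ sits, an edge between two words can be blue only when the opposing coordinate is a lower-component vertex joined to $a$ by blue. Second, $\bA_{\ell-1}$ lies directly below $\bA_\ell$, and by the choice of $\ell$ the components in between are loops or $\ast$-edges. The red edge $ar$ and the blue edge $rr'$ together allow a cyclic shift of the $a$-block to switch the $r/r'$ pattern parity.

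\textbf{Main obstacle.} The hard part is Step 2: finding explicit words and verifying their edge types in $\Cyc_p(\bH')$. The case split on the type of $\bA_{\ell-1}$ (a blue loop, a bichromatic $\ast$-edge, a blue monochromatic $\ast$-edge, or $\overline{\bK_2^\ast}$ viewed from below) will likely need separate choices of words. If a uniform cyclic-power argument fails, the fallback is different. In each case we would find one of the graphs of Figure~\ref{fig:3-element}, or $\bH_3$, as a homomorphic image inside $\Sig(\bH')$ composed with $f$, and conclude by Lemma~\ref{lem:3-element-sig} or Lemma~\ref{lem:H3}.
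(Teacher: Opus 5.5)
Your setup matches the paper's. You reduce, via Lemma~\ref{lem:K2-on-top} and constants, to ruling out an idempotent $f\colon\Cyc_p(\bH')\to\bH'$ with $p=4m+1$. Step~1 is also fine, provided you take pp-definability of $A_\ell$ from Lemma~\ref{lem:mono-components}; the formula you wrote contains a disjunction, so it is not pp.

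\textbf{The gap is Step~2, which is the whole content of the lemma and is not carried out.} No words are fixed and no edges are checked. More importantly, nothing forces the image of a word containing $a$ into $A_\ell=\{r,r'\}$.
\begin{itemize}
\item The ``red $=$ equal, blue $=$ different'' propagation you want only works between elements whose images are already known to lie in $\{r,r'\}$. A red loop in $\Cyc_p(\bH')$ only says the image is \emph{some} red vertex.
\item The candidates you name do not even have loops. In $a^{2m+1}r^{2m}$, any rotation aligns some $a$ with an $a$, so there is no red loop. It also aligns every $r$ with an $a$ or an $r$, so there is no blue loop.
\item The red--blue--red path $r^p,\ a^{2m+1}r^{2m},\ a^{2m+1}r'^{2m},\ r'^p$ is satisfied by sending both middle words to $a$, so it forces nothing.
\item The Siggers fallback is unspecified, and it cannot rest on the triple $\{a,r,r'\}$ alone. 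That triple maps into structures of the tractable form of Theorem~\ref{thm:homogeneous-Ptime}, for instance $a$ in $\bA_1=\overline{\bK_2^\ast}$ with a red $\ast$-edge on top. Any obstruction must therefore use that $\bK_2^\ast$ sits \emph{above} $a$, that is, the constants and pp-definability.
\end{itemize}

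Here is how the paper supplies the missing step, writing $0,1$ for $r,r'$ and $2$ for your $a$.

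\emph{Case 1.} If $\bA_{k-1}$ is a blue two-element component $\{2,b\}$, no cyclic power is needed. $R(x,b)$ pp-defines $\{0,1,2\}$, and on this set $\exists z.\,R(z,b)\wedge B(x,z)\wedge R(z,y)$ is the inequality relation, which gives $\bK_3$.

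\emph{Case 2.} Otherwise the paper takes $u=(0212)^m0$, $v=(1202)^m1$ and $w=(0110)^m0$. Interleaving the blue letter $2$ means that a suitable rotation puts every $2$ opposite a $0$ or a $1$. Hence $uu,vv,uw,vw\in R(\Cyc_p(\bH))$, while $uv\in B(\Cyc_p(\bH))$.
\begin{itemize}
\item $f(u)$ and $f(v)$ are red vertices joined by a blue edge, so they lie in one red two-element component (Lemma~\ref{lem:mono-vs-bipartite}).
\item $f(w)\in\{0,1\}$ by pp-definability of $A_k$.
\item If $A_k$ is the only red two-element component, then $\{f(u),f(v)\}=\{0,1\}$ and $f(w)$ is red-adjacent to both. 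This forces $01\in R$, a contradiction.
\item If some lower red two-element component $\{3,4\}$ exists, $f(u),f(v)$ could land there, and a further layer is needed. The words $d=(2323)^m2$ and $d'=(2424)^m2$ are pushed into a pp-definable set of blue vertices above $\{3,4\}$. Then $e=(0303)^m0$ and $e'=(1414)^m1$ are forced into $\{0,1\}$ and combined with $w$ to reach the same contradiction.
\end{itemize}

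Your proposal addresses neither the competing-component issue nor the two-element blue case.
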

\begin{proof}
    By Lemma~\ref{lem:K2-on-top}, it suffices to show that
    $\bH$ pp-constructs $\bK_3$ whenever
    it admits a decomposition
    \[
    \bH:=\bA_1 \triangleleft_h \dots \triangleleft_h \bA_k,
    \]
    where $k$ is the first index $i\ge 2$ such that $\bA_i$
    is isomorphic to $\bK_2^\ast$ or to $\overline{\bK_2^\ast}$, 
    and $A_{k-1}$ contains a vertex $a$  incident
    to loop of opposite colour to the loops in $\bA_k$.
    Up to colour symmetry, we may further assume that
    $\bA_k\cong \bK_2^\ast$ and so $A_{k-1}$ contains
    a blue loop. First consider the case when 
    $\bA_{k-1}$ is a blue monochromatic component with
    two vertices $a$ and $b$. In particular, $ab\in R(\bH)$
    and $cb\in R(\bH)$ for each $c\in A_k$. Also, 
    $cb\not\in R(\bH)$ for each $c\in H\setminus (A_k\cup\{a\})$,
    and so $R(x,b)$ pp-defines the subset $A_k\cup \{a\}$.
    It is straightforward to observe that the binary
    relation $E(x,y):=\exists z\in A_k\cup\{a\}.\; B(x,z)\land R(z,y)$
    is the inequality relation in $A_k\cup\{a\}$. In other
    words, $(A_k\cup\{a\}, E)\cong \bK_3$, and hence 
    $(\bH,h_1,\dots, h_n)$ pp-defines $\bK_3$, and
    so $\bH$ pp-constructs $\bK_3$.

    From now on, we assume that either $\bA_{k-1}$ is a blue
    loop, or a bichromatic $\ast$-edge. Let us denote
    by $0$ and $1$ the vertices in $A_k$, and by $2$
    the unique vertex in $A_{k-1}$ incident to a blue loop. 
    In particular, $01, 22\in B(\bH)$ and $00,11,02,12\in R(\bH)$
    (and no other edges are
    induced by $\{0,1,2\}$).  Proceeding by contradiction,
    suppose that $\bH$ does not pp-construct $\bK_3$. Hence,
    by Lemma~\ref{lem:cyclic-power} there is a homomorphism
    $f\colon \Cyc_p(\bH)\to \bH$ where $p$ is a prime larger
    than $|H|$. We further choose $p$ to be congruent
    to $1$ modulo $4$, and let $m$ be such that $p = 4m+1$.
    We follow the notation introduced in the paragraph before 
    Proposition~\ref{prop:2-element-constructions}.
    Through the rest of the proof we fix the following elements
    in $\Cyc_p(\bH)$: $a:=(0212)^m0$,  $b= (1202)^m1$, and
    $c=(0110)^m0$. It follows from the definition $\sim_p$ that 
    $aa,bb,ac,bc\in R(\Cyc_p(\bH))$ and $ab\in B(\Cyc_p(\bH))$.
    Since $A_k$ is pp-definable in $(\bH,h_1,\dots, h_n)$,
    it must be the case that $f(c)\in \{0,1\}$.
    Now, consider the following case distinction.
    \begin{itemize}
        \item Every monochromatic red component
        $\bA_j$ with $j < k$ consists of a single vertex.
         In this case, since $aa,bb\in R(\Cyc_p(\bH))$,
         $ab\in B(\Cyc_p(\bH))$, and
        $\bA_k$ is the only red alternating component
        with a blue edge, it must be the case that
        $\{f(a),f(b)\} =\{0,1\}$. Using the
        fact that $ac,bc\in R(\bH)$ and that $f(c)\in\{0,1\}$,
        we conclude that $0$ and $1$ are connected by a red
        edge in $\bH$. This contradicts the assumption
        that $\bA_k\cong\bK_2^\ast$.
        \item There is some red monochromatic component
        $\bA_j$ with $j <k$ and $|A_j| = 2$. Let $\ell$ be
        the maximum integer $j < k$
        such that $\bA_j$ is a red alternating component. 
        Let $A_\ell = \{3,4\}$. In particular, 
        $33,44\in R(\bH)$ and $34\in B(\bH)$. Let $S\subseteq H$
        be the subset defined by $B(x,2)\land \exists z.\; z\in A_\ell\land B(x,z)$.
        Since $(\bH,h_1,\dots, h_n)$ pp-defines the subset $A_\ell$, 
        the set $S$ is pp-definable in $(\bH,h_1,\dots, h_n)$. 
        With similar arguments as before (e.g., using homogeneity
        of $\bA_j$ in $\bA_1\triangleleft_h \dots \triangleleft_h\bA_j$),
        it follows that $S = \{2,3,4\}\cup \{u\in H_B\cap A_j\colon \ell < j <k-1\}$.
        Let $d := (2323)^m2 \in \Cyc_p(\bH)$, and clearly $dd\in B(\Cyc_p(\bH))$. Since 
        $33,44\in R(\bH)$ and $S$ is pp-definable in $(\bH,h_1,\dots, h_n)$ and $2,3\in S$,
        the image $f(d)$ belongs to $S\setminus\{3,4\}$. Similarly, 
        the image of $d':=(2424)^m 2$ belongs 
        to $S\setminus\{3,4\}$.
        Finally, consider the elements $e:= (0303)^m0$ and $e':= (1414)^m 1$.
        It readily follows that $de,d'e'\in R(\Cyc_p(\bH))$ and $ee'\in B(\Cyc_p(\bH))$. 
        Since $f(d)\in A_j$ and $f(d')\in A_{j'}$ for some $j,j'< k$
        and $f(d)$ and $f(d')$ are incident to blue loops, it must be the
        case that $f(e),f(e') \in \{0,1\}$ because $de,d'e'\in R(\Cyc_p(\bH))$.
        Now, using the fact that $ee'\in B(\Cyc_p(\bH))$, we see
        that $\{f(e),f(e')\} =\{0,1\}$. We arrive to a contradiction
        by recalling that $f(c)\in\{0,1\}$ (see discussion before the case
        distinction), and noticing that $ce,ce'\in R(\bH)$. Indeed,
        this implies that $01\in B(\bH)\cap R(\bH)$, but $\bA_k\cong \bK_2^\ast$.
    \end{itemize}
\end{proof}

\subsection{A structural dichotomy}

Our main theorem now follows from Theorem~\ref{thm:homogeneous-Ptime}
and Lemma~\ref{lem:K2-case}.

\begin{theorem}\label{thm:main}
    For every reflexive complete $2$-edge-coloured  graph $\bH$ one of the following
    holds:
    \begin{itemize}
        \item  
        either the core of $\bH$
        admits a decomposition 
            \[
                \bA_1 \triangleleft_h \dots \triangleleft_h \bA_k,
            \]
    where $\bA_1$ is a structure on at most two elements, and $\bA_i$
    is a single vertex (with a loop), or a reflexive $\ast$-edge
    for each $2\le i\le k$ --- and in this case
    $\CSP(\bH)$ can be solved in polynomial time, or
    \item
     $\bH$ pp-constructs $\bK_3$, and $\CSP(\bH)$ is  $\NP$-complete.
    \end{itemize}
\end{theorem}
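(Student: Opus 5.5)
The plan is to reduce everything to the core and then combine the tractability side (Theorem~\ref{thm:homogeneous-Ptime}) with the hardness side (Lemma~\ref{lem:K2-case}) through the finite-domain dichotomy. Let $\bH$ be a reflexive complete $2$-edge-coloured graph and let $\bH_c$ be its core. Since $\bH_c$ is an induced substructure of $\bH$, it is again reflexive and complete. Moreover, $\CSP(\bH)=\CSP(\bH_c)$. Because $\bH$ and $\bH_c$ are homomorphically equivalent, $\bH$ pp-constructs $\bK_3$ if and only if $\bH_c$ does; indeed pp-constructibility is invariant under homomorphic equivalence, since the identity pp-power of one is homomorphically equivalent to the other. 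So it suffices to treat $\bH_c$ and then transfer the conclusions back to $\bH$.

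First, suppose $\bH_c$ admits a decomposition $\bA_1\triangleleft_h\dots\triangleleft_h\bA_k$ as in the statement. If $|A_1|=2$, Theorem~\ref{thm:homogeneous-Ptime} applies directly and gives a polynomial-time algorithm (even for the list version) for $\CSP(\bH_c)=\CSP(\bH)$. If $|A_1|=1$, either $k=1$ and the problem is trivial, or we regroup $\bA_1\triangleleft_h\bA_2$ as a new first block. When $\bA_2$ is a single vertex this regrouped block has two elements. When $\bA_2$ is an $\ast$-edge, we instead peel off blocks from the top using Observation~\ref{obs:universal-vertex} and Lemmas~\ref{lem:hom-mono-edge}--\ref{lem:hom-het-edge}: each removal of a homogeneous top block is a polynomial-time reduction between list problems, and the process bottoms out at a structure on at most two elements, whose list CSP is polynomial. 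Alternatively, Theorem~\ref{thm:bounded_width} shows the problem is in Datalog, since its proof never uses $|A_1|=2$ beyond $\bA_1$ being preserved by a conservative majority, which holds for a single vertex.

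Second, suppose $\bH_c$ admits no such decomposition. We claim $\bH_c$ pp-constructs $\bK_3$. Otherwise $\bH_c$, being a core that does not pp-construct $\bK_3$, admits a decomposition as in Theorem~\ref{thm:homogeneous-Ptime} by Lemma~\ref{lem:K2-case}. That decomposition has $|A_1|=2$ and is therefore of the required form, a contradiction. Hence $\bH_c$, and therefore $\bH$, pp-constructs $\bK_3$. By Theorem~\ref{thm:pp-constructions} together with the NP-completeness of 3-Colourability, $\CSP(\bH)$ is NP-hard, and it lies in NP trivially. This establishes the dichotomy: at least one alternative always holds.

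The only delicate point is a mismatch of hypotheses. The statement allows $\bA_1$ to have at most two elements, while Theorem~\ref{thm:homogeneous-Ptime} requires exactly two; the first-case argument handles $|A_1|=1$ by regrouping or peeling. The genuinely hard content, namely that every core failing to pp-construct $\bK_3$ has the required decomposition, is entirely contained in Lemma~\ref{lem:K2-case}, so this step is bookkeeping. If one also wants the two cases to be exclusive, this holds under $\PO\neq\NP$ by the above; unconditionally it follows from Theorem~\ref{thm:bounded_width}, because Datalog cannot solve $\CSP(\bK_3)$, so a structure solved by Datalog cannot pp-construct $\bK_3$.
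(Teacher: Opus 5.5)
Your proposal is correct and follows the paper's route. You pass to the core, obtain tractability from Theorem~\ref{thm:homogeneous-Ptime}, and obtain hardness from the contrapositive of Lemma~\ref{lem:K2-case}. Your extra bookkeeping for the case $|A_1|=1$ (peeling homogeneous top blocks, or appealing to Theorem~\ref{thm:bounded_width}) fills a small mismatch between the statement and Theorem~\ref{thm:homogeneous-Ptime} that the paper leaves implicit.
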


The following statement is implied by this theorem and Proposition~\ref{prop:recognizing_tractables_in_ptime}.

\begin{corollary}\label{cor:dichotomy-P}
    For a reflexive complete $2$-edge-coloured graph $\bH$, one can check in
    polynomial time whether $\CSP(\bH)$ is in $\PO$ or $\NP$-complete.
\end{corollary}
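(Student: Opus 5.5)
The plan is to combine the main theorem with the recognition procedure. On input $\bH$, a reflexive complete $2$-edge-coloured graph, run the algorithm from Proposition~\ref{prop:recognizing_tractables_in_ptime}. It decides in time polynomial in $|H|$ whether the core of $\bH$ admits a decomposition $\bA_1\triangleleft_h\dots\triangleleft_h\bA_k$ in which $\bA_1$ has at most two elements and every later $\bA_i$ is a single looped vertex or a reflexive $\ast$-edge. If it accepts, the first item of Theorem~\ref{thm:main} applies, so $\CSP(\bH)$ is in $\PO$; this uses that $\CSP(\bH)$ coincides with the CSP of its core together with Theorem~\ref{thm:homogeneous-Ptime}. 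If it rejects, the first alternative of Theorem~\ref{thm:main} fails, so the second holds: $\bH$ pp-constructs $\bK_3$, and $\CSP(\bH)$ is $\NP$-complete.

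The key point is that the two outcomes of Theorem~\ref{thm:main} are exhaustive, so the recognition test alone settles which case we are in. Under $\PO\neq\NP$ the two outcomes are also exclusive, so the answer "in $\PO$" versus "$\NP$-complete" is well defined. Even without that assumption, the algorithm outputs a correct label in each case.

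The step that needs care is the proof of Proposition~\ref{prop:recognizing_tractables_in_ptime}. It must not compute the core explicitly, since doing so naively may be expensive. It relies on two facts.
\begin{itemize}
\item The CSP of any reflexive complete graph on at most two vertices is polynomial, so one can test $\bH\to\bK$ for each small candidate $\bK$ in polynomial time.
\item Removing a minimal homogeneous set that is a single vertex or a $\ast$-edge commutes with taking cores.
\end{itemize}
I would simply invoke that proposition as stated. The only remaining check is that the condition it tests is exactly the one in the first item of Theorem~\ref{thm:main}. Both state the same decomposition with $\bA_1$ on at most two elements, so the corollary follows directly.
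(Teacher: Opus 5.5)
Your argument is exactly the paper's: the corollary follows from the exhaustive dichotomy of Theorem~\ref{thm:main} together with the polynomial-time test of Proposition~\ref{prop:recognizing_tractables_in_ptime}, which is used as a black box. Your remarks on how that proposition is proved are not needed, and your second bullet is false. To see this, let $\bH$ consist of a blue loop $b$, then (homogeneously concatenated) an $\ast$-edge $E$ on two red-looped vertices, then a copy of $\bP_3^\ast$. Here $E$ is a minimal homogeneous set and $\bP_3^\ast$ folds onto $E$, so the core of $\bH$ is the substructure on $\{b\}\cup E$, which admits the decomposition. The core of $\bH\setminus E$, by contrast, is $b$ plus a copy of $\bK_2^\ast$, which does not. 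So peeling homogeneous sets off $\bH$ itself, rather than off its core, can reject an $\bH$ whose CSP is tractable, and any correct recognition procedure has to work with the core.
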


\section{A CSP approach to Graph Sandwich Problems}
\label{sec:CSP-SP}

Very recently, Bodirsky and Guzm\'an-Pro introduced a CSP approach to 
Graph Sandwich Problems~\cite{bodirskySODA2026}.
They noticed that 
for several graph classes $\mathcal C$, the SP for $\mathcal C$
can be naturally modelled as an infinite-domain CSP.
We now observe that SP for matrix partitions fall in this scope.

In Section~\ref{sec:full-hom} we introduced the notions of
twins, and blows-ups. A pair of different vertices $u$ and $v$
of $\bG$ are \emph{co-twins} if $uv\in E(\bG)$ and for every $w\in G$
there is an edge $uw$ in $\bG$ if and only if there is an edge $vw$ in $\bG$.
We say that $\bG'$ is a \emph{co-blow-up} of $\bG$ if $\bG'$ can
be obtained from $\bG$ by iteratively adding co-twins.
Now, consider a partition $(I,C)$ of $G$ (with possibly one empty part).
A \emph{split blow-up} of $\bG$ \emph{with respect to} $(I,C)$ is a graph $\bG'$ obtained 
from $\bG$ by iteratively replacing vertices $u\in I$ with a set of twins, and vertices 
$v\in C$ with a set of co-twins (so $I$ stands for ``independent'' because a vertex
$v\in I$ is blown-up to an independent set, and $C$ stands for ``complete''). 
We say that a class $\mathcal C$ is preserved by \emph{split blow-ups} if for every
graph $G$ in $\mathcal C$, there is a partition $(I,C)$ (with possibly one empty part) 
of the vertices of $G$ such that any split blow-up of $G$ with respect to $(I,C)$
belongs to $\mathcal C$ as well. 

It is straightforward to observe that for every matrix $M$ the class of
graphs that admit an $M$-partition is preserved under split-blow ups. 
Moreover, it is closed under induced subgraphs, and has the 
\emph{joint-embedding property}, i.e., if
$\bG$ and $\bH$ admit an $M$-partition, then there is a graph
$\mathbb F$ that contains $\bG$ and $\bH$ as induced subgraphs
and $\mathbb F$ admits an $M$-partition. Hence, it follows from
Proposition~3 in~\cite{bodirskySODA2026} that the SP
for $M$-partitions is the CSP of a $2$-edge-coloured complete graph $\mathbb M$.
Here we argue that these sandwich problems are  polynomial-time
equivalent to $\CSP(\mathbb M)$.

Consider the following equivalent representation 
of the SP for a class $\mathcal C$. The input is a triple
$(V,E,N)$ where $N$ is a set of non-edges and $E \cap N = \varnothing$. 
The task is to find an edge set $E'$ such that $E\subseteq E'$,
that $E'\cap N = \varnothing$, and $(V,E')$ belongs to $\mathcal C$. 
It is straightforward to observe that this definition of sandwich
problems is equivalent to the definition considered in the Introduction
(see also~\cite{golumbicJA19}).
In this setting, we think of $(V,E,N)$ as a $2$-edge-coloured graph
where $E$ is the set of blue edges, and $N$ the set of red edges.

\begin{lemma}\label{lem:SP->CSP}
    For every symmetric matrix $M$ over $\{0,1,\ast\}$, the following statements hold.
    \begin{enumerate}
        \item The Sandwich Problem for $M$-partitions is polynomial-time equivalent
        to $\CSP(\mathbb M)$.
        \item The Sandwich Problem for list $M$-partitions is polynomial-time equivalent to the list
        version of $\CSP(\mathbb M)$.
    \end{enumerate}
\end{lemma}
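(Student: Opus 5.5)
We will prove both items through a single translation between sandwich instances and $2$-edge-coloured graphs, using the $(V,E,N)$ formulation introduced just before the statement. The forward direction reduces the Sandwich Problem to $\CSP(\mathbb M)$: an instance $(V,E,N)$ is read as the $2$-edge-coloured graph $\bG$ with $B(\bG)=E$ and $R(\bG)=N$, and the same lists are kept in the list version. The key claim is that $(V,E,N)$ is a yes-instance if and only if $\bG\to\mathbb M$, with the homomorphism respecting lists when lists are present.

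For the forward implication of this claim, fix $E'$ with $E\subseteq E'$, $E'\cap N=\varnothing$, and an $M$-partition $f$ of $(V,E')$. Every blue edge $uv$ of $\bG$ lies in $E'$, so $m_{f(u)f(v)}\in\{1,\ast\}$, and hence $f(u)f(v)\in B(\mathbb M)$. This also covers $u\neq v$ with $f(u)=f(v)$, because $\mathbb M$ is reflexive. Symmetrically, every red edge of $\bG$ is a non-edge of $E'$, so $f(u)f(v)\in R(\mathbb M)$.

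For the converse, let $f\colon\bG\to\mathbb M$ and define $E':=\{uv\colon u\neq v,\ m_{f(u)f(v)}\in\{1\}\}\cup E$. In words, we add every pair forced to be an edge and keep $E$; pairs whose entry is $\ast$ and which are not in $E$ become non-edges. We then check three things.
\begin{itemize}
\item $E'$ avoids $N$. If $uv\in N$, then $f(u)f(v)\in R(\mathbb M)$, so $m_{f(u)f(v)}\in\{0,\ast\}$ and the pair is never forced. It is also not in $E$, because $E\cap N=\varnothing$.
\item $f$ is an $M$-partition of $(V,E')$. Edges of $E$ map to blue edges, and forced edges have entry $1$. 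A non-edge of $E'$ has entry $0$ or $\ast$.
\item Lists are untouched, so item 2 follows in the same way.
\end{itemize}
Loops in $\bG$ do not occur in sandwich instances, so no care is needed there.

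The reverse reduction, from $\CSP(\mathbb M)$ to the Sandwich Problem, is where the only real work lies. An arbitrary input $\bG$ of $\CSP(\mathbb M)$ may contain loops or $\ast$-edges, which have no counterpart in $(V,E,N)$. If $\bG$ contains a vertex with a red loop and a blue loop, or any blue loop, red loop, or $\ast$-edge, we handle it by preprocessing.
\begin{itemize}
\item A blue loop at $v$ restricts $v$ to $H_B$. We express this with a gadget: attach a fresh vertex $v'$ and replace the loop by the constraint that $vv'$ is blue and "$v'$ behaves like $v$". This is awkward.
\item The cleaner route is Theorem~\ref{thm:large-girth}: $\CSP(\mathbb M)$ is polynomial-time equivalent to its restriction to inputs of girth greater than $2$. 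Such inputs have no loops and no $\ast$-edges, and so they are exactly the triples $(V,E,N)$ with $E\cap N=\varnothing$.
\end{itemize}
By the claim above, the identity map then reduces this restricted CSP to the Sandwich Problem.

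For the list version we need the list analogue of the girth reduction. We expect this to be the main obstacle. One option is to check that the Sparse Incomparability construction (\cite{Kun}) is compatible with unary relations, since the list version is $\CSP$ of $\mathbb M$ expanded by all unary predicates. Alternatively, we handle loops and $\ast$-edges directly. A vertex $v$ carrying a blue loop can simply have its list intersected with $H_B$ and the loop deleted; red loops are treated symmetrically. For an $\ast$-edge $uv$, we delete the edge and filter the lists by arc consistency: remove from $L(u)$ every value $a$ for which no $b\in L(v)$ makes $ab$ an $\ast$-edge of $\mathbb M$, and do the same for $L(v)$. Arc consistency on its own loses the joint constraint between $u$ and $v$, so instead we would replace $v$ by a pair-vertex encoding. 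If that becomes messy, we fall back on the unary-expanded Sparse Incomparability Lemma, which handles lists uniformly.
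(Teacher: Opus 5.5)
Your proposal is correct and follows the paper's proof. The paper reads $(V,E,N)$ as the loopless, $\ast$-edge-free $2$-edge-coloured graph with $B=E$ and $R=N$ (your explicit choice of $E'$ fills in the step the paper calls straightforward), and it obtains the converse reduction from Theorem~\ref{thm:large-girth} with girth $>2$. For the list version, your fallback is exactly the paper's ``similar arguments'' and needs no further checking: the Sparse Incomparability Lemma applies to $\mathbb M$ expanded by all unary predicates, and unary constraints cannot create short cycles, so the loop-filtering and arc-consistency detour can be dropped.
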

\begin{proof}
    It is straightforward to observe that if $(V,E,N)$ is an input
    to the SP for the $M$-partition, then $(V,E,N)$ is a yes-instance
    if and only if the $2$-edge-coloured graph $\bH := (V, R(\bH) = N, B(\bH)= E)$
    maps homomorphically to $\mathbb M$.  Moreover, notice that this is a reduction
    is a one-to-one reduction, where its image is the class of loopless
    $2$-edge-coloured graphs with no $\ast$-edge.
    Hence, if $(V,R,B)$ is a loopless $2$-edge-coloured graph
    with no $\ast$-edges, then $(V, E= B, N =R)$ is a yes-instance
    of the SP for the $M$-partition if and only if  $(V,B,R)\to \mathbb M$.
    Finally, it follows from the Sparse Incomparability Lemma (Theorem~\ref{thm:large-girth})
    that $\CSP(\mathbb M)$ is polynomial-time equivalent to $\CSP(\mathbb M)$ restricted to
    instances of girth at least $3$, i.e., loopless $2$-edge-coloured graphs with no
    $\ast$-edge. This proves the first itemized claim, and with similar
    arguments one can prove the second one. 
\end{proof}

It now follows from Theorem~\ref{thm:CSP-dichotomy}
that SP problem for matrix partitions exhibit a P vs.\ NP-complete dichotomy. 
Moreover,  Theorem~\ref{thm:main} presents a structural dichotomy 
for this family of problems. 

\begin{corollary}
    Sandwich Problems for matrix partitions exhibit a $\PO$
    vs.\ $\NP$-complete dichotomy (see also Theorem~\ref{thm:main}).
\end{corollary}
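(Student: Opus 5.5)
The plan is to combine Lemma~\ref{lem:SP->CSP} with the finite-domain CSP dichotomy, Theorem~\ref{thm:CSP-dichotomy}. Fix a symmetric $\{0,1,\ast\}$-matrix $M$ and its reflexive complete $2$-edge-coloured graph $\mathbb M$. By the first item of Lemma~\ref{lem:SP->CSP}, the Sandwich Problem for $M$-partitions is polynomial-time equivalent to $\CSP(\mathbb M)$. It therefore suffices to show that $\CSP(\mathbb M)$ is either in $\PO$ or $\NP$-complete, and that the Sandwich Problem inherits the same classification.

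For the dichotomy I apply Theorem~\ref{thm:main} to $\bH=\mathbb M$. If the core of $\mathbb M$ has the homogeneous decomposition of the first item, then $\CSP(\mathbb M)$ is in $\PO$. Composing with the polynomial-time reduction from the Sandwich Problem to $\CSP(\mathbb M)$ then places the Sandwich Problem for $M$-partitions in $\PO$. Otherwise $\mathbb M$ pp-constructs $\bK_3$ and $\CSP(\mathbb M)$ is $\NP$-hard. Using the reduction in the other direction, from $\CSP(\mathbb M)$ to the Sandwich Problem, the Sandwich Problem is $\NP$-hard as well.

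Membership in $\NP$ is immediate, since a certificate is the edge set $E$ together with an $M$-partition $f$, and both can be checked in polynomial time. Hence the Sandwich Problem is $\NP$-complete in the second case. The same argument applied to the second item of Lemma~\ref{lem:SP->CSP} gives the analogous dichotomy for list Sandwich Problems. For that case I use the list version of Theorem~\ref{thm:homogeneous-Ptime} and Theorem~\ref{thm:bounded_width}, and for hardness note that the list version is at least as hard as the plain version.

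The only delicate point is that the equivalence in Lemma~\ref{lem:SP->CSP} must be a genuine two-way polynomial reduction. The direction from $\CSP(\mathbb M)$ to the Sandwich Problem requires removing loops and $\ast$-edges from instances, and this relies on Theorem~\ref{thm:large-girth}. Since that step is already contained in the lemma, I expect no real obstacle. The structural description of the tractable side is exactly the one given by Theorem~\ref{thm:main}, and by Corollary~\ref{cor:dichotomy-P} it can be decided in polynomial time in $|M|$.
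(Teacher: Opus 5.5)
Your argument for the stated corollary is correct and is essentially the paper's: Lemma~\ref{lem:SP->CSP} transfers the complexity of $\CSP(\mathbb M)$ to the Sandwich Problem. The paper invokes Theorem~\ref{thm:CSP-dichotomy} directly for the bare dichotomy and points to Theorem~\ref{thm:main} only for the structural description, whereas you go through Theorem~\ref{thm:main}, which rests on the same dichotomy.

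Your aside on list Sandwich Problems, however, does not follow from the tools you cite. Theorems~\ref{thm:homogeneous-Ptime} and~\ref{thm:bounded_width} require $\mathbb M$ itself, not its core, to admit the decomposition, and hardness inherited from the non-list version applies only when the core fails to decompose. So the case where $\mathbb M$ is not a core but its core decomposes is left uncovered; this is the open Problem~\ref{prob:list-version}. A bare $\PO$ vs.\ $\NP$-complete dichotomy for the list version does hold, but it comes from Theorem~\ref{thm:CSP-dichotomy} applied to $\mathbb M$ expanded by all unary relations (Remark~\ref{rmk:list-wnu4}), not from the structural results.
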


\subsection{Full-homomorphisms}

As mentioned in Section~\ref{sec:full-hom}, a natural subclass
of matrix partition problems corresponds to full-homomorphism problems. 
The following statement is an immediate consequence of Observation~\ref{obs:K2+K1},
Proposition~\ref{prop:full-hom}, and Lemma~\ref{lem:SP->CSP}.

\begin{corollary}\label{cor:full-hom}
    For every loopless graph $\bH$ 
    one of the following statements holds:
    \begin{itemize}
        \item $\bH$
        admits a full-homomorphism to $\bK_1 + \bK_2$,
        and in this case the SP for the full-homomorphism to $\bH$, and its list
        variant are polynomial-time solvable;
        \item otherwise, $\bH$
        contains one of $\bK_3$, $2\bK_2$, $\bP_4$ as an induced subgraph,
        and in this case the SP for the full-homomorphism problem to $\bH$ is $\NP$-complete.
    \end{itemize}
\end{corollary}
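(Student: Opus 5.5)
The plan is to reduce the SP for full-homomorphisms to $\bH$ to a CSP and then apply the results of Section~\ref{sec:full-hom}. Let $M$ be the adjacency matrix of the loopless graph $\bH$; then $M$ is a $\{0,1\}$-matrix with zero diagonal. A graph admits an $M$-partition if and only if it admits a full-homomorphism to $\bH$. Hence, by Lemma~\ref{lem:SP->CSP}, the SP for full-homomorphisms to $\bH$ (and its list variant) is polynomial-time equivalent to $\CSP(\mathbb M)$ (resp.\ its list version). Here $\mathbb M = \bH^\ast$ is reflexive and complete, has no $\ast$-edges and no $\ast$-loops, and every vertex carries a red loop. This is exactly the setting of Proposition~\ref{prop:full-hom}.

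\emph{Tractable case.} Suppose $\bH$ admits a full-homomorphism $g$ to $\bK_1+\bK_2$. Then $g$ is also a full-homomorphism $\bH^\ast\to(\bK_1+\bK_2)^\ast$ of $2$-edge-coloured graphs: edges go to blue edges and non-edges to red edges, and loops are red on both sides. By Lemma~\ref{lem:full-hom-LAC}, the list version of $\CSP(\bH^\ast)$ reduces to the list version of $\CSP((\bK_1+\bK_2)^\ast)$. The latter is polynomial by Theorem~\ref{thm:homogeneous-Ptime}, since $(\bK_1+\bK_2)^\ast=\bK_2^\ast\triangleleft_h \bK_1^\ast$. Pulling this back through Lemma~\ref{lem:SP->CSP} makes both the SP and its list variant polynomial-time solvable.

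\emph{Hard case.} Otherwise, by Observation~\ref{obs:K2+K1}, applied with $\bH$ in place of $\bG$ (it characterises graphs admitting a full-homomorphism to $\bK_1+\bK_2$), $\bH$ contains an induced $\bK_3$, $2\bK_2$ or $\bP_4$. The contrapositive part of Proposition~\ref{prop:full-hom} then shows that $\bH^\ast$ pp-constructs $\bK_3$. Indeed, that proof exhibits via $\delta_E(x,y)=\exists z.\,R(x,z)\land B(y,z)$ a loopless digraph containing a smooth subdigraph not mapping to a directed cycle, and then applies Theorem~\ref{thm:HH-digraphs}. So $\CSP(\bH^\ast)$ is NP-hard by Theorem~\ref{thm:pp-constructions}. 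Membership in NP is clear, so by Lemma~\ref{lem:SP->CSP} the SP is NP-complete.

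The only step needing care, and the one I expect to be the main obstacle, is checking that Proposition~\ref{prop:full-hom} applies verbatim: no $\ast$-loops and all loops red. I must also make sure that in each of the three induced cases the digraph defined by $\delta_E$ really contains a smooth non-cycle-mappable part. For $\bK_3^\ast$ this is a symmetric triangle. For $(2\bK_2)^\ast$ and $\bP_4^\ast$, I would explicitly write out the arcs given by ``blue to $z$, red from $z$'' and verify smoothness together with the existence of a symmetric arc plus an odd cycle, or a $2$-cycle and a $3$-cycle, which obstructs any map to $\vec{\bC}_n$.
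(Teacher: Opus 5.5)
Your proposal is correct and is essentially the paper's own argument: the paper states this corollary as an immediate consequence of Observation~\ref{obs:K2+K1}, Proposition~\ref{prop:full-hom} and Lemma~\ref{lem:SP->CSP}. Your tractable case unpacks the end of the proof of Proposition~\ref{prop:full-hom} (Lemma~\ref{lem:full-hom-LAC} with Theorem~\ref{thm:homogeneous-Ptime}), and your hard case unpacks its $\delta_E$ argument with Theorem~\ref{thm:HH-digraphs}.

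The checks you flag do go through. $\bH^\ast$ has no $\ast$-edges, which is what the proposition's proof actually needs for the $\delta_E$-digraph to be loopless. That digraph is the complete symmetric digraph on $\bK_3^\ast$ and on $(2\bK_2)^\ast$. On $\bP_4^\ast$ (path $a,b,c,d$) it is smooth and contains the $2$-cycle $a\leftrightarrow b$ and the $3$-cycle $a\to c\to d\to a$.
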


It turns out that we can extend the previous structural classification
of SPs of full-homomorphisms of loopless graphs to arbitrary graphs.
This structural classification is closely related to 
\emph{threshold graphs}. Threshold graphs have several equivalent
definitions~\cite{mahaved1995}. In particular, a loopless graph $\bH$
is a threshold if and only if $\bH$ can be constructed from a single vertex by
repeatedly adding an independent vertex or a dominant vertex.

For graph $\bH$  with possible loops, we say  that a vertex $v$ is 
a \emph{dominant loop} if $vv\in E(\bH)$, and for each $u\in H$
there is an edge $uv\in E(\bH)$. Recall that every graph $\bH$
     contains
an induced subgraph $\bH'$ (unique up-to isomorphism) such that $\bH'$
is point-determining, and there is a full-homomorphism from $\bH$
     to $\bH'$
(and vice-versa).
We call $\bH'$ the \emph{point-determining core} of $\bH$. 
We write $\mathbb L$ to denote the graph consisting of one vertex incident to a loop.

\begin{corollary}\label{cor:full-hom-non-simple}
    For every graph $\bH$  (with possible loops) whose point-determining core
    is $\bH'$, one of the
    following statements holds:
    \begin{itemize}
        \item either $\bH'$ 
        can be constructed by
        repeatedly adding a loopless isolated vertex or a dominant loop, starting from 
        a single vertex, from $2\mathbb L$, or from $\bK_2$ --- and in this
        case, the (list version of the) SP for the full-homomorphism to $\bH$ 
        can be solved in polynomial time; or
        \item otherwise, the (list version of the) SP for the full-homomorphism to $\bH$ 
        is $\NP$-complete.
    \end{itemize}
\end{corollary}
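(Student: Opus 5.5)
The plan is to reduce the corollary to Theorem~\ref{thm:main} via Lemma~\ref{lem:SP->CSP}. If $M$ is the adjacency matrix of $\bH$ (a $\{0,1\}$-matrix, with diagonal entries recording loops), then $M$-partitions are exactly full-homomorphisms to $\bH$. Moreover $\mathbb M=\bH^\ast$, the reflexive complete $2$-edge-coloured graph with a blue loop at each looped vertex, a red loop at each loopless vertex, blue edges for edges and red edges for non-edges. Since $M$ has no $\ast$-entries, $\bH^\ast$ has no $\ast$-edges and no $\ast$-loops. By Lemma~\ref{lem:SP->CSP}, the SP (resp.\ list SP) for full-homomorphisms to $\bH$ is polynomial-time equivalent to $\CSP(\bH^\ast)$ (resp.\ its list version). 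It therefore suffices to classify $\CSP(\bH^\ast)$ and its list version.

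\emph{Step 1: the core of $\bH^\ast$ is $(\bH')^\ast$.} Because $\bH^\ast$ is complete and has no $\ast$-edges or $\ast$-loops, every homomorphism $\bH^\ast\to\bH^\ast$ is a full-homomorphism of the underlying graph with loops. The full-homomorphisms $\bH\to\bH'\to\bH$ give homomorphic equivalence of $\bH^\ast$ and $(\bH')^\ast$. Any endomorphism of $(\bH')^\ast$ is a full-homomorphism of a point-determining graph, hence injective, because two vertices with the same image would be twins. So $(\bH')^\ast$ is a core. In addition, the full-homomorphism $\bH^\ast\to(\bH')^\ast$ and Lemma~\ref{lem:full-hom-LAC} reduce the list version for $\bH^\ast$ to that for $(\bH')^\ast$, and the converse reduction is trivial since $(\bH')^\ast$ is an induced substructure.

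\emph{Step 2: translate the decomposition of Theorem~\ref{thm:main}.} Since $(\bH')^\ast$ has no $\ast$-edges, a decomposition as in Theorem~\ref{thm:main} uses only single vertices for $i\ge2$. A single vertex $v$ is homogeneous and appended last exactly in two cases: $v$ has a red loop and is joined to all earlier vertices by red edges, i.e.\ $v$ is a loopless isolated vertex; or $v$ has a blue loop and blue edges to everything, i.e.\ $v$ is a dominant loop. For the base $\bA_1$ on at most two vertices, point-determination of $\bH'$ (inherited by induced subgraphs is not needed; only that the decomposition lives inside the core) rules out some cases. Two nonadjacent loopless vertices are twins, and so are two adjacent looped vertices. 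The remaining two-element options are $\bK_2$, $2\mathbb L$, and ``loop plus loopless vertex''. The last of these already arises from a single vertex by adding an isolated loopless vertex or a dominant loop. Hence the core decomposes as in Theorem~\ref{thm:main} iff $\bH'$ is built from a single vertex, $2\mathbb L$ or $\bK_2$ by repeatedly adding loopless isolated vertices or dominant loops.

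\emph{Step 3: conclude.} In the positive case, Theorem~\ref{thm:homogeneous-Ptime} (or Theorem~\ref{thm:bounded_width}) solves the list version of $\CSP((\bH')^\ast)$ in polynomial time, hence also the list SP by Step~1. Otherwise Theorem~\ref{thm:main} shows that $\bH^\ast$ pp-constructs $\bK_3$, so $\CSP(\bH^\ast)$ is NP-complete. Then the SP is NP-complete, and its list version is at least as hard. The main care point is Step~2: checking that the two-vertex base cases and the ordering of the homogeneous concatenation correspond exactly to the stated constructions, in particular that the mixed loop/loopless two-vertex base is subsumed.
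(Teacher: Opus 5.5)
Your proposal is correct and follows essentially the same route as the paper. You pass to $\CSP(\mathbb M)$ via Lemma~\ref{lem:SP->CSP}, invoke Theorem~\ref{thm:main} for the dichotomy, obtain hardness of the list version from hardness of the non-list version, and obtain list tractability from Theorem~\ref{thm:homogeneous-Ptime} together with Lemma~\ref{lem:full-hom-LAC} applied to the full-homomorphism onto the point-determining core. Your Steps~1 and~2, which identify $(\bH')^\ast$ as the core of $\bH^\ast$ and translate the homogeneous concatenation into the ``loopless isolated vertex / dominant loop'' construction, make explicit what the paper's short proof leaves implicit.
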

\begin{proof}
    Let $\mathbb M$ and $\mathbb M'$ be the $2$-edge-coloured graphs associated to the 
    adjacency matrices of $\bH$ and of  the point-determining core $\bH'$ 
    of $\bH$. 
    Since $\mathbb M$ has no $\ast$-edges, 
    the dichotomy for the non-list version follows from Theorem~\ref{thm:main} via
    Lemma~\ref{lem:SP->CSP}. The hardness of the list version claimed in the second item,
    follows from the hardness of the non-list version. Finally, the tractability of the
    list version claimed in the first item follows from Theorem~\ref{thm:homogeneous-Ptime}
    together with Lemma~\ref{lem:full-hom-LAC} because $\mathbb M$ admits a
    full-homomorphism to $\mathbb M'$.
\end{proof}

\subsection{List version}

Notice that Theorem~\ref{thm:homogeneous-Ptime}
together with Theorem~\ref{thm:main}, also present a complexity
classification for the list version of SP of matrix partitions, 
when the associated structure $\mathbb M$ is a core.
In particular,  we obtain the following consequence.

\begin{corollary}
    For every matrix $M$ such that $\mathbb M$ is a core, the 
    SP for the M-partition is polynomial-time equivalent to 
    its list variant.
\end{corollary}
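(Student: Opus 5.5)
The plan is to combine the tractability result for the list version, Theorem~\ref{thm:homogeneous-Ptime}, with the hardness side of the main classification, Theorem~\ref{thm:main}, and then translate both to sandwich problems through Lemma~\ref{lem:SP->CSP}.

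First, by Lemma~\ref{lem:SP->CSP}(1), the Sandwich Problem for $M$-partitions is polynomial-time equivalent to $\CSP(\mathbb M)$. By Lemma~\ref{lem:SP->CSP}(2), the list Sandwich Problem is polynomial-time equivalent to the list version of $\CSP(\mathbb M)$. It therefore suffices to show that $\CSP(\mathbb M)$ and its list version are polynomial-time equivalent whenever $\mathbb M$ is a core. One direction is trivial: every instance of $\CSP(\mathbb M)$ is an instance of the list version in which all lists equal $[n]$. So the list version is always at least as hard.

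For the other direction I apply the dichotomy of Theorem~\ref{thm:main} to $\mathbb M$. Since $\mathbb M$ is a core, the core of $\mathbb M$ is $\mathbb M$ itself. There are two cases.
\begin{enumerate}
\item $\mathbb M$ admits a decomposition $\bA_1\triangleleft_h\dots\triangleleft_h\bA_k$ as in the first item of Theorem~\ref{thm:main}. Then Theorem~\ref{thm:homogeneous-Ptime} (or Theorem~\ref{thm:bounded_width}) gives that the list version of $\CSP(\mathbb M)$ is solvable in polynomial time. Hence both problems are in P, and they are trivially equivalent (assuming, as usual, that nontrivial yes- and no-instances exist; otherwise the reduction is immediate anyway).
\item $\mathbb M$ pp-constructs $\bK_3$, so $\CSP(\mathbb M)$ is NP-complete. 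The list version lies in NP, so it reduces to the NP-complete problem $\CSP(\mathbb M)$. Again the two problems are equivalent.
\end{enumerate}

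A direct reduction is also available in the core case, via Theorem~\ref{thm:core_constants}: $\mathbb M$ pp-constructs $(\mathbb M,h_1,\dots,h_n)$, and in the core setting unary list constraints can be simulated. But the dichotomy argument above avoids the need to pp-define every subset.

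The only delicate point is the first case: Theorem~\ref{thm:homogeneous-Ptime} requires $\bA_1$ to be a $2$-element structure, whereas Theorem~\ref{thm:main} allows $\bA_1$ to have at most two elements. This must be reconciled. If $|A_1|=1$, then $\bA_1\triangleleft_h\bA_2$ can be absorbed by treating $\bA_2$ (when it is a single vertex) as the base. Alternatively, we run Algorithms~\ref{alg:universal-vertex}--\ref{alg:universal-split} down to a structure with at most two elements, whose list CSP is clearly polynomial. I expect this bookkeeping to be the main, though minor, obstacle.
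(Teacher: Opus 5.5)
Your argument is correct and essentially matches the paper's own proof. You pass through Lemma~\ref{lem:SP->CSP} to compare $\CSP(\mathbb M)$ with its list version, and since $\mathbb M$ is its own core, you use Theorem~\ref{thm:main} together with Theorem~\ref{thm:homogeneous-Ptime} on the tractable side and NP-completeness of both problems on the hard side. Your peeling argument for the $|A_1|=1$ case is exactly how Theorem~\ref{thm:homogeneous-Ptime} is proved, so it is not a real obstacle.

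You should drop the aside about a ``direct reduction'' via Theorem~\ref{thm:core_constants}. That theorem only supplies singleton unary relations, not arbitrary lists. Being a core does not in general let you pp-construct all unary relations, since that would require conservative polymorphisms; this is precisely why the dichotomy is needed here.
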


The following corollary follows from Lemma~\ref{lem:SP->CSP} and
the finite-domain dichotomy (Theorem~\ref{thm:CSP-dichotomy}).

\begin{corollary}
    For every matrix $M$,
    the list version for the $M$-partition
    problem is either in $\PO$ or $\NP$-complete.
\end{corollary}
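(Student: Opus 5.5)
The statement is: for every matrix $M$, the list $M$-partition problem is either in $\PO$ or $\NP$-complete. The plan is to realise the list problem as the CSP of a single finite structure and then apply the finite-domain dichotomy (Theorem~\ref{thm:CSP-dichotomy}).

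\textbf{Step 1: encode the list problem as a CSP.} Following Remark~\ref{rmk:list-wnu4}, consider the finite structure
\[
\mathbb M^{L} := (\mathbb M, \{U_X\}_{\varnothing\neq X\subseteq [n]}),
\]
that is, $\mathbb M$ expanded by one unary predicate for each non-empty subset of $[n]$. The list version of $\CSP(\mathbb M)$ is exactly $\CSP(\mathbb M^{L})$.

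\textbf{Step 2: reduce list $M$-partition to $\CSP(\mathbb M^{L})$.} Given a graph $\bG$ with lists $L(v)$, form $\nu(\bG)$ and put each vertex $v$ in the predicate $U_{L(v)}$. By Observation~\ref{obs:natural-reduction} applied vertex by vertex, $\bG$ has an $M$-partition respecting the lists if and only if this instance maps homomorphically to $\mathbb M^{L}$. This gives a log-space reduction, and hence membership in $\PO$ whenever $\CSP(\mathbb M^{L})$ is tractable.

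\textbf{Step 3: hardness direction.} Assume $\mathbb M^{L}$ pp-constructs $\bK_3$. Then $\CSP(\mathbb M^{L})$ is $\NP$-complete, and we need a reduction from $\CSP(\mathbb M^{L})$ back to list $M$-partition. An instance is a $2$-edge-coloured graph with unary lists. Vertices with loops, or with contradictory lists, can be handled locally, but $\ast$-edges and arbitrary non-complete instances are not directly list $M$-partition inputs. This is the expected obstacle. The sandwich route of Lemma~\ref{lem:SP->CSP} only yields equivalence with the list \emph{sandwich} problem: via Theorem~\ref{thm:large-girth} one restricts to loopless, $\ast$-edge-free instances, but those are sandwich inputs, not complete graphs.

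\textbf{Fallback for Step 3.} I would avoid that route and instead read the claim as a dichotomy obtained from general theory. List $M$-partition is itself the CSP of a finite structure: take $\mathbb M^{L}$ expanded with an additional binary relation $\neq$-complete, so that inputs are exactly complete $2$-edge-coloured graphs. Concretely, one adds to $\mathbb M$ the relation "blue or red edge" and the list predicates, and the inputs, after the standard trick that non-complete instances are padded, correspond exactly to graphs with lists. More simply, list $M$-partition is $\CSP$ of the structure with domain $[n]$, binary relations $E_1=\{(i,j):m_{ij}\in\{1,\ast\}\}$ and $E_0=\{(i,j):m_{ij}\in\{0,\ast\}\}$, and all unary predicates. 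An instance in which some pair is constrained by neither relation is still a valid CSP instance, and it reduces to a graph instance: each unconstrained pair can be left free, and the two problems are inter-reducible because every finite CSP instance of this template can be completed. Alternatively, and more cleanly, invoke the known fact that list $M$-partition problems are equivalent to CSPs of finite templates~\cite{federDAM154}.

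\textbf{Conclusion.} Theorem~\ref{thm:CSP-dichotomy} applied to this finite template gives that the list $M$-partition problem is in $\PO$ or $\NP$-complete. The main obstacle is justifying that the restriction of $\CSP(\mathbb M^{L})$ to complete instances loses no complexity, that is, the padding argument.
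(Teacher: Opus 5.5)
\textbf{There is a genuine gap, in whichever reading you take.} In the paper this corollary sits in the subsection on list versions of \emph{sandwich} problems and is derived from Lemma~\ref{lem:SP->CSP}. So it concerns the list Sandwich Problem for $M$-partitions, and for that problem the proof is two lines. Lemma~\ref{lem:SP->CSP}(2) makes it polynomial-time equivalent to the list version of $\CSP(\mathbb M)$, that is, to $\CSP$ of the finite structure $(\mathbb M,\{U_X\}_{\varnothing\neq X\subseteq[n]})$ (Remark~\ref{rmk:list-wnu4}). Theorem~\ref{thm:CSP-dichotomy} applies to this structure, the problem is clearly in $\NP$, and polynomial-time equivalence transfers both membership in $\PO$ and $\NP$-hardness. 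This is exactly the route you set aside in Step~3. Your Steps~1--2 are fine, but they only give the upper bound.

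For the literal reading you adopted, where inputs are graphs with lists, Step~3 fails and cannot be repaired as proposed.
\begin{itemize}
\item Your ``more simply'' template with $E_1$, $E_0$ and all unary predicates is just $\mathbb M^{L}$ again. Its instances may leave pairs unconstrained or constrain them twice.
\item ``Completing'' such an instance to a graph is precisely the sandwich decision, so it is not a reduction.
\item No such reduction from $\CSP(\mathbb M^{L})$ exists in general unless $\PO=\NP$. Take $M$ with $\mathbb M=\bA$ from Example~\ref{ex:pp-def}. Then $\CSP(\mathbb M^{L})$ is $\NP$-complete, since already $\CSP(\bA)$ is. Yet list $M$-partition for this $M$ is polynomial.
\item To see this, note that an $M$-partition $(V_0,V_1,V_2)$ forces the complement of $\bG$ to be the disjoint union of the clique $V_0$ and the complete bipartite graph between $V_1$ and $V_2$.
\item So one guesses which component of the complement (if any) is $V_0$. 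On the remaining vertices one solves a list $2$-colouring with equality/disequality constraints: adjacent vertices go to the same side, non-adjacent ones to different sides.
\end{itemize}
The appeal to \cite{federDAM154} does not close the gap. The paper cites it only for the one-directional natural reduction, and the example above shows the reverse direction via $\mathbb M^{L}$ is false. Moreover, an equivalence between every list $M$-partition problem and some finite-domain CSP would, with Theorem~\ref{thm:CSP-dichotomy}, yield a dichotomy for list matrix partition problems outright, so it cannot be assumed as a known fact.
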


However, Theorem~\ref{thm:main} does not yield a complexity classification
of the list version of CSPs of reflexive complete $2$-edge-coloured graphs (eq.\ of
the list version of  SPs of matrix partitions problems). Indeed,  all
we know from this result is the following: if $\bM$ is a core that admits
a decomposition as in Theorem~\ref{thm:main}, then the list version of the SP for
$M$-partitions if solvable in polynomial time; also, if the core of
$\mathbb M$ does not admit such a decomposition, then the list version
of the SP for $M$-partitions is NP-complete. However, all cases when
$\mathbb M$ is not a core, but its core admits such a decomposition do
not follow from Theorem~\ref{thm:main}. 

\begin{problem}\label{prob:list-version}
    Present a structural classification of the complexity of the list
    version of CSPs of reflexive complete $2$-edge-coloured  graphs. 
    Equivalently, classify the complexity of the list version of
    sandwich problems for matrix partitions.\footnote{Notice that the case
    when $\mathbb M$ has no $\ast$-edges nor $\ast$-loops, is settled
    by Corollary~\ref{cor:full-hom-non-simple}.}
\end{problem}

\section{Conclusions and open problems}
\label{sec:conclusion}

In this paper we presented a structural understanding of
the algorithmic power of the natural reduction of matrix partitions
to CSPs of reflexive complete $2$-edge-coloured graphs.
To do so, we presented a structural classification of the complexity of the latter class.
As a byproduct of this result, we presented a complexity classification
for the sandwich problem for matrix partitions problems. 
A natural quest is to pursue a structural complexity classification of
CSPs of all $2$-edge-coloured graphs. However, for every digraph CSP there
is a polynomial-time equivalent CSP of a $2$-edge-coloured
graph~\cite[Theorem 3.1]{brewsterDM340}.
So presenting a structural dichotomy for CSPs of $2$-edge-coloured graphs
might be as hard as presenting a structural classification of digraph CSPs
(which seems currently out of reach).
On the other hand, we believe that our structural classification can be extended
to (not necessarily reflexive) complete $2$-edge-coloured graphs,
and perhaps to reflexive (not necessarily complete) $2$-edge-coloured graphs.

\begin{problem}
    Present a $\PO$ vs.\ $\NP$-complete structural classification
    of CSPs of complete $2$-edge-coloured graphs $\bH$.
\end{problem}

\begin{problem}
    Present a $\PO$ vs.\ $\NP$-complete structural classification
    of CSPs of reflexive  $2$-edge-coloured graphs $\bH$.
\end{problem}

We also leave open a complexity classification of the list version 
of SP for $M$-partition problems (Problem~\ref{prob:list-version}).
Finally, analogous to the program
of characterising hereditarily hard digraphs (initiated in~\cite{bangjensenDM138},
concluded as an application from~\cite{BartoKozikNiven}) we leave
a classification of all $2$-edge-coloured graphs that hereditarily pp-construct
$\bK_3$ for future research (see Appendix~\ref{ap:HH} for a few more small examples).

\bibliographystyle{abbrv}
\bibliography{global.bib}

\def\cprime{$'$} \def\cprime{$'$} \def\cprime{$'$}
\begin{thebibliography}{10}

\bibitem{alonJAC8}
N.~Alon and T.~H. Marshall.
\newblock Homomorphisms of edge-colored graphs and coxeter groups.
\newblock {\em J. Algebraic Comb.}, 8(1):5–13, July 1998.

\bibitem{alvaradoAOR280}
J.~Alvarado, S.~Dantas, and D.~Rautenbach.
\newblock Sandwiches missing two ingredients of order four.
\newblock {\em Annals of Operations Research}, 280:47--63, 2019.

\bibitem{bangjensenSIDMA1}
J.~Bang-Jensen, P.~Hell, and G.~MacGillivray.
\newblock The complexity of colouring by semicomplete digraphs.
\newblock {\em SIAM Journal on Discrete Mathematics}, 1(3):281--298, 1988.

\bibitem{bangjensenDM138}
J.~Bang-Jensen, P.~Hell, and G.~MacGillivray.
\newblock Hereditarily hard h-colouring problems.
\newblock {\em Discrete Mathematics}, 138(1):75--92, 1995.
\newblock 14th British Combinatorial Conference.

\bibitem{BartoWidth}
L.~Barto.
\newblock The collapse of the bounded width hierarchy.
\newblock {\em Journal of Logic and Computation}, 26(3):923--943, 2016.

\bibitem{BoundedWidthJournal}
L.~Barto and M.~Kozik.
\newblock Constraint satisfaction problems solvable by local consistency methods.
\newblock {\em Journal of the {ACM}}, 61(1):3:1--3:19, 2014.

\bibitem{BartoKozikNiven}
L.~Barto, M.~Kozik, and T.~Niven.
\newblock The {CSP} dichotomy holds for digraphs with no sources and no sinks (a positive answer to a conjecture of {B}ang-{J}ensen and {H}ell).
\newblock {\em SIAM Journal on Computing}, 38(5), 2009.

\bibitem{Pol}
L.~Barto, A.~A. Krokhin, and R.~Willard.
\newblock Polymorphisms, and how to use them.
\newblock In {\em The Constraint Satisfaction Problem: Complexity and Approximability}, pages 1--44. Schloss Dagstuhl - Leibniz-Zentrum fuer Informatik, 2017.

\bibitem{wonderland}
L.~Barto, J.~Opr\v{s}al, and M.~Pinsker.
\newblock The wonderland of reflections.
\newblock {\em Israel Journal of Mathematics}, 223(1):363--398, 2018.

\bibitem{Book}
M.~Bodirsky.
\newblock {\em Complexity of Infinite-Domain Constraint Satisfaction}.
\newblock Lecture Notes in Logic (52). Cambridge University Press, Cambridge, United Kingdom; New York, NY, 2021.

\bibitem{bodirskySODA2026}
M.~Bodirsky and S.~Guzmán-Pro.
\newblock A {CSP approach to Graph Sandwich Problems}.
\newblock In {\em Proceedings of the 2026 Annual ACM-SIAM Symposium on Discrete Algorithms (SODA)}, pages 2403--2418.

\bibitem{bokDM346}
J.~Bok, R.~Brewster, T.~Feder, P.~Hell, and N.~Jedli{\v{c}}kov{\'a}.
\newblock List homomorphism problems for signed trees.
\newblock {\em Discrete Mathematics}, 346(3):113257, 2023.

\bibitem{bokMFCS2020}
J.~Bok, R.~Brewster, T.~Feder, P.~Hell, and N.~Jedli\v{c}kov\'{a}.
\newblock {List Homomorphism Problems for Signed Graphs}.
\newblock In J.~Esparza and D.~Kr\'{a}l', editors, {\em 45th International Symposium on Mathematical Foundations of Computer Science (MFCS 2020)}, volume 170 of {\em Leibniz International Proceedings in Informatics (LIPIcs)}, pages 20:1--20:14, Dagstuhl, Germany, 2020. Schloss Dagstuhl -- Leibniz-Zentrum f{\"u}r Informatik.

\bibitem{bokTCS1001}
J.~Bok, R.~Brewster, T.~Feder, P.~Hell, and N.~Jedličková.
\newblock List homomorphisms to separable signed graphs.
\newblock {\em Theoretical Computer Science}, 1001:114580, 2024.

\bibitem{bondy2008}
J.~A. Bondy and U.~S.~R. Murty.
\newblock {\em Graph Theory}.
\newblock Springer, 2008.

\bibitem{brewsterDAM49}
R.~Brewster.
\newblock The complexity of colouring symmetric relational systems.
\newblock {\em Discrete Applied Mathematics}, 49(1):95--105, 1994.
\newblock Special Volume Viewpoints on Optimization.

\bibitem{brewster1993vertex}
R.~Brewster et~al.
\newblock Vertex colourings of edge-coloured graphs.
\newblock 1993.

\bibitem{brewsterJGT110}
R.~Brewster, A.~Kidner, and G.~MacGillivray.
\newblock A dichotomy theorem for $\gamma$-switchable h-colouring on m-edge-coloured graphs.
\newblock {\em Journal of Graph Theory}, 110(2):200--208, 2025.

\bibitem{brewsterDM340}
R.~C. Brewster, F.~Foucaud, P.~Hell, and R.~Naserasr.
\newblock The complexity of signed graph and edge-coloured graph homomorphisms.
\newblock {\em Discrete Mathematics}, 340(2):223--235, 2017.

\bibitem{brewsterENDM5}
R.~C. Brewster and P.~Hell.
\newblock On homomorphisms to edge-coloured cycles.
\newblock {\em Electronic Notes in Discrete Mathematics}, 5:46--49, 2000.

\bibitem{BrewsterSiggersSigned}
R.~C. Brewster and M.~Siggers.
\newblock A complexity dichotomy for signed {$\bold{H}$}-colouring.
\newblock {\em Discrete Math.}, 341(10):2768--2773, 2018.

\bibitem{Conservative}
A.~A. Bulatov.
\newblock Tractable conservative constraint satisfaction problems.
\newblock In {\em Proceedings of the Symposium on Logic in Computer Science {(LICS)}}, pages 321--330, Ottawa, Canada, 2003.

\bibitem{BulatovFVConjecture}
A.~A. Bulatov.
\newblock A dichotomy theorem for nonuniform {CSP}s.
\newblock In {\em 58th {IEEE} Annual Symposium on Foundations of Computer Science, {FOCS} 2017, {B}erkeley, {CA}, {USA}, {O}ctober 15-17}, pages 319--330, 2017.

\bibitem{JBK}
A.~A. Bulatov, A.~A. Krokhin, and P.~G. Jeavons.
\newblock Classifying the complexity of constraints using finite algebras.
\newblock {\em SIAM Journal on Computing}, 34:720--742, 2005.

\bibitem{cameronSODA2004}
K.~Cameron, E.~M. Eschen, C.~T. Ho\`{a}ng, and R.~Sritharan.
\newblock The list partition problem for graphs.
\newblock In {\em Proceedings of the Fifteenth Annual ACM-SIAM Symposium on Discrete Algorithms}, SODA '04, page 391–399, USA, 2004. Society for Industrial and Applied Mathematics.

\bibitem{cameronSIDMA21}
K.~Cameron, E.~M. Eschen, C.~T. Hoàng, and R.~Sritharan.
\newblock The complexity of the list partition problem for graphs.
\newblock {\em SIAM Journal on Discrete Mathematics}, 21(4):900 – 929, 2007.
\newblock Cited by: 29.

\bibitem{MetaChenLarose}
H.~Chen and B.~Larose.
\newblock Asking the metaquestions in constraint tractability.
\newblock {\em {TOCT}}, 9(3):11:1--11:27, 2017.

\bibitem{cookDM310}
K.~Cook, S.~Dantas, E.~M. Eschen, L.~Faria, C.~M. de~Figueiredo, and S.~Klein.
\newblock $2k_2$ vertex-set partition into nonempty parts.
\newblock {\em Discrete Mathematics}, 310:1259--1264, 2010.

\bibitem{cyganSODA2011}
M.~Cygan, M.~Pilipczuk, M.~Pilipczuk, and J.~O. Wojtaszczyk.
\newblock The stubborn problem is stubborn no more (a polynomial algorithm for 3–compatible colouring and the stubborn list partition problem).
\newblock In {\em Proceedings of the 2011 Annual {ACM}-{SIAM} Symposium on Discrete Algorithms (SODA)}, pages 1666--1674, 2011.

\bibitem{dantasDAM143}
S.~Dantas, C.~M. de~Figueiredo, and L.~Faria.
\newblock Characterizations, probe and sandwich problems on (k,l)-cographs.
\newblock {\em Discrete Applied Mathematics}, 143:155--165, 2004.

\bibitem{dantasAOR188}
S.~Dantas, C.~M. de~Figueiredo, M.~C. Golumbic, S.~Klein, and F.~Maffray.
\newblock The chain graph sandwich problem.
\newblock {\em Annals of Operation Research}, 188:133--139, 2011.

\bibitem{dantasDAM182}
S.~Dantas, C.~M. de~Figueiredo, F.~Maffray, and R.~B. Teixeira.
\newblock The complexity of forbidden subgraph sandwich problems and the skew partition sandwich problem.
\newblock {\em Discrete Applied Mathematics}, 182:15--24, 2015.

\bibitem{dantasENTCS346}
S.~Dantas, C.~M. de~Figueiredo, P.~Petito, and R.~B. Teixeira.
\newblock A general method for forbidden induced subgraph sandwich problem np-completeness.
\newblock {\em Electronic Notes in Theoretical Computer Science}, 346:393--400, 2019.

\bibitem{dantasICCGI07}
S.~Dantas and L.~Faria.
\newblock On stubborn graph sandwich problems.
\newblock {\em 2007 International Multi-Conference on Computing in the Global Information Technology (ICCGI'07)}, pages 46--46, 2007.

\bibitem{dantasCATS2010}
S.~Dantas, L.~Faria, C.~M.~H. de~Figueiredo, S.~Klein, L.~T. Nogueira, and F.~Protti.
\newblock Advances on the list stubborn problem.
\newblock In T.~Viglas and A.~Potanin, editors, {\em Theory of Computing 2010, {CATS} 2010, Brisbane, Australia, January 2010}, volume 109 of {\em {CRPIT}}, pages 65--70. Australian Computer Society, 2010.

\bibitem{figueiredoJA37}
C.~M. de~Figueiredo, S.~Klein, Y.~Kohayakawa, and B.~A. Reed.
\newblock Finding skew partitions efficiently.
\newblock {\em Journal of Algorithms}, 37:505--521, 2000.

\bibitem{figueiredoDAM121}
C.~M. de~Figueiredo, S.~Klein, and K.~Vu\v{s}kovi\'c.
\newblock The graph sandwich problem for 1-join composition is {NP}-complete.
\newblock {\em Discrete Applied Mathematics}, 121:73--82, 2002.

\bibitem{Erd:Gtp}
P.~Erd{\H o}s.
\newblock Graph theory and probability.
\newblock {\em Canadian Journal of Mathematics}, 11:34--38, 1959.

\bibitem{federDM306}
T.~Feder and P.~Hell.
\newblock Matrix partitions of perfect graphs.
\newblock {\em Discrete Mathematics}, 306:2450 -- 2460, 2006.

\bibitem{feder_hell_full_hom}
T.~Feder and P.~Hell.
\newblock On realizations of point determining graphs, and obstructions to full homomorphisms.
\newblock {\em Discret. Math.}, 308(9):1639--1652, 2008.

\bibitem{ListStructuralAll}
T.~Feder, P.~Hell, and J.~Huang.
\newblock Bi-arc graphs and the complexity of list homomorphisms.
\newblock {\em J. Graph Theory}, 42(1):61--80, 2003.

\bibitem{federSIDMA16}
T.~Feder., P.~Hell, S.~Klein, and R.~Motwani.
\newblock List partitions.
\newblock {\em SIAM Journal on Discrete Mathematics}, 16(3):449--478, 2003.

\bibitem{federTCS349}
T.~Feder, P.~Hell, S.~Klein, L.~T. Nogueira, and F.~Protti.
\newblock List matrix partitions of chordal graphs.
\newblock {\em Theoretical Computer Science}, 349:52--66, 2005.

\bibitem{federSODA2005}
T.~Feder, P.~Hell, D.~Kr\'{a}l, and J.~Sgall.
\newblock Two algorithms for general list matrix partitions.
\newblock In {\em Proceedings of the Sixteenth Annual ACM-SIAM Symposium on Discrete Algorithms}, SODA '05, page 870–876, USA, 2005. Society for Industrial and Applied Mathematics.

\bibitem{federDAM159}
T.~Feder, P.~Hell, D.~G. Schell, and J.~Stacho.
\newblock Dichotomy for tree-structured trigraph list homomorphism problems.
\newblock {\em Discrete Applied Mathematics}, 159(12):1217--1224, 2011.

\bibitem{federDAM154}
T.~Feder, P.~Hell, and K.~Tucker-Nally.
\newblock Digraph matrix partitions and trigraph homomorphisms.
\newblock {\em Discrete Appl. Math.}, 154(17):2458–2469, Nov. 2006.

\bibitem{FederVardi}
T.~Feder and M.~Y. Vardi.
\newblock The computational structure of monotone monadic {SNP} and constraint satisfaction: {a} study through {D}atalog and group theory.
\newblock {\em {SIAM} Journal on Computing}, 28:57--104, 1999.

\bibitem{golumbicJA19}
M.~Golumbic, H.~Kaplan, and R.~Shamir.
\newblock Graph {S}andwich {P}roblems.
\newblock {\em Journal of Algorithms}, 19:449--473, 1995.

\bibitem{habibCSR4}
M.~Habib and C.~Paul.
\newblock A survey of the algorithmic aspects of modular decomposition.
\newblock {\em Computer Science Review}, 4:41--59, 2010.

\bibitem{hellEJC35}
P.~Hell.
\newblock Graph partitions with prescribed patterns.
\newblock {\em European Journal of Combinatorics}, 35:335--353, 2014.

\bibitem{hellDM234}
P.~Hell, A.~Kostochka, A.~Raspaud, and E.~Sopena.
\newblock On nice graphs.
\newblock {\em Discrete Mathematics}, 234(1):39--51, 2001.

\bibitem{HellNesetril}
P.~Hell and J.~Ne\v{s}et\v{r}il.
\newblock On the complexity of {H}-coloring.
\newblock {\em Journal of Combinatorial Theory, Series B}, 48:92--110, 1990.

\bibitem{HNBook}
P.~Hell and J.~Ne\v{s}et\v{r}il.
\newblock {\em Graphs and Homomorphisms}.
\newblock Oxford University Press, Oxford, 2004.

\bibitem{Hodges}
W.~Hodges.
\newblock {\em A shorter model theory}.
\newblock Cambridge University Press, Cambridge, 1997.

\bibitem{kimKMJ63}
H.~Kim and M.~Siggers.
\newblock Towards a dichotomy for the list switch homomorphism problem for signed graphs.
\newblock {\em Kyungpook Mathematical Journal}, 63(3), 2023.

\bibitem{Maltsev-Cond}
M.~Kozik, A.~Krokhin, M.~Valeriote, and R.~Willard.
\newblock Characterizations of several {M}altsev conditions.
\newblock {\em Algebra universalis}, 73(3):205--224, 2015.

\bibitem{Kun}
G.~Kun.
\newblock Constraints, {MMSNP}, and expander relational structures.
\newblock {\em Combinatorica}, 33(3):335--347, 2013.

\bibitem{mahaved1995}
N.~V.~R. Mahadev and U.~N. Peled.
\newblock {\em Threshold graphs and related topics}, volume~56 of {\em Ann. Discrete Math.}
\newblock Amsterdam: Elsevier, 1995.

\bibitem{montejanoDAM158}
A.~Montejano, P.~Ochem, A.~Pinlou, A.~Raspaud, and Éric Sopena.
\newblock Homomorphisms of 2-edge-colored graphs.
\newblock {\em Discrete Applied Mathematics}, 158(12):1365--1379, 2010.
\newblock Traces from LAGOS’07 IV Latin American Algorithms, Graphs, and Optimization Symposium Puerto Varas - 2007.

\bibitem{naserasrJGT79}
R.~Naserasr, E.~Rollov\'a, and E.~Sopena.
\newblock Homomorphisms of signed graphs.
\newblock {\em Journal of Graph Theory}, 79(3):178--212, 2015.

\bibitem{ochemJGT85}
P.~Ochem, A.~Pinlou, and S.~Sen.
\newblock Homomorphisms of 2-edge-colored triangle-free planar graphs.
\newblock {\em Journal of Graph Theory}, 85(1):258--277, 2017.

\bibitem{Zhuk20}
D.~Zhuk.
\newblock A proof of the {CSP} dichotomy conjecture.
\newblock {\em Journal of the {ACM}}, 67(5):30:1--30:78, 2020.

\bibitem{ZhukFVConjecture}
D.~N. Zhuk.
\newblock A proof of {CSP} dichotomy conjecture.
\newblock In {\em 58th {IEEE} Annual Symposium on Foundations of Computer Science, {FOCS} 2017, {B}erkeley, {CA}, {USA}, {O}ctober 15-17}, pages 331--342, 2017.
\newblock https://arxiv.org/abs/1704.01914.

\end{thebibliography}

\appendix

\section{Four more examples of hereditary pp-constructions of $\bK_3$}
\label{ap:HH}

We apply Lemma~\ref{lem:siggers-power} to show that the $2$-edge-coloured
graphs from Figure~\ref{fig:4-element} (and their duals)
hereditarily pp-construct $\bK_3$.

\begin{figure}[ht!]
    \centering
    \begin{tikzpicture}
    
        \begin{scope}[scale=0.7]
            \node (L1) at (0,-1.6) {(3D)};
            \node (0) [vertex, label = left:{\scriptsize $0$}] at (90:1.3){};
            \node (1) [vertex, label = below:{\scriptsize $1$}] at (210:1.3){};
            \node (2) [vertex, label = below:{\scriptsize $2$}] at (330:1.3){};
            
            \draw [edge, red, dashed] (0) to [out=55,in=125, looseness=12] (0);
            \draw [edge, red, dashed] (1) to [out=55 + 90,in=125 + 90, looseness=12] (1);
            \draw [edge, blue] (2) to [out=55 -90,in=125 + -90, looseness=12] (2);
               
            \foreach \from/\to in {0/1, 1/2} 
                \draw [edge, blue] (\from) to [bend right = 20] (\to);
            \foreach \from/\to in {0/1,  0/2} 
                \draw [edge, red, dashed] (\from) to [bend left = 20] (\to);

        \end{scope}
        \begin{scope}[scale=0.7, xshift=5cm]
            \node (L1) at (0,-1.4) {(4B)};
            \node (0) [vertex, label = left:{\scriptsize $0$}] at (-1,2){};
            \node (1) [vertex, label = right:{\scriptsize $1$}] at (1,2){};
            \node (2) [vertex, label = left:{\scriptsize $2$}] at (-1,0){};
            \node (3) [vertex, label = right:{\scriptsize $3$}] at (1,0){};
            
            \draw [edge, red, dashed] (0) to [out=55,in=125, looseness=12] (0);
             \draw [edge, blue] (1) to [out=55,in=125, looseness=12] (1);
            \draw [edge,  red, dashed] (2) to [out=55 - 180,in=125 + -180, looseness=12] (2);
            \draw [edge,  red, dashed] (3) to [out=55 - 180,in=125 + -180, looseness=12] (3);
               
            \draw [edge, red, dashed] (2) to [bend left = 20] (3);
            \draw [edge, blue] (2) to [bend right = 20] (3);
            
            \foreach \from/\to in {0/1, 0/2} 
                \draw [edge, red, dashed] (\from) to  (\to);
            \foreach \from/\to in {1/2, 1/3, 0/3} 
                \draw [edge, blue] (\from) to  (\to);

        \end{scope}
    
        \begin{scope}[scale=0.7,xshift=10cm]
           \node (L1) at (0,-1.4) {(4C)};
            \node (0) [vertex, label = left:{\scriptsize $0$}] at (-1,2){};
            \node (1) [vertex, label = right:{\scriptsize $1$}] at (1,2){};
            \node (2) [vertex, label = left:{\scriptsize $2$}] at (-1,0){};
            \node (3) [vertex, label = right:{\scriptsize $3$}] at (1,0){};
            
            \draw [edge, red, dashed] (0) to [out=55,in=125, looseness=12] (0);
             \draw [edge, blue] (1) to [out=55,in=125, looseness=12] (1);
            \draw [edge,  red, dashed] (2) to [out=55 - 180,in=125 + -180, looseness=12] (2);
            \draw [edge,  red, dashed] (3) to [out=55 - 180,in=125 + -180, looseness=12] (3);

            \draw [edge, red, dashed] (2) to [bend left = 20] (3);
            \draw [edge, blue] (2) to [bend right = 20] (3);
            
            \foreach \from/\to in {1/2, 1/3, 0/3} 
                \draw [edge, red, dashed] (\from) to  (\to);
            \foreach \from/\to in {0/1, 0/2} 
                \draw [edge, blue] (\from) to  (\to);
        \end{scope}
    
        \begin{scope}[scale=0.7, xshift = 15cm]
            \node (L1) at (0,-1.4) {(4D)};
            \node (0) [vertex, label = left:{\scriptsize $0$}] at (-1,2){};
            \node (1) [vertex, label = right:{\scriptsize $1$}] at (1,2){};
            \node (2) [vertex, label = left:{\scriptsize $2$}] at (-1,0){};
            \node (3) [vertex, label = right:{\scriptsize $3$}] at (1,0){};
            
            \draw [edge, red, dashed] (0) to [out=55,in=125, looseness=12] (0);
             \draw [edge, red, dashed] (1) to [out=55,in=125, looseness=12] (1);
            \draw [edge,  red, dashed] (2) to [out=55 - 180,in=125 + -180, looseness=12] (2);
            \draw [edge,  blue] (3) to [out=55 - 180,in=125 + -180, looseness=12] (3);

             \draw [edge, red, dashed] (2) to [bend left = 20] (3);
            \draw [edge, blue] (2) to [bend right = 20] (3);
            
            \foreach \from/\to in {0/2, 0/3, 2/1} 
                \draw [edge, red, dashed] (\from) to  (\to);
            \foreach \from/\to in {0/1, 1/3} 
                \draw [edge, blue] (\from) to  (\to);
        \end{scope}

\end{tikzpicture}
\caption{Some minimal hereditarily hard reflexive complete $2$-edge-coloured on at most four vertices.}
\label{fig:4-element}
\end{figure}
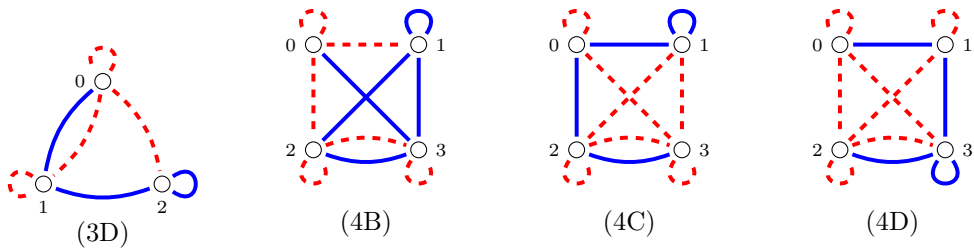

\begin{lemma}\label{lem:4-element-sig}
    If $\bH$ is a $2$-edge-coloured graph from  Figure~\ref{fig:4-element}, then 
    $\bH$ and $\overline \bH$ hereditarily pp-construct $\bK_3$.
\end{lemma}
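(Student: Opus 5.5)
For each $2$-edge-coloured graph $\bH$ in Figure~\ref{fig:4-element}, the plan is to exhibit an $\ast$-triangle (or, failing that, an $\ast$-cycle of length five) in the Siggers power $\Sig(\bH)$. The claim then follows from Lemma~\ref{lem:siggers-power}, exactly as in the proof of Lemma~\ref{lem:3-element-sig}. None of 3D, 4B, 4C, 4D has an $\ast$-loop, so that lemma applies.

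For the duals, observe that $\Sig(\overline{\bH}) = \overline{\Sig(\bH)}$. Indeed, the product and the quotient by $\sim_S$ treat the two colours symmetrically. Swapping colours maps $\ast$-edges to $\ast$-edges, so an $\ast$-odd-cycle in $\Sig(\bH)$ is also one in $\Sig(\overline\bH)$. It therefore suffices to treat the four graphs themselves.

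To find the cycles I would reuse the patterns from Lemma~\ref{lem:3-element-sig}. Recall that an $\ast$-edge between two $\sim_S$-classes only needs a blue edge in $\bH^4$ between some pair of representatives, and a red edge between some possibly different pair. The useful classes are those with two representatives identified by the Siggers identity, for instance $[0102]=[1021]$, $[0120]=[1012]$ and $[0101]=[1011]$, where $(a,r,e)$ ranges over suitable triples. For each candidate pair of classes I would check coordinatewise in $\bH$ which colours connect the representatives.

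The graph 3D differs from 3C only on the edge $01$: it is an $\ast$-edge in 3D and a blue edge in 3C, while the two graphs agree on every other loop and edge. So for 3D I would first try the triangle on the classes $[0102]$, $[1011]$, $[0120]$ used for 3C. The 3C witnesses remain valid whenever they avoid the pair $\{0,1\}$, and otherwise the extra red colour on $01$ only adds edges. This should give the $\ast$-triangle almost immediately.

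The graphs 4B, 4C and 4D all contain a monochromatic $\ast$-edge $23$ between red vertices, plus vertices $0,1$ with prescribed connections. For these I would start from the triangle used for 4A, on the classes $[0102]$, $[0120]$, $[0133]$, and modify the entries. One option is to replace one occurrence of $0$ or $1$ by $2$ or $3$, so that the required coordinatewise colours match the new edge pattern. In particular, the $\ast$-edge $23$ can be used freely in any coordinate where both a blue and a red edge are needed.

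The main obstacle is this search: for each graph one must find three classes together with consistent representatives realising both colours on all three sides. This is a finite check over at most $4^4$ tuples. If a hand choice fails, I would systematically enumerate classes of the form $[a r e a]=[r a r e]$ and look for a blue and a red edge between each pair. If no $\ast$-triangle exists, I would fall back to an $\ast$-$5$-cycle, as was done for 5A.
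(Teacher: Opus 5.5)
Your strategy is the paper's: for each of 3D, 4B, 4C, 4D, exhibit an $\ast$-odd-cycle in $\Sig(\bH)$ and apply Lemma~\ref{lem:siggers-power}. Your treatment of the duals via $\Sig(\overline{\bH})=\overline{\Sig(\bH)}$ is also correct. The gap is that the whole content of this lemma is the existence of those cycles, and your proposal does not contain a single one. You describe a finite search with a fallback to $5$-cycles, but you never carry it out. Nothing guarantees in advance that it succeeds, since Lemma~\ref{lem:siggers-power} is only a sufficient condition. A proof here must consist of explicit classes, explicit representatives, and a coordinatewise check of both colours on every side of the cycle.

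The shortcuts you offer also rest on misreadings of Figure~\ref{fig:4-element}.
\begin{itemize}
\item 3D is not 3C with an extra red colour on $01$. In 3C the edge $12$ is an $\ast$-edge and $01$ is only blue; in 3D the edge $01$ is an $\ast$-edge and $12$ is only blue. So the 3C witnesses do not transfer. For instance, the red edge $0102$--$0101$ used for 3C needs the pair $(2,1)$ to be red, which fails in 3D. In fact 3C does not even map to 3D, since 3D has no $\ast$-edge at its blue vertex.
\item Your guessed classes for 3D do work, but with different representatives. The blue edges $0102$--$1011$, $1011$--$0120$, $0120$--$1021$ and the red edges $1021$--$0101$, $0101$--$0120$, $0120$--$0102$ of $\bH^4$ give an $\ast$-triangle on $[0102]=[1021]$, $[1011]=[0101]$, $[0120]$.
\item In 4D the $\ast$-edge $23$ is bichromatic, because vertex $3$ carries a blue loop. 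In 4B and 4C only vertex $1$ is blue. So the 4A triangle cannot simply be reused for any of them.
\item The paper finds an $\ast$-triangle in $\Sig(\mathrm{4B})$ on $[0302]=[3023]$, $[0303]=[3033]=[0330]$, $[0312]$. For 4C and 4D it needs $\ast$-cycles of length five; for 4D, for instance, the cycle runs through $[3130]$, $[0103]=[1031]$, $[0120]$, $[3133]=[1331]$, $[0102]=[1021]$.
\end{itemize}
These explicit certificates are exactly what your proposal is missing.
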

\begin{proof}
    We proceed similarly to the proof of Lemma~\ref{lem:3-element-sig}: we show that
    for each $\bH\in\{\mathrm{3D,4B,4C,4D}\}$ the Siggers power $\Sig(\bH)$ has an $\ast$-odd-cycle,
    and then conclude by Lemma~\ref{lem:siggers-power}. The following is a depiction
    in $\Sig(\mathrm{3D})$ and in $\Sig(\mathrm{4B})$ where edges represent edges in $\mathbb H^4$, 
    colour classes represent $\sim_S$-equivalence classes, and to simplify notation we
    write $abcd$ instead of $(a,b,c,d)$.
    \begin{center}
    \begin{tikzpicture} 

    \begin{scope}[scale=0.7]
            \node (L1) at (0,-2) {\scriptsize $\ast$-triangle in Sig(3D)};
            \node (0) [vertex, fill = black, label = below:{\scriptsize $1021$}] at (90:1.5){};
            \node (00) [vertex, fill = black,  label = above:{\scriptsize $0102$}] at (90:2){};
            \node (1) [vertex, label = left:{\scriptsize $1011$}] at (210:2){};
            \node (11) [vertex, label = right:{\scriptsize $0101$}] at (210:1.5){};
            \node (2) [vertex, fill = gray, label = right:{\scriptsize $0120$}] at (330:1.75){};
               
            \foreach \from/\to in {00/1, 1/2, 2/0} 
                \draw [edge, blue] (\from) to [bend right = 20] (\to);
            \foreach \from/\to in {0/11, 11/2, 2/00} 
                \draw [edge, red, dashed] (\from) to [bend right = 20] (\to);

        \end{scope}
        
        \begin{scope}[scale=0.7,xshift=8cm]
           \node (L1) at (0,-2) {\scriptsize $\ast$-triangle in $\Sig(\mathrm{4B})$};
            \node (0) [vertex, fill = black, label = below:{\scriptsize $0302$}] at (90:1.5){};
            \node (00) [vertex, fill = black,  label = above:{\scriptsize $3023$}] at (90:2){};
            \node (1) [vertex, label = right:{\scriptsize $0303$}] at (210:1.25){};
            \node (11) [vertex, label = left:{\scriptsize $3033$}] at (220:1.7){};
            \node (111) [vertex, label = left:{\scriptsize $0330$}] at (200:1.7){};
            \node (2) [vertex, fill = gray, label = right:{\scriptsize $0312$}] at (330:1.75){};
               
            \foreach \from/\to in {2/00, 11/2, 00/111} 
                \draw [edge, blue] (\from) to [bend right = 20] (\to);
            \foreach \from/\to in {0/1, 1/2, 2/0} 
                \draw [edge, red, dashed] (\from) to [bend right = 20] (\to);
        \end{scope}
\end{tikzpicture}
\end{center}

Finally, following the same conventions as in the previous illustration, we depict
$\ast$-cycles of length five in 
$\Sig(\mathrm{4C})$ and $\Sig(\mathrm{4D})$.
\begin{center}
    \begin{tikzpicture}
        \begin{scope}[scale=0.7]
            \node (L1) at (0,-2.5) {\scriptsize $\ast$-5-cycle in $\Sig(\mathrm{4C})$};
            \node (0) [vertex, fill = black, label = below:{\scriptsize $3123$}] at (90:1.5){};
            \node (00) [vertex, fill = black,  label = above:{\scriptsize $1312$}] at (90:2){};
            \node (1) [vertex, label = right:{\scriptsize $2021$}] at (162:1.5){};
            \node (11) [vertex, label = left:{\scriptsize $0210$}] at (162:2){};
            \node (22) [vertex, fill = gray, label = left:{\scriptsize $0130$}] at (234:2){};
            \node (33) [vertex, fill = lightgray, label = left:{\scriptsize $2022$}] at (306:1.5){};
            \node (333) [vertex, fill = lightgray, label = right:{\scriptsize $0220$}] at (306:2){};
            \node (44) [vertex, fill = magenta,  label = right:{\scriptsize $2032$}] at (18:2){};
               
            \foreach \from/\to in {00/11, 1/22,  22/33, 333/44, 44/0} 
                \draw [edge, blue] (\from) to [bend right = 20] (\to);
            \foreach \from/\to in {00/1, 11/22, 22/333,  33/44, 44/00} 
                \draw [edge, red, dashed] (\from) to [bend right = 20] (\to);       
        \end{scope}
    
        \begin{scope}[scale=0.7,xshift=7cm]
           \node (L1) at (0,-2.5) {\scriptsize $\ast$-5-cycle in $\Sig(\mathrm{4D})$};
            \node (00) [vertex, fill = black,  label = above:{\scriptsize $3130$}] at (90:2){};
            \node (1) [vertex, label = right:{\scriptsize $0103$}] at (162:1.4){};
            \node (11) [vertex, label = left:{\scriptsize $1031$}] at (162:2){};
            \node (22) [vertex, fill = gray, label = left:{\scriptsize $0120$}] at (234:2){};
            \node (33) [vertex, fill = lightgray, label = left:{\scriptsize $3133$}] at (306:1.5){};
            \node (333) [vertex, fill = lightgray, label = right:{\scriptsize $1331$}] at (306:2){};
            \node (4) [vertex, fill = magenta, label = left:{\scriptsize $0102$}] at (18:1.5){};
            \node (44) [vertex, fill = magenta,  label = right:{\scriptsize $1021$}] at (18:2){};
               
            \foreach \from/\to in {00/11, 11/22, 22/333, 33/44, 44/00} 
                \draw [edge, blue] (\from) to [bend right = 20] (\to);
            \foreach \from/\to in {00/1, 1/22, 22/33, 33/4, 4/00} 
                \draw [edge, red, dashed] (\from) to [bend right = 20] (\to);
        \end{scope}
\end{tikzpicture}
\end{center}
    
\end{proof}

\end{document}